\documentclass[a4paper,onecolumn,11pt,accepted=2026-06-23]{quantumarticle}
\pdfoutput=1
\usepackage{soul}
\usepackage[T1]{fontenc}
\usepackage{amssymb}
\usepackage{tikz}
\usetikzlibrary{tikzmark}
\usepackage{amsthm}
\usepackage{amsmath}
\usepackage{stmaryrd}
\usepackage{physics}
\usepackage{braket}
\usepackage{dsfont}
\usepackage{bbold}
\usepackage{graphicx}
\usepackage{relsize}
\usepackage{makecell}
\usepackage[colorlinks = true, linkcolor = blue, urlcolor  = blue, citecolor = red, backref=page]{hyperref}
\usepackage[margin=1in]{geometry}
\usepackage{enumitem}
\usepackage{mathtools}
\usepackage{graphbox}
\usepackage{comment}
\usepackage{float}
\usepackage{standalone}
\usepackage[capitalize]{cleveref}
\usepackage{booktabs}
\usepackage[skins]{tcolorbox}
\usepackage{nicematrix}
\usepackage{multirow}
\usepackage{tikz-cd}
\usepackage[table]{xcolor}
\usepackage{booktabs, tabularx}
\usepackage{tcolorbox}
\usepackage[dvipsnames]{xcolor}
\usetikzlibrary{arrows.meta, matrix, calc, fit, backgrounds, positioning} 
\usepackage{hhline}
\usepackage{caption} 
\usepackage{makecell}
\usepackage{colortbl}
\NiceMatrixOptions{cell-space-limits = 2pt}

\newcommand{\C}[1]{\mathbb{C}^{#1}}
\newcommand{\M}[1]{\mathcal{M}_{#1}}

\renewcommand{\epsilon}{\varepsilon}
\renewcommand{\phi}{\varphi}
\newcommand{\overbar}[1]{\mkern 1.5mu\overline{\mkern-1.5mu#1\mkern-1.5mu}\mkern 1.5mu}

\newcommand{\diagU}{\mathcal{DU}}

\newcommand{\upwardsubseteq}{\rotatebox{90}{\(\subseteq\)}}

\newcommand{\upwardeq}{\rotatebox{90}{\(=\)}}

\newcommand{\homo}[2]{\mathbb{H}[x]_{#1}^{#2}}
\newcommand{\knd}[2]{\mathbb{N}^{(#1)}_{#2}}

\newtheorem{theorem}{Theorem}[section]

\newtheorem{proposition}[theorem]{Proposition}
\newtheorem{corollary}[theorem]{Corollary}
\newtheorem{lemma}[theorem]{Lemma}

\newtheorem{definition}[theorem]{Definition}

\theoremstyle{definition}
\newtheorem{remark}[theorem]{Remark}
\newtheorem{example}[theorem]{Example}

\newcommand\vertarrowbox[3][6ex]{
  \begin{array}[t]{@{}c@{}} #2 \\
  \left\uparrow\vcenter{\hrule height #1}\right.\kern-\nulldelimiterspace\\
  \makebox[0pt]{\scriptsize#3}
  \end{array}
}

\def\mybigplus{\mathop{\mathchoice{
   \vcenter{\hbox to12bp{\vrule height15bp width0pt \pdfliteral{
   q 1 J .8 w 0 7.5 m 12 7.5 l S 6 1 m 6 14 l S Q
}\hss}}}{
   \vcenter{\hbox to12bp{\kern1bp\vrule height10bp width0pt \pdfliteral{
   q 1 J .65 w 0 5 m 10 5 l S 5 0 m 5 10 l S Q
}\hss}}}{+}{+}
}}

\newcommand{\idx}{\mathbb{i}}

\makeatletter
\def\smallunderbrace#1{\mathop{\vtop{\m@th\ialign{##\crcr
   $\hfil\displaystyle{#1}\hfil$\crcr
   \noalign{\kern3\p@\nointerlineskip}%
   \tiny\upbracefill\crcr\noalign{\kern3\p@}}}}\limits}
\makeatother

\title{Entanglement in the Dicke subspace}

\author{Aabhas Gulati}
\email{aabhas.gulati@math.univ-toulouse.fr}
\address{Institut de Mathématiques, Université de Toulouse, UPS, France.}

\author{Ion Nechita}
\email{nechita@irsamc.ups-tlse.fr}
\address{Laboratoire de Physique Th\'eorique, Universit\'e de Toulouse, CNRS, UPS, France}

\author{Clément Pellegrini}
\email{clement.pellegrini@math.univ-toulouse.fr}
\address{Institut de Mathématiques, Université de Toulouse, UPS, France.}

\begin{document}

\begin{abstract}
In this paper, we provide a complete mathematical theory for the entanglement of mixtures of Dicke states. These quantum states form an important subclass of bosonic states arising in the study of indistinguishable particles. We introduce a tensor-based parametrization where the diagonal entries of these states are encoded as a symmetric tensor, enabling a direct translation between entanglement properties and well-studied convex cones of tensors. Our results bridge multipartite entanglement theory with semialgebraic geometry and the theory of completely positive and copositive tensors.

This dictionary maps separability to completely positive tensors, the PPT property to moment tensors, entanglement witnesses to copositive tensors, and decomposable witnesses to sum of squares tensors. Using this framework, we construct explicit PPT entangled states in three or more qutrits, disproving a recent conjecture. 
We establish that PPT entanglement exists for all multipartite systems with local dimension $d \geq 3$ and $n \geq 3$ parties. We also show that, for mixtures of Dicke states, the PPT condition with respect to the most balanced bipartition implies all other PPT conditions. 

We further connect bosonic extendibility of mixtures of Dicke states to the duals of known hierarchies for non-negative polynomials, such as the ones by Reznick and Polya. We thus provide semidefinite programming relaxations for separability and entanglement testing in the Dicke subspace.
\end{abstract}

\maketitle

\vspace{-.5cm}
\tableofcontents


\section{Introduction}
The characterization of entanglement in mixed quantum states is an important challenge in quantum information theory. Entanglement is a central resource for various tasks in quantum communication \cite{bennett1993teleporting}, cryptography \cite{Ekert1991crypto}, and even quantum computing \cite{Raussendorf2001computer,Raussendorf2003computer}. Although being an important question in quantum information, the separability problem, i.e deciding if a bipartite mixed quantum state is entangled or separable has been shown to be NP-hard \cite{gurvits2003classical,gharibian2010strong}. This implies the efficient algorithmic solution of the problem is out of reach, unless P=NP. Understanding entanglement in multipartite quantum systems is even more difficult, as there exist different notions of entanglement. In this setting, \emph{genuine multipartite entanglement} underpins protocols such as secret sharing and multiparty teleportation \cite{hillery1999quantum, chen2006general}.

In general, different criteria for detecting entanglement have been constructed. The most famous of them is the positivity under partial transpose (PPT) criterion: any unentangled (or separable) bipartite state remains positive after transposition is applied to either subsystem \cite{peres1996separability, Horodecki1997PPTent}. This criterion is also sufficient in $2 \times 2$, or $2 \times 3$ systems \cite{horodecki1996separability}; entangled states satisfying the PPT condition exist in all other dimensions. These states display some exotic properties; they cannot be distilled into pure entangled states and are central to the irreversibility of entanglement theory \cite{Vidal2001irreversible}. Moreover, even finding new constructions of such states has been observed to be highly challenging. Various open problems and conjectures surrounding this class of states remain open, for example, the PPT-squared conjecture \cite{PPTsq, Christandl2018} and the maximum Schmidt number of PPT entangled states \cite{huber2018high,pal2019class,krebs2024high}.

The intrinsic indistinguishability of particles is a distinctly quantum mechanical phenomenon. In systems composed of such particles, the principles of quantum mechanics require that the particles be described not by individual quantum states, but rather by a single collective state encompassing the entire system. Moreover, the structure of this collective state depends fundamentally on whether the particles are \emph{bosons} or \emph{fermions}. Specifically, bosonic wavefunctions are symmetric under particle exchange, whereas fermionic wavefunctions are antisymmetric. Importantly, the quantum description of indistinguishable particle systems often involves a substantially reduced number of independent parameters, thereby simplifying the theoretical treatment of many problems in quantum information science.

There has been considerable progress in the understanding of multipartite entanglement in bosonic quantum systems. The bosonic states  display the extreme features of multipartite entanglement; as they are either completely separable or genuinely entangled across all systems \cite{ichikawa2008exchange}. Moreover, even in three qubits, there are no PPT entangled states \cite{eckert2002quantum}. For all $4 \leq n \leq 23$, $n$-qubit examples of quantum states satisfying the PPT condition for every bi-partition, and still being entangled, were obtained \cite{augusiak2012entangled}. Also, the existence of 2-qutrit PPT entangled bosonic states was shown \cite{PhysRevLett.102.170503}. Surprisingly, in the bosonic subspace, the bipartite entanglement criterion, such as covariance and realignment, coincide with applying the PPT test \cite{PhysRevLett.102.170503}.

In this article, we focus on a particular subclass of bosonic states called \emph{mixtures of Dicke states}. Experimentally, Dicke states have been demonstrated in photonic
systems~\cite{Kiesel2007,Wieczorek2009,Prevedel2009,Krischek2011},
trapped ions~\cite{Hume2009}, and cold atomic
ensembles~\cite{Luecke2011,Luecke2014}, in correspondence with the theoretical analysis of their entanglement properties \cite{toth2007detection,TothNJP2009,Duan2011,bergmann2013entanglement}. It was shown that the theory of bipartite entanglement in mixtures of Dicke states is connected to well-studied cones in polynomial optimization, the completely positive and copositive matrices \cite{yu2016separability, tura2018separability}. The NP-hardness of deciding whether a matrix is completely positive implies a corresponding complexity of the separability problem. Moreover, the PPT condition for these states is also in exact correspondence with a well-studied relaxation for completely positive matrices, known as \emph{doubly non-negative matrices} \cite{yu2016separability, tura2018separability}. The doubly non-negative and completely positive cones coincide if and only if $d \leq 4$, and therefore, no PPT entangled states exist for this regime in the bipartite setting.  Construction of PPT entangled states in multipartite states for $d \geq 5$ can be achieved by considering bipartite marginals that are PPT entangled \cite{romero2025multipartite}. At the same time, it was shown that for any $n$-qubit state in multipartite \emph{mixtures of Dicke states}, there is no PPT entanglement \cite{yu2016separability, gulatiposext}. The class of mixture of Dicke states also enjoys the local \emph{diagonal unitary symmetry}, recently studied in \cite{singh2021diagonal}. Recently, a strong conjecture was proposed in a recent work \cite{romero2025multipartite} that for a multi-partite mixture of Dicke states in $d=3$ and $d=4$, all separable states are PPT. Overall, the situation seemed far from understood, and a rigorous and well-understood mathematical framework to explicitly describe PPT and indecomposable maps was lacking. This is a central point for the questions in this paper :  

\begin{enumerate}
\renewcommand{\labelenumi}{\boxed{{\mathrm{Q}}_\arabic{enumi}.}}
\setlength{\itemsep}{.5em}
    \item What is the appropriate parametrization and tools for the \emph{mixtures of Dicke states} to properly utilise the reduced number of parameters? 
    \item Can we characterize separability of the class of diagonally symmetric bosonic states using some generalization of completely positive matrices? Can we do something similar for the entanglement witnesses with the same symmetries? 
    \item Can we characterise the PPT condition for such a given state? Is the PPT criterion also sufficient for separability in $d=3,4$ as conjectured in \cite{romero2025multipartite}? Can we construct new examples of PPT entangled states and indecomposable witnesses?
\end{enumerate}

This article provides a complete answer to all these questions by translating the entanglement properties of these states into well-known objects in tensor analysis and semi-algebraic geometry. The \emph{dramatis personae} of this article are sum of square polynomials and tensors. Deciding whether a polynomial is globally non-negative is a central and computationally difficult problem in optimization theory. A tractable sufficient certificate is \emph{sum-of-squares} (SOS): if $p = \sum^K_{k=1} f_k^2$, then $p(x) \ge 0$ for all $x \in \mathbb{R}^d$. Hilbert famously characterized the exceptional cases where non-negativity coincides with SOS \cite{hilbert1888darstellung}; outside these regimes, there exist positive polynomials that are not SOS (e.g., the Motzkin/Robinson examples \cite{Motzkin1967,robinson1973some}). A canonical example in three variables is
$$p(x,y,z)= x^2y^4 + x^4y^2 + z^6 - 3x^2 y^2 z^2,$$
which satisfies $p\ge 0$ but is not SOS. 
To systematically tighten SOS certificates, one can use Positivstellens\"atze: for instance, test whether $p(x)\lVert x\rVert^{2r}$ is SOS for some $r$. This yields a semidefinite hierarchy for polynomial positivity; by a theorem of Reznick, every positive definite form is detected at some finite level \cite{reznick1995uniform}. These ideas of positive polynomials and certificates of positivity have been recently used to detect non-classicality of light and spin systems \cite{ohst2025revealing}.

\subsection{Contributions}

We present below, in an informal way, the main ideas and results in this article.

\begin{itemize}[leftmargin=2em]\setlength\itemsep{.5em}
    \item \underline{Tensor-based parametrization of diagonally symmetric subspace}: We introduce a novel way to parameterize the diagonally symmetric subspace in terms of symmetric tensors in \cref{thm:tensor-param}, and \cref{def:parametrize-witness}. This tensor can be understood as an array consisting of the diagonal entries of the state. 
    
    \item \underline{Complete dictionary for entanglement properties}: By using the new tensor-based para\-metrisation, we translate the entanglement properties of diagonally symmetric states into the well-studied properties of tensors. We establish a correspondence between separability and completely positive tensors (\cref{thm:sep-and-cp}), and on the dual side, entanglement witnesses and copositive tensors (\cref{prop:ew-cop}). We also show the characterization of the PPT property in terms of the flattenings of the even slices of the tensors (\cref{thm:ppt-mom}). Moreover, the dual property of decomposability corresponds to the sum of squares relaxation of copositive tensors (\cref{thm:dec-iff-sos}). Our findings are summarized in \cref{fig:entanglement-DS-tensors}.

    \item \underline{The balanced bipartition gives the strongest PPT criterion}: For a mixture of Dicke states on $n$ particles, we show that the PPT property across the most balanced bipartition (the one that transposes the first $\lfloor n/2 \rfloor$ particles, leaving the other $\lceil n/2 \rceil$ unchanged) is the most stringent one, in the sense that it implies the PPT property for all the other bipartitions (\cref{cor:PPT-balanced-strongest}); this settles in the affirmative \cite[Conjecture III.1]{romero2025multipartite}.
    
    \item \underline{PPT entangled states and indecomposable witnesses}: We use some highly celebrated results in polynomial SOS to construct indecomposable witnesses, and to detect examples of PPT entangled states. We show that, except for qubits and bipartite systems with $d \leq 4$, all the other cases contain examples of PPT entangled states (\cref{thm:qutrits-and-beyond}). Moreover, these states are PPT across all bi-partitions. This also disproves the conjecture in \cite{romero2025multipartite}, i.e there exist PPT entangled states in $3$ qutrits, and beyond. 
    
    \item \underline{Bosonic Extendibility}: We introduce the notion of the hypergraph (PPT) extendibility of bosonic states. Due to symmetry properties, it follows that all diagonally symmetric states are extendable if and only if there is a diagonally symmetric extension (\cref{lem:extension-also-symmetric}). Moreover, the marginals of DS states correspond to the marginals of the  corresponding symmetric tensor (\cref{thm:marginals-dicke-states}). This shows that the extendibility of such states is equivalent to the classical extendibility problem for exchangeable probability distributions (\cref{thm:extendibility-dicke-states}). On the dual side, this implies that the non-negative and sum of squares hierarchy corresponds to the set of extendibility witnesses (\cref{cor:extendibility-witnesses}). 
    
    \item \underline{Dicke subsystems are entangled}: We use the tensor parametrization to compute reduced states of pure Dicke states. We provide a much simpler proof of the NPT property of all marginals (across all bipartitions). We show that all the $2$-body marginals of pure entangled Dicke states are entangled (\cref{prop:2-body-marginals}). This also reproduces the recently obtained results in \cite{szalay2025dicke}.
\end{itemize}

We recommend the reader to see the picture \cref{fig:full-diagram} for the full picture of correspondences, inclusions, and dualities presented in this article. The notations for the acronyms can be found in \cref{tbl:bosonic-matrices} and \cref{tbl:symmetric-tensors}.

\begin{figure}[htbp]
\makebox[\textwidth][c]{
\scalebox{0.8}{
\begin{tikzpicture}[
    >=Latex,
    mathnode/.style={inner sep=2pt, minimum height=0.8cm, font=\Large},
    hsubset/.style={font=\Large, inner sep=2pt},
    vsubset/.style={rotate=90, font=\Large, inner sep=2pt},
    arrow/.style={<->, line width=0.8pt},
    slanted/.style={inner sep=1pt, font=\Large},
    hierarchyBox/.style={
        draw=black!30, 
        dashed, 
        rounded corners, 
        inner sep=4pt
    },
    majorBox/.style={
        draw=black!50, 
        dotted, 
        line width=1.5pt,
        rounded corners=10pt, 
        inner sep=15pt,
        fill=gray!5,
        fill opacity=0.2
    },
    horizBox/.style={
        draw=#1!60, 
        dashed, 
        line width=1.8pt,
        rounded corners=15pt, 
        inner sep=10pt,
        fill=#1!15,
        fill opacity=0.4
    },
    boxLabel/.style={
        font=\small\sffamily\bfseries, 
        text=black!60, 
        inner sep=2pt
    },
    majorLabel/.style={
        font=\Large\sffamily\bfseries, 
        text=black!80, 
        inner sep=5pt
    }
]

    \matrix (leftM) [matrix of nodes, column sep=0.2cm, row sep=1.5cm, nodes={mathnode}, ampersand replacement=\&] {
        \& |[name=CP]| $\mathsf{CP}_d^{(n)}$ \& \\
        |[name=EWPExt]| $\mathsf{NNExt}_d^{(n,r)}$ \& \& |[name=MomExt]| $\mathsf{MomExt}_d^{(n,r)}$ \\
        |[name=dots1L]| $\vdots$ \& \& |[name=dots1R]| $\vdots$ \\
        |[name=EWP_L]| $\mathsf{NNExt}_d^{(n)}$ \& \& |[name=Mom]| $\mathsf{MomExt}_d^{(n,0)}$ \\
        \& |[name=EWP_C]| $\mathsf{NN}_d^{(n)}$ \& \\
        |[name=EWP_R]| $\mathsf{PNN}_d^{(n,0)}$ \& \& |[name=SOS]| $\mathsf{RSOS}_d^{(n,0)}$ \\
        |[name=dots2L]| $\vdots$ \& \& |[name=dots2R]| $\vdots$ \\
        |[name=PNN]| $\mathsf{PNN}_d^{(n,r)}$ \& \& |[name=RSOS]| $\mathsf{RSOS}_d^{(n,r)}$ \\
        \& |[name=Cop]| $\mathsf{Cop}_d^{(n)}$ \& \\
    };

    \matrix (rightM) [right=4cm of leftM, matrix of nodes, column sep=0.2cm, row sep=1.5cm, nodes={mathnode}, ampersand replacement=\&] {
        \& |[name=Sep]| $\mathsf{Sep}_d^{(n)}$ \& \\
        |[name=BExt]| $\mathsf{BExt}_d^{(n,r)}$ \& \& |[name=PPTBExt]| $\mathsf{PPTBExt}_d^{(n,r)}$ \\
        |[name=dots3L]| $\vdots$ \& \& |[name=dots3R]| $\vdots$ \\
        |[name=PSD_L]| $\mathsf{PSD}_d^{(n)}$ \& \& |[name=PPT]| $\mathsf{PPT}_d^{(n)}$ \\
        \& |[name=PSD_C]| $\mathsf{PSD}_d^{(n)}$ \& \\
        |[name=PSD_R]| $\mathsf{ExtW}_d^{(n,0)}$ \& \& |[name=Dec]| $\mathsf{DecExtW}_d^{(n,0)}$ \\
        |[name=dots4L]| $\vdots$ \& \& |[name=dots4R]| $\vdots$ \\
        |[name=PSDExt]| $\mathsf{ExtW}_d^{(n,r)}$ \& \& |[name=DecExt]| $\mathsf{DecExtW}_d^{(n,r)}$ \\
        \& |[name=EW]| $\mathsf{EW}_d^{(n)}$ \& \\
    };

    \begin{scope}[on background layer]
        
        \node[horizBox=blue, fit=(CP) (Sep) (EWPExt) (MomExt) (BExt) (PPTBExt) (EWP_L) (Mom) (PSD_L) (PPT)] (topHoriz) {};
        \node[majorLabel, left=30pt, rotate=90, anchor=south, name=Pri] at (topHoriz.west) {Primal};

        \node[horizBox=orange, fit=(Cop) (EW) (EWP_R) (SOS) (PSD_R) (Dec) (PNN) (RSOS) (PSDExt) (DecExt)] (botHoriz) {};
        \node[majorLabel, left=30pt, rotate=90, anchor=south, name=Dua] at (botHoriz.west) {Dual};

        \node[majorBox, fit=(CP) (Cop) (EWPExt) (PNN) (MomExt) (Mom)] (bigBoxLeft) {};
        \node[majorLabel, above=15pt, name=Sym] at (bigBoxLeft.north) {Symmetric Tensors};

        \node[majorBox, fit=(Sep) (EW) (BExt) (DecExt)] (bigBoxRight) {};
        \node[majorLabel, above=15pt, name=Ent] at (bigBoxRight.north) {Entanglement Theory};

        \node[hierarchyBox, fit=(EWPExt) (dots1L) (EWP_L)] (box1) {};
        \node[boxLabel,above] at (box1) {EWP Hierarchy};
        \node[hierarchyBox, fit=(MomExt) (dots1R) (Mom)] (box2) {};
        \node[boxLabel, above] at (box2) {Mom Hierarchy};
        \node[hierarchyBox, fit=(EWP_R) (dots2L) (PNN)] (box3) {};
        \node[boxLabel, above] at (box3) {Polya's hierarchy};
        \node[hierarchyBox, fit=(SOS) (dots2R) (RSOS)] (box4) {};
        \node[boxLabel, above] at (box4) {Reznick's hierarchy};
        \node[hierarchyBox, fit=(BExt) (dots3L) (PSD_L)] (box5) {};
        \node[boxLabel, above] at (box5) {B-Ext};
        \node[hierarchyBox, fit=(PPTBExt) (dots3R) (PPT)] (box6) {};
        \node[boxLabel, above] at (box6) {PPT-Ext};
        \node[hierarchyBox, fit=(PSD_R) (dots4L) (PSDExt)] (box7) {};
        \node[boxLabel, above] at (box7) {Dual Ext};
        \node[hierarchyBox, fit=(Dec) (dots4R) (DecExt)] (box8) {};
        \node[boxLabel, above] at (box8) {Dual PPT-Ext};
    \end{scope}

    \node[vsubset] at ($(EWPExt)!0.5!(dots1L)$) {$\supseteq$};
    \node[vsubset] at ($(dots1L)!0.5!(EWP_L)$) {$\supseteq$};
    \node[vsubset] at ($(MomExt)!0.5!(dots1R)$) {$\supseteq$};
    \node[vsubset] at ($(dots1R)!0.5!(Mom)$) {$\supseteq$};
    \node[vsubset] at ($(EWP_R)!0.5!(dots2L)$) {$\supseteq$};
    \node[vsubset] at ($(dots2L)!0.5!(PNN)$) {$\supseteq$};
    \node[vsubset] at ($(SOS)!0.5!(dots2R)$) {$\supseteq$};
    \node[vsubset] at ($(dots2R)!0.5!(RSOS)$) {$\supseteq$};
    \node[hsubset] at ($(MomExt)!0.5!(EWPExt)$) {$\supseteq$};
    \node[hsubset] at ($(Mom)!0.5!(EWP_L)$) {$\supseteq$};
    \node[hsubset] at ($(SOS)!0.5!(EWP_R)$) {$\subseteq$};
    \node[hsubset] at ($(RSOS)!0.5!(PNN)$) {$\subseteq$};
    \node[slanted, rotate=45] at ($(CP)!0.5!(EWPExt)$) {$\supseteq$};
    \node[slanted, rotate=-45] at ($(CP)!0.5!(MomExt)$) {$\subseteq$};
    \node[slanted, rotate=-45] at ($(EWP_L)!0.5!(EWP_C)$) {$=$};
    \node[slanted, rotate=45] at ($(Mom)!0.5!(EWP_C)$) {$\supseteq$};
    \node[slanted, rotate=45] at ($(EWP_C)!0.5!(EWP_R)$) {$=$};
    \node[slanted, rotate=-45] at ($(EWP_C)!0.5!(SOS)$) {$\subseteq$};
    \node[slanted, rotate=-45] at ($(PNN)!0.5!(Cop)$) {$\subseteq$};
    \node[slanted, rotate=45] at ($(RSOS)!0.5!(Cop)$) {$\supseteq$};

    \node[vsubset] at ($(BExt)!0.5!(dots3L)$) {$\supseteq$};
    \node[vsubset] at ($(dots3L)!0.5!(PSD_L)$) {$\supseteq$};
    \node[vsubset] at ($(PPTBExt)!0.5!(dots3R)$) {$\supseteq$};
    \node[vsubset] at ($(dots3R)!0.5!(PPT)$) {$\supseteq$};
    \node[vsubset] at ($(PSD_R)!0.5!(dots4L)$) {$\supseteq$};
    \node[vsubset] at ($(dots4L)!0.5!(PSDExt)$) {$\supseteq$};
    \node[vsubset] at ($(Dec)!0.5!(dots4R)$) {$\supseteq$};
    \node[vsubset] at ($(dots4R)!0.5!(DecExt)$) {$\supseteq$};
    \node[hsubset] at ($(PPTBExt)!0.5!(BExt)$) {$\supseteq$};
    \node[hsubset] at ($(PPT)!0.5!(PSD_L)$) {$\supseteq$};
    \node[hsubset] at ($(Dec)!0.5!(PSD_R)$) {$\subseteq$};
    \node[hsubset] at ($(DecExt)!0.5!(PSDExt)$) {$\subseteq$};
    \node[slanted, rotate=45] at ($(Sep)!0.5!(BExt)$) {$\supseteq$};
    \node[slanted, rotate=-45] at ($(Sep)!0.5!(PPTBExt)$) {$\subseteq$};
    \node[slanted, rotate=-45] at ($(PSD_L)!0.5!(PSD_C)$) {$=$};
    \node[slanted, rotate=45] at ($(PPT)!0.5!(PSD_C)$) {$\supseteq$};
    \node[slanted, rotate=45] at ($(PSD_C)!0.5!(PSD_R)$) {$=$};
    \node[slanted, rotate=-45] at ($(PSD_C)!0.5!(Dec)$) {$\subseteq$};
    \node[slanted, rotate=-45] at ($(PSDExt)!0.5!(EW)$) {$\subseteq$};
    \node[slanted, rotate=45] at ($(DecExt)!0.5!(EW)$) {$\supseteq$};

    \draw[arrow] (Pri) -- node[midway, above, rotate=90] {dual coneity} (Dua);
    \draw[arrow] (Sym) -- node[midway, above] {1-to-1 correspondence} (Ent);

    \draw[arrow] (CP) -- node[midway, below] {\cref{thm:sep-and-cp}} (Sep);
    \draw[arrow] (EWP_C) -- node[midway, below] {\cref{thm:tensor-param}} (PSD_C);
    \draw[arrow] (Cop) -- node[midway, below] {\cref{prop:ew-cop}}(EW);

    \draw[arrow] (EWPExt) to [bend right=-12] node[midway, above] {\cref{thm:extendibility-dicke-states}} (BExt);
    \draw[arrow] (MomExt) to [bend right=-12] node[midway, above] {\cref{thm:extendibility-dicke-states}} (PPTBExt);
    \draw[arrow] (PNN) to [bend right=12] node[midway, below] {\cref{cor:extendibility-witnesses}} (PSDExt);
    \draw[arrow] (RSOS) to [bend right=12] node[midway, below] {\cref{cor:extendibility-witnesses}} (DecExt);
    \draw[arrow] (Mom) to [bend right=15] node[midway, below] {\cref{thm:ppt-mom}} (PPT);
     \draw[arrow] (Dec) to [bend right=15] node[midway, above] {\cref{thm:dec-iff-sos}} (SOS);
\end{tikzpicture}
}
}
    \caption{The different sets explored in this work, with their inclusion and duality structures. On the left-hand side, we have sets of symmetric tensors from $\vee^n \mathbb R^d$, and on the right-hand side, entanglement properties of the mixtures of Dicke states $\mathsf{DS}_d^{(n)}$. Every set in the diagram is in convex duality with the set that is symmetric with respect to the middle horizontal line.}
    \label{fig:full-diagram}
\end{figure}

\definecolor{tablegray}{gray}{0.95}

\begin{table}[htbp] 
\centering 
\renewcommand{\arraystretch}{1.8} 
\setlength{\tabcolsep}{8pt} 
\begin{tabularx}{\textwidth}{l l >{\raggedright\arraybackslash}X l} 
\noalign{\global\arrayrulewidth=1pt} 
\arrayrulecolor{black}
\hline 
\rowcolor{gray!40}  
\textbf{Notation} & \textbf{Name} & \textbf{Description} & \textbf{Reference} \\ 
\noalign{\global\arrayrulewidth=0.4pt} 
\arrayrulecolor{black}
  $\mathsf{Sep}^{(n)}_d$    & Bosonic separable  & $\operatorname{cone}\{\ketbra{v}{v}^{\otimes n} \mid v \in \mathbb{C}^d\}$  & \cref{eq:separability} \\  
\rowcolor{gray!15} 
$\mathsf{PPT}^{(n)}_d$   & Bosonic PPT  & $\{X \in \operatorname{Herm}[\vee^n \mathbb{C}^d] \mid \forall k \in [\lfloor n/2 \rfloor], \,  X^{\Gamma_{[k]}} \succeq 0\}$  & \cref{def:ppt-and-psd} \\  
$\mathsf{PSD}^{(n)}_d$   & Bosonic PSD  & $\{X \in \operatorname{Herm}[\vee^n \mathbb{C}^d] \mid X \succeq 0\}$  & \cref{def:ppt-and-psd} \\  
\rowcolor{gray!15} 
$\mathsf{Dec}^{(n)}_d$   & Decomposable  & $\{X \in \operatorname{Herm}[\vee^n \mathbb{C}^d] \mid \exists \{P_0, \ldots, P_{\lfloor n/2 \rfloor} \succeq 0 \mid X = \Pi^{(n)}_d \sum^{\lfloor n/2 \rfloor}_{k=0} P^{\Gamma_{[k]}}_k \Pi^{(n)}_d\}$  & \cref{cor:duality-ppt-dec} \\  
$\mathsf{EW}^{(n)}_d$   & Entanglement Witness & $\{X \in \operatorname{Herm}[\vee^n \mathbb{C}^d] \mid \forall v \in \mathbb{C}^d, \,  \braket{v^{\otimes n} \vert X \vert v^{\otimes n}} \geq 0\}$  & \cref{prop:ew-dual-sep} \\  
\rowcolor{gray!15} 
$\mathsf{BExt}^{(n,r)}_d$   & Bosonic Extendible  & $\{\operatorname{Tr}_{[r]}(X) \mid  X \in \mathsf{PSD}^{(n+r)}_d\}$  & \cref{eq:bext-sets} \\  
$\mathsf{PPTBExt}^{(n,r)}_d$   &  PPT Bosonic Extendible & $\{\operatorname{Tr}_{[r]}(X) \mid  X \in \mathsf{PPT}^{(n+r)}_d\}$  & \cref{eq:bext-sets} \\  
\rowcolor{gray!15} 
$\mathsf{ExtW}^{(n,r)}_d$   & Extendibility Witness & $\{X \in \operatorname{Herm}[\vee^{n} \mathbb{C}^d] \mid  \Pi^{(n+r)}_d (X \otimes I^{\otimes r}) \Pi^{(n+r)}_d \succeq 0\}$  & \cref{cor:ext-witness} \\  
$\mathsf{DecExtW}^{(n,r)}_d$   & Dec. Extendibility Witness & $\{X \in \operatorname{Herm}[\vee^{n} \mathbb{C}^d] \mid  \Pi^{(n+r)}_d (X \otimes I^{\otimes r}) \Pi^{(n+r)}_d \in \mathsf{Dec}^{(n+r)}_d\}$  & \cref{thm:ppt-ext-witness} \\  
\hline 
\end{tabularx} 
\caption{Notation, definition, and main properties for bosonic matrices.} 
\label{tbl:bosonic-matrices} 
\end{table}

\begin{table}[H]
\centering
\renewcommand{\arraystretch}{1.8}
\setlength{\tabcolsep}{10pt}
\begin{tabularx}{\textwidth}{l l >{\raggedright\arraybackslash}X l}
\noalign{\global\arrayrulewidth=1pt}
\arrayrulecolor{black}\hline
\rowcolor{gray!20} 
\textbf{Notation} & \textbf{Name} & \textbf{Description} & \textbf{Reference} \\
\noalign{\global\arrayrulewidth=0.4pt}
\arrayrulecolor{black}

$\mathsf{CP}^{(n)}_d$   
& Completely Positive
& $\operatorname{cone}\{v^{\otimes n} \mid v \in \mathbb{R}^d_+\}$ 
& \cref{def:cp-tensors} \\

\rowcolor{tablegray}
$\mathsf{Mom}^{(n)}_d$  
& Moment 
& $\{T \in \vee^n \mathbb{R}^d \mid \forall X \in \mathrm{SF}[T], \, X \succeq 0\}$ 
& \cref{eq:mom-definition} \\

$\mathsf{NN}^{(n)}_d$  
& Nonnegative
& $\{T \in \vee^n \mathbb{R}^d \mid \forall i \in [d]^n, \, T_i \geq 0\}$ 
& \cref{eq:definition-ewp} \\

\rowcolor{tablegray}
$\mathsf{SOS}^{(n)}_d$  
& Sum-of-Squares
& $\{T \in \vee^n \mathbb{R}^d \mid p_T (x \odot x) \in \Sigma^{(2n)}_d \}$ 
& \cref{def:sos-tensors} \\

$\mathsf{Cop}^{(n)}_d$  
& Copositive
& $\{T \in \vee^n \mathbb{R}^d \mid \forall x \in \mathbb{R}^d, \, p_T (x \odot x) \geq 0 \}$ 
& \cref{def:copositive-tensors} \\

\rowcolor{tablegray}
$\mathsf{RSOS}^{(n,r)}_d$  
& Reznick-SOS 
& $\{T \in \vee^n \mathbb{R}^d \mid p_T (x \odot x)\|x\|^{2r} \in \Sigma^{(2n+2r)}_d \}$ 
& \cref{eq:reznick-sos} \\

$\mathsf{PNN}^{(n,r)}_d$  
& Polya-NN
& $\{T \in \vee^n \mathbb{R}^d \mid p_T (x \odot x)\|x\|^{2r} \in \chi^{(2n+2r)}_d \}$ 
& \cref{eq:polyna-nn} \\

\rowcolor{tablegray}
$\mathsf{MomExt}^{(n,r)}_d$  
& Moment Extendible
& $\{\operatorname{tr}_{[r]}(X) \mid X \in \mathsf{Mom}^{(n+r)}_d\}$ 
& \cref{eq:mom-ext} \\

$\mathsf{NNExt}^{(n,r)}_d$  
& Non-Negative Extendible
& $\{\operatorname{tr}_{[r]}(X) \mid X \in \mathsf{NN}^{(n+r)}_d\}$ 
& \cref{eq:nn-ext} \\

\hline
\end{tabularx}
\caption{Notation, definition, and main properties for convex cones of symmetric tensors.}
\label{tbl:symmetric-tensors}
\end{table}

\subsection{Organization of the article}
The paper is organized as follows: In \cref{sec:prelims}, we provide some preliminary material to understand the article.  In \cref{sec:ds-subspace}, we discuss the central object of this article, the diagonally symmetric subspace, and focus on the quantum states in this subspace called the \emph{mixtures of Dicke states}. We discuss the parametrization of the diagonally symmetric bosonic subspace in terms of a symmetric tensor. In \cref{sec:entanglement-mods}, we discuss the entanglement properties of \emph{mixtures of Dicke states} by showing exact correspondence to properties of tensors. In \cref{sec:ppt-ent-ds-subspace}, we construct indecomposable entanglement witnesses using positive polynomials that do not have a sum of squares decomposition, like the Motzkin polynomial. We also provide some examples of PPT entangled states in $3$ qutrits, and beyond. In the last \cref{sec:bosonic-extendibility}, we discuss bosonic extendibility for states in the diagonally symmetric subspace, and show connections to non-negative and the sum of squares hierarchy for copositivity.

\section{Preliminaries and Background}
\label{sec:prelims}
\subsection{Notation}
We denote by $\mathbb{R}$ and $\mathbb{C}$ as the fields of real numbers and complex numbers, respectively. We denote by $\mathcal{M}_{d \times d'} (\mathbb{F})$, the set of $d \times d'$ matrices with entries in the field $\mathbb{F}$ which will be either $\mathbb{R}$ or $\mathbb{C}$. If $d = d'$, for ease of notation, we denote the set of matrices as just $\mathcal{M}_{d} (\mathbb{F}) := \mathcal{M}_{d \times d} (\mathbb{F})$. We use the Dirac bra-ket notation to denote vectors as $\ket{v} \in \mathbb{C}^{d}$, the dual vectors as $\bra{v} \in (\C{d})^*$, and the complex inner product $\braket{v,w} := \braket{v \mid w}$. We use the indexed vectors $\{\ket{i}\}^d_{i=1}$ for the computational basis. The operator $\ketbra{v}{w} : \C{d} \rightarrow \C{d}$ is a linear operator with the matrix representation $vw^*$ (or $vw^\dagger$ for physicists). By $S_n$, we denote the group of bijections from $[n] \rightarrow [n]$, called the symmetric group. For any finite set $X$, we use $|X|$ to denote the cardinality of the set. We use $\odot$ to denote the Hadamard product, i.e the entrywise product between vectors, $(v \odot w)_i = v_i w_i.$ We use $\operatorname{ran}(X)$ to denote the range of a matrix $X.$ We use $\operatorname{cl}$ to denote the closure of the set.  

\subsection{Symmetric subspace and Bosonic states}
\label{sec:symmetric-subspace}

\subsubsection{Symmetric subspace}
In this section, we provide a brief introduction to the symmetric subspace, which, while serving as the description of bosonic particles, also has a fundamental role in various other problems in quantum information. For a much more thorough quantum information-oriented introduction to this topic, the reader should refer to the excellent review \cite{harrow2013church}. 

For any element $\pi$ in the symmetric group $S_n$, we define the following permutation operator $P_\pi :(\mathbb{C}^d)^{\otimes n} \rightarrow (\mathbb{C}^d)^{\otimes n}$, 
$$P_\pi := \sum_{i_1, i_2 \ldots i_n \in [d]} \ketbra{i_{\pi^{-1}(1)}i_{\pi^{-1}(2)}  \ldots i_{\pi^{-1}(n)}}{i_1 i_2 \ldots i_n}.$$ Note that this is also a unitary representation of the group $S_n$ on the space $(\mathbb{C}^d)^{\otimes n}$; in particular, we have $P_{\text{id}} = I^{\otimes n}$ and $(P_{\pi})^* = P_{\pi^{-1}}.$ We define the symmetric subspace of order $n$ and local dimension $d$ as the linear space of vectors $\ket{\psi} \in (\mathbb{C}^d)^{\otimes n}$ that are invariant under all permutations 
$$\vee^n \C{d} := \{\ket{\psi} \in (\C{d})^{\otimes n} \mid \forall \pi \in S_n, \, \,  P_\pi \ket{\psi} =\ket{\psi}\}.$$ 
The symmetric subspace has an equivalent description as the following span \cite[Theorem 3]{harrow2013church}:
\begin{equation}
\label{eq:span-symmetric-space}
    \vee^n  \C{d} = \operatorname{span}_{\mathbb C}\{\ket{\psi}^{\otimes n} \mid \ket{\psi} \in \C{d}\}.
\end{equation} Moreover, the orthogonal projection into the symmetric subspace $\Pi^{(n)}_d : (\C{d})^{\otimes n} \rightarrow \vee^n \C{d}$ can be explicitly constructed by the sum of the permutation operators,
$$\Pi^{(n)}_d :=\frac{1}{n!}\sum_{\pi \in S_n} P_\pi.$$  
Let $\alpha \in \mathbb{N}^d$  be a vector of natural numbers, such that $|\alpha| := \sum_{i \in [d]}\alpha_i = n$. We will denote the set of all such vectors as $\knd{n}{d}$. We define the counting function, $\gamma : [d]^n \rightarrow \knd{n}{d}$ with the components $\gamma(i)_j:= |\{k \in [n] \mid i_k = j\}|$ that count the number of elements of the multi-index  that get mapped to the value $j$. 
For all occupation vectors $\alpha \in \knd{n}{d}$, there is a \emph{unique} canonical multi-index $\idx(\alpha)$ in $\gamma^{-1}(\alpha)$ that satisfies the ordering, $\idx(\alpha)_1 \leq \idx(\alpha)_2 \ldots \leq \idx(\alpha)_n$.

We define the following pure quantum states labeled by $\alpha \in \knd{d}{n}$, 
\[
\ket{D_\alpha}  := \frac{1}{\sqrt{\binom{n}{\alpha}}} \quad  \sum_{i \in [d]^n \, : \, \gamma(i) = \alpha} \ket{i} \in (\C{d})^{\otimes n},
\] where $\binom{n}{\alpha} = \frac{n!}{\alpha_1! \alpha_2! \ldots \alpha_d!}$ is the multinomial coefficient. These states are known as \emph{qudit Dicke states}, and vector $\alpha$ is called the \emph{occupation vector}.

It is easy to verify that $\Pi^{(n)}_d \ket{D_\alpha} = \ket{D_\alpha}$ for all $\alpha \in \knd{n}{d}$, hence the Dicke states are symmetric.  These states are an orthonormal basis for the symmetric subspace $\vee^n \mathbb{C}^d$ \cite[Theorem 3]{harrow2013church}. Therefore, the cardinality of the set $\knd{n}{d}$ is equal to the dimension of the symmetric subspace. Hence we have, $$d_{\mathsf{sym}} = \operatorname{dim}(\vee^n \mathbb{C}^d) = |\knd{n}{d}| = \binom {d + n-1}{d-1} = \binom {d + n-1}{n}.$$ For $n=1$, the Dicke states are just equal to the computational basis:
$$\ket{D_{(0, \ldots, 1_i, \ldots, 0)}} = \ket i.$$
For the first non-trivial case, $n = 2$, we have the following possible occupation vectors
\begin{align*}
    \forall i \neq j \quad &(0, \ldots 1_i, \ldots 1_j, \ldots 0),  \\
    \forall i \quad &(0, \ldots 2_i, \ldots 0).
\end{align*} 
These correspond to the following states in the computational basis:
\begin{align*}
    \ket{D_{(0, \ldots 1_i, \ldots 1_j, \ldots 0)}} &= \frac{\ket{ij} + \ket{ji}}{\sqrt{2}}\\
    \ket{D_{(0, \ldots 2_i, \ldots 0)}} &= \ket{ii}.
\end{align*}  
\subsubsection{Multipartite bosonic states}
We will now discuss the extension of the bosonic symmetry to linear operators. We have the following definition. 

\begin{definition}
    A matrix $X \in \mathcal{M}^{\otimes n}_d$ is called bosonic if any of the following equivalent statements are true. 
    \begin{itemize}
        \item[(1)] $X = \Pi^{(n)}_d X \Pi^{(n)}_d.$
        \item[(2)] $\operatorname{ran}(X), \operatorname{ran}(X^*) \subseteq \vee^n \mathbb{C}^d.$
        \item[(3)] $\forall \sigma \in S_n, \quad X = P_\sigma X = X P_\sigma.$
    \end{itemize}
\end{definition}

\begin{proof}
    We show that all the above definitions are equivalent. 
    \begin{itemize}
        \item[] (1) $\implies$ (2). We observe that, $\operatorname{ran}(\Pi^{(n)}_d X \Pi^{(n)}_d) \subseteq \operatorname{ran}(\Pi^{(n)}_d) = \vee^n \mathbb{C}^d, $ and the same for the adjoint.
        \item[] (2) $\implies$ (3). Let $\ket{\psi} \in \mathbb{C}^d.$ We have, $X \ket{\psi} \in \vee^n \mathbb{C}^d$ as $\operatorname{ran}(X) \in \vee^n \mathbb{C}^d.$ This shows, $$\forall \sigma, \ket{\psi} \in \mathbb{C}^d, \quad P_{\sigma} X \ket{\psi} = X \ket{\psi}.$$ Also, by a similar argument, we have, 
        $$\forall \sigma, \ket{\psi} \in \mathbb{C}^d, \quad P_{\sigma} X^* \ket{\psi} = X^* \ket{\psi} \implies (X P_{\sigma})^* = X^* \implies X P_{\sigma} = X.$$
        \item[] (3) $\implies$ (1). $X = P_\sigma X \implies X = \frac{1}{n!} \sum_{\sigma \in S_n} P_\sigma X = \Pi^{(n)}_d X.$ Similarly, we have,  $X = X \Pi^{(n)}_d.$ This completes the proof. 
    \end{itemize}
\end{proof}

\noindent We denote the linear space of bosonic matrices as $$\mathcal{M}[\vee^n \mathbb{C}^d]:= \{X \in \mathcal{M}^{\otimes n}_d \mid X = \Pi^{(n)}_d X \Pi^{(n)}_d\}$$ and the real linear space of hermitian bosonic matrices as $$\operatorname{Herm}[\vee^n \mathbb{C}^d]:= \{X \in \operatorname{Herm}[\mathcal{M}^{\otimes n}_d] \mid X = \Pi^{(n)}_d X \Pi^{(n)}_d\}.$$ From \cite[Section 1.1]{harrow2013church}, we also have the following descriptions of these spaces.

\begin{align}
\label{eq:span-operators}
\mathcal{M}[\vee^n \mathbb{C}^d]
&= \operatorname{span}_{\mathbb{C}}
\left\{ \ketbra{\psi}{\psi}^{\otimes n} \;\middle|\; \ket{\psi} \in \mathbb{C}^d \right\}, \\[0.6em]
\operatorname{Herm}[\vee^n \mathbb{C}^d]
&= \operatorname{span}_{\mathbb{R}}
\left\{ \ketbra{\psi}{\psi}^{\otimes n} \;\middle|\; \ket{\psi} \in \mathbb{C}^d \right\}.
\end{align}

We finally present a useful proposition about linear operators, which connects the restriction of the operator to the symmetric subspace to product forms. 
\begin{proposition}
\label{thm:symmetric-subspace-proj}
The following statements are equivalent for $X \in \mathcal{M}^{\otimes n}_d, $

\begin{enumerate}
    \item[(1)] $\forall \ket{v} \in \mathbb{C}^d, \quad \braket{v^{\otimes n} \vert X \vert v^{\otimes n}} = 0.$
    \item[(2)] $\Pi^{(n)}_d X \Pi^{(n)}_d = 0.$
\end{enumerate}
\end{proposition}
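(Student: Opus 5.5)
The direction (2) $\implies$ (1) is immediate. For every $\ket{v} \in \mathbb{C}^d$ the product vector $\ket{v}^{\otimes n}$ lies in $\vee^n \mathbb{C}^d$, so $\Pi^{(n)}_d \ket{v^{\otimes n}} = \ket{v^{\otimes n}}$. Hence
\[
\braket{v^{\otimes n} \vert X \vert v^{\otimes n}} = \braket{v^{\otimes n} \vert \Pi^{(n)}_d X \Pi^{(n)}_d \vert v^{\otimes n}} = 0.
\]

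For (1) $\implies$ (2), set $Y := \Pi^{(n)}_d X \Pi^{(n)}_d$. By the computation above, hypothesis (1) says exactly that $\braket{v^{\otimes n} \vert Y \vert v^{\otimes n}} = 0$ for all $v$. Since $Y$ is bosonic, it suffices to show that the sesquilinear form $(\phi,\psi) \mapsto \braket{\phi \vert Y \vert \psi}$ vanishes on $\vee^n \mathbb{C}^d \times \vee^n \mathbb{C}^d$. By \eqref{eq:span-symmetric-space}, it is then enough to prove
\[
\braket{v^{\otimes n} \vert Y \vert w^{\otimes n}} = 0 \quad \text{for all } v,w \in \mathbb{C}^d.
\]
This off-diagonal vanishing is the main step, and I will get it by polarization. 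Fix $v,w$ and complex scalars $s,t$, and consider
\[
f(s,t,\bar s,\bar t) := \braket{(sv+tw)^{\otimes n} \vert Y \vert (sv+tw)^{\otimes n}}.
\]
This is a polynomial in $s,t,\bar s,\bar t$ that is homogeneous of degree $n$ in $(s,t)$ and of degree $n$ in $(\bar s,\bar t)$, and by hypothesis it vanishes identically on $\mathbb{C}^2$. A polynomial in $z$ and $\bar z$ that vanishes on all of $\mathbb{C}^m$ has all coefficients zero, since $z$ and $\bar z$ may be treated as independent variables; equivalently, pass to real and imaginary parts, where a real polynomial vanishing on $\mathbb{R}^{2m}$ is zero.

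Now expand $(sv+tw)^{\otimes n}$ multilinearly. The coefficient of $\bar s^{\,n} t^n$ in $f$ is exactly $\braket{v^{\otimes n} \vert Y \vert w^{\otimes n}}$: choosing $\bar s$ in every bra slot and $t$ in every ket slot gives the unique term with that monomial. This coefficient therefore vanishes, which gives the off-diagonal vanishing above.

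Extending by sesquilinearity over the spanning set $\{v^{\otimes n}\}$, we get $\braket{\phi \vert Y \vert \psi} = 0$ for all $\phi,\psi \in \vee^n \mathbb{C}^d$. Because $\operatorname{ran}(Y)$ and $\operatorname{ran}(Y^*)$ both lie in $\vee^n \mathbb{C}^d$, and $Y$ annihilates the orthogonal complement of $\vee^n \mathbb{C}^d$, it follows that $Y = 0$.

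The only delicate point is justifying the coefficient extraction. The argument uses complex scalars essentially: with real scalars one only recovers the Hermitian part of $Y$. So I will state the independence of monomials in $s,t,\bar s,\bar t$ carefully, checking it via the real and imaginary parts of $s$ and $t$.
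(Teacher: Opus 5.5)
Your proof is correct, but it places the key step somewhere other than the paper does. The paper proves (1) $\Rightarrow$ (2) at the operator level. It cites the spanning identity \eqref{eq:span-operators}, namely $\mathcal{M}[\vee^n\mathbb{C}^d]=\operatorname{span}_{\mathbb{C}}\{\ketbra{\psi}{\psi}^{\otimes n}\}$, from \cite{harrow2013church}. With it, (1) becomes $\operatorname{Tr}(XW)=0$ for every bosonic $W=\Pi^{(n)}_d\tilde W\Pi^{(n)}_d$. That is, $\operatorname{Tr}(\Pi^{(n)}_d X\Pi^{(n)}_d\tilde W)=0$ for all $\tilde W$, which forces $\Pi^{(n)}_d X\Pi^{(n)}_d=0$.

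You use only the weaker vector-level fact \eqref{eq:span-symmetric-space} and do the polarization yourself. Extracting the $\bar s^{\,n}t^n$ coefficient gives $\braket{v^{\otimes n} \vert Y \vert w^{\otimes n}}=0$. By duality, this says exactly that each $\ket{w^{\otimes n}}\bra{v^{\otimes n}}$ lies in the span of the operators $\ketbra{u}{u}^{\otimes n}$. So your argument is a self-contained proof of the ingredient the paper imports. The paper's version is two lines, but it rests entirely on the cited operator-span result, whose standard proof is essentially your coefficient extraction.

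One correction to your closing remark, which does not affect the proof. What is essential is that $v$ in (1) ranges over $\mathbb{C}^d$, not that $s,t$ are complex. For $n=1$ and real $v$ only, any nonzero real antisymmetric $X$ satisfies (1) but not (2). With real $s,t$, you could still isolate $\braket{v^{\otimes n} \vert Y \vert w^{\otimes n}}$ by replacing $w$ with $e^{i\theta}w$ and taking Fourier coefficients in $\theta$.
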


\begin{proof}
    The reverse direction ($2 \implies 1$) is easy, by noting that for all $\ket{v} \in \mathbb{C}^d$, $\Pi^{(n)}_d \ket{v^{\otimes n}} = \ket{v^{\otimes n}}.$  For the forward direction ($1 \implies 2$), from \cref{eq:span-operators}, we have the following equivalence:
    $$\forall \ket{v} \in \mathbb{C}^d, \quad \braket{v^{\otimes n} \vert X \vert v^{\otimes n}} = 0 \iff \forall W \in \mathcal{M}[\vee^n \mathbb{C}^d], \quad \operatorname{Tr}(XW) = 0.$$
    Moreover, this is equivalent to the statement
    $$\forall \tilde W \in \mathcal{M}^{\otimes n}_d, \quad \operatorname{Tr}(X \Pi^{(n)}_d \tilde W \Pi_d^{(n)}) =  \operatorname{Tr}(\Pi_d^{(n)} X \Pi^{(n)}_d \tilde W) = 0.$$ This shows the claim.
\end{proof}

\begin{remark}
    Note that the bosonic matrices are a strict subset of the class of permutation symmetric matrices, 
    $\{X \in \mathcal{M}^{\otimes n}_d \mid \forall \sigma \in S_n, P_\sigma X = X P_\sigma\}.$ To see this, note that the matrix $\mathbb{I}_{d^ n} = \mathbb{I}^{\otimes n}_d$ is permutation symmetric, but not bosonic, as $\operatorname{ran}(\mathbb{I}_{d^n}) = (\C{d})^{\otimes n} \supsetneq \vee^n \C{d}.$ 
\end{remark}

Since the Dicke states form a basis of the symmetric subspace of $n$ qudits, they can also be used to describe a basis for the vector space of operators acting on the symmetric subspace. 

\begin{proposition}
\label{prop:dicke-expansion}
    Any bosonic matrix $X \in \mathcal{M}[\vee^n \mathbb{C}^d]$ can be expanded in terms of Dicke states. 
    $$X = \sum_{\alpha, \beta \in \knd{n}{d}} \braket{D_\alpha |X|D_\beta}\ketbra{D_\alpha}{D_\beta}.$$
\end{proposition}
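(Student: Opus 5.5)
The plan is to write the projection onto the symmetric subspace in the Dicke basis and insert it on both sides of $X$. Since the Dicke states $\{\ket{D_\alpha}\}_{\alpha \in \knd{n}{d}}$ form an orthonormal basis of $\vee^n \mathbb{C}^d$ (as recalled above from \cite[Theorem 3]{harrow2013church}), the orthogonal projection onto $\vee^n \mathbb{C}^d$ can be written as
$$\Pi^{(n)}_d = \sum_{\alpha \in \knd{n}{d}} \ketbra{D_\alpha}{D_\alpha}.$$
To justify this identity, I would note two facts. First, both sides are orthogonal projections. Second, they have the same range: the right-hand side is the projection onto the span of an orthonormal family, and that span is exactly $\vee^n \mathbb{C}^d$.

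Next, I would use the defining property $(1)$ of bosonic matrices, namely $X = \Pi^{(n)}_d X \Pi^{(n)}_d$. Substituting the expansion of $\Pi^{(n)}_d$ on the left and on the right gives
$$X = \sum_{\alpha,\beta \in \knd{n}{d}} \ket{D_\alpha}\braket{D_\alpha | X | D_\beta}\bra{D_\beta}.$$
Pulling the scalar $\braket{D_\alpha|X|D_\beta}$ out of the product then yields the claimed formula, using only bilinearity of the operator product.

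There is no real obstacle here. The only point needing care is the identification of $\Pi^{(n)}_d$, defined in the paper as the group average $\frac{1}{n!}\sum_\pi P_\pi$, with the sum of the Dicke rank-one projectors. I would settle this by checking that $\frac{1}{n!}\sum_\pi P_\pi$ is self-adjoint and idempotent, which follows from $P_\pi^* = P_{\pi^{-1}}$ and the group property. I would then check that its range is $\vee^n\mathbb{C}^d$: it fixes every symmetric vector, and $P_\sigma \Pi^{(n)}_d = \Pi^{(n)}_d$ shows that every vector in its range is symmetric. Uniqueness of the orthogonal projection onto a given subspace then gives the equality.
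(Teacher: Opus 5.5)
Your argument is correct and is essentially the reasoning the paper relies on: it states this proposition without a written proof, taking as given that the Dicke states form an orthonormal basis of $\vee^n\mathbb{C}^d$ (so $\Pi^{(n)}_d=\sum_{\alpha}\ketbra{D_\alpha}{D_\alpha}$) and that a bosonic $X$ satisfies $X=\Pi^{(n)}_d X\Pi^{(n)}_d$. Your extra check that the group average $\frac{1}{n!}\sum_\pi P_\pi$ is the orthogonal projection onto $\vee^n\mathbb{C}^d$ is sound, though the paper simply assumes this standard fact.
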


All the sets of bosonic matrices that we shall discuss in this work are presented in \cref{tbl:bosonic-matrices}, which can be used as a reference for the acronyms and the main results about each set.

\subsection{Tensors and Polynomials}
\label{sec:tensor-cones}

One of the main conceptual  contributions of this work is a correspondence between Dicke states and real symmetric tensors. In this section, we shall describe the main properties of tensors that are relevant for us. We refer the reader to \cref{tbl:symmetric-tensors} for a quick reference on the main sets of tensors and their properties used in this paper. 

\subsubsection{Notations and diagrams for tensors} For any multi-index $i \in [d]^n$, we denote the components of the tensor $T \in (\mathbb{R}^d)^{\otimes n}$ as $T_{i_1 i_2 \ldots i_n}$ in the canonical basis $e_i := e_{i_1} \otimes \, e_{i_2}  \ldots e_{i_n}$. This will be shortened to $T_i$ whenever appropriate. We use the notation $i \sqcup j \in [d]^{n+m}$ to show the join of the multi-indices $i \in [d]^n$ and $j \in [d]^m$. For two tensors $Q \in (\mathbb{R}^d)^{\otimes n}$ and $T \in (\mathbb{R}^d)^{\otimes m}$, we have the tensor $(Q \otimes T) \in (\mathbb{R}^d)^{\otimes (n+m)}$ defined component-wise as, $$\forall i \in [d]^n, j \in [d]^m, \quad (Q \otimes T)_{i \sqcup j} := Q_{i} T_{j}.$$ For any multi-index $i \in [d]^n$ and $\sigma \in S_n$,  we denote by $\sigma \cdot i := (i_{\sigma(1)}, i_{\sigma(2)} \ldots i_{\sigma(n)}).$ 

We also use the diagrammatic notation for tensors wherever it is useful. A tensor with $n$ indices is represented with a spider with $n$ legs as in \cref{fig:tensors}. The tensor product $Q \otimes T$ will be denoted by placing two tensors together. A contraction of a tensor is represented by joining two wires as \cref{fig:tensor-contraction}, and represents the following operation, 

$$(T')_{i_1 i_2 \ldots i_{n-2}} = \sum^d_{j=1} T_{jj i_1 i_2 i_3 \ldots i_{n-2}}.$$

A tensor with one index is a vector. A matrix can be interpreted as a tensor with $2$ indices. The matrix multiplication of two matrices $M$ and $K$ can be represented as a tensor contraction. $$\sum^d_{i_2=1} (M \otimes K)_{i_1 i_2 i_2 j_3} = \sum^d_{i_2=1} M_{i_1 i_2} K_{i_2 i_3}.$$ Moreover, the Euclidean inner product of two tensors can also be represented as a tensor contraction. 
$$\braket{Q,T} := \sum_{i \in [d]^n} T_{i} Q_{i} = \sum_{i \in [d]^n} (T \otimes Q)_{i \sqcup i}.$$

\begin{figure}[htb]
    \centering
    \includegraphics[width=0.5\linewidth]{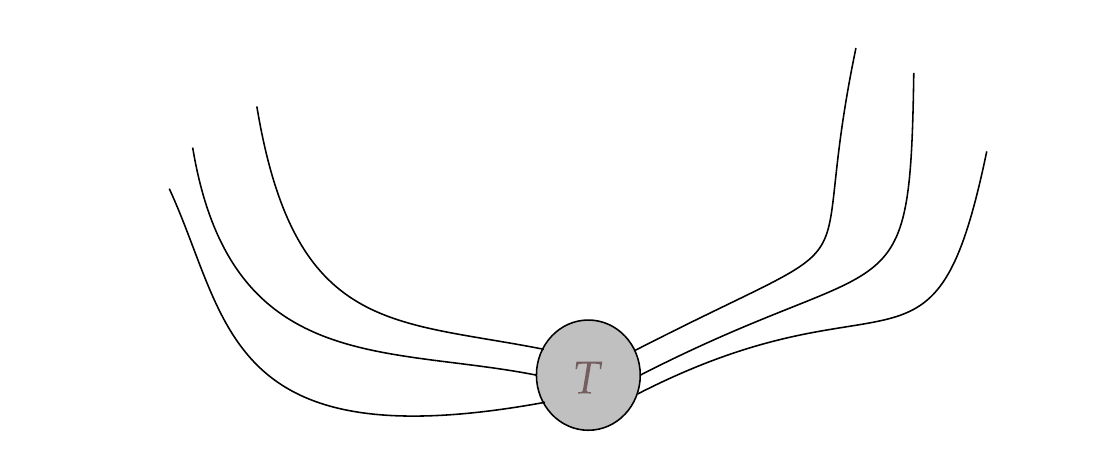}
    \caption{The tensor spider $T$ with $6$ legs.}
    \label{fig:tensors}
\end{figure}

\begin{figure}[htb]
    \centering
    \includegraphics[width=0.5\linewidth]{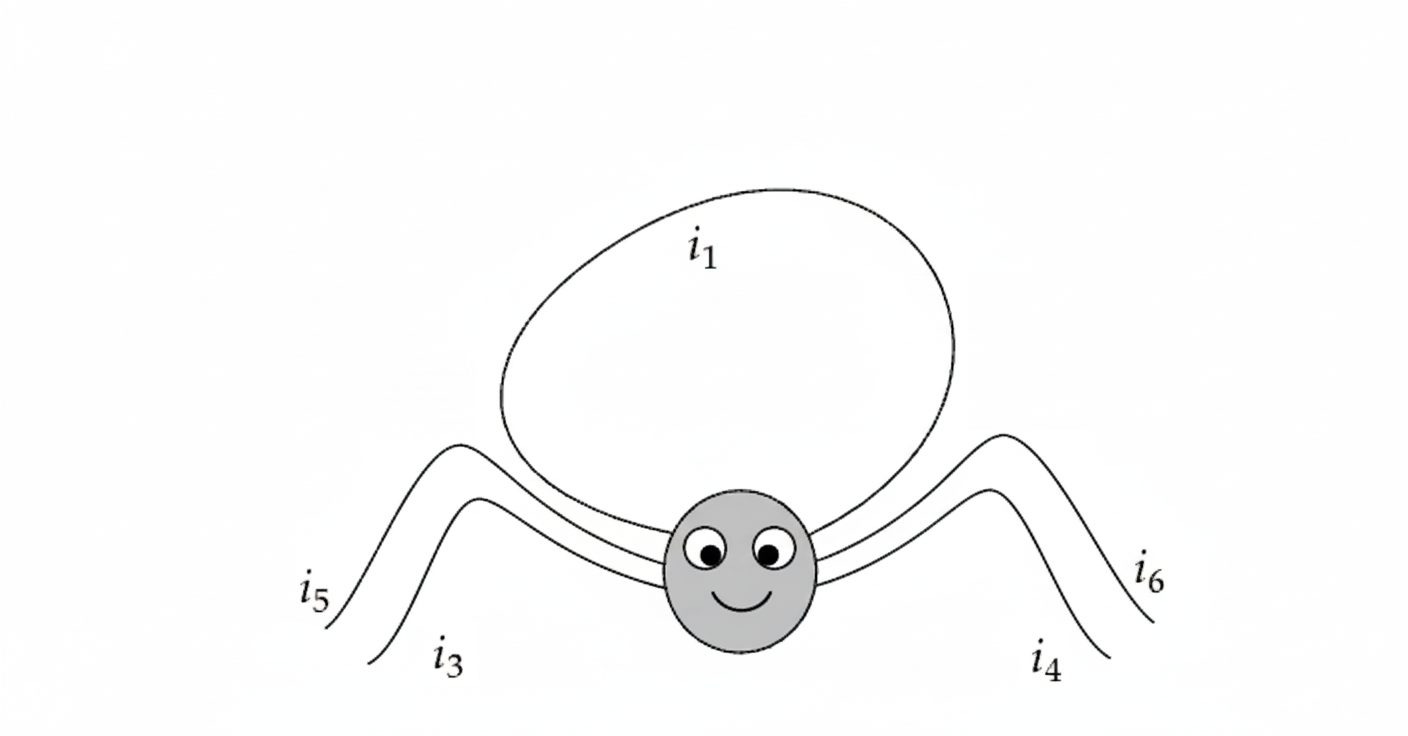}

    \caption{A tensor contraction: the contracted tensor has now $4$ legs $i_3, i_4, i_5, i_6$. The other two legs are equal $i_1 = i_2$ and summed over. }
    \label{fig:tensor-contraction}
\end{figure}

\subsubsection{Symmetric tensors and polynomials}
In this article, we will work primarily with \emph{real} tensors with permutation symmetry. Formally, any tensor $T \in (\mathbb{R}^d)^{\otimes n}$ is \emph{symmetric tensor} if it stays invariant under the permutation of its indices, i.e, $$\forall \sigma \in S_n, \forall i \in [d]^n, \quad T_{i} = T_{\sigma \cdot i}.$$ Note that in the case of the $2$ indices, by interpreting the indices to be row and column indices of a matrix, these tensors correspond exactly to real symmetric matrices. The linear space of such tensors will be denoted by $\vee^n\mathbb{R}^d$. We can define the following projection map, $\Pi^{(n, \mathbb{R})}_d : (\mathbb{R}^d)^{\otimes n} \to \vee^n\mathbb{R}^d$ such that, 

$$\forall i \in [d]^n, \quad \Pi^{(n, \mathbb{R})}_d [T]_{i} := \frac{1}{n!} \sum_{\sigma \in S_n} T_{\sigma \cdot i}.$$ 

For $\alpha \in \mathbb{N}^d$, $x \in \mathbb{R}^d$, we define the expression $x^\alpha$ to denote the monomial $x^{\alpha_1}_1 x^{\alpha_2}_2 \ldots x_d^{\alpha_d}$. Note that the degree of such a monomial is $\sum_i \alpha_i:= |\alpha|$. Let us denote by $[x]^n_d$ the vector of homogeneous monomials of degree $n$ and in $d$ variables, by ordering the monomials using the lexicographic ordering. This is done by imposing the ordering $x^\alpha > x^\beta$ if in $\alpha - \beta$, the first non-zero entry from the left is positive. An example of such an ordering in the case of monomials of degree $2$ and $2$ variables is, $[x]^2_2:= (x^2_1, x_1x_2, x_2^2)$. 

It is well-known that the symmetric tensors have a linear bijection to the set of homogeneous polynomials of degree $n$ and $d$ variables, which we denote as $\homo{d}{n}$. This bijection is the following map $T \mapsto p_T$:
\begin{equation}\label{eq:Ti-vs-Talpha}
    p_T(x):= \braket{T, x^{\otimes n}} = \sum_{i \in [d]^n} T_{i} x_{i_1} x_{i_2} \ldots x_{i_n} = \sum_{\alpha \in \knd{n}{d}} \binom{n}{\alpha} T_{\idx(\alpha)} x^\alpha.
\end{equation}

Moreover, note the following characterization of \emph{real} symmetric tensors \cite{comon2008symmetric}, which is analogous to that of complex tensors in \cref{eq:span-symmetric-space}:
$$\vee^n \mathbb{R}^d = \operatorname{span}_{\mathbb{R}}\{v^{\otimes n} \mid v \in \mathbb{R}^d\}.$$

\subsubsection{Convex duality}
Let $V$ be a real vector space, $V^*$ be the dual vector space  and $\mathcal{C}$ be a convex cone. The \emph{dual cone} of a convex cone $\mathcal{C} \subseteq V$ is defined by the following set.  $$\mathcal{C}^\circ := \{X \in V^* \mid \forall \, Y \in \mathcal{C}, \, X (Y) \geq 0\}.$$
If for two convex cones, $\mathcal{C}, \mathcal{K}$, we have $\mathcal{C} \subseteq \mathcal{K}$, it implies that $\mathcal{K}^\circ \subseteq \mathcal{C}^\circ$. A cone is called self-dual if $\mathcal{C}^\circ \cong \mathcal{C}$. If the vector space $V$ has an inner product $\braket{. \, ,  .}$, then all the linear functionals are of the form, 
$$X(Y) = \braket{L_X, Y},$$ for a unique $L_X \in V$. This provides a natural isomorphism between $V$ and $V^*.$ In this article, we consider tensor cones $C \subseteq \vee^n \mathbb{R}^d$ with the Euclidean inner product. Then, the entrywise non-negative is an example of a self-dual convex cone. 
\begin{equation}
\label{eq:definition-ewp}
    \mathsf{NN}^{(n)}_d := \{Q \in \vee^n \mathbb{R}^d \mid \forall i \in [d]^n, \, Q_{i} \geq 0\}.
\end{equation}

\subsubsection{Completely Positive (CP) and Copositive (Cop) tensors}
We first introduce the notion of completely positive tensors. This notion generalizes that of completely positive matrices, which have been investigated since the 1960s \cite{hall1963copositive}. These are symmetric matrices that can be written as a sum of rank-1 symmetric matrices, i.e, $$X = \ketbra{v_1}{v_1} + \ketbra{v_2}{v_2} \ldots + \ketbra{v_K}{v_K}$$ where each $\ket{v_q} \in \mathbb{R}^d_+$ is a vector with non-negative entries. We recommend the excellent book by A. Berman and N. Shaked-Monderer \cite{shaked2021copositive} as a comprehensive reference on this topic. This cone of completely positive matrices is quite well-studied due to its importance in the theory of optimization, and also recently found a connection to entanglement theory for bosonic states \cite{yu2016separability, tura2018separability}. Let us now define the general notion for tensors,

\begin{definition}
\label{def:cp-tensors}
A symmetric tensor $Q \in \vee^n (\mathbb{R}^d)$ is called \emph{completely positive} if there exists a finite set of entry-wise non-negative vectors $\{v_q\}^K_{q=1} \in \mathbb{R}^d_+$ such that 
$$Q = v_1^{\otimes n} + v_2^{\otimes n} + \ldots + v^{\otimes n}_K.$$
\end{definition} 
The set of completely positive tensors is a convex, pointed, and closed cone with a non-empty interior \cite[Proposition 1]{pena2015completely} in the real vector space of symmetric tensors $\vee^n\mathbb{R}^d$. We denote this convex cone as $\mathsf{CP}^{(n)}_d$ where $n$ denotes the number of legs (indices) and $d$ is the local dimension. Clearly, any completely positive tensor is entrywise non-negative, as all the components of $v_q$ are non-negative. We refer the reader to \cite{qi2014nonnegative,luo2015doubly,nie2022dehomogenization} for the recent work on this class of tensors. The membership problem for even completely positive matrices is known to be NP-hard \cite{murty1987some}. Despite the hardness of general characterization, some results have been obtained in this direction by different approaches; looking at structured tensors \cite{luo2016completely}, in low dimensions \cite{nie2018complete}, or solving a growing sequence of semi-definite programs \cite{fan2014semidefinite,nie2022dehomogenization}. Similarly to the matrix case, some classes of polynomial optimization problems can be reformulated in terms of the completely positive tensors \cite{pena2015completely}. Surprisingly, it has been shown that for $d=2$, deciding complete positivity is equivalent to checking positivity of two Hankel matrices \cite{fan2019completely}.

We introduce the \emph{dual cone} of the completely positive tensors, called \emph{copositive tensors} \cite{qi2013symmetric}: 
$$\mathsf{Cop}^{(n)}_d:= (\mathsf{CP}^{(n)}_d)^{\circ}.$$ 
Recall that for all symmetric tensors $Q \in \vee^n\mathbb{R}^d$, there is an associated homogeneous polynomial $p_Q(x) = \braket{Q, x^{\otimes n}}$; $p_Q(x)$ is a homogeneous polynomial of degree $n$ in $d$ variables with real coefficients. The following proposition provides another characterization of copositivity.
\begin{proposition}
\label{def:copositive-tensors}
A symmetric tensor $Q \in \vee^n\mathbb{R}^d$ is copositive if and only if
$$\forall x \in \mathbb{R}^d_+ \qquad p_Q (x) \geq 0.$$
\end{proposition}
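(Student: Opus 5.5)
The proof will simply unfold the definition of the dual cone and use the generators of $\mathsf{CP}^{(n)}_d$. By definition, $Q \in \mathsf{Cop}^{(n)}_d = (\mathsf{CP}^{(n)}_d)^\circ$ if and only if $\braket{Q, T} \geq 0$ for every $T \in \mathsf{CP}^{(n)}_d$, where we use the Euclidean inner product on $\vee^n \mathbb{R}^d$ to identify the dual space with $\vee^n\mathbb{R}^d$ itself. The key identity is
$$\braket{Q, x^{\otimes n}} = p_Q(x),$$
which is exactly the definition of the polynomial $p_Q$ in \cref{eq:Ti-vs-Talpha}.

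For the forward direction, I take $Q$ copositive and $x \in \mathbb{R}^d_+$. Then $x^{\otimes n}$ is a completely positive tensor: it is the case $K=1$, $v_1 = x$ of \cref{def:cp-tensors}. Hence $p_Q(x) = \braket{Q, x^{\otimes n}} \geq 0$.

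For the reverse direction, I assume $p_Q(x) \geq 0$ for all $x \in \mathbb{R}^d_+$ and take an arbitrary $T = \sum_{q=1}^K v_q^{\otimes n} \in \mathsf{CP}^{(n)}_d$ with $v_q \in \mathbb{R}^d_+$. By bilinearity of the inner product,
$$\braket{Q,T} = \sum_{q=1}^K \braket{Q, v_q^{\otimes n}} = \sum_{q=1}^K p_Q(v_q) \geq 0,$$
so $Q \in (\mathsf{CP}^{(n)}_d)^\circ$.

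No real obstacle is expected. The only point to check is that the dual cone is taken inside $\vee^n\mathbb{R}^d$ with the Euclidean pairing, so that functionals correspond to tensors $Q$ via $T \mapsto \braket{Q,T}$. This is the identification fixed in the convex-duality paragraph. Closure issues do not arise, since $\mathsf{CP}^{(n)}_d$ is defined as the set of finite sums, and positivity on the generators extends to it by linearity.
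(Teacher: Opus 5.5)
Your proof is correct. It is the natural argument, and the paper states this proposition without proof, leaving the same reasoning implicit: unfold $(\mathsf{CP}^{(n)}_d)^\circ$ via the Euclidean pairing, test against the generators $x^{\otimes n}$ with $x \in \mathbb{R}^d_+$ using $\braket{Q, x^{\otimes n}} = p_Q(x)$, and extend to all of $\mathsf{CP}^{(n)}_d$ by linearity over finite sums.
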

Note that in the previous definition, we only need the polynomial to be positive for all non-negative vectors $x$, but in general, the polynomial $p_Q (x)$ can take negative values. This generalizes the notion of copositive matrices, which correspond to the second-order copositive tensors \cite{Diananda_1962, hall1963copositive, dur2010copositive, BomzeEtAl2000}. The membership problem for copositive matrices is known to be co-NP-hard \cite{murty1987some}. Like the completely positive cone, this convex cone is well-studied due to its importance in the theory of optimization as copositivity offers a unified convex way to reformulate non-convex mixed quadratic programs into convex programs \cite{dur2010copositive}.  For a comprehensive review of copositive matrices, we again refer the reader to the detailed book by Berman and Shaked-Monderer \cite{shaked2021copositive}. 
\subsubsection{Sum-of-squares}

We review in this section some basic but important facts about multivariate polynomials that can be written as sums of squares. Such polynomials play a central role in semi-algebraic geometry, see \cite{blekherman2012semidefinite}.

\begin{definition}
    Let $p \in \homo{d}{2n}$ be a homogeneous polynomial in $d$ variables $(x_1 \ldots x_d) =: x$ and degree $2n$. Such a polynomial $p$ is said to be a \emph{sum of squares} polynomial (or in short a SOS polynomial) if there exist a finite number of homogeneous polynomials ${f_k({x})}^K_{k=1} \in \homo{d}{n}$ having degree $n$, such that, 
    $$p({x}) = \sum^K_{k=1} f_k^2({x}).$$
\end{definition} 
We denote the set of sum of squares polynomials in even degree $2n$ and $d$ variables as $\Sigma^{2n}_d[x] \subseteq \homo{d}{2n}$. The polynomials that can be written as a sum of squares are clearly globally nonnegative (as $\forall f \in \mathbb{R}[x], f(x)^2 \geq 0$). But the converse of the statement is false, which is the content of Hilbert's celebrated results \cite{hilbert1888darstellung}.  This means that there exist globally positive homogeneous polynomials that are not SOS. The following \cref{tab:homo-sos} summarizes Hilbert's theorem for the case of homogeneous polynomials. Note that the same result holds true also for \emph{symmetric} homogeneous polynomials \cite{goel2017analogue}, i.e polynomials that satisfy, $$\forall \sigma \in S_d \quad p(x_1, x_2 \ldots x_d) = p(x_{\sigma(1)},x_{\sigma(2)} \ldots x_{\sigma(d)}).$$

\begin{table}[htb] 
\centering 
\renewcommand{\arraystretch}{1.3}
\begin{tabular}{|c|c|c|c|}
\hline
\rowcolor{yellow!10}
 & $2n = 2$ & $2n = 4$ & $2n \ge 6$ \\
\hline
\rowcolor{yellow!10}
$d = 2$ & \cellcolor{green!20}Yes & \cellcolor{green!20}Yes & \cellcolor{green!20}Yes \\
\hline
\rowcolor{yellow!10}
$d = 3$ & \cellcolor{green!20}Yes & \cellcolor{green!20}Yes & \cellcolor{red!20}No \\
\hline
\rowcolor{yellow!10}
$d \ge 4$ & \cellcolor{green!20}Yes & \cellcolor{red!20}No & \cellcolor{red!20}No \\
\hline
\end{tabular}
\caption{Can any homogeneous or symmetric homogeneous polynomial of degree $2n$ in $d$ variables be represented as a sum of squares? \label{tab:homo-sos}}
\end{table}

Next, we show a reformulation of the sum of squares condition of the polynomial in terms of positive semidefinite matrices. Similar results are well-known in the literature, see \cite{powers1998algorithm, laurent2008sums} under the name of the \emph{Gram-Matrix Method}. This formulation also provides an efficient way to check the sum of squares property of polynomials, by recasting it as a semi-definite program (SDP) \cite{boyd2004convex,lasserre2009moments,blekherman2012semidefinite}. 
\begin{theorem}
\label{thm:gram-matrix}
    A homogeneous polynomial $p \in \homo{d}{2n}$ of degree $2n$ is a sum of squares (SOS) if and only if there exists a PSD matrix, $P \in \operatorname{Herm}[\mathcal{M}_d(\mathbb{R})^{\otimes n}]$ such that 
    $$p(x) =\braket{x^{\otimes n} |P| x^{\otimes n}}.$$
\end{theorem}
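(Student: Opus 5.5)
The proof is the standard Gram-matrix argument, written with tensor powers $x^{\otimes n}$ instead of the monomial vector $[x]^n_d$. We prove the two implications separately.

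\emph{SOS implies a PSD Gram matrix.} Suppose $p=\sum_{k=1}^K f_k^2$ with each $f_k\in\homo{d}{n}$. Each $f_k$ is homogeneous of degree $n$, so by the bijection \eqref{eq:Ti-vs-Talpha} there is a symmetric tensor $F_k\in\vee^n\mathbb{R}^d\subseteq(\mathbb{R}^d)^{\otimes n}$ with $f_k(x)=\braket{F_k,x^{\otimes n}}$. For real $x$ this gives
$$f_k(x)^2=\braket{x^{\otimes n}|F_k}\braket{F_k|x^{\otimes n}}.$$
Set $P:=\sum_k \ketbra{F_k}{F_k}$. This is a real symmetric PSD matrix on $(\mathbb{R}^d)^{\otimes n}$, hence Hermitian PSD in $\mathcal{M}_d(\mathbb{R})^{\otimes n}$, and it satisfies $p(x)=\braket{x^{\otimes n}|P|x^{\otimes n}}$.

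\emph{A PSD Gram matrix implies SOS.} Suppose $P\succeq0$ with $p(x)=\braket{x^{\otimes n}|P|x^{\otimes n}}$. Since $P$ is real and symmetric, the spectral theorem over $\mathbb{R}$ gives $P=\sum_k \lambda_k\ketbra{u_k}{u_k}$ with $\lambda_k\ge0$ and real eigenvectors $u_k\in(\mathbb{R}^d)^{\otimes n}$. Then
$$p(x)=\sum_k\big(\sqrt{\lambda_k}\,\braket{u_k,x^{\otimes n}}\big)^2 .$$
Each map $x\mapsto\braket{u_k,x^{\otimes n}}=\sum_{i}(u_k)_i x_{i_1}\cdots x_{i_n}$ is a homogeneous polynomial of degree $n$ with real coefficients, even when $u_k$ is not symmetric. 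Only the symmetrization $\Pi^{(n,\mathbb{R})}_d u_k$ matters, since $\braket{u_k,x^{\otimes n}}=\braket{\Pi^{(n,\mathbb{R})}_d u_k,x^{\otimes n}}$. So $p$ is a sum of squares of elements of $\homo{d}{n}$.

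\emph{Main obstacle.} The only delicate point is the meaning of ``Hermitian'' in the statement. If $P$ is allowed complex entries, we need a real Gram matrix, and we get one by replacing $P$ with its real part $\operatorname{Re}P=(P+\overline{P})/2$. For real $x$ the quadratic form is unchanged, because $\braket{x^{\otimes n}|P|x^{\otimes n}}$ is real and equals its own conjugate $\braket{x^{\otimes n}|\overline{P}|x^{\otimes n}}$. Moreover $\overline{P}\succeq0$ whenever $P\succeq0$, so $\operatorname{Re}P$ is a real symmetric PSD matrix and the argument above applies to it.
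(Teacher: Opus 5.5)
Your proof is correct and is essentially the paper's Gram-matrix argument: for SOS $\Rightarrow$ Gram you build $P=\sum_k \ketbra{F_k}{F_k}$ from the symmetric tensors representing the $f_k$ (the paper's $\tilde\mu_k$), and for the converse you factor $P$ via the spectral theorem, where the paper writes $P=X^\top X$. Your extra reduction of a possibly complex Hermitian $P$ to $\operatorname{Re}P$ is valid but not needed, since the statement already takes $P$ to have real entries.
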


\begin{proof}
    ($\impliedby$) Assume that there exists a PSD matrix $P$ such that $p(x) =\braket{x^{\otimes n} |P| x^{\otimes n}}$. Since it is PSD, $P$ can be written as $X^\top X$, where $X \in \mathcal{M}_d(\mathbb{R})^{\otimes n}$. Therefore, 
    $$p(x) = \braket{x^{\otimes n} |P |x^{\otimes n}} =\braket{x^{\otimes n} |X^\top X |x^{\otimes n}}= \sum_{i \in [d]^n} f_{i}^2(x)$$ 
    where $\forall i \in [d]^n, \, f_{i}(x) = \sum_{j \in [d]^n} X_{i,j} x_{j_1} x_{j_2} \ldots x_{j_n}$. This implies that $p(x)$ is a sum of squares polynomial.

    \smallskip

    \noindent($\implies$) For the reverse implication, assume that the polynomial $p(x)$ is a sum of squares, i.e, of the form
    $$p(x) = \sum^K_{k=1} f^2_k(x).$$
    For all $k$, expand $f_k(x) = \sum_\alpha \mu_{k, \alpha}  x^\alpha$ as a sum of monomials. Note that, we can also expand $f_k(x) = \braket{\tilde \mu_k, x^{\otimes n}}$ where $\forall i \in [d]^n,  (\tilde \mu_k)_{i} = \binom{n}{\gamma(i)}^{-1}\mu_{k, \gamma(i)}$, see \cref{eq:Ti-vs-Talpha}. Then, we have: 
    $$p(x) = \sum^K_{k=1} \braket{\tilde \mu_k, x^{\otimes n}}^2  = \sum^K_{k=1} \braket{x^{\otimes n}, \tilde \mu_k} \braket{\tilde \mu_k, x^{\otimes n}}  =  \bra{x^{\otimes n}}\sum^K_{k=1} \vert \tilde \mu_k \rangle \langle \tilde  \mu_k \vert \ket{x^{\otimes n}}.$$ By choosing $P := \sum^K_{k=1} \vert \tilde \mu_k \rangle \langle \tilde  \mu_k \vert$, we conclude the proof. 
\end{proof}
\begin{remark}
    It is easy to show with a slight modification of the above proof that the polynomial is SOS if and only if there is a PSD matrix $P$ such that 
    $p(x) = ([x]^n_d)^\top P [x]^n_d$ where $[x]^n_d$ is the vector of monomials of degree $n$ in $d$ variables. The matrix $P$ in this case is known as the \emph{Gram}-Matrix \cite{powers1998algorithm, laurent2008sums}.
\end{remark} Recall that there is a homogenous polynomial of $d$ variables and degree $n$ associated to the tensor $Q \in \vee^n \mathbb{R}^d$, 
$$p_Q(x) := \sum_{i \in [d]^n} Q_{i_1 i_2 \ldots i_n} x_{i_1} x_{i_2} \ldots x_{i_n}.$$
From \cref{def:copositive-tensors}, it is clear that $Q$ is a copositive tensor if and only if $p_Q (x \odot x) \geq 0$ for all $x \in \mathbb{R}^d$. This motivates the definition of sum-of-squares (SOS) tensors.  
\begin{definition}[SOS Tensors]
\label{def:sos-tensors}
A symmetric tensor $T$ will be called a sum-of-squares tensor, or in short, an SOS Tensor, if the polynomial, $p_T(x \odot x)$ is a sum of squares. We denote the cone of these tensors using the notation, 
$$\mathsf{SOS}^{(n)}_d := \{T \in \vee^n \mathbb{R}^d \mid p_T(x \odot x) \in \Sigma^{(2n)}_d\}.$$
\end{definition} 

It follows from the definition that all SOS tensors are copositive (as all SOS polynomials are positive). The SOS tensors form a pointed convex cone. Note that in \cref{def:sos-tensors}, we do not demand that the tensor polynomial $p_T$ is a sum of squares polynomial, but a weaker condition that $p_T(x \odot x)$ is an SOS polynomial. For example, consider the two-variable polynomial $p_1(x_1,x_2) = x_1 x_2.$ This polynomial is not positive for all $x$, hence also not a sum of squares. But the polynomial $p_1(x^2_1,x^2_2) = x^2_1 x^2_2$ is an SOS polynomial. The homogenous polynomials of this form were studied in \cite{ZVP06}, and the form of these polynomials was completely characterized. We rephrase the result in the next theorem.

\begin{theorem}[{{\cite[Proposition 9]{ZVP06}}}] \label{prop:SOS}
Let $p(x) \in \homo{d}{n}$ be a real homogeneous polynomial in $x \in \mathbb{R}^d$ of degree $n$ such that $p(x \odot x)$ is a sum of squares: $p(x \odot x) \in \Sigma^{(2n)}_d$. Then the polynomial $p$ admits a decomposition of the form 
\begin{equation} \label{eq:SOS}
    p(x)=\sum_{j=0}^{\lfloor n/2\rfloor}\sum_{\alpha \in \knd{n-2j}{d}
    }  x^{\alpha}\psi_{j,\alpha}(x)
\end{equation}
where $|\alpha|:=\alpha_1+\cdots+\alpha_d$, $ x^{\alpha}:=x_1^{\alpha_1}\cdots x_d^{\alpha_d}$, and each $\psi_{j,\alpha}$ is a homogeneous SOS polynomial of degree $2j$: $\psi_{j,\alpha}(x) \in \Sigma^{(2j)}_d$.
\end{theorem}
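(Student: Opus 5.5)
The argument decomposes the SOS certificate of $q(x):=p(x\odot x)$ according to the parity pattern of exponents. The polynomial $q$ is invariant under every sign flip $x_k\mapsto -x_k$, so the whole argument exploits the action of the group $\{\pm1\}^d$. By \cref{thm:gram-matrix} and the subsequent remark, write $q(x)=\sum_k f_k(x)^2$ with each $f_k$ homogeneous of degree $n$.

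\textbf{Step 1: split by parity.} For each parity class $\varepsilon\in\{0,1\}^d$, let $f_{k,\varepsilon}$ be the sum of the monomials of $f_k$ whose exponent vector is congruent to $\varepsilon$ modulo $2$. Averaging $q(s\odot x)=q(x)$ over all sign vectors $s\in\{\pm1\}^d$ kills the cross terms $f_{k,\varepsilon}f_{k,\varepsilon'}$ with $\varepsilon\neq\varepsilon'$, since their product contains the characters $s^{\varepsilon+\varepsilon'}$, which average to zero. Hence
\[
q(x)=\sum_k\sum_{\varepsilon}f_{k,\varepsilon}(x)^2 .
\]

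\textbf{Step 2: factor out the odd part.} Every monomial of $f_{k,\varepsilon}$ is divisible by $x^\varepsilon$, so $f_{k,\varepsilon}(x)=x^\varepsilon g_{k,\varepsilon}(x\odot x)$ for some homogeneous polynomial $g_{k,\varepsilon}$ of degree $j$, where $n-|\varepsilon|=2j$. This forces $|\varepsilon|\equiv n \pmod 2$, i.e.\ $|\varepsilon|=n-2j$. Squaring gives $f_{k,\varepsilon}^2=(x\odot x)^{\varepsilon}\,g_{k,\varepsilon}(x\odot x)^2$. Substituting $y=x\odot x$,
\[
p(y)=\sum_{\varepsilon}y^{\varepsilon}\sum_k g_{k,\varepsilon}(y)^2
\]
holds for all $y\in\mathbb{R}^d_+$. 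Both sides are polynomials in $y$, and $\mathbb{R}^d_+$ is Zariski dense, so the identity holds as polynomials.

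\textbf{Step 3: match the statement's indexing.} Each $\varepsilon$ is a $0/1$ vector with $|\varepsilon|=n-2j$, so $\varepsilon\in\knd{n-2j}{d}$. Set $\psi_{j,\varepsilon}:=\sum_k g_{k,\varepsilon}^2$, which lies in $\Sigma^{(2j)}_d$, and set $\psi_{j,\alpha}:=0$ for every $\alpha\in\knd{n-2j}{d}$ that is not a $0/1$ vector. This gives exactly \eqref{eq:SOS}; in fact the decomposition obtained is slightly stronger than stated, since only squarefree $\alpha$ occur.

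\textbf{Main obstacle.} The only delicate step is Step 1, the vanishing of the cross terms. It needs the explicit sign-averaging identity together with orthogonality of the characters of $\{\pm1\}^d$. Everything after that is bookkeeping.
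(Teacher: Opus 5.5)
Your proof is correct and complete. The sign invariance $q(s\odot x)=q(x)$, combined with $2^{-d}\sum_{s\in\{\pm1\}^d}s^{\varepsilon+\varepsilon'}=\delta_{\varepsilon,\varepsilon'}$, does remove every cross term. Each parity block then factors as $x^{\varepsilon}g_{k,\varepsilon}(x\odot x)$ with $|\varepsilon|=n-2j\le n$. Finally, the identity in $y=x\odot x$ lifts to an identity of polynomials, for instance by comparing the coefficients of $x^{2\beta}$.

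The paper gives no proof of its own here; it takes the statement directly from \cite{ZVP06}. Your sign-averaging argument is the standard route (equivalently, block-diagonalizing the Gram matrix of \cref{thm:gram-matrix} by parity classes of exponents), and it makes this step of the paper self-contained.

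Your squarefree refinement, with only $\alpha\in\{0,1\}^d$ occurring, is equivalent to the stated form: any non-squarefree term can be rewritten by absorbing the monomial square $x^{\alpha-(\alpha \bmod 2)}$ into $\psi_{j,\alpha}$. This is exactly the reduction the paper carries out by hand for $d=2$ in \cref{sec:appendix-proof-qubits}.
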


Let us consider the SOS tensors for $n=2$. In this case, by definition, $Q \in \mathsf{SOS}^{(n)}_d$, if the polynomial $p_Q(x \odot x) = \sum_{i,j} Q_{ij}x^2_i x^2_j \in \Sigma_d^{(4)}$. We apply \cref{prop:SOS} to obtain the following decomposition:
\begin{align*}
    p_Q(x) = \sum_{i,j \in [d]} Q_{ij}x_i x_j &= \sum_{|\alpha| = 2}  x^\alpha \psi_{0,\alpha}(x)  + \sum_{|\alpha| = 0}  x^\alpha \psi_{1,\alpha}(x)\\
    &= \sum_{i,j \in [d]} x_ix_j \lambda_{ij} + 1 \cdot \braket{x \vert A \vert x}.
\end{align*}
We obtain for the first term with $|\alpha|=2$, all the monomials $x_i x_j$ and the degree $0$ SOS polynomial $\lambda_{ij}:=\psi_{0,\alpha}(x) \in \mathbb{R}_+$. For the second term, we have $ x^\alpha = 1$ and the degree $2$ SOS polynomial, $\psi_{1,\alpha}(x) = \braket{x \vert A \vert x}$ for some $A \in \mathsf{PSD}_d$. 
Therefore this shows that $Q_{ij} = \lambda_{ij} + A_{ij}$. If we interpret the indices of $Q$ as a  matrix, we get, $Q = \{\lambda_{ij} + A_{ij} \}^d_{i,j=1} \in \mathsf{NN}_d +\mathsf{PSD}_d$.  The reverse is also true. If $Q  \in \mathsf{NN}_d +\mathsf{PSD}_d$, this implies that $p_Q(x \odot x) \in \Sigma_d^{(4)}.$ This fully characterizes the SOS cone in $n=2.$

\begin{remark}

The previous characterization also shows that the convex cone of SOS tensors is closed (since sum of square polynomials of degree $n$ and $d$ variables is closed). Moreover, there exists an $\epsilon > 0$ such that the open ball $B(\epsilon, \ket{e}^{\otimes n}) \subseteq \mathsf{SOS}^{(n)}_d$ implying that the cone has a non-empty interior where $\ket{e} := \sum_{i} e_i$
\end{remark}

\begin{remark}
\label{rem:sos-tensors-n-2}
    The convex cone of $\mathsf{NN} +\mathsf{PSD}$ is well-studied, usually denoted as $\mathsf{SPN}$ in the literature. It is the inner approximation to the set of copositive matrices, and is dual to the cone of \emph{doubly non-negative matrices}, that are the positive-semidefinite matrices with non-negative entries, $\mathsf{DNN} := \mathsf{NN} \cap \mathsf{PSD}$. For results about these matrix cones, we refer the reader to \cite{shaked2021copositive}. The sum of squares tensors can be considered a generalisation of $\mathsf{SPN}$ matrix cone. 
\end{remark}

We also present here the characterization of the duals of sum of square polynomials. For polynomials, we define the duals without any reference to any inner product, and instead, we look at the linear functionals that are positive on all SOS polynomials.

$$(\Sigma^{(2n)}_d)^\circ := \{f : \homo{d}{2n} \to \mathbb{R} \mid \forall \, p \in \Sigma^{(2n)}_d, \, f(p) \geq 0 \}.$$

Recall that the monomials of a degree $n$ in $d$ variables, $[x]^{n}_d$ are labeled by the vectors $\alpha \in \mathbb{N}^d$ such that $|\alpha|=n.$ The set of these monomials is of size $\binom{n+d-1}{d-1}.$ Then, we have the following definition.
\begin{definition}[Moment Matrix \cite{laurent2008sums}]
    Let $(y_{\alpha})_{|\alpha| = 2n}$ be a sequence labelled by the monomial vectors, $\alpha \in \mathbb{N}^d$ such that $|\alpha|=2n$. Then, define the following $\binom{n+d-1}{d-1} \times \binom{n+d-1}{d-1}$ matrix with rows and columns indexed using the monomial vectors, $$\mathcal{M}_n(y)_{\alpha \beta}:= y_{\alpha+\beta}.$$ This matrix is called the moment matrix associated with the sequence $y_{\alpha}$. 
\end{definition} 

Let $f$ be a linear functional on homogeneous polynomials. Then, we associate a moment matrix with the linear functional $f$ by choosing the sequence $y_\alpha:= f(x^\alpha)$ where $\alpha \in \mathbb{N}^d, |\alpha| = 2n$.  We then have the following result about the dual of SOS polynomials.

\begin{theorem}
\label{thm:dual-sos-polynomials}
The dual cone of SOS polynomials, 
    $$(\Sigma^{(2n)}_d)^\circ = \{f : \homo{d}{2n} \rightarrow \mathbb{R} \mid \mathcal{M}_n(y) \succeq 0  \text{ where } y_{\alpha} := f (x^\alpha)\}.$$

\end{theorem}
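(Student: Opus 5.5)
We prove both inclusions directly from the Gram-matrix description of SOS polynomials. By the remark after \cref{thm:gram-matrix}, $p \in \Sigma^{(2n)}_d$ if and only if $p(x) = ([x]^n_d)^\top P\, [x]^n_d$ for some real PSD matrix $P$ indexed by the monomial vectors $\alpha \in \knd{n}{d}$. Writing this out gives
$$p(x) = \sum_{\alpha,\beta \in \knd{n}{d}} P_{\alpha\beta}\, x^{\alpha+\beta}.$$
Because $f$ is linear, evaluating it on such a $p$ produces the basic identity
$$f(p) = \sum_{\alpha,\beta} P_{\alpha\beta}\, f(x^{\alpha+\beta}) = \sum_{\alpha,\beta} P_{\alpha\beta}\, y_{\alpha+\beta} = \operatorname{Tr}\big(P\, \mathcal{M}_n(y)\big),$$
where we used that both $P$ and $\mathcal{M}_n(y)$ are symmetric. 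This identity reduces everything to the self-duality of the PSD cone.

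\emph{Inclusion $\supseteq$.} Suppose $\mathcal{M}_n(y) \succeq 0$ and let $p \in \Sigma^{(2n)}_d$. Choose a PSD Gram matrix $P$ for $p$. Then $f(p) = \operatorname{Tr}(P\,\mathcal{M}_n(y)) \geq 0$, because the trace of a product of two PSD matrices is non-negative. Hence $f \in (\Sigma^{(2n)}_d)^\circ$.

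\emph{Inclusion $\subseteq$.} Suppose $f(p)\ge 0$ for every SOS form $p$. Take any $c \in \mathbb{R}^{\knd{n}{d}}$ and form the square $g_c(x) := \big(\sum_\alpha c_\alpha x^\alpha\big)^2$, which lies in $\Sigma^{(2n)}_d$. Its Gram matrix is $cc^\top$, so the identity gives
$$0 \le f(g_c) = \sum_{\alpha,\beta} c_\alpha c_\beta\, y_{\alpha+\beta} = c^\top \mathcal{M}_n(y)\, c.$$
Since $c$ is arbitrary, $\mathcal{M}_n(y) \succeq 0$.

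The only point needing care is that the identity is well defined. A given $p$ has many Gram matrices $P$, but $\operatorname{Tr}(P\,\mathcal{M}_n(y))$ depends on $P$ only through the coefficients of $p$. Indeed, the entry $y_{\alpha+\beta}$ depends only on $\alpha+\beta$, and $\sum_{\alpha+\beta=\gamma} P_{\alpha\beta}$ is exactly the coefficient of $x^\gamma$ in $p$. This Hankel structure of $\mathcal{M}_n(y)$ is what makes the pairing consistent. Note also that every monomial of degree $2n$ has the form $x^{\alpha+\beta}$ with $|\alpha|=|\beta|=n$, so the sequence $y_\gamma = f(x^\gamma)$ for $|\gamma|=2n$ determines $f$ and fills every entry of $\mathcal{M}_n(y)$.
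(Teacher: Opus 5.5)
Your proof is correct and follows the paper's argument. Both use the Gram-matrix characterization to obtain $f(p)=\operatorname{Tr}\big(P\,\mathcal{M}_n(y)\big)$ and then invoke self-duality of the PSD cone; you simply spell out the two inclusions, using rank-one Gram matrices $cc^\top$ for $\subseteq$, which the paper compresses into a single sentence.
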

\begin{proof}
Recall that a polynomial is an SOS polynomial, i.e $p \in \Sigma^{(2n)}_d$ if and only if there exists a positive semidefinite matrix $P$ such that $p(x) = ([x]^n_d)^\top P [x]^n_d$. Expanding this, we get,
\[
p(x) = ([x]^n_d)^\top P [x]^n_d
      = \sum_{\alpha, \beta \in \knd{n}{d}}
        P_{\alpha\beta}\, x^{\alpha+\beta},
\]
where we recall that $\knd{n}{d}$ denotes the set of $d$-tuples of non-negative integers that sum up to $n$. Applying the linear functional $f$ to this expression while having $f(x^\alpha) =y_\alpha$, we obtain,
\[
f(p) 
= \sum_{\alpha, \beta \in \knd{n}{d}}
    P_{\alpha\beta}\, y_{\alpha+\beta}
= \operatorname{Tr}\!\big(P\,\mathcal M_n(y)\big).
\] Moreover, this is positive for all PSD matrices $P$, if and only if $\mathcal{M}_n(y) \succeq 0$. 
\end{proof} 

The last result, \cref{thm:dual-sos-polynomials}, completely characterizes the dual cone of the SOS polynomials; a linear functional is in the dual cone if and only if the corresponding moment matrix is positive-semidefinite. We refer the reader to the excellent notes by Laurent \cite[Section 4.4]{laurent2008sums} for more details on moment matrices, particularly for the connection to the moment problem.

\subsubsection{SOS and Non-Negative Hierarchy}
In \cite{parrilo2000structured}, an inner approximation of copositive matrices was introduced based on the SOS hierarchy that approximates positive polynomials. It consists of the following convex cones. 

\begin{equation}
\label{eq:reznick-sos}
\mathsf{RSOS}^{(r)}_d :=  \{Q \in \vee^2 \mathbb{R}^d \, | \, p_Q(x \odot x)||x||^{2r} \in \Sigma_d^{(2r)}\}
\end{equation}

Each of the sets $\mathsf{RSOS}^{(r)}_d$ involves checking the sum of squares property of the polynomial $p_Q(x \odot x) = \sum_{i,j \in [d]} Q_{ij} x^2_i x^2_j$, after multiplying it with another polynomial, $||x||^{2r},$ that is SOS by definition. We then have the following inclusions, 
$$\mathsf{PSD}_d+ \mathsf{NN}_d=\mathsf{RSOS}^{(0)}_d \subseteq \mathsf{RSOS}^{(1)}_d \subseteq \mathsf{RSOS}^{(2)}_d \subseteq \mathsf{RSOS}^{(3)}_d \ldots \subseteq \mathsf{RSOS}^{(r)}_d \subseteq \mathsf{Cop}_d.$$ 
The following are well-known results about this hierarchy.
\begin{itemize}
    \item $\mathsf{RSOS}_n^{(0)}= \mathsf{SPN}_d = \mathsf{NN}_d+\mathsf{PSD}_d$. In particular we have, $\mathsf{Cop}_d=\mathsf{RSOS}^{(0)}_d$ for $d\leq 4$.

    \item \cite{parrilo2000structured,DDGH13} If $d\geq 5$, $\mathsf{RSOS}^{(0)}_d\subsetneq \mathsf{RSOS}_d^{(1)}\subseteq\cdots\subseteq \mathsf{RSOS}^{(r)}_d \subsetneq \mathsf{Cop}_d$. 

    \item \cite{laurent2023exactness} $\bigcup_{r\geq 0}\mathsf{RSOS}_5^{(r)}=\mathsf{Cop}_5$ and $\operatorname{int}(\mathsf{Cop}_n)\subsetneq \bigcup_{r\geq 0}\mathsf{RSOS}^{(r)}_d\subsetneq \operatorname{cl}\big({\bigcup_{r\geq 0}\mathsf{RSOS}^{(r)}_d}\big)=\mathsf{Cop}_d$ for $d\geq 6$.
\end{itemize}

A similar approach of constructing the sum of squares (SOS) hierarchy is also possible for tensors.
$$\mathsf{RSOS}^{(m,r)}_d :=  \{Q \in \vee^m \mathbb{R}^d \, | \, p_Q(x \odot x) ||x||^{2r} \in \Sigma_d^{2(m+r)}\}.$$ These sets satisfy similar inclusions for each $m$.
$$\mathsf{RSOS}^{(m,0)}_d \subseteq \mathsf{RSOS}^{(m,1)}_d \subseteq \mathsf{RSOS}^{(m,2)}_d \subseteq \mathsf{RSOS}^{(m,3)}_d \ldots \subseteq \mathsf{RSOS}^{(m,r)}_d \subseteq \mathsf{Cop}^{(m)}_d.$$

\begin{itemize}
    \item  For $d=2$, $\mathsf{RSOS}^{(m,0)}_2 = \mathsf{Cop}^{(m)}_2.$ This follows from the fact that all homogenous polynomials in $2$ variables are SOS \cite{hilbert1888darstellung}.
    \item $\operatorname{int}(\mathsf{Cop}^{(m)}_d)\subseteq \bigcup_{r\geq 0}\mathsf{RSOS}_d^{(m,r)}\subseteq \operatorname{cl}\big({\bigcup_{r\geq 0}\mathsf{RSOS}_d^{(m,r)}} \big)=\mathsf{Cop}^{(m)}_d$ by Reznick's Positivstellens\"atze \cite{reznick1995uniform}. 
\end{itemize}

Let $e :=\sum^d_{i=1} e_i$ be the all-ones vector. Then, we can rewrite, $$p_Q(x \odot x)||x||^{2r} = \braket{Q \otimes e^{\otimes r}, (x \odot x)^{\otimes (m+r)}} = \braket{\Pi^{(m+r, \mathbb{R})}_d[Q \otimes e^{\otimes r}], (x \odot x)^{\otimes (m+r)}}.$$

This implies that the hierarchy can be rephrased purely in terms of tensors.

\begin{equation}
    \mathsf{RSOS}^{(m,r)}_d = \{Q \in \vee^m \mathbb{R}^d \mid \Pi^{(m+r, \mathbb{R})}_d[Q \otimes e^{\otimes r}] \in \mathsf{SOS}^{(m+r)}_d\}.
\end{equation}

Let $\chi^{(n)}_d \subseteq \homo{d}{n}$ be the cone of polynomials with non-negative coefficients. It is easy to see that if a tensor polynomial $p_T \in \chi^{(n)}_d$ it implies that $p_T(x \odot x) \in  \Sigma^{(2n)}_d.$ We consider the following hierarchy of convex sets considered in \cite{de2002approximation}, which also provides an inner approximation of copositive matrices.
\begin{equation}
\label{eq:polyna-nn}
    \mathsf{PNN}^{(r)}_{d}
:= 
\Bigl\{
  Q \in \vee^{2}\mathbb{R}^{d}
  \;\Big|\;
  \Bigl( \sum_{i,j \in [d]} Q_{ij}\, x_i x_j \Bigr)
  \Bigl( \sum_{l=1}^{d} x_l \Bigr)^r
  \in \chi^{(r+2)}_{d}
\Bigr\}.
\end{equation}
For all $r$, we have the inclusion $\mathsf{PNN}^{(r)}_{d} \subseteq \mathsf{RSOS}^{(r)}_{d},$ i.e, this hierarchy acts as a weaker approximation than the SOS one. This hierarchy satisfies the following properties.

\begin{itemize}
    \item $\mathsf{PNN}_d=\mathsf{C}^{(0)}_d$
    \item $\forall d \geq 2, \quad \mathsf{PNN}^{(0)}_d \subseteq \mathsf{PNN}^{(1)}_d  \subseteq \mathsf{PNN}^{(2)}_d \subseteq \mathsf{PNN}^{(3)}_d \ldots \subseteq \mathsf{PNN}^{(r)}_d \subseteq \mathsf{Cop}_d.$ 
    \item  $\operatorname{int}(\mathsf{Cop}_d) \subseteq \bigcup_{r \geq 0} \mathsf{PNN}^{(r)}_d \subsetneq \operatorname{cl}\big({\bigcup_{r \geq 0} \mathsf{PNN}^{(r)}_d}\big) = \mathsf{Cop}_d$ follows from Polya's Positivstellens\"atze \cite{polya1911positive, de2002approximation}.
\end{itemize}
A similar approach of constructing the non-negative (NN) hierarchy is also possible for tensors.
\begin{align*}
    \mathsf{PNN}^{(m,r)}_d &:=  \{Q \in \vee^m \mathbb{R}^d \, | \, p_Q(x) (\sum^d_{l=1} x_l)^r \in \chi_d^{(m+r)}\} \\
    &= \{Q \in \vee^m \mathbb{R}^d \, | \, p_Q(x\odot x) \|x\|^{2r} \in \chi_d^{2(m+r)}\}.
\end{align*}
These sets satisfy similar inclusions for each $m$. 
$$\mathsf{PNN}^{(m,0)}_d \subseteq \mathsf{PNN}^{(m,1)}_d \subseteq \mathsf{PNN}^{(m,2)}_d \subseteq \mathsf{PNN}^{(m,3)}_d \ldots \subseteq \mathsf{PNN}^{(m,r)}_d \subseteq \mathsf{Cop}^{(m)}_d.$$ 
This hierarchy can also be rephrased purely in terms of tensors. \begin{equation}
    \mathsf{PNN}^{(m,r)}_d = \{Q \in \vee^m \mathbb{R}^d \mid \Pi^{(m+r, \mathbb{R})}_d[Q \otimes e^{\otimes r}] \in \mathsf{NN}^{(m+r)}_d\}.
\end{equation}

\section{Diagonally symmetric subspace}
\label{sec:ds-subspace}
In this section, we introduce the class of multipartite bosonic quantum states called \emph{mixtures of Dicke states} or \emph{Diagonally Symmetric States}. These states have been studied in the case of bipartite systems \cite{tura2018separability, singh2021diagonal, marconi2021entangled, gulati2025entanglement} and, recently, the multipartite extension has been considered in \cite{romero2025multipartite}. We begin by discussing the basic structural properties of these states, along with their symmetries. Recall that any $n$-partite bosonic matrix $X \in (\M{d})^{\otimes n}$ has an expansion into the orthonormal Dicke basis 
$$X = \sum_{\alpha,\beta \in \knd{n}{d}} \lambda(X)_{\alpha,\beta}\vert D_\alpha \rangle \langle D_{\beta} \vert,$$ where $\lambda(X)_{\alpha,\beta} := \braket{D_\alpha \vert X \vert D_{\beta}}.$

Let us begin by introducing the diagonally symmetric subspace. We denote by $\diagU_d$ the abelian group of diagonal unitary matrices. These are exactly the diagonal matrices with complex phases on the diagonal
$$\diagU_d := \{U_{\theta} \mid U_{\theta} := \operatorname{diag}(e^{i \theta_1},e^{i \theta_2},\ldots,e^{i \theta_d}), \, \,  \theta \in [0,2\pi)^d\}.$$

\begin{definition}[Diagonally Symmetric Subspace]
\label{def:DS-subspace}
A matrix $X \in (\mathcal{M}_{d})^{\otimes n},$ will be called diagonally symmetric if it satisfies the following two symmetry conditions:
\begin{enumerate}
    \item Bosonic symmetry, $$\Pi^{(n)}_d X \Pi^{(n)}_d = X$$
    \item Diagonal unitary invariance, $$\forall U \in \diagU_{d} \quad \; [U^{\otimes n}, X] = 0.$$
\end{enumerate}
\end{definition} We will denote the vector space of diagonally symmetric (DS) matrices in $(\mathcal{M}_{d})^{\otimes n}$ as $\mathsf{DS}^{(n)}_d, $ and  
the \emph{real} vector space of self-adjoint diagonally symmetric (DS) matrices as $\operatorname{Herm}[\mathsf{DS}^{(n)}_d].$ In the next \cref{prop:symmetry-mods}, we show that this space consists of matrices that are diagonal in the Dicke basis.  
\begin{proposition}
\label{prop:symmetry-mods} Let \(d,n \in \mathbb{N}\). We have: 
\begin{itemize}
    \item[(1)] $\mathsf{DS}_d^{(n)} = \operatorname{span}_{\mathbb{C}} \bigl\{\ketbra{D_\alpha}{D_\alpha} \,\big|\, \alpha \in \knd{n}{d} \bigr\}, $
    \item[(2)] $\operatorname{Herm}[\mathsf{DS}^{(n)}_d] = \operatorname{span}_{\mathbb{R}} \bigl\{\ketbra{D_\alpha}{D_\alpha} \,\big|\, \alpha \in \knd{n}{d} \bigr\}.$
\end{itemize}

Therefore, the matrices $\{\ketbra{D_\alpha}{D_\alpha}\}_{\alpha \in \knd{n}{d}}$ form an ONB (orthonormal basis) of the DS subspace.  
\end{proposition}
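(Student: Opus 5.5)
The plan is to expand an arbitrary DS matrix in the Dicke basis using \cref{prop:dicke-expansion}, and then show that diagonal unitary invariance kills every off-diagonal coefficient. Let $X \in \mathsf{DS}^{(n)}_d$. Bosonic symmetry gives
$$X = \sum_{\alpha,\beta \in \knd{n}{d}} \lambda(X)_{\alpha,\beta}\ketbra{D_\alpha}{D_\beta}.$$
The key computation is the action of $U_\theta^{\otimes n}$ on a Dicke state. For a multi-index $i$ with $\gamma(i)=\alpha$ we have
$$U_\theta^{\otimes n}\ket{i} = e^{i\sum_k \theta_{i_k}}\ket{i} = e^{i\langle \theta,\alpha\rangle}\ket{i}.$$
The phase depends only on the occupation vector, so $U_\theta^{\otimes n}\ket{D_\alpha} = e^{i\langle\theta,\alpha\rangle}\ket{D_\alpha}$. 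Each Dicke state is therefore a common eigenvector of the diagonal unitary group, with character $\theta\mapsto e^{i\langle\theta,\alpha\rangle}$.

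Next I impose the commutation condition. Since $U^{\otimes n}$ is unitary, $[U^{\otimes n},X]=0$ is equivalent to $U^{\otimes n}X(U^{\otimes n})^*=X$. In the Dicke expansion this reads
$$\lambda(X)_{\alpha,\beta}\,e^{i\langle\theta,\alpha-\beta\rangle}=\lambda(X)_{\alpha,\beta}\quad\text{for all }\theta,$$
using that the $\ketbra{D_\alpha}{D_\beta}$ are linearly independent, being the matrix units of an orthonormal basis. If $\alpha\neq\beta$, some coordinate $j$ has $\alpha_j-\beta_j\neq 0$. Choosing $\theta = t e_j$ with $t(\alpha_j-\beta_j)\notin 2\pi\mathbb{Z}$ makes the phase differ from $1$, which forces $\lambda(X)_{\alpha,\beta}=0$. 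This gives the inclusion $\subseteq$ in (1).

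For the converse inclusion, each $\ketbra{D_\alpha}{D_\alpha}$ is bosonic, because $\Pi^{(n)}_d\ket{D_\alpha}=\ket{D_\alpha}$. It also commutes with $U^{\otimes n}$, since it is the projector onto an eigenvector of $U^{\otimes n}$. Complex linear combinations of such matrices are then in $\mathsf{DS}^{(n)}_d$, because both defining conditions are linear.

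For (2), a Hermitian $X=\sum_\alpha \lambda_\alpha \ketbra{D_\alpha}{D_\alpha}$ has $\lambda_\alpha=\braket{D_\alpha|X|D_\alpha}\in\mathbb{R}$, so it lies in the real span. Conversely, real combinations of these Hermitian projectors are Hermitian. Orthonormality with respect to the Hilbert–Schmidt product follows from
$$\operatorname{Tr}\bigl(\ketbra{D_\alpha}{D_\alpha}\ketbra{D_\beta}{D_\beta}\bigr)=|\braket{D_\alpha|D_\beta}|^2=\delta_{\alpha\beta}.$$
This also gives linear independence, hence the ONB claim.

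The only step needing care is the choice of $\theta$ separating distinct occupation vectors, that is, checking that the characters $e^{i\langle\theta,\alpha\rangle}$ are pairwise distinct. This is elementary: $\alpha\neq\beta$ as integer vectors, and a single coordinate direction suffices.
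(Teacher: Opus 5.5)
Your proof is correct and follows the paper's argument essentially step by step. You use the Dicke expansion from bosonic symmetry, then the relation $U_\theta^{\otimes n}\ket{D_\alpha}=e^{i\langle\theta,\alpha\rangle}\ket{D_\alpha}$ to force $\lambda(X)_{\alpha,\beta}=0$ for $\alpha\neq\beta$, and finish with the straightforward converse and Hermitian cases. Your explicit choice $\theta=te_j$ to separate distinct occupation vectors, and your Hilbert--Schmidt check of orthonormality, spell out two points the paper states without detail.
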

\begin{proof}

Consider an operator $X \in \mathsf{DS}_d^{(n)}.$ By \cref{def:DS-subspace}, it satisfies the two symmetries. \[\Pi^{(n)}_d X \Pi_d^{(n)} = X, \]
\[ [U^{\otimes n}, X] = 0 
\text{ for all } U \in \diagU_d.
\]
Since $\Pi^{(n)}_d X \Pi_d^{(n)} = X$, then by \cref{prop:dicke-expansion}, \(X \in \mathcal{M}_d^{\otimes n}\) admits an expansion in the Dicke basis:
\[
X = \sum_{\alpha,\beta \in \knd{n}{d}} \lambda(X)_{\alpha,\beta}\, \ket{D_\alpha}\!\bra{D_\beta}
\]
for some complex coefficients, \(\lambda(X)_{\alpha,\beta}\). Let $U_{\theta} := \mathrm{diag}(e^{i\theta_1}, \dots, e^{i\theta_{d}}) \in \diagU_d$ be an arbitrary diagonal unitary matrix. 
The action of the unitary matrix $U_{\theta}^{\otimes n}$ on Dicke states is:
\[
U_{\theta}^{\otimes n} \ket{D_\alpha}  = \frac{1}{\binom{n}{\alpha}}\sum_{j \, : \, \gamma(j) = \alpha} U_{\theta}^{\otimes n} \ket{j} = \frac{1}{\binom{n}{\alpha}}\sum_{j \, : \, \gamma(j) = \alpha} e^{i \alpha \cdot \theta}\ket{j} = e^{i \alpha \cdot \theta} \ket{D_\alpha},
\]
where $\alpha \cdot \theta$ is the Euclidean dot product of vectors. Hence, we have
\[
\forall \alpha, \beta \in \knd{n}{d},\quad U_{\theta}^{\otimes n} \ket{D_\alpha}\!\bra{D_\beta}\, (U_{\theta}^{\otimes n})^\ast
= e^{i (\alpha - \beta) \cdot \theta} \ket{D_\alpha}\!\bra{D_\beta}.
\]
By using the diagonal unitary invariance of \(X\),
\[
X
= U_{\theta}^{\otimes n} X (U_{\theta}^{\otimes n})^\ast
= \sum_{\alpha,\beta \in \knd{n}{d}} \lambda(X)_{\alpha,\beta}\,
e^{i (\alpha - \beta) \cdot \theta} \,
\ket{D_\alpha}\!\bra{D_\beta}
\qquad
\text{for all } \theta \in [0,2 \pi)^d.
\]
Since the Dicke basis is orthonormal, we must have, for all $\alpha,\beta \in \knd{n}{d}$ and for all $\theta$ in $[0, 2 \pi)^d$ that, 
\[\lambda(X)_{\alpha,\beta}\, =  e^{i (\alpha - \beta) \cdot \theta} \lambda(X)_{\alpha,\beta}. \]
This implies either $\lambda(X)_{\alpha,\beta} = 0$ or \( \forall \theta \in [0, 2 \pi], \, \, e^{i (\alpha-\beta) \cdot \theta } = 1. \) The latter condition holds if and only if \(\alpha = \beta\) implying \(\lambda(X)_{\alpha,\beta} = 0\) whenever \(\alpha \ne \beta\).  
Therefore,
\[
X = \sum_{\alpha \in \knd{n}{d}} \lambda(X)_{\alpha, \alpha} \,\ket{D_\alpha}\!\bra{D_\alpha}
\in \operatorname{span}_{\mathbb{C}}\{\ketbra{D_\alpha}{D_\alpha}\}_{\alpha \in \knd{n}{d}}.
\] 
We can thus conclude:
\begin{equation}
\label{eq:symmetry-in-span}
\mathsf{DS}^{(n)}_d \subseteq \operatorname{span}_{\mathbb{C}}\{\ketbra{D_\alpha}{D_\alpha}\}.
\end{equation} Conversely, assume any operator of the form,
\[
X = \sum_{\alpha \in \knd{n}{d}} \lambda_\alpha\, \ket{D_\alpha}\!\bra{D_\alpha}
\]
is clearly supported on the symmetric subspace, so \(\Pi^{(n)}_d X \Pi^{(n)}_d = X\), and is invariant under all \(U_{\theta}^{\otimes n}\), since $U_{\theta}^{\otimes n} \ket{D_\alpha}\!\bra{D_\alpha}\, (U_{\theta}^{\otimes n})^* = \ket{D_\alpha}\!\bra{D_\alpha}$. Thus, we have the inclusion, 
\begin{equation}
\label{eq:span-in-symmetry}
\operatorname{span}_{\mathbb{C}}\{\ketbra{D_\alpha}{D_\alpha}\}
\subseteq \mathsf{DS}^{(n)}_d.
\end{equation}
By combining \cref{eq:symmetry-in-span} and \cref{eq:span-in-symmetry}, we complete the proof of the first assertion. For the second assertion, note that any state of the form, 
$$\sum_{\alpha \in \knd{n}{d}} \lambda_\alpha\, \ket{D_\alpha}\!\bra{D_\alpha}$$ is self-adjoint if and only if $\lambda_\alpha \in \mathbb{R}$. The rest of the proof is analogous to the complex case. 
\end{proof}

Note that the diagonal symmetry property of Dicke states was first noted in the papers \cite{yu2016separability, wolfe2014certifying, quesada2017entanglement}. By a slight abuse of notation, we write $\lambda(X)_\alpha:= \lambda(X)_{\alpha,\alpha}.$ In this paper, we are interested in studying the quantum states in this subspace, i.e, PSD matrices (with trace $1$). Then, these properties can be captured by imposing positivity conditions and normalization constraints on the coefficients $\lambda(X)_\alpha$.

\begin{proposition}
\label{prop:prop-mods}
    Let $X \in \mathsf{DS}^{(n)}_d$, such that $$X = \sum_{\alpha \in \knd{n}{d}} \lambda(X)_\alpha \ket{D_\alpha} \bra{D_\alpha}.$$ 
    Then, the following equivalences hold,
    \begin{enumerate}
        \item $X \in \mathsf{Herm}^{(n)}_d  \iff \forall \alpha \in \knd{n}{d}, \quad  \lambda(X)_\alpha \in \mathbb{R}$,
        \item $X \in \mathsf{PSD}^{(n)}_d \iff \forall \alpha \in \knd{n}{d}, \quad \lambda(X)_\alpha \in \mathbb{R}_+$.
    \end{enumerate}   
    Moreover, we have for the trace of $X$
    $$\operatorname{Tr}(X) = \sum_{\alpha \in \knd{n}{d}}\lambda(X)_\alpha,$$ 
    and the rank of $X$ 
    $$\operatorname{rk}(X) = |\{\alpha \in \knd{n}{d} \mid \lambda(X)_\alpha \neq 0 \}|.$$

\end{proposition}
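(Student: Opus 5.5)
The plan rests on one observation: by \cref{prop:symmetry-mods}, the family $\{\ket{D_\alpha}\}_{\alpha \in \knd{n}{d}}$ is an orthonormal basis of $\vee^n \mathbb{C}^d$, and $X$ is diagonal in it. Since $X$ is bosonic, it vanishes on $(\vee^n \mathbb{C}^d)^\perp$. We therefore complete the Dicke basis by an arbitrary orthonormal basis $\{\ket{f_k}\}$ of $(\vee^n \mathbb{C}^d)^\perp$. In the resulting orthonormal basis of $(\mathbb{C}^d)^{\otimes n}$, $X$ is the diagonal matrix with diagonal entries $\lambda(X)_\alpha$ followed by zeros. In other words, we read off directly the spectral decomposition
$$X = \sum_{\alpha} \lambda(X)_\alpha \ketbra{D_\alpha}{D_\alpha} + \sum_k 0 \cdot \ketbra{f_k}{f_k}.$$
All four claims then reduce to elementary facts about diagonal matrices.

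For (1), we compute the adjoint: $X^* = \sum_\alpha \overline{\lambda(X)_\alpha}\ketbra{D_\alpha}{D_\alpha}$. By linear independence of the rank-one operators $\ketbra{D_\alpha}{D_\alpha}$ (they are orthonormal in the Hilbert--Schmidt inner product), $X = X^*$ holds if and only if $\lambda(X)_\alpha = \overline{\lambda(X)_\alpha}$ for all $\alpha$. Alternatively, this part is already contained in \cref{prop:symmetry-mods}(2).

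For (2), if $X \succeq 0$, then $\lambda(X)_\alpha = \braket{D_\alpha|X|D_\alpha} \geq 0$. Conversely, if all $\lambda(X)_\alpha \geq 0$, then $X$ is a nonnegative combination of the projections $\ketbra{D_\alpha}{D_\alpha}$, hence PSD. Equivalently, the eigenvalues of $X$ are exactly the $\lambda(X)_\alpha$ together with zeros.

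For the trace and rank statements, we use the same diagonal form. The trace equals $\sum_\alpha \lambda(X)_\alpha \operatorname{Tr}\ketbra{D_\alpha}{D_\alpha} = \sum_\alpha \lambda(X)_\alpha$, since each $\ket{D_\alpha}$ is a unit vector. The rank of a matrix that is diagonal in some orthonormal basis equals the number of its nonzero diagonal entries, because its range is $\operatorname{span}\{\ket{D_\alpha} : \lambda(X)_\alpha \neq 0\}$ and these vectors are orthonormal. This gives the count $|\{\alpha : \lambda(X)_\alpha \neq 0\}|$. No step is a genuine obstacle; the only point needing care is the orthonormality of the Dicke states, which is taken from \cite[Theorem 3]{harrow2013church} as recalled earlier.
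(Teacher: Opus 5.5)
Your proposal is correct and follows the same route as the paper, which simply notes that orthonormality of the Dicke states makes $\sum_\alpha \lambda(X)_\alpha \ketbra{D_\alpha}{D_\alpha}$ a spectral decomposition of the normal matrix $X$, so all four claims can be read off directly. Completing the basis with $(\vee^n \mathbb{C}^d)^\perp$ only makes explicit the extra zero eigenvalues that the paper's one-line proof leaves implicit.
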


 \begin{proof}
The proof follows from the orthogonality of the Dicke states; the spectrum of the normal matrix $X$ is the set $\{\lambda(X)_\alpha\}_{\alpha \in \knd{n}{d}}$.
\end{proof}

It follows from \cref{prop:prop-mods} that $X$ is a quantum state if and only if $\lambda(X)$ is a finite probability distribution over the set $\knd{n}{d}$, and these quantum states are called \emph{mixtures of Dicke states}. These are the states introduced and studied for qubits in \cite{wolfe2014certifying, yu2016separability, quesada2017entanglement}, for bipartite qudit systems \cite{yu2016separability,tura2018separability}, and for multi-partite qudit systems in the recent paper \cite{romero2025multipartite}. Note that for the parametrization in terms of the vectors $\lambda(X)_\alpha$, we get the inner product between two states $X, Y$ in the diagonally symmetric subspace as 

$$\operatorname{Tr}(X^* Y) = \sum_{\alpha \in \knd{n}{d}} \overbar{\lambda(X)}_\alpha \lambda(Y)_\alpha.$$

\begin{remark}
We point out that a class of similar states that have been analysed in the literature for qudits, i.e, the mixtures of D-symmetric states.

\noindent \textbf{D-symmetric states}. A symmetric tensor $Q \in \vee^n \mathrm{R}^d$ is called a Hankel tensor if $\forall i,j \in [d]^n$ such that $i_1 + i_2 \ldots i_n = j_1 + j_2 \ldots j_n$, $Q_i = Q_j.$ In the case of $n = 2$, this reduces to the case of Hankel matrices. Let $i \vdash k$ to denote indices $i$ such that $\sum^n_{k=1} i_k = k.$ Then, we define $$\ket{R_k} = \sum_{i \vdash k} \ket{i}.$$ These class of states are called restricted Dicke states. Then, the class of mixtures of restricted Dicke states (or D-symmetric states) was discussed in 

$$X^{\text{D}} := \sum^{dn}_{k=n} p_k \ketbra{R_k}{R_k}.$$ For $d=2$, this coincides with the mixtures of Dicke states, parametrized by $n+1$ parameters. For higher dimensions, this class is incomparable to the mixtures of Dicke states. For example, even in the bipartite case and $d=3$ consider the projector corresponding to the state $\ket{R_4} = \ket{13} + \ket{31} + \ket{22}.$ This cannot be written in the form of mixtures of Dicke states. On the other hand $\ket{13} + \ket{31}$ is not D-symmetric. Inspite of this, we can consider the projection of $D$-symmetric states into the diagonally symmetric subspace. To this end, note that we can project it into the symmetric subspace by reading the diagonal elements of this state, i.e, $\int_{\diagU} U^{\otimes 2} X \overline{U}^{\otimes 2} dU$ is a diagonally symmetric state with the tensor 

$$Q[\sum^{dn}_{k=n} p_k \ketbra{R_k}{R_k}]_l = \sum^{dn}_{k=n} p_k \sum_{i \vdash k} \delta_{il}$$ which is a Hankel tensor. 

\noindent \textbf{}
\end{remark}

\subsection{Permutation symmetry on local bases}
\label{sec:sd-symmetry}
Besides the bosonic symmetry (the action of $S_n$ on tensor legs) and diagonal unitary invariance, it is sometimes natural to impose an additional invariance under permutations of the \emph{local basis labels} $[d]$.

\begin{definition}[$S_d$-invariant DS states]
\label{def:sd-invariant-ds}
For $\tau \in S_d$, let $V_\tau \in \mathcal{M}_{d}$ be the permutation unitary defined by
$$V_\tau \ket{j} := \ket{\tau(j)} \qquad \qquad \forall j \in [d].$$
We say that a DS matrix $X \in \mathsf{DS}^{(n)}_d$ is \emph{$S_d$-invariant} if
$$\forall \tau \in S_d, \qquad V_\tau^{\otimes n}\, X \, (V_\tau^{\otimes n})^* = X.$$
We denote by $\mathsf{DS}^{(n),S_d}_d \subseteq \mathsf{DS}^{(n)}_d$ the subspace of $S_d$-invariant DS matrices.
\end{definition}

The action of $S_d$ permutes the components of occupation vectors. For $\alpha \in \knd{n}{d}$ and $\tau \in S_d$, define $(\tau \cdot \alpha) \in \knd{n}{d}$ by
$$ (\tau \cdot \alpha)_j := \alpha_{\tau^{-1}(j)} \qquad  \forall j\in[d]. $$
Then, for a DS matrix
$$X=\sum_{\alpha \in \knd{n}{d}} \lambda(X)_\alpha \ket{D_\alpha}\bra{D_\alpha},$$
the condition in \cref{def:sd-invariant-ds} is equivalent to
\begin{equation}
\label{eq:sd-lambda-constraint}
\forall \tau \in S_d,\ \forall \alpha \in \knd{n}{d}, \qquad \lambda(X)_\alpha = \lambda(X)_{\tau \cdot \alpha}.
\end{equation}
In other words, the vector $\lambda(X)$ is constant on $S_d$-orbits, so the resulting family of matrices has fewer parameters than the whole $\mathsf{DS}$ space.

\medskip

Let
$$\mathsf{Par}_d(n) := \{\mu \in \knd{n}{d} \mid \mu_1 \ge \mu_2 \ge \cdots \ge \mu_d \}$$
be the set of integer partitions of $n$ with (at most) $d$ parts. For $\alpha \in \knd{n}{d}$, denote by $\operatorname{sort}(\alpha) \in \mathsf{Par}_d(n)$ the non-increasing rearrangement of the entries of $\alpha$, and by
$$\operatorname{Orb}(\alpha) := \{\tau \cdot \alpha \mid \tau \in S_d\}$$
its orbit. Then $\operatorname{sort}(\alpha)$ is an orbit representative, so the orbits in $\knd{n}{d}$ are in bijection with $\mathsf{Par}_d(n)$.

\begin{proposition}[Extremal points of the $S_d$-invariant DS simplex]
\label{prop:sd-extremal}
Let
$$\mathcal{S}^{(n),S_d}_d := \{\rho \in \operatorname{Herm}[\mathsf{DS}^{(n),S_d}_d] \mid \rho \in \mathsf{PSD}^{(n)}_d,\ \operatorname{Tr}(\rho)=1\}$$
be the convex set of $S_d$-invariant mixtures of Dicke states. Then $\mathcal{S}^{(n),S_d}_d$ is a simplex whose extreme points are indexed by $\mathsf{Par}_d(n)$. Concretely, for $\mu \in \mathsf{Par}_d(n)$ define
$$\rho_\mu := \frac{1}{|\operatorname{Orb}(\mu)|} \sum_{\alpha \in \operatorname{Orb}(\mu)} \ket{D_\alpha}\bra{D_\alpha}.$$
Then $\{\rho_\mu\}_{\mu \in \mathsf{Par}_d(n)}$ are exactly the extreme points of $\mathcal{S}^{(n),S_d}_d$, and every $\rho \in \mathcal{S}^{(n),S_d}_d$ admits a unique convex decomposition
$$\rho = \sum_{\mu \in \mathsf{Par}_d(n)} p_\mu\, \rho_\mu \qquad \text{with } p_\mu \ge 0,\ \sum_{\mu} p_\mu = 1.$$
In particular, $\mathcal{S}^{(n),S_d}_d$ is parameterized by $|\mathsf{Par}_d(n)|-1$ real parameters.
\end{proposition}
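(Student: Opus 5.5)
The plan is to reduce everything to the coefficient vector $\lambda(\rho)$ via \cref{prop:prop-mods} and the constraint \cref{eq:sd-lambda-constraint}. By \cref{prop:symmetry-mods}, every $\rho \in \operatorname{Herm}[\mathsf{DS}^{(n)}_d]$ is written uniquely as $\rho = \sum_\alpha \lambda(\rho)_\alpha \ketbra{D_\alpha}{D_\alpha}$ with real coefficients. The $S_d$-invariance of $\rho$ is equivalent to $\lambda(\rho)$ being constant on the orbits $\operatorname{Orb}(\alpha)$. Since the orbits partition $\knd{n}{d}$ and are in bijection with $\mathsf{Par}_d(n)$ via $\operatorname{sort}$, I can write $\lambda(\rho)_\alpha = c_{\operatorname{sort}(\alpha)}$ for a unique real vector $(c_\mu)_{\mu \in \mathsf{Par}_d(n)}$. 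Regrouping the sum by orbits then gives
$$\rho = \sum_{\mu} c_\mu |\operatorname{Orb}(\mu)|\, \rho_\mu .$$
I set $p_\mu := c_\mu |\operatorname{Orb}(\mu)|$.

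Next I translate the state constraints. By \cref{prop:prop-mods}, $\rho \succeq 0$ holds iff all $\lambda(\rho)_\alpha \ge 0$, which is equivalent to all $p_\mu \ge 0$. The trace is $\operatorname{Tr}\rho = \sum_\alpha \lambda(\rho)_\alpha = \sum_\mu |\operatorname{Orb}(\mu)| c_\mu = \sum_\mu p_\mu$, so normalization becomes $\sum_\mu p_\mu = 1$. I also check that each $\rho_\mu$ lies in $\mathcal{S}^{(n),S_d}_d$: it is a positive combination of Dicke projectors, it has trace $1$, and it is $S_d$-invariant because $V_\tau^{\otimes n}\ket{D_\alpha} = \ket{D_{\tau\cdot\alpha}}$ and this action permutes $\operatorname{Orb}(\mu)$. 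This last identity is the one computation I need to verify explicitly: $V_\tau^{\otimes n}$ maps $\ket{i}$ to $\ket{\tau(i_1)\cdots\tau(i_n)}$, whose occupation vector is $\tau\cdot\gamma(i)$.

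So the map $p \mapsto \sum_\mu p_\mu \rho_\mu$ is a bijection from the standard probability simplex on $\mathsf{Par}_d(n)$ onto $\mathcal{S}^{(n),S_d}_d$. It is affine, and it is injective because the $\rho_\mu$ have pairwise disjoint supports in the orthonormal family $\{\ketbra{D_\alpha}{D_\alpha}\}$, hence are linearly independent. An affine bijection preserves extreme points. Therefore $\mathcal{S}^{(n),S_d}_d$ is a simplex with vertices $\{\rho_\mu\}$, decompositions are unique, and the dimension count is $|\mathsf{Par}_d(n)|-1$.

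The main step needing care is the orbit regrouping together with the linear independence of the $\rho_\mu$. Everything else follows directly from the earlier propositions.
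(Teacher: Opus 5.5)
Your proposal is correct and follows essentially the same route as the paper: reduce to the coefficient vector $\lambda(\rho)$, use orbit-constancy from \cref{eq:sd-lambda-constraint}, and identify $\mathcal{S}^{(n),S_d}_d$ affinely with the standard simplex over $\mathsf{Par}_d(n)$ via $p_\mu = |\operatorname{Orb}(\mu)|\,\lambda(\rho)_\alpha$ for $\alpha \in \operatorname{Orb}(\mu)$. Your explicit checks, namely $V_\tau^{\otimes n}\ket{D_\alpha}=\ket{D_{\tau\cdot\alpha}}$ and the linear independence of the $\rho_\mu$, fill in details that the paper's short proof leaves implicit.
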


\begin{proof}
Any DS quantum state is determined by a probability distribution $\lambda$ on $\knd{n}{d}$ via $\rho=\sum_{\alpha\in\knd{n}{d}}\lambda_\alpha \ket{D_\alpha}\bra{D_\alpha}$. By \cref{eq:sd-lambda-constraint}, $\rho$ is $S_d$-invariant if and only if $\lambda$ is constant on each orbit $\operatorname{Orb}(\alpha)$. Thus specifying $\rho \in \mathcal{S}^{(n),S_d}_d$ is equivalent to specifying a probability distribution $(p_\mu)_{\mu\in\mathsf{Par}_d(n)}$ on the set of orbits, by setting $\lambda_\alpha := p_{\operatorname{sort}(\alpha)}/|\operatorname{Orb}(\operatorname{sort}(\alpha))|$. This identifies $\mathcal{S}^{(n),S_d}_d$ affinely with the standard simplex over $\mathsf{Par}_d(n)$, whose vertices correspond to placing all mass on a single orbit. The corresponding DS states are precisely the orbit-averages $\rho_\mu$.
\end{proof}

\begin{example}[Small-$n$ examples]
\label{ex:sd-small-n}
We give explicit descriptions of $\mathsf{Par}_d(n)$ and the corresponding extreme points $\rho_\mu$ from \cref{prop:sd-extremal}, noting that some partitions only exist when $d$ is large enough.
\begin{enumerate}[label=(\roman*)]
    \item \emph{$n=1$.} For any $d\ge 1$, we have $\mathsf{Par}_d(1)=\{(1,0,\ldots,0)\}$, so $\mathcal{S}^{(1),S_d}_d$ consists of a single state:
    $$\rho_{(1,0,\ldots,0)}=\frac{1}{d}\sum_{j=1}^{d}\ket{j}\bra{j}=\frac{I_d}{d}.$$
    
    \item \emph{$n=2$.} If $d=1$, then $\mathsf{Par}_1(2)=\{(2)\}$. If $d\ge 2$, then
    $$\mathsf{Par}_d(2)=\{(2,0,\ldots,0),\ (1,1,0,\ldots,0)\},$$
    and $\mathcal{S}^{(2),S_d}_d$ is a $1$-simplex with vertices
    \begin{align*}
        \rho_{(2,0,\ldots,0)} &= \frac{1}{d}\sum_{j=1}^{d} \ket{jj}\bra{jj},\\
        \rho_{(1,1,0,\ldots,0)} &= \frac{1}{\binom{d}{2}}\sum_{1\le i<j\le d} \ket{D_{ij}}\bra{D_{ij}},
        \qquad \ket{D_{ij}}:=\frac{\ket{ij}+\ket{ji}}{\sqrt{2}}.
    \end{align*}
    
    \item \emph{$n=3$.} If $d=1$, then $\mathsf{Par}_1(3)=\{(3)\}$. If $d=2$, then $\mathsf{Par}_2(3)=\{(3,0),\ (2,1)\}$. If $d\ge 3$, then
    $$\mathsf{Par}_d(3)=\{(3,0,\ldots,0),\ (2,1,0,\ldots,0),\ (1,1,1,0,\ldots,0)\},$$
    so $\mathcal{S}^{(3),S_d}_d$ is a $2$-simplex with vertices
    \begin{align*}
        \rho_{(3,0,\ldots,0)} &= \frac{1}{d}\sum_{j=1}^{d} \ket{jjj}\bra{jjj},\\
        \rho_{(2,1,0,\ldots,0)} &= \frac{1}{d(d-1)}\sum_{\substack{i,j\in[d]\\ i\ne j}} \ket{D_{iij}}\bra{D_{iij}},
        \qquad \ket{D_{iij}}:=\frac{\ket{iij}+\ket{iji}+\ket{jii}}{\sqrt{3}},\\
        \rho_{(1,1,1,0,\ldots,0)} &= \frac{1}{\binom{d}{3}}\sum_{1\le i<j<k\le d} \ket{D_{ijk}}\bra{D_{ijk}},
        \qquad \ket{D_{ijk}}:=\frac{1}{\sqrt{6}}\sum_{\pi\in S_3}\ket{\pi\cdot(ijk)}.
    \end{align*}
    In particular, the vertex $\rho_{(1,1,1,0,\ldots,0)}$ only exists when $d\ge 3$.
\end{enumerate}
\end{example}

\subsection{Tensor-based parametrization}
As we saw in the last section, the mixtures of Dicke states can be completely parameterized by the $d_{\mathsf{sym}}$ positive real numbers ($d_{\mathsf{sym}}-1$ if we include normalization), the parameters $\lambda(X)_\alpha$. In this section, we complement this with another useful parametrization of the diagonally symmetric subspace in terms of a symmetric tensor. Recall that we have the occupation counting function, $\gamma : [d]^n \rightarrow \knd{n}{d}$ defined component-wise as, $$\big(\gamma(i)\big)_j := |\{k \in [n] \mid i_k = j\}|.$$ 

Note that $\gamma$ is not an invertible map: for example, it is symmetric under permutations 
$$\forall \sigma \in S_n, \, i \in [d]^n, \, \, \gamma(\sigma \cdot i) = \gamma(i).$$
This is reflected in the fact that the input and output spaces have different cardinalities, $d_{\mathsf{sym}} = \binom{n+d-1}{d-1} \neq d^n.$ Nonetheless, for all occupation vectors $\alpha \in \knd{n}{d}$, there is a \emph{unique} canonical multi-index $\idx(\alpha)$ in $\gamma^{-1}(\alpha)$ that satisfies the ordering, $\idx(\alpha)_1 \leq \idx(\alpha)_2 \cdots \leq \idx(\alpha)_n$. Recall that for a multi-index $i \in [d]^n$, the stabilizer subgroup of $S_n$, $\operatorname{stab}(i) := \{\pi \mid P_{\pi} \ket{i} = \ket{i}\}$ is the set of permutations that leave the multi-index $i$ invariant. In the next theorem, we introduce a new parametrization of the $\mathsf{DS}^{(n)}_{d}$ subspace, and prove some of its useful properties.
\begin{theorem}
\label{thm:tensor-param}
Let $X$ be a $n$-partite matrix in $\mathsf{DS}^{(n)}_{d}.$ Then, we define the tensor $Q[X] \in (\mathbb{C}^d)^{\otimes n}$  component-wise as, 

$$\forall i \in [d]^n, \quad Q[X]_{i} := \braket{i \vert X \vert i}.$$ Then, we have the following results, 

\begin{enumerate}
    \item[(1)] $Q[X]$ is a symmetric tensor, i.e, $Q[X] \in \vee^n\mathbb{C}^d$. Moreover, $X \in \operatorname{Herm}[\mathsf{DS}^{(n)}_{d}]$, if and only if $Q[X] \in \vee^n\mathbb{R}^d$, i.e $Q[X]$ is a real symmetric tensor. Moreover, $X \in \mathsf{PSD}^{(n)}_d \iff Q[X] \in \mathsf{NN}^{(n)}_d.$

    \item[(2)] Since $X \in \mathsf{DS}^{(n)}_d$, we have an expansion in the Dicke basis, $$X = \sum_{\alpha \in \knd{n}{d}} \lambda(X)_\alpha \ket{D_\alpha} \bra{D_\alpha}.$$ Then, we have the following connection between the vectors $\lambda(X)$ and the tensor $Q[X]$, 
    
    $$\forall i \in [d]^n \quad \binom{n}{\gamma(i)}Q[X]_i = \lambda(X)_{\gamma(i)}.$$
    Equivalently, for all $\alpha \in \knd{n}{d}$ and all $i \in \gamma^{-1}(\alpha)$, we have
    \begin{equation}\label{eq:p-to-Q}
        \binom{n}{\alpha}\,Q[X]_i = \lambda(X)_{\alpha}.
    \end{equation}

    \item[(3)] $X$ has the following expansion, $$X = \sum_{i \in [d]^n} \sum_{\pi \in S_n}  \frac{\binom{n}{\gamma(i)}}{n!} Q[X]_{i} P_{\pi}\ket{i}\bra{i}.$$
    
    \item[(4)] The trace of the matrix $X$, $$\operatorname{Tr}(X) = \sum_{i \in [d]^n} Q[X]_{i} = \sum_{\alpha \in \knd{n}{d}} \lambda(X)_\alpha.$$
\end{enumerate}
\end{theorem}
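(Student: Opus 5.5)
The plan is to prove item (2) first, since it is a direct computation that items (1), (3) and (4) all rest on.

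By \cref{prop:symmetry-mods}, write $X = \sum_{\alpha} \lambda(X)_\alpha \ketbra{D_\alpha}{D_\alpha}$. Fix $i \in [d]^n$ and compute $\braket{i|D_\alpha}$ from the definition of the Dicke states: it equals $\binom{n}{\alpha}^{-1/2}$ if $\gamma(i)=\alpha$, and $0$ otherwise. Then
\[
Q[X]_i=\sum_\alpha \lambda(X)_\alpha |\braket{i|D_\alpha}|^2 = \binom{n}{\gamma(i)}^{-1}\lambda(X)_{\gamma(i)}.
\]
This gives (2), and \cref{eq:p-to-Q} is just a restatement of it.

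Item (1) then follows from (2). Since $\gamma(\sigma\cdot i)=\gamma(i)$, the entry $Q[X]_i$ depends only on $\gamma(i)$, so $Q[X]$ is symmetric. Because every $\alpha\in\knd{n}{d}$ is attained, for example by $\idx(\alpha)$, the map $\lambda(X)\mapsto Q[X]$ is injective. The multinomial factors $\binom{n}{\alpha}$ are positive, so $Q[X]$ is real (respectively non-negative) exactly when every $\lambda(X)_\alpha$ is real (respectively non-negative). \cref{prop:prop-mods} then yields both equivalences: Hermitian with real tensor, and PSD with $\mathsf{NN}^{(n)}_d$.

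For (3), expand each projector $\ketbra{D_\alpha}{D_\alpha}$. It equals $\binom{n}{\alpha}^{-1}\sum_{i,j\in\gamma^{-1}(\alpha)}\ketbra{i}{j}$. Any $j$ with $\gamma(j)=\gamma(i)$ has the form $P_\pi\ket{i}$ for some $\pi\in S_n$, and the number of such $\pi$ is $|\operatorname{stab}(i)|=n!/\binom{n}{\alpha}$. Therefore
\[
\sum_{j\in\gamma^{-1}(\alpha)}\ket{j} = \frac{\binom{n}{\alpha}}{n!}\sum_{\pi} P_\pi\ket{i}.
\]
Substituting, and writing $\lambda(X)_\alpha = \binom{n}{\alpha}Q[X]_i$ via (2), gives
\[
X=\sum_\alpha\sum_{i\in\gamma^{-1}(\alpha)}\frac{\binom{n}{\alpha}}{n!}Q[X]_i\sum_\pi P_\pi\ketbra{i}{i},
\]
which is the claimed formula once $\sum_\alpha\sum_{i\in\gamma^{-1}(\alpha)}$ is rewritten as $\sum_{i\in[d]^n}$. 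The main obstacle is the bookkeeping of the multinomial constants. In particular one must use the orbit–stabilizer count $|\gamma^{-1}(\alpha)|=\binom{n}{\alpha}$ correctly, and note that the formula as written relies on the convention $P_\pi\ket{i}\bra{i}$ rather than a symmetrized bra; I would double-check this against the $n=2$ case.

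For (4), $\operatorname{Tr}X=\sum_i\braket{i|X|i}=\sum_i Q[X]_i$ directly from the definition of $Q[X]$. Grouping the terms by $\gamma(i)=\alpha$, each class contains $\binom{n}{\alpha}$ indices, each contributing $\lambda(X)_\alpha/\binom{n}{\alpha}$ by (2). This gives $\sum_\alpha \lambda(X)_\alpha$, consistent with \cref{prop:prop-mods}.
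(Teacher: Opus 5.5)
Your proposal is correct and follows essentially the paper's route: the same computation of $\braket{i|D_\alpha}$ for (2), the same orbit--stabilizer count for (3), and the same direct trace computation for (4). For (3), fixing a representative $i$ and using $\sum_{j\in\gamma^{-1}(\alpha)}\ket{j}=\frac{\binom{n}{\alpha}}{n!}\sum_\pi P_\pi\ket{i}$ is just tidier bookkeeping than the paper's pass through canonical indices $\idx(\alpha)$ and $|\operatorname{stab}(l)|^2$. The only real difference is in (1): the paper derives symmetry from the bosonic invariance $P_\pi X = X P_\pi = X$ and only writes out Hermitian $\Rightarrow$ real, whereas you deduce everything from (2) together with \cref{prop:prop-mods}, which also gives the converse and the $\mathsf{PSD}\Leftrightarrow\mathsf{NN}$ equivalence explicitly.
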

\begin{proof}
Let \(X\in\mathsf{DS}^{(n)}_{d}\), and \(Q[X]\in\mathbb{C}^{\otimes n}_d\) is defined component-wise by,
\[
\forall i\in[d]^n,\qquad Q[X]_i:=\langle i\vert X\vert i\rangle.
\]

\begin{enumerate}
\item[(1)]
Let \(\pi\in S_n\) be any permutation and let \(P_\pi\) denote the corresponding permutation operator on \((\mathbb{C}^d)^{\otimes n}\). Since \(X\in\mathsf{DS}^{(n)}_{d}\), by \cref{def:DS-subspace}, we have \(P_\pi X = X P_\pi = X\) for every \(\pi\). For any multi-index \(i\in[d]^n\) and some $\pi \in S_n$,
\[
Q[X]_{\pi\cdot i}
= \langle \pi\cdot i\vert X\vert \pi\cdot i\rangle
= \langle i\vert P_{\pi^{-1}} X P_\pi\vert i\rangle
= \langle i\vert X\vert i\rangle
= Q[X]_i.
\]
Therefore, \(Q[X]\) is invariant under the action of \(S_n\) and hence is a symmetric tensor, \(Q[X]\in\vee^n\mathbb{C}^d\). If additionally, \(X\) is Hermitian then each diagonal matrix element is real \(\langle i\vert X\vert i\rangle \in \mathbb{R} \), hence \(Q[X]\in\vee^n\mathbb{R}^d\).

\item[(2)] Let \(X\in\mathsf{DS}^{(n)}_d\) have the decomposition in the Dicke basis:
\[
X=\sum_{\alpha\in \knd{n}{d}} \lambda(X)_\alpha\;|D_\alpha\rangle\langle D_\alpha|,
\] For any computational basis vector \(|i\rangle\), 
\[
\langle i\vert D_\alpha\rangle = \frac{1}{\sqrt{\binom{n}{\alpha}}} \mathbb{1}_{\gamma(i) = \alpha}
\] Hence, \begin{align*}
Q[X]_i = \langle i\vert X\vert i\rangle
= \sum_{\alpha \in\knd{n}{d}} \lambda(X)_\alpha\,|\langle i\vert D_{\alpha}\rangle|^2 = \lambda(X)_{\gamma(i)}\,\frac{1}{\binom{n}{\gamma(i)}}.
\end{align*}
Rearranging gives the claimed identity

\begin{equation}
\label{eq:Q-to-p}
\binom{n}{\gamma(i)}\,Q[X]_i = \lambda(X)_{\gamma(i)},
\end{equation}
which shows that the Dicke-basis coefficients \(\lambda(X)_\alpha\) are constant on each orbit \(\gamma^{-1}(\alpha)\). 
\item[(3)]  
Since $X \in \mathsf{DS}^{(n)}_d,$ it has the expansion in the Dicke basis. 
\[
X=\sum_{\alpha \in \knd{n}{d}} \lambda(X)_{\alpha}\;|D_\alpha\rangle\langle D_\alpha|.
\] Then, by using \cref{eq:p-to-Q}, and by expanding the Dicke state, $\ket{D_\alpha} = \frac{1}{\sqrt{\binom{n}{\alpha}}}\sum_{\substack{i \in [d]^n  \\\gamma(i) = \alpha}} \ket{i},$ we get
\begin{equation*}
X = 
\sum_{\alpha \in \knd{n}{d}}
    Q[X]_{\idx(\alpha)}
    \sum_{\substack{i,j \in [d]^n \\ \gamma(i)=\alpha,\;\gamma(j)=\alpha}}
        \ket{i}\bra{j}
\end{equation*} By re-labeling the indices $l := \idx(\alpha)$, we have the following equation,   
\begin{equation*}
    X = \sum_{\substack{l \in [d]^n \\ l_1 \le \cdots \le l_n}}
    Q[X]_{l}
    \sum_{\substack{i,j \in [d]^n \\ \gamma(i)=\gamma(l),\;\gamma(j)=\gamma(l)}}
        \ket{i}\bra{j}
\end{equation*} Now $\gamma(i)=\gamma(l) \iff \exists \pi \in S_n, \, i = \pi \cdot l$. Therefore, our expression reduces to,
\begin{equation*}
\sum_{\substack{l \in [d]^n \\ l_1 \le \cdots \le l_n}}
    Q[X]_{l}\;
    \frac{\binom{n}{\gamma(l)}^2}{n!^2}
    \sum_{\pi,\sigma \in S_n}
        \ket{\pi \cdot l}\bra{\sigma \cdot l}
\end{equation*} where we divide by $|\operatorname{stab}(l)|^2 = \frac{{n!^2}}{\binom{n}{\gamma(l)}^2}$ to avoid overcounting. By replacing ordered indices, $l_1 \le \cdots \le l_n$ by all indices $l \in [d]^n,$ we divide by ${\binom{n}{\gamma(l)}},$ \begin{equation*}
\sum_{l \in [d]^n}
    Q[X]_{l}\;
    \frac{\binom{n}{\gamma(l)}}{n!^2}
    \sum_{\pi,\sigma \in S_n}
        \ket{\pi \cdot l}\bra{\sigma \cdot l}
\end{equation*} Finally, by relabeling permutations, we have, \begin{equation*}
\sum_{l \in [d]^n}
    Q[X]_{l}\;
    \frac{\binom{n}{\gamma(l)}}{n!}
    \sum_{\pi' \in S_n}
        \ket{\pi' \cdot l}\bra{l}.  
\end{equation*} which is the claimed expression.

\item[(4)] 
By definition of \(Q[X]\),
\[
\operatorname{Tr}(X)=\sum_{j\in[d]^n}\langle j\vert X\vert j\rangle=\sum_{j\in[d]^n} Q[X]_j.
\]
Recall that from \cref{prop:prop-mods}, $\operatorname{Tr}(X) = \sum_{\alpha \in \knd{n}{d}} \lambda(X)_\alpha$. This concludes the proof.
\end{enumerate}

\end{proof}

Note that in the previous \cref{thm:tensor-param}, we establish a new tensor parametrization of the diagonally symmetric subspace. In particular, it shows that the relevant parameters for a diagonally symmetric matrix $X$ are just entries on the diagonal of the state, which can be arranged as a symmetric tensor $Q[X]$.  This parametrization will be quite useful for studying the entanglement properties in the remaining sections. We end this section by  characterizing the Hilbert-Schmidt (HS) inner product between two diagonally symmetric matrices in terms of the tensor parametrization.

\begin{remark}[$S_d$-symmetry in the tensor parametrization]
\label{rem:sd-tensor}
Let $X \in \mathsf{DS}^{(n),S_d}_d$ and define $Q[X]$ as in \cref{thm:tensor-param}. For $\tau \in S_d$ and $i=(i_1,\ldots,i_n)\in[d]^n$, write $\tau(i):=(\tau(i_1),\ldots,\tau(i_n))$. Then
$$\forall \tau \in S_d,\ \forall i \in [d]^n, \qquad Q[X]_{\tau(i)} = Q[X]_i.$$
Equivalently, $Q[X]_i$ depends only on $\operatorname{sort}(\gamma(i))\in\mathsf{Par}_d(n)$, so an $S_d$-invariant DS tensor has $|\mathsf{Par}_d(n)|$ independent diagonal values (and $|\mathsf{Par}_d(n)|-1$ after normalization to a state). Note that the action of $S_d$ on indices $i$ defined above is different that the action of $S_n$ on the same set, which ensures that the tensor $Q[X]$ is symmetric.
\end{remark}

\begin{proposition}
\label{prop:inconsistency-inner-prod}
The Hilbert-Schmidt Inner product between the two DS matrices $X$ and $Y$ satisfies,
\begin{equation}
\operatorname{Tr}(X^* Y) = \sum_{\substack{i \in [d]^n \\ i_1 \leq i_2 \ldots \leq i_n}} \overbar{Q[X]_{i}} Q[Y]_{ i} \binom{n}{\gamma({i})}^2 = \sum_{i \in [d]^n} \overbar{Q[X]_{i}} Q[Y]_{i} \binom{n}{\gamma({i})}.
\end{equation}
\end{proposition}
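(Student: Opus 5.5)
We reduce everything to the Dicke-basis coefficients and then translate them into the tensor parametrization of \cref{thm:tensor-param}. Since $X,Y \in \mathsf{DS}^{(n)}_d$, \cref{prop:symmetry-mods} lets us write $X=\sum_\alpha \lambda(X)_\alpha \ketbra{D_\alpha}{D_\alpha}$, and similarly for $Y$.

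\textbf{Step 1: the inner product in Dicke coefficients.} Because the Dicke states are orthonormal, the projectors $\ketbra{D_\alpha}{D_\alpha}$ are mutually orthogonal in the Hilbert--Schmidt inner product. Hence
$$\operatorname{Tr}(X^*Y)=\sum_{\alpha\in\knd{n}{d}}\overbar{\lambda(X)_\alpha}\,\lambda(Y)_\alpha,$$
as already noted after \cref{prop:prop-mods}.

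\textbf{Step 2: the sum over ordered multi-indices.} Substitute \cref{eq:p-to-Q}, $\lambda(X)_\alpha=\binom{n}{\alpha}Q[X]_{\idx(\alpha)}$. Since $\binom{n}{\alpha}$ is a positive real number, it passes through the complex conjugation, and each summand becomes $\binom{n}{\alpha}^2\,\overbar{Q[X]_{\idx(\alpha)}}\,Q[Y]_{\idx(\alpha)}$. The map $\alpha\mapsto\idx(\alpha)$ is a bijection from $\knd{n}{d}$ onto the non-decreasing multi-indices $i_1\le\cdots\le i_n$, with inverse $\gamma$. Re-indexing the sum by $i=\idx(\alpha)$ gives the first equality.

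\textbf{Step 3: the sum over all multi-indices.} The summand $\overbar{Q[X]_i}\,Q[Y]_i\binom{n}{\gamma(i)}$ is constant on each fibre $\gamma^{-1}(\alpha)$, because $Q[X]$ and $Q[Y]$ are symmetric tensors by \cref{thm:tensor-param}(1). The fibre $\gamma^{-1}(\alpha)$ has exactly $\binom{n}{\alpha}$ elements. Grouping the full sum over $[d]^n$ by fibres therefore multiplies each fibre's representative term by $\binom{n}{\alpha}$, which recovers the factor $\binom{n}{\alpha}^2$ in the ordered sum.

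\textbf{Main obstacle.} There is no genuine difficulty here. The only point needing care is the fibre count $|\gamma^{-1}(\alpha)|=\binom{n}{\alpha}$ together with the symmetry of $Q$, which is exactly what converts the squared multinomial weight into a single multinomial weight.
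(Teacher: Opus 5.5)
Your proposal is correct and follows the paper's own proof: expand in the Dicke basis, substitute $\lambda(X)_\alpha=\binom{n}{\alpha}Q[X]_{\idx(\alpha)}$, re-index over non-decreasing multi-indices, and then pass to all multi-indices. Your explicit fibre count $|\gamma^{-1}(\alpha)|=\binom{n}{\alpha}$, combined with the symmetry of $Q$, simply spells out the last step, which the paper leaves implicit.
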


\begin{proof}
    Recall that for two DS matrices, $$\operatorname{Tr}(X^* Y) = \sum_{\alpha \in \knd{n}{d}} \overbar{\lambda (X)}_\alpha \lambda (Y)_\alpha.$$
    Using \cref{thm:tensor-param}, it follows that $\lambda (X)_\alpha = Q_{\idx(\alpha)} \binom{n}{\alpha}$ hence the previous expression can be reduced to
    $$\operatorname{Tr}(X^* Y) = \sum_{\alpha \in \knd{n}{d}} \overbar{Q[X]}_{\idx(\alpha)} Q[Y]_{\idx(\alpha)} \binom{n}{\alpha}^2 = \sum_{\substack{i \in [d]^n \\ i_1 \leq i_2 \ldots \leq i_n}} \overbar{Q[X]}_{i} Q[Y]_{i} \binom{n}{\gamma({i})}^2.$$ By replacing the ordered indices with unordered indices, we get
    $$\operatorname{Tr}(X^* Y) = \sum_{i \in [d]^n} \overbar{Q[X]_{i}} Q[Y]_{ i} \binom{n}{\gamma({i})}.$$
    \end{proof}

Note that \cref{prop:inconsistency-inner-prod} shows that the inner product of two matrices parametrized using the diagonal tensor $X \leftrightarrow Q[X]$ is \emph{not} equal to the Euclidean inner product between two tensors corresponding to the matrices, i.e,  $ \operatorname{Tr}(X Y) \neq \braket{Q[X], Q[Y]}$. This leads to an inconsistency while looking at duals, as in entanglement theory, the duals are usually considered with respect to the Hilbert-Schmidt inner product, while for tensors, the Euclidean product is standard. To solve this, we introduce another related parametrization of the DS subspace.

\begin{definition}
\label{def:parametrize-witness}
    Let $X \in \mathsf{DS}^{(n)}_d.$ Then, define the symmetric tensor, 
    $$W[X]_{i} := {Q[X]_i}{\binom{n}{\gamma(i)}}  = \braket{i \vert X \vert i}{\binom{n}{\gamma(i)}}.$$ 
\end{definition}
Let us consider the following map, $\operatorname{diag} :\vee^n \mathbb{R}^d \mapsto \mathcal{M}(\vee^n \mathbb{C}^d)$, defined as,  

$$\operatorname{diag}[Q] := \sum_{i \in [d]^n} Q_{i} \ket{i}\bra{i}.$$ This maps a symmetric real tensor to a diagonal matrix in $\M{d}^{\otimes n}.$ Note that this matrix is not bosonic, i.e, it is not supported only on the symmetric subspace (if all the entries of the tensor $Q$ are non-zero, then $\operatorname{diag}[Q]$ is supported on the whole tensor space).

\begin{proposition}
\label{prop:X-expansion-W}
Let $X \in \operatorname{Herm}[\mathsf{DS}^{(n)}_d].$ Then, we have the following equality: 
    $$X = \Pi^{(n)}_{d} \operatorname{diag}[W[X]] \Pi^{(n)}_{d}$$
\end{proposition}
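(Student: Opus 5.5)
The plan is to compute $\Pi^{(n)}_d \operatorname{diag}[W[X]] \Pi^{(n)}_d$ directly in the Dicke basis and compare it with the Dicke expansion of $X$. Since $\Pi^{(n)}_d = \sum_{\alpha \in \knd{n}{d}} \ketbra{D_\alpha}{D_\alpha}$, it suffices to compute the matrix elements $\braket{D_\alpha \vert \operatorname{diag}[W[X]] \vert D_\beta}$ for all $\alpha,\beta \in \knd{n}{d}$.

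First, expand both Dicke states in the computational basis:
$$\braket{D_\alpha \vert \operatorname{diag}[W[X]] \vert D_\beta} = \frac{1}{\sqrt{\binom{n}{\alpha}\binom{n}{\beta}}} \sum_{i \in [d]^n} W[X]_i\, \mathbb{1}_{\gamma(i)=\alpha}\,\mathbb{1}_{\gamma(i)=\beta}.$$
Because $\operatorname{diag}[W[X]]$ is diagonal, each term forces $\gamma(i)=\alpha=\beta$. So the matrix elements vanish unless $\alpha=\beta$. The resulting matrix is therefore diagonal in the Dicke basis; in particular it lies in $\mathsf{DS}^{(n)}_d$, consistent with \cref{prop:symmetry-mods}.

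For $\alpha=\beta$, the sum runs over the $\binom{n}{\alpha}$ multi-indices in $\gamma^{-1}(\alpha)$. On this set, $W[X]_i = \binom{n}{\alpha} Q[X]_i$ is constant by the symmetry of $Q[X]$ from \cref{thm:tensor-param}(1). The diagonal entry is thus
$$\frac{1}{\binom{n}{\alpha}} \cdot \binom{n}{\alpha} \cdot \binom{n}{\alpha} Q[X]_{\idx(\alpha)} = \binom{n}{\alpha} Q[X]_{\idx(\alpha)} = \lambda(X)_\alpha,$$
where the last equality is \cref{eq:p-to-Q}.

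Finally, conclude that
$$\Pi^{(n)}_d \operatorname{diag}[W[X]] \Pi^{(n)}_d = \sum_\alpha \lambda(X)_\alpha \ketbra{D_\alpha}{D_\alpha} = X.$$
No step is a real obstacle. The only care needed is in the multinomial bookkeeping, namely the normalization $1/\sqrt{\binom{n}{\alpha}}$ squared against the orbit size $\binom{n}{\alpha}$. That this bookkeeping works out is exactly why $W[X]$ rather than $Q[X]$ appears in the statement. Hermiticity of $X$ is used only to regard $W[X]$ as a real tensor, so that $\operatorname{diag}$ applies as defined.
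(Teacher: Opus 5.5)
Your proof is correct, but it takes a different route from the paper's.

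\emph{The paper's route.} It starts from the computational-basis expansion in item (3) of \cref{thm:tensor-param}, namely $X=\sum_{i}\sum_{\pi}\frac{\binom{n}{\gamma(i)}}{n!}Q[X]_i P_\pi\ketbra{i}{i}$. It recognizes $\frac{1}{n!}\sum_\pi P_\pi=\Pi^{(n)}_d$, which gives the one-sided identity $X=\Pi^{(n)}_d\operatorname{diag}[W[X]]$. It then uses self-adjointness of $X$ to insert the right-hand projector.

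\emph{Your route.} You write $\Pi^{(n)}_d=\sum_\alpha\ketbra{D_\alpha}{D_\alpha}$ and compute the Dicke-basis matrix elements of $\operatorname{diag}[W[X]]$ directly. This uses only the symmetry of $Q[X]$ and \cref{eq:p-to-Q}.

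\emph{What each buys.} Your argument is more self-contained. It avoids the stabilizer-counting behind item (3). It produces the two-sided sandwich in one step, so Hermiticity plays only the minor role you identify. It also makes explicit the normalization bookkeeping that forces $W$ rather than $Q$. In fact it proves a bit more: the same computation gives $\Pi^{(n)}_d\operatorname{diag}[T]\Pi^{(n)}_d=\sum_\alpha T_{\idx(\alpha)}\ketbra{D_\alpha}{D_\alpha}$ for every symmetric tensor $T$. So $T\mapsto\Pi^{(n)}_d\operatorname{diag}[T]\Pi^{(n)}_d$ inverts $X\mapsto W[X]$. That is a surjectivity fact the paper uses implicitly later, when it lets $W[O]$ range over entire tensor cones. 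The paper's version is shorter once item (3) is available, and it states the one-sided factorization. The two forms are equivalent anyway, because $\operatorname{diag}[W[X]]$ commutes with every $P_\pi$.
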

\begin{proof}
    By using \cref{thm:tensor-param}, we have  $$X = \sum_{i \in [d]^n} \sum_{\pi \in S_n} P_{\pi} \frac{\binom{n}{\gamma(i)}}{n!} Q[X]_{i} \ket{i}\bra{i}.$$
    Hence, we have 
     \begin{align*}X = \sum_{i \in [d]^n} \sum_{\pi \in S_n} \frac{1}{n!} P_{\pi} W[X]_i \ket{i}\bra{i} = \Pi^{(n)}_{d} \sum_{i \in [d]^n} W[X]_i \ket{i}\bra{i} &= \Pi^{(n)}_{d} \operatorname{diag}[W[X]] \\ &=  \Pi^{(n)}_{d} \operatorname{diag}[W[X]] \Pi^{(n)}_{d}
     \end{align*}
     where the last equality follows from the self-adjointness of $X$.
\end{proof}

In the remaining sections, we use these parametrizations for the set of diagonally symmetric states, as well as diagonally symmetric entanglement witnesses. Note that by \cref{prop:inconsistency-inner-prod}, we now have 
\begin{equation}\label{eq:HS-Q-W}
    \braket{X, Y}_{\mathcal{B}(\vee^n \mathbb{C}^d)} = \braket{Q[X], W[Y]}_{\vee^n\C{d}}.
\end{equation}
Hence, the Hilbert-Schmidt product now corresponds exactly to the Euclidean product of tensors in the $Q$ and $W$ parametrizations. The fact that the Hilbert-Schmidt product is the same as the Euclidean product in the $Q$ and $W$ parametrizations will be used in later section to relate the duality of bosonic matrices with diagonal symmetry to that of the corresponding $Q$ and $W$ parametrizations.

\subsection{Reduced states and marginals}
From the previous section, we know that the diagonally symmetric matrices $X$ are parameterized with a symmetric tensor $Q[X]$. The next theorem shows that the quantum marginals (reduced density matrices) of these states have a direct correspondence with the ``classical'' marginals of this tensor. First, recall that the marginal of a symmetric tensor $T \in \vee^n \mathbb R^d$, with respect to a set of indices $A \subseteq [n]$ is given by
    $$\forall j \in [d]^{n-|A|} \qquad [\operatorname{tr}_A(T)]_j := \sum_{i \in [d]^{|A|}} T_{i \sqcup j}.$$
    Note that, for symmetric tensors, the marginal depends only on the cardinality of the set $A$. 

\begin{theorem}
\label{thm:marginals-dicke-states}
    Let $X \in \mathsf{DS}^{(n)}_d$ be a diagonally symmetric matrix. For any $A \subseteq [n]$, let $\operatorname{Tr}_{A} (X)$ denote the marginal of the state over the tensor factors in $A$. We have then
    \begin{itemize}
        \item $\operatorname{Tr}_{A} (X) \in \mathsf{DS}^{(n - |A|)}_d$;
        \item $Q[\operatorname{Tr}_{A} (X)] = \operatorname{tr}_{A}(Q[X])$.
    \end{itemize}
    
\end{theorem}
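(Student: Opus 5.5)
Since $X$ is bosonic, it commutes with every $P_\pi$. As a result, $\operatorname{Tr}_A(X)$ depends only on $|A|$, and it suffices to treat $A=[k]$, the first $k$ tensor factors, with $k=|A|$. The plan has two parts. First we show that the reduced matrix $Y:=\operatorname{Tr}_{[k]}(X)$ satisfies both conditions of \cref{def:DS-subspace}. Then we compute its diagonal directly from the definition of $Q[\cdot]$ in \cref{thm:tensor-param}.

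\textbf{Membership in the DS subspace.} For diagonal unitary invariance, take $U\in\diagU_d$ and write $U^{\otimes n}=U^{\otimes k}\otimes U^{\otimes(n-k)}$. The partial trace over the first $k$ factors is invariant under conjugation by a local unitary on those factors, so
\[
U^{\otimes (n-k)}\,Y\,(U^{\otimes (n-k)})^* = \operatorname{Tr}_{[k]}\bigl(U^{\otimes n}X(U^{\otimes n})^*\bigr) = Y .
\]
For bosonic symmetry, any $\sigma\in S_{n-k}$ acting on the last $n-k$ factors embeds as a permutation $\tilde\sigma\in S_n$ that fixes $[k]$, and $P_{\tilde\sigma}=I^{\otimes k}\otimes P_\sigma$. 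Hence $P_\sigma Y=\operatorname{Tr}_{[k]}(P_{\tilde\sigma}X)=Y$, and in the same way $YP_\sigma=Y$. Characterization (3) of bosonic matrices then gives $\Pi^{(n-k)}_dY\Pi^{(n-k)}_d=Y$. Therefore $Y\in\mathsf{DS}^{(n-k)}_d$.

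\textbf{Identifying the tensor.} For $j\in[d]^{n-k}$, expand the partial trace in the computational basis of the traced factors:
\[
Q[Y]_j=\braket{j|\operatorname{Tr}_{[k]}(X)|j}=\sum_{i\in[d]^k}\braket{i\sqcup j|X|i\sqcup j}=\sum_{i\in[d]^k}Q[X]_{i\sqcup j}=[\operatorname{tr}_{[k]}(Q[X])]_j .
\]
The general set $A$ follows by symmetry of $Q[X]$, which is item (1) of \cref{thm:tensor-param}. This also matches the remark that the tensor marginal depends only on $|A|$.

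There is no real obstacle here. The only point needing a little care is the bosonic symmetry of the marginal: one has to check that permutations of the remaining factors lift to elements of $S_n$, and that the partial trace commutes with such lifted operators.
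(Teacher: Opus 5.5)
Your proof is correct and follows the paper's argument: you show the marginal is invariant under the lifted permutations and under diagonal unitaries, then compute its diagonal entries directly from the partial trace. Your reduction to $A=[k]$ and your two-sided check $P_\sigma Y = Y P_\sigma = Y$ are slightly more careful than the paper's one-sided version, but the route is the same.
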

\begin{proof}
To prove the first assertion, we show that the reduced states also lie in the diagonally symmetric subspace. Firstly, we can show that any reduced bosonic state satisfies
    $$\forall \sigma \in S_m, \quad P_\sigma \operatorname{Tr}_{A}(X)  = \operatorname{Tr}_{A} \big((P_\sigma \otimes \mathbb{I}_{|A|})(X) \big) = \operatorname{Tr}_{A}(X),$$ 
    where the last equality follows from the fact that $X$ is bosonic. Moreover, we have, 
    $$\forall U \in \mathcal{DU}_d \quad  U^{\otimes (n-|A|)}\operatorname{Tr}_{A}(X) (U^{\otimes (n-|A|)})^* = \operatorname{Tr}_{A}(U^{\otimes n} X (U^{\otimes n})^*) = \operatorname{Tr}_{A}(X),$$ 
    where the first equality follows from the cyclicity of the trace and the second follows from the diagonal invariance of the matrix $X$. The second assertion follows from the direct computation:
    \begin{align*}
        \forall i \in [d]^{n-|A|}, \quad Q[\operatorname{Tr}_A(X)]_{i} &= \braket{i | \operatorname{Tr}_A(X)| i} =  \operatorname{Tr}(X \ketbra{i}{i} \otimes \mathbb{I}_A) \\
        &= \sum_{j \in [d]^A} \operatorname{Tr}(X \ketbra{i, j}{i,j})\\
        &= \sum_{j \in [d]^A} Q[X]_{i \sqcup j} = (\operatorname{tr}_A Q[X])_i.
    \end{align*}
    Above, we use the notation $i \sqcup j$ to denote the vector
    $$(i_1, \ldots, i_{n-|A|},j_1, \ldots, j_{|A|}) \in [d]^n.$$
\end{proof}

We use the previous result to compute the marginals of the pure Dicke states $\ket{D_\alpha}$. By \cref{thm:marginals-dicke-states}, the marginals lie in the DS subspace. We compute the marginals for two reasons: firstly, it shows that the approach of tensors in terms of the tensor parametrization is highly convenient, and secondly, the entanglement properties of these marginals have been of independent interest \cite{szalay2025dicke}. 
    Let $\ket{D_\alpha}$ be a pure Dicke state, defined as 
    $$\ket{D_\alpha} = \frac{1}{\binom{n}{\alpha}^{1/2}}\sum_{i \, : \, \gamma(i) = k} \ket{i}.$$ The tensor parametrization of this state is as follows:
    $$\forall i \in [d]^n \quad Q[\ket{D_\alpha}\bra{D_\alpha}]_i= |\braket{i | D_\alpha}|^2 = \frac{1}{\binom{n}{\gamma(i)}} \, \mathbb{1}_{\gamma(i)=\alpha}.$$ 
    We compute next the marginal of the pure Dicke state using the marginal of the tensor, by using \cref{thm:marginals-dicke-states}.

\begin{proposition}\label{prop:marginal-dicke-pure}
    For an arbitrary Dicke state $\ket{D_\alpha}$ and a marginal set $A \subseteq [n]$, we have:
    $$\forall j \in [d]^{n-|A|}, \qquad Q[\operatorname{Tr}_A(\ket{D_\alpha}\bra{D_\alpha})]_j = \operatorname{tr}_{A} [Q(\ket{D_\alpha}\bra{D_\alpha})]_j = \frac{\binom{|A|}{\alpha-\gamma(j)}}{\binom{n}{\alpha}} \mathbb{1}_{\gamma(j) \le \alpha}.$$ 
\end{proposition}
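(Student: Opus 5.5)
The plan is to apply \cref{thm:marginals-dicke-states}, which gives the first equality at once, and then evaluate the classical tensor marginal by counting. Write $m := |A|$. By the symmetry of the tensor, the marginal depends only on $m$, so we may take $A$ to be the last $m$ tensor factors. By definition,
$$\operatorname{tr}_A\big(Q[\ketbra{D_\alpha}{D_\alpha}]\big)_j = \sum_{i \in [d]^{m}} Q[\ketbra{D_\alpha}{D_\alpha}]_{j \sqcup i}.$$
We then substitute the formula $Q[\ketbra{D_\alpha}{D_\alpha}]_l = \binom{n}{\alpha}^{-1}\mathbb{1}_{\gamma(l)=\alpha}$ computed just before the statement. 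The quantity to evaluate becomes $\binom{n}{\alpha}^{-1}$ times the number of $i \in [d]^m$ with $\gamma(j \sqcup i) = \alpha$.

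The key observation is additivity of the counting function:
$$\gamma(j \sqcup i) = \gamma(j) + \gamma(i).$$
The condition therefore reads $\gamma(i) = \alpha - \gamma(j)$. If some component of $\alpha - \gamma(j)$ is negative, that is, if $\gamma(j) \le \alpha$ fails componentwise, then no such $i$ exists and the sum vanishes. This produces the indicator $\mathbb{1}_{\gamma(j)\le\alpha}$.

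Otherwise, $\beta := \alpha - \gamma(j)$ lies in $\knd{m}{d}$, since $|\beta| = n - (n-m) = m$. The number of multi-indices $i \in [d]^m$ with occupation vector $\beta$ is the multinomial coefficient $\binom{m}{\beta} = \binom{|A|}{\alpha - \gamma(j)}$. Combining this count with the prefactor $\binom{n}{\alpha}^{-1}$ gives the claimed formula.

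The only delicate point is bookkeeping. One must check that the multinomial count $|\gamma^{-1}(\beta)| = \binom{m}{\beta}$ is exactly what appears in the statement. One must also confirm the degree constraint $|\beta| = m$, which holds automatically because $j$ has length $n-m$. No real obstacle is expected beyond this.
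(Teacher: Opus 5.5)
Your proposal is correct and matches the paper's proof: you invoke \cref{thm:marginals-dicke-states} for the first equality, substitute $Q[\ket{D_\alpha}\bra{D_\alpha}]_l = \binom{n}{\alpha}^{-1}\mathbb{1}_{\gamma(l)=\alpha}$, use the additivity $\gamma(i \sqcup j) = \gamma(i) + \gamma(j)$, and count the multi-indices of type $\alpha - \gamma(j)$ with the multinomial coefficient $\binom{|A|}{\alpha-\gamma(j)}$. Placing the traced indices last rather than first, as the paper does, makes no difference because the tensor is symmetric.
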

 \begin{proof}   
For the marginal of the tensor, we have:
\begin{align*}
[\operatorname{tr}_{A} (Q[\ket{D_\alpha}\bra{D_\alpha}])]_j
&= \sum_{i \in [d]^{|A|}} (Q[\ket{D_\alpha}\bra{D_\alpha}])_{i \sqcup j}
= \sum_{i \in [d]^{|A|}}
\frac{1}{\binom{n}{\alpha}} \mathbb{1}_{\gamma(i \sqcup j)=\alpha}\\
&= \frac{1}{\binom{n}{\alpha}} \sum_{i \in d^{|A|}} \mathbb{1}_{\gamma(i)+\gamma(j)=\alpha}\\
&=
\frac{\binom{|A|}{\alpha-\gamma(j)}}{\binom{n}{\alpha}}
\,\mathbb{1}_{\gamma(j)\le \alpha}.
\end{align*} 

\end{proof}

\section{Entanglement and PPT property}
\label{sec:entanglement-mods}
In this section, we show an exact correspondence between the entanglement properties of mixtures of Dicke states to some well-known positivity property of symmetric tensors; our findings are summarized in \cref{fig:entanglement-DS-tensors}. We then completely characterize for which local Hilbert space dimension and number of particles there exist PPT entangled DS states.  Finally, we show in the last subsection that all marginals of entangled pure Dicke states have negative partial transpose across all bipartitions, recovering a recent result of \cite{szalay2025dicke} in a simpler, more conceptual way. By establishing these results, we have a comprehensive framework for entanglement in mixtures of Dicke states, which we use to study PPT entanglement and bosonic extendibility in the later sections using well-known ideas and results about positive polynomials and 
tensors. 

\subsection{Bose-symmetry and entanglement}
We begin this section with a brief review of multipartite entanglement in mixed states, with the focus on the particular results about bosonic states. Note the following two examples of entangled states in $3$ qubits.
$$(\ket{00} + \ket{11}) \otimes \ket{1} \qquad \ket{000} + \ket{111}$$
Although both states are entangled, there is a difference between the structure of entanglement. The first state is entangled in a bipartite fashion, while for the GHZ state, no such decomposition into the bipartite entanglement exists. This motivates the following definition.

\begin{definition}[\cite{ichikawa2008exchange,marconi2025symmetric}]
Let $\mathcal{H} = \bigotimes_{i=1}^n \mathcal{H}_i$ be the Hilbert space of an $n$-partite quantum system.
A pure state $\lvert \psi \rangle \in \mathcal{H}$ is called \emph{$k$-producible}
if there exists a partition of $[n]$ into disjoint subsets $S_1, \ldots, S_m$ such that $\lvert S_j \rvert \le k$ for all $j$ and
\[
    \lvert \psi \rangle = \bigotimes_{j=1}^m \lvert \phi_j \rangle ,
\]
where $\lvert \phi_j \rangle$ is a pure state on the subsystems indexed by $S_j$. A mixed state $X \in \mathcal{M}^{\otimes n}_d$ is called \emph{$k$-producible} if,  
\[
    X \in \operatorname{cone} \{ 
    \lvert \psi \rangle \langle \psi \rvert \mid \lvert \psi \rangle \text{ is $k$-producible}\}.
\]
The entanglement depth of a state $\rho$ is the smallest $k$ such that $\rho$ is
$k$-producible. A state is \emph{genuinely $(k+1)$-partite entangled} if it is
$(k+1)$-producible but not $k$-producible.
\end{definition}

It is clear from the previous definition that bipartite quantum states are either separable (i.e. have entanglement depth $1$) or $2$-partite entangled. In comparison to bipartite systems, entanglement theory in the multipartite case is much more complicated.  Surprisingly, for \emph{multipartite} bosonic states, a stronger result holds.

\begin{theorem}[{{\cite[Theorem 1]{ichikawa2008exchange}}}]
\label{thm:bosonic-sep-or-ent}
    Let $X \in \operatorname{Herm}[\vee^n \mathbb{C}^d]$. Then, the state has entanglement depth $1$ or $n$. Particularly, if it has entanglement depth $1$, there exist $\{\ket{v_k}\}^K_{k=1}$ such that, 
    $$X = \sum^K_{k=1} \ketbra{v_k}{v_k}^{\otimes n}.$$
\end{theorem}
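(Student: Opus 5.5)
The plan is to show that a bosonic state which is $k$-producible for some $k<n$ is automatically $1$-producible, and in fact of the form $\sum_k \ketbra{v_k}{v_k}^{\otimes n}$. The dichotomy then follows: if $X$ is not of this form, it is not $(n-1)$-producible. Since every state is trivially $n$-producible, its entanglement depth must then be $n$.

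\textbf{Step 1 (each term is symmetric).} Suppose $X=\sum_q \ketbra{\psi_q}{\psi_q}$ with each $\ket{\psi_q}$ $k$-producible and $k<n$. Because $X$ is a sum of positive semidefinite terms, each $\ket{\psi_q}$ lies in $\operatorname{ran}(X)$. Since $X$ is bosonic, $\operatorname{ran}(X)\subseteq \vee^n\mathbb{C}^d$. Hence every $\ket{\psi_q}$ is a symmetric vector. Moreover, since $k<n$, the partition witnessing producibility has at least two blocks. So each $\ket{\psi_q}$ factorizes as $\ket{a}_S\otimes\ket{b}_{S^c}$ across some bipartition with $\emptyset\neq S\subsetneq[n]$.

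\textbf{Step 2 (symmetric and product across one cut implies fully product).} This is the key lemma and the step I expect to be the main obstacle. Let $\ket{\psi}$ be symmetric and product across $S|S^c$; equivalently, the reduced state $\operatorname{Tr}_{S^c}\ketbra{\psi}{\psi}$ is pure. Applying $P_\pi$ and using $P_\pi\ket{\psi}=\ket{\psi}$, we get that $\ket{\psi}$ is product across $\pi(S)|\pi(S)^c$ for every $\pi\in S_n$. Thus every marginal on a set of size $|S|$, and also on a set of size $n-|S|$, is pure.

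I will then use the standard fact that if the marginals of a pure state on $A$ and on $B$ are both pure, then the marginal on $A\cap B$ is pure. To see this, write $\ket{\psi}=\ket{\phi}_A\otimes\ket{\chi}_{A^c}$. Then
$$\rho_B=\operatorname{Tr}_{A\setminus B}\ketbra{\phi}{\phi}\otimes \operatorname{Tr}_{A^c\setminus B}\ketbra{\chi}{\chi}.$$
A tensor product of two positive operators is pure only if both factors are pure, so $\operatorname{Tr}_{A\setminus B}\ketbra{\phi}{\phi}$, which is the marginal of $\ket{\psi}$ on $A\cap B$, is pure.

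Now fix a site $m$ and set $s=|S|$, with $1\le s\le n-1$. Intersecting suitable $s$-subsets containing $m$ isolates $\{m\}$. If $s=n-1$, use complements instead, i.e.\ sets of size $n-s$. Thus every single-site marginal is pure, and so $\ket{\psi}=\bigotimes_{m=1}^n\ket{w_m}$.

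\textbf{Step 3 (all factors are proportional).} Applying the transposition of sites $m$ and $m'$ and using symmetry gives $\ket{w_m}\otimes\ket{w_{m'}}\propto\ket{w_{m'}}\otimes\ket{w_m}$, so $\ket{w_m}\propto\ket{w_{m'}}$. Hence each $\ket{\psi_q}=c_q\ket{u_q}^{\otimes n}$. Absorbing $|c_q|^{1/n}$ into $\ket{u_q}$ yields $X=\sum_q\ketbra{v_q}{v_q}^{\otimes n}$, which shows both that $X$ has depth $1$ and the claimed form. Combining this with the remark in the first paragraph establishes the dichotomy: the depth is either $1$ or $n$.
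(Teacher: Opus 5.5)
The paper gives no proof of this statement; it is quoted from \cite{ichikawa2008exchange}, so there is no in-paper argument to match yours against. Your proposal is a correct, self-contained proof.

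Step 1 reduces the mixed case to pure symmetric vectors. Every $\ket{\psi_q}$ in a decomposition $X=\sum_q\ketbra{\psi_q}{\psi_q}$ lies in $\operatorname{ran}(X)\subseteq\vee^n\mathbb{C}^d$, because $\ker X=\bigcap_q \ket{\psi_q}^{\perp}$. Steps 2--3 then correctly show that a symmetric vector that is product across one cut has the form $c\,\ket{u}^{\otimes n}$. In Step 2, the permutation argument, the intersection lemma via $\operatorname{rk}(A\otimes B)=\operatorname{rk}A\cdot\operatorname{rk}B$, and the iteration over $s$-subsets are all sound. In Step 3, the transposition argument and the normalization $v_q=|c_q|^{1/n}u_q$ are also sound.

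Compared with the bare citation, your argument gives a slightly stronger conclusion. Every $(n-1)$-producible decomposition of a bosonic state automatically consists of terms $\ketbra{v}{v}^{\otimes n}$, not merely some decomposition of that form.

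Minor remarks:
\begin{itemize}
\item The special treatment of $s=n-1$ is unnecessary. For any $1\le s\le n-1$ and $j\neq m$ there is an $s$-subset of $[n]\setminus\{j\}$ containing $m$; for $s=n-1$ it is $[n]\setminus\{j\}$ itself.
\item The case $n=1$, where $k<n$ is impossible, is covered trivially by the spectral decomposition.
\item Step 2 can be shortened. Take a site $a\in S$ and a site $b\in S^c$. The two-site marginal is then $\rho_1\otimes\rho_1$, since all one-site marginals of a symmetric vector coincide. Invariance under the transposition $(a\,b)$ forces this marginal to be supported on $\vee^2\mathbb{C}^d$. That in turn forces $\operatorname{rk}\rho_1=1$: otherwise, for orthogonal eigenvectors $e,f$ of $\rho_1$, the non-symmetric vector $e\otimes f$ would lie in its range. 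This gives purity of all single-site marginals at once, with no need for the intersection lemma.
\end{itemize}
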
 We denote all separable bosonic states (entanglement depth = 1) as \begin{equation}\label{eq:separability}\mathsf{Sep}^{(n)}_d := \operatorname{cone}\{\ketbra{v_k}{v_k}^{\otimes n} \mid \ket{v_k} \in \mathbb{C}^d\},\end{equation}which forms a proper convex cone. We define the set of bosonic entanglement witnesses as the dual cone of the separable cone

$$\mathsf{EW}^{(n)}_d := \big(\mathsf{Sep}^{(n)}_d \big)^\circ.$$

From the form of the extremal rays of the cone $\mathsf{Sep}^{(n)}_d$, we obtain the following characterization of entanglement witnesses in the DS subspace.

\begin{proposition}
\label{prop:ew-dual-sep}
 Let $O \in \operatorname{Herm}[\vee^n \mathbb{C}^d].$ Then, $O \in \mathsf{EW}^{(n)}_d$ if and only if $$\forall \ket{v} \in \mathbb{C}^d \quad \braket{v^{\otimes n}|O| v^{\otimes n}} \geq 0.$$
\end{proposition}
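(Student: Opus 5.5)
The plan is to unpack the definition of the dual cone. By definition, $\mathsf{EW}^{(n)}_d = (\mathsf{Sep}^{(n)}_d)^\circ$ consists of those $O \in \operatorname{Herm}[\vee^n \mathbb{C}^d]$ with $\operatorname{Tr}(OY) \geq 0$ for all $Y \in \mathsf{Sep}^{(n)}_d$. Here the real inner product space is $\operatorname{Herm}[\vee^n \mathbb{C}^d]$, equipped with the Hilbert--Schmidt product $\braket{O,Y} = \operatorname{Tr}(O^*Y) = \operatorname{Tr}(OY)$. This identifies the dual space with the space itself, as in the convex duality subsection.

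For the forward direction, fix $\ket{v} \in \mathbb{C}^d$. The operator $\ketbra{v}{v}^{\otimes n} = \ketbra{v^{\otimes n}}{v^{\otimes n}}$ lies in $\mathsf{Sep}^{(n)}_d$ by \cref{eq:separability}. It is also bosonic, since $v^{\otimes n} \in \vee^n \mathbb{C}^d$ by \cref{eq:span-symmetric-space}. Hence
$$0 \leq \operatorname{Tr}\big(O \ketbra{v^{\otimes n}}{v^{\otimes n}}\big) = \braket{v^{\otimes n}|O|v^{\otimes n}},$$
by cyclicity of the trace.

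For the converse, take an arbitrary $Y \in \mathsf{Sep}^{(n)}_d$. Since $\mathsf{Sep}^{(n)}_d$ is a cone generated by the rank-one product projectors, we can write
$$Y = \sum_{k=1}^K c_k \ketbra{v_k}{v_k}^{\otimes n}, \qquad c_k \geq 0.$$
Absorbing $c_k^{1/(2n)}$ into $v_k$, we may even take $c_k = 1$, matching \cref{thm:bosonic-sep-or-ent}. Linearity of the trace then gives
$$\operatorname{Tr}(OY) = \sum_k c_k \braket{v_k^{\otimes n}|O|v_k^{\otimes n}} \geq 0,$$
so $O \in \mathsf{EW}^{(n)}_d$.

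The only point needing care is whether $\mathsf{Sep}^{(n)}_d$ is meant to include limits, that is, whether it is the closed cone. If it is the closure of the conic hull, the inequality $\operatorname{Tr}(OY) \geq 0$ still passes to limits by continuity of $Y \mapsto \operatorname{Tr}(OY)$, so the dual cone is unchanged. In any case, the conic hull of this compact generating set (the generators with $\|v\|=1$, up to scaling) that avoids the origin is already closed. I expect no genuine obstacle here; the statement is essentially a restatement of duality against the extremal rays.
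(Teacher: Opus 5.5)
Your proposal is correct and follows the same route as the paper, which justifies the statement only by saying it follows ``from the form of the extremal rays'' of $\mathsf{Sep}^{(n)}_d$; your two-direction unpacking of the dual cone against the generators $\ketbra{v}{v}^{\otimes n}$ is exactly that argument written out. Your closedness remark is also sound: the unit-norm generators all have trace one, so their convex hull is a compact set avoiding the origin, and the cone it generates is closed.
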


Recall that all \emph{bipartite} separable states satisfy the PPT (Positivity under Partial Transposition) property, i.e, PPT is a necessary condition for separability. The separable bosonic states (ref. \cref{thm:bosonic-sep-or-ent}) are separable across any bipartition; hence, they also satisfy the PPT property across \emph{each} such partition. 
In general, a bosonic state might only satisfy the PPT property across some bipartitions, and not all. Let $I$ be a subset of $[n].$ We use $\Gamma_I$ to denote the partial transpose of the systems indexed by $I.$ We can use bosonic symmetry to show that the PPT property only depends on the size of the bipartition we consider while applying the partial transpose, and not on the particular systems that are transposed.

\begin{proposition}
\label{prop:ppt-symmetric}
Let $\rho \in \operatorname{Herm}[\vee^n \mathbb{C}^d]$ be a bosonic matrix. Let $I$ and $J$ be arbitrary subsets of $[n]$ such that $|I| = |J|.$ Then, 
$$\rho^{\Gamma_{I}} \geq 0 \iff \rho^{\Gamma_{J}} \geq 0.$$ 
\end{proposition}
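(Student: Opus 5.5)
The plan is to produce a permutation $\sigma \in S_n$ that maps $I$ onto $J$, and then use the bosonic symmetry of $\rho$ to move the transposed factors from $I$ to $J$.

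First, since $|I| = |J|$, there is a $\sigma \in S_n$ with $\sigma(I) = J$. Next, we verify the commutation relation
\[
P_\sigma\, Y^{\Gamma_I}\, P_\sigma^* = \bigl(P_\sigma Y P_\sigma^*\bigr)^{\Gamma_{\sigma(I)}}
\]
for every $Y \in \mathcal{M}_d^{\otimes n}$. By linearity it suffices to check this on product operators $Y = A_1 \otimes \cdots \otimes A_n$. Conjugation by $P_\sigma$ relocates the factor $A_k$ to position $\sigma(k)$. Transposing the factors in positions $I$ before relocating them is therefore the same as transposing the factors in positions $\sigma(I)$ afterwards. The convention $P_\pi$ versus $P_{\pi^{-1}}$ only decides whether the relevant set is $\sigma(I)$ or $\sigma^{-1}(I)$. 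Either way we can choose $\sigma$ so that the resulting set is $J$.

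Then we use that $\rho$ is bosonic, so $P_\sigma \rho = \rho P_\sigma = \rho$ by condition (3) of the definition of bosonic matrices. This gives $P_\sigma \rho P_\sigma^* = \rho$. Combining this with the commutation relation yields
\[
\rho^{\Gamma_J} = P_\sigma\, \rho^{\Gamma_I}\, P_\sigma^*.
\]
Since $P_\sigma$ is unitary, $\rho^{\Gamma_J}$ is unitarily equivalent to $\rho^{\Gamma_I}$. One is therefore positive semidefinite if and only if the other is, and the two operators even have the same spectrum.

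The only real bookkeeping is the index check of the commutation relation on matrix units $\ketbra{i}{j}$. Write $P_\sigma \ket{i} = \ket{i'}$, where $i'$ is $i$ with its entries permuted. Transposing the positions in $I$ swaps $i_k \leftrightarrow j_k$ for $k \in I$, and after permuting, these swaps sit at positions $\sigma(I)$. I expect no real obstacle beyond keeping the permutation convention consistent.
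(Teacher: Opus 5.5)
Your proposal is correct and follows the paper's proof: choose $\sigma\in S_n$ carrying $I$ to $J$, use the bosonic invariance $P_\sigma\rho P_\sigma^*=\rho$, and observe that conjugation by $P_\sigma$ intertwines $\Gamma_I$ with $\Gamma_{\sigma(I)}$, so $\rho^{\Gamma_J}$ and $\rho^{\Gamma_I}$ are unitarily equivalent. Your check of the intertwining relation on product operators makes the permutation convention explicit; under the paper's definition of $P_\pi$ the relevant set is indeed $\sigma(I)$. The paper's one-line version leaves this point implicit.
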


\begin{proof}
    For a bosonic state, we have, $$\forall \pi \in S_n \, \,  P_{\pi} \rho P_{\pi^{-1}} = \rho.$$ Let $\pi$ be a permutation that maps the elements of the set $I$ to $J$. Then, 
    $$ \rho^{\Gamma_I} = (P_{\pi} \rho P_{\pi^{-1}})^{\Gamma_I} = (P_{\pi} \rho^{\Gamma_J} P_{\pi^{-1}}).$$ 
\end{proof}
Therefore, following the previous \cref{prop:ppt-symmetric}, consider the following convex cones.
\begin{definition}
\label{def:ppt-and-psd}
Let $0 \leq k \leq \lfloor n/2 \rfloor.$ We define the following convex cones,
    $$\mathsf{PPT}_d^{(n,0)} := \{\rho \in \operatorname{Herm}[\vee^n \mathbb{C}^d] \mid \rho \geq 0\}$$
    $$\mathsf{PPT}_d^{(n,k)} := \{\rho \in \operatorname{Herm}[\vee^n \mathbb{C}^d] \mid \rho^{\Gamma_{[k]}} \geq 0\}.$$
\end{definition}

Note that $\mathsf{PPT}^{(n,0)}$ is the bosonic PSD matrices, and $\mathsf{PPT}^{(n,k)}$ is the set of bosonic matrices having the PPT property across the bipartition $[k]:[n] \backslash [k]$. Clearly, all the separable states (see \cref{eq:separability}, \cref{thm:bosonic-sep-or-ent}) satisfy the following inclusion. 

\begin{equation}
\label{eq:sep-in-ppt}
\mathsf{Sep}^{(n)}_d  \subseteq \bigcap^{\lfloor n/2 \rfloor}_{k=0} \mathsf{PPT}_d^{(n,k)} =: \mathsf{PPT}_d^{(n)}.
\end{equation}This means that the separable states satisfy the PPT property across all the bipartitions. 

\begin{remark}
    The set inclusion in \cref{eq:sep-in-ppt} is strict except for the case $d=2, n=2,3$. This has been shown by explicit construction of PPT entangled states for $n \geq 4$ qubits. \cite{tura2012four,augusiak2012entangled}.
\end{remark}

We now introduce the dual cone of the PPT sets. Note that we will consider the dual cone in the ambient space of  $\operatorname{Herm}[\vee^n \mathbb{C}^d] \subseteq \operatorname{Herm}[\mathcal{M}^{\otimes n}_d].$ We define the following convex cone of decomposable bosonic matrices,
$$\mathsf{Dec}_d^{(n,k)} := \big(\mathsf{PPT}_d^{(n,k)})^\circ.$$ We also have the following characterization of the $\mathsf{Dec}_d^{(n,k)}.$

\begin{theorem}[]
\label{thm:dec-characterization}
$O \in \mathsf{Dec}^{(n,k)}_d$ if and only if there exist positive semi-definite operators, 
\(P \in \operatorname{Herm}[\mathcal{M}^{\otimes n}_d]\) such that,
\begin{equation}
    O \;=\; \Pi^{(n)}_d (P^{\Gamma_{[k]}})  \Pi^{(n)}_d.
\end{equation}
\end{theorem}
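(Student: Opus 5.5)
This is a standard conic duality computation in the real vector space $\operatorname{Herm}[\vee^n \mathbb{C}^d]$ with the Hilbert--Schmidt inner product. Define
$$\mathcal{D}_k := \{\Pi^{(n)}_d P^{\Gamma_{[k]}} \Pi^{(n)}_d \mid P \in \operatorname{Herm}[\mathcal{M}^{\otimes n}_d],\ P \succeq 0\}.$$
This is a convex cone contained in $\operatorname{Herm}[\vee^n\mathbb{C}^d]$, since the partial transpose preserves Hermiticity and compression by $\Pi^{(n)}_d$ lands in the bosonic space. The plan is to show $\mathcal{D}_k^\circ = \mathsf{PPT}^{(n,k)}_d$, where the dual is taken inside $\operatorname{Herm}[\vee^n\mathbb{C}^d]$. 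Then, provided $\mathcal{D}_k$ is closed, the bipolar theorem gives
$$\mathsf{Dec}^{(n,k)}_d = (\mathsf{PPT}^{(n,k)}_d)^\circ = \mathcal{D}_k^{\circ\circ} = \mathcal{D}_k,$$
which is exactly the claim.

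For the duality, take $\rho \in \operatorname{Herm}[\vee^n\mathbb{C}^d]$ and $P \succeq 0$. Since $\rho = \Pi\rho\Pi$ and the partial transpose is self-adjoint for the trace pairing, we get
$$\operatorname{Tr}\big(\rho\, \Pi P^{\Gamma_{[k]}} \Pi\big) = \operatorname{Tr}(\rho P^{\Gamma_{[k]}}) = \operatorname{Tr}(\rho^{\Gamma_{[k]}} P).$$
This is non-negative for all $P \succeq 0$ if and only if $\rho^{\Gamma_{[k]}} \succeq 0$, by self-duality of the PSD cone in the full operator space $\operatorname{Herm}[\mathcal{M}^{\otimes n}_d]$. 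Hence $\mathcal{D}_k^\circ = \mathsf{PPT}^{(n,k)}_d$. Note that $\rho^{\Gamma_{[k]}}$ is in general not bosonic, which is why $P$ must range over all PSD operators and not only bosonic ones. The same computation also gives the easy inclusion $\mathcal{D}_k \subseteq \mathsf{Dec}^{(n,k)}_d$ directly.

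The main obstacle is closedness of $\mathcal{D}_k$, since the linear image of a closed cone need not be closed. The map $L(P) = \Pi P^{\Gamma_{[k]}} \Pi$ is linear, and the PSD cone has the compact base $\{P \succeq 0, \operatorname{Tr}P = 1\}$. It therefore suffices to show $\ker L \cap \mathsf{PSD} = \{0\}$; the image is then closed by the standard compact-base argument. To check this, suppose $P \succeq 0$ with $\Pi P^{\Gamma_{[k]}}\Pi = 0$. Pairing with $\ketbra{v}{v}^{\otimes n}$, which is bosonic, gives
$$0 = \operatorname{Tr}\big((\ketbra{v}{v}^{\otimes n})^{\Gamma_{[k]}} P\big) = \braket{w | P | w}, \qquad \ket{w} = \ket{\bar v}^{\otimes k} \otimes \ket{v}^{\otimes (n-k)}.$$
So $P\ket{w} = 0$ for all $v$. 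Finally, $\operatorname{Tr}P$ is a positive multiple of $\int \braket{w|P|w}\,dv$ over the Haar measure, because $\int (\ketbra{\bar v}{\bar v})^{\otimes k} \otimes (\ketbra{v}{v})^{\otimes (n-k)}\, dv$ has full support; I expect this support statement to need the most care. Granting it, $\operatorname{Tr}P = 0$ and hence $P = 0$. Alternatively, and more simply, the vectors $\ket{w}$ span the product $\vee^k\mathbb{C}^d \otimes \vee^{n-k}\mathbb{C}^d$, which suffices if one first reduces to $P$ supported there, as one may by symmetrizing on each block. Closedness then follows, and the bipolar argument completes the proof.
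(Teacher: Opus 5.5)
Your overall plan is sound: show $\mathcal D_k^\circ=\mathsf{PPT}^{(n,k)}_d$ inside $\operatorname{Herm}[\vee^n\mathbb{C}^d]$, prove $\mathcal D_k$ is closed, then apply the bipolar theorem. Your duality computation is also correct.

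\textbf{The gap.} Your first closedness argument is false: $\ker L\cap\mathsf{PSD}\neq\{0\}$ in general.
\begin{itemize}
\item For $k=0$, $n=2$, the projector onto the antisymmetric subspace is annihilated by $L$.
\item More generally, any $P\succeq 0$ supported on the orthogonal complement of $\vee^k\mathbb{C}^d\otimes\vee^{n-k}\mathbb{C}^d$ satisfies $\langle w\vert P\vert w\rangle=0$ for every $w=\bar v^{\otimes k}\otimes v^{\otimes(n-k)}$. Hence $L(P)=0$ by \cref{thm:symmetric-subspace-proj}.
\end{itemize}
Equivalently, the Haar integral you invoke has support exactly $\vee^k\mathbb{C}^d\otimes\vee^{n-k}\mathbb{C}^d$. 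This is a proper subspace except when $n-k\le 1$, so ``granting'' full support cannot work.

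\textbf{The fix.} Your ``alternative'' is therefore not optional; it is the argument. To make it precise, use the compression $P\mapsto APA$ with $A:=\Pi^{(k)}_d\otimes\Pi^{(n-k)}_d$. A twirl over block permutations would not suffice, since it does not shrink the support.
\begin{itemize}
\item The compression preserves positivity.
\item It leaves $L$ unchanged. Indeed, $(APA)^{\Gamma_{[k]}}=AP^{\Gamma_{[k]}}A$ because $\Pi^{(k)}_d$ is a real symmetric matrix, and $\Pi^{(n)}_dA=A\Pi^{(n)}_d=\Pi^{(n)}_d$.
\item Hence $\mathcal D_k=L\bigl(\mathsf{PSD}(\vee^k\mathbb{C}^d\otimes\vee^{n-k}\mathbb{C}^d)\bigr)$, and this smaller cone still has a compact trace-one base.
\item On it the kernel is trivial, because the vectors $\bar v^{\otimes k}\otimes v^{\otimes(n-k)}$ span $\vee^k\mathbb{C}^d\otimes\vee^{n-k}\mathbb{C}^d$. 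A vector orthogonal to all of them gives a polynomial in $(v,\bar v)$ that vanishes identically, so all its coefficients vanish.
\end{itemize}
With this repair your proof is complete.

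\textbf{Comparison with the paper.} The easy direction is identical to the paper's. For the converse, the paper works from the other side:
\begin{itemize}
\item It writes $\mathsf{PPT}^{(n,k)}_d$ as the intersection of $\{\rho:\rho^{\Gamma_{[k]}}\succeq 0\}$ with the bosonic subspace in $\operatorname{Herm}[\mathcal{M}_d^{\otimes n}]$.
\item It uses that the dual of an intersection is the closure of the sum of the duals, giving $O=\lim_l\bigl(P_l^{\Gamma_{[k]}}+Q_l\bigr)$ with $P_l\succeq 0$ and $\Pi^{(n)}_dQ_l\Pi^{(n)}_d=0$.
\item It compresses by $\Pi^{(n)}_d$ and sets $P:=\lim_l P_l$.
\end{itemize}
That last step needs $(P_l)$ to be bounded. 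The paper does not justify this, and the nontrivial kernel above shows it can fail for an arbitrary choice of $P_l$.

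Your formulation isolates this closedness issue explicitly, and your compression argument is exactly what makes the paper's limit step rigorous as well. Replacing $P_l$ by $AP_lA$ gives a bounded sequence with the same images $L(P_l)\to O$, so a subsequence converges to a valid $P\succeq 0$.
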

\begin{proof}
    We will show the proof in two steps. 
    ($\impliedby$) Consider that there exist positive-semidefinite matrix $P \in \operatorname{Herm}[\mathcal{M}^{\otimes n}_d]$ such that, 
    $$O \;=\; \Pi^{(n)}_d (P^{\Gamma_{[k]}}) \Pi^{(n)}_d.$$
    Then for all $\rho \in \mathsf{PPT}_d^{(n,k)}$, we have,  
    \begin{align*}
    \operatorname{Tr}(O\rho)
    = \operatorname{Tr}\!\left(\Pi^{(n)}\big(P^{\Gamma_{[k]}}\big)\Pi^{(n)}_d \rho \right) &= \operatorname{Tr}\!\left(\big(P^{\Gamma_{[k]}}\big)\Pi^{(n)}_d \rho \Pi^{(n)}_d\right) \\
    &= \operatorname{Tr}\!\left(P^{\Gamma_{[k]}} \rho\right) \\
    &= \operatorname{Tr}\!\left(P\, \rho^{\Gamma_{[k]}}\right) \ge 0 \, .
\end{align*} This shows that $O \in (\mathsf{PPT}_d^{(n,k)})^{\circ} = \mathsf{Dec}_d^{(n,k)}.$ 

$(\implies)$ To show the reverse direction, assume that $O \in \mathsf{Dec}_d^{(n,k)}.$ Then, note that, \begin{equation}
\label{eq:ppt-intersection}\mathsf{PPT}_d^{(n,k)} = \{\rho \in \operatorname{Herm}[\mathcal{M}^{\otimes n}_d] \mid \rho^{\Gamma_{[k]}} \geq 0\} \cap \{\rho \in \operatorname{Herm}[\mathcal{M}^{\otimes n}_d] \mid \Pi^{(n)}_d \rho \Pi^{(n)}_d = \rho\}.\end{equation} Consider the convex cone $(\mathsf{PPT}_d^{(n,k)})^{\circ}$ as dual in the larger ambient space $\operatorname{Herm}[\mathcal{M}^{\otimes n}_d].$ Therefore, by the properties of the dual, we get (for closed cones),

$$(\mathsf{PPT}_d^{(n,k)})^{\circ} = \operatorname{cl}\big(\{X \in \operatorname{Herm}[\mathcal{M}^{\otimes n}_d] \mid X^{\Gamma_{[k]}} \geq 0\} + \{X \in \operatorname{Herm}[\mathcal{M}^{\otimes n}_d] \mid \Pi^{(n)}_d X \Pi^{(n)}_d = X\}^\bot\big).$$
This equality is because in \cref{eq:ppt-intersection}, the first cone is self-dual, while the second cone is a subspace; moreover, the dual cone of an intersection is the closure of the sum.
Also note that if $O \in (\mathsf{PPT}_d^{(n,k)})^{\circ}$ where dual is considered in $\operatorname{Herm}[\vee^n \mathbb{C}^d]$, then, it is an element of the dual with respect to $\operatorname{Herm}[\mathcal{M}^{\otimes n}_d].$ Using this, we have the sequence of operators $\{P_l\}_{l \geq 1}, \{Q_l\}_{l\geq 1}$,  

$$O = \operatorname{lim}_{l \to \infty} \tilde P_l + Q_l$$ where $\forall l, \quad \tilde P_l^{\Gamma_{[k]}} \geq 0$ and $\Pi^{(n)}_d Q_l \Pi^{(n)}_d = 0.$ Denote $P_l := \tilde P^{\Gamma_{[k]}}_l$. Therefore,

$$O = \Pi^{(n)}_d O \Pi^{(n)}_d = \Pi^{(n)}_d \operatorname{lim}_{l \to \infty} (P^{\Gamma_{[k]}}_l + Q_l) \Pi^{(n)}_d =  \Pi^{(n)}_d \operatorname{lim}_{l \to \infty} (P^{\Gamma_{[k]}}_l) \Pi^{(n)}_d =  \Pi^{(n)}_d (P^{\Gamma_{[k]}}) \Pi^{(n)}_d$$ where $P := \operatorname{lim}_{l \to \infty} P_l.$ Also $P \geq 0$ as the cone of PSD matrices is closed.
\end{proof}

\begin{remark}
By the previous result in \cref{thm:symmetric-subspace-proj}, we can also show easily, 
\begin{align*}
\forall z \in \mathbb{C}^d \qquad 
\braket{z^{\otimes n} \vert O \vert z^{\otimes n}}
&= 
\braket{z^{\otimes n} \vert P^{{\Gamma}_{[k]}} \vert z^{\otimes n}} = 
\braket{\overbar{z}^{\otimes k} \otimes z^{\otimes n-k} \vert P \vert \overbar{z}^{\otimes k} \otimes z^{\otimes n-k}} \ge 0.
\end{align*}
This also shows easily that $\mathsf{Dec}^{(n,k)}_d \subseteq \mathsf{EW}^{(n)}_d.$
\end{remark}

\begin{corollary}
\label{cor:duality-ppt-dec}
We have the following conic-duality:
    \begin{align*}
        \Bigg(\bigcap^{\lfloor n/2 \rfloor}_{k=0} \mathsf{PPT}_d^{(n,k)} \Bigg)^\circ &= \operatorname{cl}\big(\sum^{\lfloor n/2 \rfloor}_{k=0} \mathsf{Dec}_d^{(n,k)}\big)=: \mathsf{Dec}^{(n)}_d,
    \end{align*}
    where the latter set can be described explicitly as
    $$\mathsf{Dec}^{(n)}_d = \Big\{O \in \operatorname{Herm}[\mathcal{M}^{\otimes n}_d] \mid \exists \{P_k \succeq 0\}^{\lfloor n/2 \rfloor}_{k=0}, \, O \;=\; \Pi^{(n)}_d \Big(\sum^{\lfloor n/2 \rfloor}_{k=0} P_k^{\Gamma_{[k]}} \Big) \Pi^{(n)}_d \Big\}.$$
\end{corollary}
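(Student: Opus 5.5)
The corollary follows from \cref{thm:dec-characterization} together with standard conic duality in the real space $\operatorname{Herm}[\vee^n \mathbb{C}^d]$. First, every $\mathsf{PPT}_d^{(n,k)}$ is a closed convex cone, since partial transposition is linear and the PSD cone is closed. For closed convex cones $C_0,\dots,C_m$ in a finite-dimensional space we have the identity $(\bigcap_k C_k)^\circ = \operatorname{cl}(\sum_k C_k^\circ)$. To see this, note that the inclusion $\supseteq$ is immediate, and that the dual of $\operatorname{cl}(\sum_k C_k^\circ)$ equals $\bigcap_k C_k^{\circ\circ}=\bigcap_k C_k$. The bipolar theorem then gives the equality. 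Applying this with $C_k=\mathsf{PPT}_d^{(n,k)}$ and $C_k^\circ=\mathsf{Dec}_d^{(n,k)}$ yields the first displayed equality, which is also the definition of $\mathsf{Dec}^{(n)}_d$.

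For the explicit description, I would use \cref{thm:dec-characterization} to write each summand as $\Pi^{(n)}_d P_k^{\Gamma_{[k]}}\Pi^{(n)}_d$ with $P_k\succeq 0$. It follows that $\sum_k \mathsf{Dec}_d^{(n,k)}$ is exactly the set $S$ of operators of the form $\Pi^{(n)}_d(\sum_k P_k^{\Gamma_{[k]}})\Pi^{(n)}_d$. It remains to show that $S$ is closed, so that the closure can be dropped.

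\textbf{Closedness of $S$ is the main obstacle.} I would prove it by a compactness argument on the $P_k$. Let $O_l=\Pi(\sum_k P_{k,l}^{\Gamma_{[k]}})\Pi\to O$. The goal is to bound the $P_{k,l}$ uniformly. Pick a state $\rho$ in the interior of $\mathsf{PPT}^{(n)}_d$ relative to $\operatorname{Herm}[\vee^n\mathbb{C}^d]$, for instance $\rho=\Pi^{(n)}_d$. All its partial transposes $\rho^{\Gamma_{[k]}}$ should then be positive definite on the relevant supports. Then
\[ \operatorname{Tr}(O_l\rho)=\sum_k \operatorname{Tr}(P_{k,l}\,\rho^{\Gamma_{[k]}}) . \]
This quantity converges, so it is bounded. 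Each term is nonnegative, so each term is bounded. If $\rho^{\Gamma_{[k]}}\succeq c\,\mathbb{I}$ with $c>0$, then $\operatorname{Tr}P_{k,l}$ is bounded, a subsequence converges to some $P_k\succeq 0$, and continuity gives $O\in S$.

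The delicate point is that $\Pi^{(n)\Gamma_{[k]}}_d$ need not be full rank. If it is not, I would instead take $\rho$ to be the separable average $\int \ketbra{v}{v}^{\otimes n}dv$, which is proportional to $\Pi^{(n)}_d$. The part of $P_{k,l}$ lying in the kernel of $\rho^{\Gamma_{[k]}}$ should then be argued away. Concretely, that part is supported where $\bra{\bar v^{\otimes k}v^{\otimes n-k}}P\ket{\cdot}$ vanishes, and so, after compression by $\Pi$, it contributes nothing to $O$. This lets me replace $P_{k,l}$ by its compression onto $\operatorname{supp}\rho^{\Gamma_{[k]}}$ without changing $O_l$. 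If this step resists, the fallback is to state the explicit description with the closure kept, which the cone identity already justifies.
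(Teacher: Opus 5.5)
Your proposal is correct. Its first half is exactly how the paper obtains the corollary: the bipolar identity $(\bigcap_k C_k)^\circ=\operatorname{cl}(\sum_k C_k^\circ)$ combined with \cref{thm:dec-characterization}. The paper states the corollary without any further proof.

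Where you differ is the explicit description. The paper simply drops the closure and gives no argument that $\sum_k \mathsf{Dec}^{(n,k)}_d$ is closed. Even the proof of \cref{thm:dec-characterization} sets $P:=\lim_l P_l$ without showing that the $P_l$ stay bounded. Your compactness argument actually proves closedness, so your route gives a genuine justification of the closure-free formula.

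The step you flag as delicate does go through, and it is not a corner case. The operator $\rho^{\Gamma_{[k]}}$ is supported only on $\vee^k\mathbb{C}^d\otimes\vee^{n-k}\mathbb{C}^d$. For $d\ge 2$ this is a proper subspace of $(\mathbb{C}^d)^{\otimes n}$ for every $k$ once $n\ge 3$, so the hoped-for bound $\rho^{\Gamma_{[k]}}\succeq c\,\mathbb{I}$ generally fails. To make the compression step rigorous:
\begin{itemize}
\item Set $w_v:=\bar v^{\otimes k}\otimes v^{\otimes(n-k)}$. Then $\rho^{\Gamma_{[k]}}=\int \ketbra{w_v}{w_v}\,dv$, so the projector $E_k$ onto its support is the projector onto $\operatorname{span}\{w_v\}$.
\item For any operator $P$ one has $\langle v^{\otimes n}|P^{\Gamma_{[k]}}|v^{\otimes n}\rangle=\langle w_v|P|w_v\rangle$, and $E_kw_v=w_v$. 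Hence \cref{thm:symmetric-subspace-proj} gives $\Pi^{(n)}_d(P-E_kPE_k)^{\Gamma_{[k]}}\Pi^{(n)}_d=0$.
\item Therefore you may replace $P_{k,l}$ by $E_kP_{k,l}E_k\succeq 0$ without changing $O_l$.
\item Let $c_k>0$ be the smallest nonzero eigenvalue of $\rho^{\Gamma_{[k]}}$, so $\rho^{\Gamma_{[k]}}\succeq c_kE_k$. Your nonnegative decomposition of $\operatorname{Tr}(O_l\rho)$ then yields $\operatorname{Tr}(E_kP_{k,l}E_k)\le c_k^{-1}\operatorname{Tr}(O_l\rho)$, which is bounded.
\end{itemize}
Your subsequence argument then concludes, so the fallback of keeping the closure is unnecessary. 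The same argument with a single $k$ also repairs the limit step in the paper's proof of \cref{thm:dec-characterization}.
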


We have the following inclusions and dualities of convex cones in the bosonic subspace.  
\bigskip

\[
\begin{array}{ccccccccc}
\tikzmarknode{sep}{\mathrm{Sep}_d^{(n)}} & \subseteq &
\tikzmarknode{pptk}{\mathrm{PPT}_d^{(n,k)}} & \subseteq &
\tikzmarknode{ppt0}{\mathrm{PPT}_d^{(n,0)}} & \subseteq &
\tikzmarknode{dec}{\mathrm{Dec}_d^{(n,k)}} & \subseteq &
\tikzmarknode{ew}{\mathrm{EW}_d^{(n,0)}}
\end{array}
\]

\begin{tikzpicture}[overlay, remember picture,
  every path/.style={<->, thick, {Stealth[length=2mm]}-{Stealth[length=2mm]}}]

  \draw[red, rounded corners=5pt]
    (sep.north)
    -- ++(0, 0.8cm)
    -- ([yshift=0.8cm]ew.north)
    -- (ew.north);

  \draw[blue, rounded corners=5pt]
    (pptk.north)
    -- ++(0, 0.4cm)
    -- ([yshift=0.4cm]dec.north)
    -- (dec.north);

\end{tikzpicture}

Having established the basics of bosonic entanglement, we move on to studying entanglement in the diagonally symmetric (DS) subspace. By using the tensor parametrization (ref. \cref{thm:tensor-param}) of this subspace, we establish an exact correspondence of the entanglement properties in the DS subspace with those of tensors.

\subsection{Separability and Entanglement Witnesses}
In this section, we establish an exact correspondence between the separability of the diagonally symmetric state $X \in \mathsf{DS}^{(n)}_d$ and the convex properties of the tensor $Q[X]$. We accomplish this by extending the known results about the bipartite separability problem in mixtures of Dicke states.
We now state and prove the main result of this section: the separability of a bosonic diagonally symmetric matrix $X$ is equivalent to the complete positivity of the associated tensor $Q[X]$.

\begin{theorem}
\label{thm:sep-and-cp}
Let $X \in \mathsf{DS}^{(n)}_d$ be a diagonally symmetric matrix. Then:
$$X \in \mathsf{Sep}^{(n)}_{d} \iff Q[X] \in \mathsf{CP}^{(n)}_{d}.$$
\end{theorem}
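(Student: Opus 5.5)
The plan is to prove both implications by passing through the diagonal-unitary twirl. Write $\mathcal{T}(Y) := \int_{\diagU_d} U^{\otimes n} Y (U^{\otimes n})^* \, dU$ for the Haar average over the torus. This map fixes every element of $\mathsf{DS}^{(n)}_d$ and preserves the bosonic subspace, because $U^{\otimes n}$ commutes with $\Pi^{(n)}_d$. It also maps $\mathsf{Sep}^{(n)}_d$ into itself, since $U^{\otimes n}\ketbra{v}{v}^{\otimes n}(U^{\otimes n})^* = \ketbra{Uv}{Uv}^{\otimes n}$. A second basic fact I will use is that conjugation by $U^{\otimes n}$ leaves the diagonal entries in the computational basis unchanged, so $Q[\mathcal{T}(Y)]_i = \braket{i|Y|i}$ for every bosonic $Y$.

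($\Rightarrow$) Suppose $X \in \mathsf{Sep}^{(n)}_d$. By \cref{eq:separability} (equivalently \cref{thm:bosonic-sep-or-ent}) I write $X = \sum_k \ketbra{v_k}{v_k}^{\otimes n}$. I then read off the diagonal directly:
$$Q[X]_i = \sum_k \prod_{m=1}^n |\braket{i_m|v_k}|^2 = \sum_k \big((\overbar{v_k}\odot v_k)^{\otimes n}\big)_i.$$
Each vector $w_k := (|(v_k)_1|^2,\dots,|(v_k)_d|^2)$ lies in $\mathbb{R}^d_+$, so $Q[X] = \sum_k w_k^{\otimes n} \in \mathsf{CP}^{(n)}_d$ by \cref{def:cp-tensors}. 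This direction uses nothing beyond the definition of $Q[X]$.

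($\Leftarrow$) Suppose $Q[X] = \sum_k w_k^{\otimes n}$ with $w_k \in \mathbb{R}^d_+$. I define $v_k := \sqrt{w_k}$ entrywise and set $Y := \sum_k \ketbra{v_k}{v_k}^{\otimes n}$. This $Y$ is bosonic and separable, and the computation above gives $\braket{i|Y|i} = Q[X]_i$. I then take $X' := \mathcal{T}(Y)$. By the facts listed above, $X'$ is separable, it is diagonal in the Dicke basis (so $X' \in \mathsf{DS}^{(n)}_d$ by \cref{prop:symmetry-mods}), and $Q[X'] = Q[X]$.

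The remaining step is to show that $Q$ is injective on $\mathsf{DS}^{(n)}_d$. This follows from \cref{thm:tensor-param}(2): the tensor $Q[X]$ determines every $\lambda(X)_\alpha$, and the $\lambda(X)_\alpha$ determine $X$. Hence $X = X' \in \mathsf{Sep}^{(n)}_d$. One could avoid the integral by writing $\mathcal{T}$ as an explicit convex combination of the states $\ketbra{U_\theta v_k}{U_\theta v_k}^{\otimes n}$. This matters because $\mathsf{Sep}^{(n)}_d$ is a cone generated by finitely many terms, and the argument then needs only its closedness. Closedness holds in finite dimensions by Carathéodory's theorem together with compactness of the unit sphere.

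The main obstacle I expect is this last point: justifying that the twirl of a separable bosonic state remains in $\mathsf{Sep}^{(n)}_d$ as a finite conic combination. I plan to handle it either by replacing the Haar integral with a finite averaging over roots of unity $\theta \in \{2\pi j/(n+1)\}^d$, or by invoking closedness of the cone. The finite averaging suffices because the relevant characters $e^{i(\alpha-\beta)\cdot\theta}$ have frequencies $|\alpha_j-\beta_j| \le n$, so averaging over this grid already kills them.
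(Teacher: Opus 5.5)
Your proposal is correct and follows the paper's proof essentially verbatim. The forward direction reads off $Q[X]=\sum_k (\overbar{v_k}\odot v_k)^{\otimes n}$, and the reverse direction takes entrywise square roots, twirls the resulting symmetric product state over $\diagU_d$, and concludes by injectivity of $Q$ on $\mathsf{DS}^{(n)}_d$ via \cref{thm:tensor-param}. Your finite average over the $(n+1)$-th roots of unity is a valid and slightly more careful justification of the step the paper handles by simply noting that twirling is an LOCC operation.
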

\begin{proof}
($\implies$) To show the first implication, we assume the state $X$ is separable, i.e, there exist vectors $\{\ket{x_q} \in \mathbb{C}\}^K_{q=1}$, such that,  
$$X = \sum^K_{q=1} {\ketbra{x_q}{x_q}^{\otimes n}}.$$ By \cref{thm:tensor-param}, we have $\forall i \in [d]^n, \,  Q[X]_i := \braket{i \vert X \vert i}$. Hence, we have, by the separable decomposition of $X$, 
\begin{equation}
\forall i \in [d]^n \quad Q[X]_{i} = \braket{i \vert X\vert i} = \sum^K_{q=1} |(x_q)_{i_1}|^2 |(x_q)_{i_2}|^2 \ldots |(x_q)_{i_n}|^2.
\end{equation}
By defining $v_q := x_q \odot \overbar{x_q}$, it is clear that $\forall {q}, \, v_q \geq 0$.
With this notation, the previous equation reads 
$$Q[X] = \sum^K_{q=1} v_q^{\otimes n}.$$ This implies that $Q[X] \in \mathsf{CP}^{(n)}_{d}$.

\smallskip

\noindent($\impliedby$) To show the reverse implication, we assume that $Q[X] \in \mathsf{CP}^{(n)}_{d}.$ This implies that there exists a finite number of entrywise positive vectors $\{v_q \in \mathbb{R}_+^d\}^K_{q=1}$ such that $$Q[X] = \sum^K_{q=1} v_q^{\otimes n}.$$ Consider the vectors, $\{x_q\}^K_{q=1}$ defined as $\forall q, \, \, (x_q)_i := \sqrt{(v_q)_i}$. By this definition, it immediately follows that $\forall q \in [K], \,  x_q \odot x_q = v_q$. Consider the following separable quantum state $\sigma \in \mathsf{Sep}^{(n)}_d, $ defined as $$\sigma := \sum^K_{q=1} \ketbra{x_q}{x_q}^{\otimes n}.$$ Note that, $\Pi \sigma \Pi = \sigma.$ i.e $\sigma$ is a bosonic state. We have the twirled quantum state, 
\begin{equation}
\mathcal{T}_{\diagU}(\sigma) =  \int_{\diagU_d} U^{\otimes n} \sum^K_{q=1}\ket{x_q^{\otimes n}} \bra{x_q^{\otimes n}} (U^{\otimes n})^* \, \mathrm{d}U = \sum^K_{q=1} \int_{\diagU_d} U^{\otimes n} \ket{x_q^{\otimes n}} \bra{x_q^{\otimes n}} (U^{\otimes n})^* \, \mathrm{d}U
\end{equation}
Firstly note that, $\Pi^{(n)}_d \mathcal{T}_{\diagU}(\sigma) \Pi^{(n)}_d = \mathcal{T}_{\diagU}(\sigma)$ and due to twirling $\forall U \in \diagU_d, \, [\mathcal{T}_{\diagU}(\sigma), U^{\otimes n}] = 0$, therefore by \cref{prop:symmetry-mods}, the state $\mathcal{T}_{\diagU}(\sigma) \in \mathsf{DS}^{(n)}_{d}.$ Secondly, since twirling is an LOCC operation, $\mathcal{T}_{\diagU}(\sigma) \in \mathsf{Sep}^{(n)}_{d}$. Finally note that
\begin{align*}
    Q[\mathcal{T}_{\diagU}(\sigma)]_{i} &= \braket{i \vert \mathcal{T}_{\diagU}(\sigma) \vert i} = \bra{i} \int_{\diagU_d} U^{\otimes n} \sum^K_{q=1}\ket{x_q^{\otimes n}} \bra{x_q^{\otimes n}} (U^{\otimes n})^*\, \mathrm{d}U \ket{i} \\
    &= \sum^K_{q=1} \int_{\diagU_d} \braket{i | x_q^{\otimes n}} \braket{x_q^{\otimes n}| i}\, \mathrm{d}U = Q[\sigma]_{i},
\end{align*}
showing that $Q[\mathcal{T}_{\diagU}(\sigma)] = Q[X]$; above we have used that $U^{\otimes n}\ket i$ is equal to $\ket i $ up to a phase for all diagonal unitary matrix $U \in \diagU_d$. Hence, by \cref{thm:tensor-param}, we conclude $\mathcal{T}_{\diagU}(\sigma) = X$, finishing the proof.
\end{proof}
The proof strategy we employed above is very similar to that presented in \cite{yu2016separability, quesada2017entanglement} for the bipartite case. \cref{thm:sep-and-cp} characterizes the separability properties of mixtures of Dicke states. Recall that checking for complete positivity is known to be NP-hard. 

We now use the duality relation from \cref{eq:HS-Q-W} to characterize entanglement witnesses in the diagonally symmetric subspace in terms of the copositive tensors.

\begin{proposition}
\label{prop:ew-cop}
The following equivalence holds for a matrix $O \in \operatorname{Herm}[\mathsf{DS}^{(n)}_d].$
$$O \in \mathsf{EW}^{(n)}_d \iff W[O] \in \mathsf{Cop}^{(n)}_d.$$
\end{proposition}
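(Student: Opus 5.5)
The plan is to reduce the witness condition of \cref{prop:ew-dual-sep} to a polynomial inequality via \cref{prop:X-expansion-W}. Fix $O \in \operatorname{Herm}[\mathsf{DS}^{(n)}_d]$. By \cref{prop:X-expansion-W} we have $O = \Pi^{(n)}_d \operatorname{diag}[W[O]] \Pi^{(n)}_d$. By \cref{prop:ew-dual-sep}, $O \in \mathsf{EW}^{(n)}_d$ if and only if $\braket{v^{\otimes n}|O|v^{\otimes n}} \geq 0$ for all $\ket{v} \in \mathbb{C}^d$. Since $\Pi^{(n)}_d \ket{v^{\otimes n}} = \ket{v^{\otimes n}}$, the projectors can be dropped, and the expectation value can be computed directly:
\begin{equation*}
\braket{v^{\otimes n}|O|v^{\otimes n}} = \sum_{i \in [d]^n} W[O]_i \, |v_{i_1}|^2 \cdots |v_{i_n}|^2 = \braket{W[O], (v \odot \overbar{v})^{\otimes n}} = p_{W[O]}(v \odot \overbar{v}).
\end{equation*}

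The second step is to identify the range of $v \odot \overbar{v}$. As $v$ ranges over $\mathbb{C}^d$, the vector $v \odot \overbar{v}$ ranges over exactly $\mathbb{R}^d_+$. Indeed, every $x \in \mathbb{R}^d_+$ arises by taking $v_i = \sqrt{x_i}$, which is the same square-root trick used in the proof of \cref{thm:sep-and-cp}. Hence $O \in \mathsf{EW}^{(n)}_d$ if and only if $p_{W[O]}(x) \geq 0$ for all $x \in \mathbb{R}^d_+$. By \cref{def:copositive-tensors}, this is equivalent to $W[O] \in \mathsf{Cop}^{(n)}_d$. Note that $W[O]$ is a real symmetric tensor by \cref{thm:tensor-param}(1), so the copositivity statement makes sense.

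The third step, which is optional, is a duality-based cross-check. One could instead pair $O$ with separable DS states using \cref{eq:HS-Q-W} together with \cref{thm:sep-and-cp}. However, witnesses must be tested against all of $\mathsf{Sep}^{(n)}_d$, not only DS states. Handling this would require twirling: $\operatorname{Tr}(O\sigma) = \operatorname{Tr}(O\,\mathcal{T}_{\diagU}(\sigma))$ because $O$ is $\diagU$-invariant. The direct computation above avoids this detour, so I would use it.

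The only potential obstacle is bookkeeping in the first step. The multinomial weights must be absorbed correctly, which is exactly why the statement uses $W[O]$ rather than $Q[O]$. I will verify this by writing $O$ through \cref{prop:X-expansion-W} rather than through the Dicke expansion, so that no stray binomial factors appear.
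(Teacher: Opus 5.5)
Your proposal is correct and follows essentially the same route as the paper. You write $O = \Pi^{(n)}_d \operatorname{diag}[W[O]] \Pi^{(n)}_d$ via \cref{prop:X-expansion-W}, compute $\langle v^{\otimes n}|O|v^{\otimes n}\rangle = \langle W[O], (\overbar{v} \odot v)^{\otimes n}\rangle$, and then combine \cref{prop:ew-dual-sep} with \cref{def:copositive-tensors}. You also state explicitly that $v \mapsto v \odot \overbar{v}$ maps $\mathbb{C}^d$ onto $\mathbb{R}^d_+$, which the converse direction needs and the paper leaves implicit; the twirling detour in your third step is indeed unnecessary.
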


\begin{proof}
This follows from the duality of separable states and entanglement witnesses, read at the level of $Q$ and $W$ parametrizations.
More explicitly, by the result in \cref{prop:X-expansion-W}, we have
    $$\forall \ket{v} \in \mathbb{C}^d \quad \braket{v^{\otimes n} |O| v^{\otimes n}} =  \braket{v^{\otimes n} \vert \Pi^{(n)}_{d} \operatorname{diag}[W[O]] \Pi^{(n)}_{d} \vert v^{\otimes n}} =  \braket{W[O], (\bar v \odot v)^{\otimes n}} \geq 0.$$
\end{proof}

\begin{remark}
The connection between entanglement witnesses and copositive \emph{matrices} has already been noted in \cite{marconi2021entangled} for the bipartite case. More precisely, it is shown in \cite{marconi2021entangled} that, at $n=2$, for the witness parametrized by a matrix $C \in \operatorname{Herm}[\mathcal{M}_d]$ in the way $X[C] = \sum_{i,j} C_{ij} \ketbra{ij}{ji}$, $$X[C] \in \mathsf{EW}^{(2)}_{d} \iff C \in \mathsf{Cop}_d^{(2)}.$$ 

This is slightly different from our parametrization, as $X[C]$ is not an element of the DS subspace, in particular, it is not bosonic as $X[C] P_{(12)}  = \sum_{i,j} C_{ij} \ketbra{ij}{ji} P_{(12)} = \sum_{ij} C_{ij} \ketbra{ij}{ij} \neq X[C].$ Therefore, instead of $X[C]$, in \cref{prop:ew-cop}, we consider $\sum_{ij} C_{ij} \ketbra{ij}{ij}$ and project it into the symmetric subspace.

\end{remark}

Before discussing an example of a bosonic entanglement witness, let us recall the Arithmetic Mean-Geometric Mean (AM-GM) weighted inequality. Given, some $d$ non-negative real numbers $x_1, x_2 \ldots x_d \in \mathbb{R}_+$, and weights $\lambda_1, \lambda_2 \ldots \lambda_d \in [0,1]$ such that  $\sum^d_{i=1}\lambda_i =1$, we have the following inequality

$$\sum^d_{i=1} \lambda_i x_i \geq \prod^d_{i=1} x_i^{\lambda_i}.$$ For all $\lambda_i = 1/d$, this reduces to the usual AM-GM inequality. 
$$\frac{\sum^d_{i=1} x_i}{d} \geq \prod^d_{i=1} x_i^{1/d}.$$

\begin{example}[Projective entanglement witnesses] We show some examples of bosonic diagonally symmetric entanglement witnesses. We define 
$$R_\alpha(\mu) := \Pi^{(n)}_d - \mu \ketbra{D_{\alpha}}.$$ where $\Pi^{(n)}_d$ is a projector into the symmetric subspace $\vee^n \mathbb{C}^d$. Then, by computing the tensor parametrization of the witness, we obtain, $$\forall i \in [d]^n, \, W[R_{\alpha}]_i =  \binom{n}{\gamma(i)} \braket{i \vert \Pi^{(n)}_d - \mu \ketbra{D_{\alpha}} \vert i}  = 1 - \mu \frac{ \binom{n}{\gamma(i)}}{\binom{n}{\alpha}}\mathbb{1}_{\gamma(i) = \alpha} = 1 - \mu\mathbb{1}_{\gamma(i) = \alpha}$$
Consider the polynomial, $$\braket{W[R_{\alpha}], x^{\otimes n}} = \Big(\sum_{i=1}^d x_i\Big)^{n} - \mu\binom{n}{\alpha} x^{\alpha}.$$ By applying the AM-GM weighted inequality, we obtain for all $x \in \mathbb{R}_+^d$, $$\frac{1}{n} \Big(\sum_{i=1}^d x_i\Big) =  \frac{1}{n}(\sum^d_{i=1}\alpha_i \frac{x_i}{\alpha_i}) \geq \prod^d_{i=1} (\frac{x_i}{\alpha_i})^{\frac{\alpha_i}{n}} \iff \Big(\sum_{i=1}^d x_i\Big)^n \geq \frac{n^n}{\prod^d_{i=1} \alpha_i^{\alpha_i}} x^{\alpha}$$ Using \cref{prop:ew-cop}, we have that for all 
$$-\infty \leq  \mu \leq \frac{1}{\binom{n}{\alpha}\prod^d_{i=1} (\frac{\alpha_i}{n})^{\alpha_i}},$$
$R_\alpha(\mu)$ is a bosonic entanglement witness. In the case of $n$ qubits, the Dicke state with occupation numbers $(n/2, n/2)$ yields the entanglement witness 
$$\Pi^{(n)}_2 - \mu_* \ketbra{D_{(n/2,n/2)}}$$
with $\mu_* = \frac{2^n}{\binom{n}{{n/2}}} \sim \sqrt{\frac{\pi n}{2}} $ as $n \to \infty$. Similar computations with entanglement witnesses constructed from Dicke states were considered in \cite{toth2007detection,bergmann2013entanglement}.
\end{example}

\subsection{Decomposability and PPT}
In the preceding section, we demonstrated that the separability of the states in the diagonally symmetric subspace is equivalent to the complete positivity of the tensor $Q[X].$ The convex cones $\mathsf{PPT}^{(n,k)}_d$ are outer tractable approximation of the separable states. The pertinent question is how to describe the PPT property in terms of the tensor $Q[X].$ The analytical computation of PPT conditions in the diagonally symmetric subspace was obtained in \cite[Lemma III.4 and Lemma III.5]{romero2025multipartite} by block diagonalization of the matrix $X^\Gamma$. These conditions appear highly non-trivial to describe. This motivates us to focus on the dual objects, the sets of decomposable witnesses, $\mathsf{Dec}^{(n,k)}_d.$ Surprisingly, these sets can be shown to have a correspondence with the sum of squares property of tensors.

We now state and prove the main result of this section. Briefly, this result provides a correspondence between decomposable witnesses $O$ and the properties of the polynomial $p_{W[O]}$ in the diagonally symmetric subspace. For  the rather long and technical proof of this theorem, see \cref{sec:appendix-proof-Dec}.

\begin{theorem}\label{thm:Dec-n-k-SOS}
The following results hold for $O \in \operatorname{Herm}[\mathsf{DS}^{(n)}_d].$ 

\begin{enumerate}
    \item $O \in  \mathsf{Dec}^{(n,l)}_d \implies p_{W[O]}(x) \in \operatorname{cone}\{x^\alpha \psi \mid |\alpha| \in \{n, n-2, n-4 \ldots n-2l\}, \psi \in \Sigma^{(n-|\alpha|)}_d\}.$
    \item Let $k \leq \lfloor n/2 \rfloor$ such that $\alpha \in \mathbb{N}^{(n-2k)}_d$ and $\psi \in \homo{d}{k}.$ Then, $$p_{W[O]}(x) = x^{\alpha} \psi^2 \implies O \in  \mathsf{Dec}^{(n,l)}_d$$ for all $l \in [k,n-k].$
\end{enumerate}    
    In particular, we have the equivalence, $$O \in  \mathsf{Dec}^{(n,l)}_d \iff p_{W[O]}(x) \in \operatorname{cone}\{x^\alpha \psi \mid |\alpha| \in \{n, n-2, n-4 \ldots n-2l\}, \psi \in \Sigma^{(n-|\alpha|)}_d\}.$$
    and hence the following inclusions between the decomposable sets. 
    $$\forall l \in [\lfloor n/2 \rfloor-1], \quad \mathsf{Dec}^{(n,l)}_d \cap \operatorname{Herm}[\mathsf{DS}^{(n)}_d]  \subseteq \mathsf{Dec}^{(n,l+1)}_d \cap \operatorname{Herm}[\mathsf{DS}^{(n)}_d].$$
\end{theorem}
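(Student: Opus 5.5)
The approach is to reduce everything to computing $p_{W[O]}$ for witnesses of the form $\Pi^{(n)}_d P^{\Gamma_{[l]}}\Pi^{(n)}_d$, using \cref{thm:dec-characterization}. Since $O$ is diagonally symmetric, we may replace $P$ by its diagonal-unitary twirl. Concretely, $O=\mathcal{T}_{\diagU}(O)$, and twirling commutes with $\Pi^{(n)}_d$. Because $\overbar{U}$ is again a diagonal unitary, it also commutes with partial transposition up to relabelling $U\mapsto \overbar U$ on the transposed legs. So we can assume $P\succeq 0$ is invariant under $U^{\otimes n}(\cdot)U^{*\otimes n}$. Then $P^{\Gamma_{[l]}}$ is invariant under $(\overbar U^{\otimes l}\otimes U^{\otimes n-l})$-conjugation.

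The key identity, from \cref{prop:X-expansion-W} and the computation in the proof of \cref{prop:ew-cop}, is
\[
p_{W[O]}(v\odot \bar v)=\braket{v^{\otimes n}|O|v^{\otimes n}}=\braket{\bar v^{\otimes l}\otimes v^{\otimes n-l}|P|\bar v^{\otimes l}\otimes v^{\otimes n-l}}.
\]
Write $v_j=\sqrt{x_j}e^{i\theta_j}$ and average over $\theta$. The left side is independent of $\theta$ and equals $p_{W[O]}(x)$.

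\textbf{Part (1).} We decompose $P=\sum_k \ketbra{\phi_k}{\phi_k}$ and expand $\braket{\phi_k|\bar v^{\otimes l}\otimes v^{\otimes n-l}}$ in monomials $\prod \bar v_{a}\prod v_b$. The twirl kills all cross terms except those with equal phase charge. In each surviving term, the $\bar v$ on the first $l$ legs pair with $v$ on the last $n-l$ legs, and each pairing $\bar v_j v_j$ gives $x_j$. A single ket component indexed by $(a,b)\in[d]^l\times[d]^{n-l}$ has charge $\gamma(b)-\gamma(a)$. Fix a charge $c\in\mathbb{Z}^d$ with $|c|=n-2l$. The sum over components with that charge can be written as $x^{c^+}$ times something, hmm, times a polynomial in $\bar v, v$, namely (products of $x$'s)$\times f$, where the square modulus of the remaining common factor is a sum of squares in $\sqrt{x}$. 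The cleanest route is this: for fixed charge $c$, factor out $v^{c^+}\bar v^{c^-}$. What remains is a polynomial $g(x)$ in the variables $x_j=|v_j|^2$ of degree $m=(n-|c^+|-|c^-|)/2$. Hence $|\cdot|^2 = x^{c^+ + c^-}\, g(x)^2$.

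Here $|c^++c^-|=n-2m$ ranges over $n,n-2,\dots$. The charge constraint $|c^+|-|c^-| = n-2l$ forces $m\le l$, so $|\alpha|\ge n-2l$. Each term is therefore $x^\alpha\psi$ with $\psi=g^2\in\Sigma$. Summing over $k$ and $c$ gives the cone membership.

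\textbf{Part (2).} This is the constructive direction. Given $p_{W[O]}=x^\alpha\psi^2$ with $|\alpha|=n-2k$ and $\deg\psi=k$, we split $\alpha=\beta+\beta'$, with $\beta$ placed on the untransposed side and $\beta'$ on the transposed side, so that the charge balance $|\beta|-|\beta'|$ is compatible with $l$. We write $\psi(x)=\sum_\mu c_\mu x^\mu$ and set
\[
\ket{\phi}=\sum_\mu c_\mu\, \Pi\big(\ket{\idx(\mu)}\otimes\ket{\idx(\beta')}\big)_{[l]}\otimes \Pi\big(\ket{\idx(\mu)}\otimes\ket{\idx(\beta)}\big)_{[n]\setminus[l]},
\]
with appropriate normalizations, and take $P=\ketbra{\phi}{\phi}$ (followed by a twirl). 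This requires $k+|\beta'|=l$ and $k+|\beta|=n-l$. These are solvable with $\beta,\beta'\ge0$ exactly when $k\le l\le n-k$, which explains the stated range.

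We then verify via the key identity that $\braket{\bar v^{\otimes l} v^{\otimes n-l}|P|\cdots}$ equals $x^{\alpha}\psi(x)^2$ up to combinatorial factors. Those factors must be absorbed by rescaling the $c_\mu$. By \cref{prop:symmetric-subspace-proj}, matching on all product vectors $v^{\otimes n}$ gives equality $O=\Pi P^{\Gamma_{[l]}}\Pi$, since both sides are bosonic and DS.

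The equivalence follows from (1) and (2): each generator $x^\alpha\psi$ with $\psi\in\Sigma$ splits into squares, and $|\alpha|\ge n-2l$ means $k\le l$. The bound $l\le n-k$ holds since $l\le\lfloor n/2\rfloor\le n-k$. The inclusion $\mathsf{Dec}^{(n,l)}\subseteq\mathsf{Dec}^{(n,l+1)}$ on DS follows because the generator sets are nested in $l$.

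\textbf{Main obstacle.} I expect the main obstacle to be the bookkeeping in part (2). The multinomial weights coming from $\Pi$, the $\binom{n}{\gamma}$ in $W$, and the overcounting in the expansion of $\psi^2$ must all combine exactly into $x^\alpha\psi^2$. I would first handle this by working with $p_{W[O]}$ evaluated on $v\odot\bar v$ rather than with entries. That way all factors are fixed by a single polynomial identity, and coefficients can be rescaled freely monomial by monomial.
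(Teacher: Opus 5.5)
Your proposal is correct and is essentially the paper's proof. In Part (1), your charge sectors $c=c^+-c^-$ and factored polynomials $g_c$ are exactly the paper's disjointly supported pairs $(\delta'',\delta')$ and polynomials $q_{\delta',\delta''}$; your phase average over $\theta$ replaces the paper's $S\leftrightarrow T$ bijection and justifies, term by term, the paper's discarding of unbalanced monomials. In Part (2), your $\ket{\phi}$, built by splitting $\alpha$ across the cut, is the paper's construction, and no combinatorial rescaling is actually needed, since overlaps of $\overbar{v}^{\otimes l}\otimes v^{\otimes n-l}$ with (symmetrized) computational basis vectors are already bare monomials.

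Two harmless slips. First, after twirling it is $P$ that is invariant under $\overbar{U}^{\otimes l}\otimes U^{\otimes n-l}$ and $P^{\Gamma_{[l]}}$ that is invariant under $U^{\otimes n}$, the reverse of what you wrote, though you never use it. Second, since $g_c$ may have complex coefficients, you should write $|g_c(x)|^2=(\operatorname{Re}g_c)^2+(\operatorname{Im}g_c)^2$ rather than $g_c(x)^2$; this is still a sum of squares.
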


To make the notation less clunkier, we define the increasing series of sets, $$\mathsf{SOS}^{(n,l)}_d:= \{T \mid p_{T}(x) \in \operatorname{cone}\{x^\alpha \psi \mid |\alpha| \in \{n, n-2, n-4 \ldots n-2l\}, \psi \in \Sigma^{(n-|\alpha|)}_d\}\}.$$ Recall that an $n$-th order symmetric tensor is called a sum-of-squares (represented as $\mathsf{SOS}^{(n)}_d = \mathsf{SOS}^{(n,\lfloor n/2 \rfloor)}_d$ by \cref{prop:SOS}) tensor if its associated polynomial is a sum of squares polynomial in "$x^2$", 
$$p_{T}(x \odot x) \in \Sigma^{(2n)}_d.$$

\begin{corollary} 
\label{thm:dec-iff-sos}
For any $O \in \operatorname{Herm}[\mathsf{DS}^{(n)}_d]$, we have
    $$W[O] \in \mathsf{SOS}^{(n)}_d \iff O \in \mathsf{Dec}^{(n, \lfloor n/2 \rfloor)}_d = \sum^{\lfloor n/2 \rfloor}_{l=0}  \mathsf{Dec}^{(n,l)}_d.$$
\end{corollary}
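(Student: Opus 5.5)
This corollary should follow by combining \cref{thm:Dec-n-k-SOS} at the top level $l=\lfloor n/2\rfloor$ with the structure theorem \cref{prop:SOS} of \cite{ZVP06}. We would establish three things in order: the identification $\mathsf{SOS}^{(n)}_d=\mathsf{SOS}^{(n,\lfloor n/2\rfloor)}_d$; the equivalence $W[O]\in\mathsf{SOS}^{(n,\lfloor n/2\rfloor)}_d \iff O\in\mathsf{Dec}^{(n,\lfloor n/2\rfloor)}_d$; and the equality $\mathsf{Dec}^{(n,\lfloor n/2\rfloor)}_d=\sum_{l}\mathsf{Dec}^{(n,l)}_d$ on the DS subspace.

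For the first step, take $T$ with $p_T(x\odot x)\in\Sigma^{(2n)}_d$. By \cref{prop:SOS}, $p_T(x)=\sum_j\sum_{|\alpha|=n-2j}x^\alpha\psi_{j,\alpha}(x)$ with $\psi_{j,\alpha}\in\Sigma^{(2j)}_d$. This is exactly membership of $p_T$ in the cone defining $\mathsf{SOS}^{(n,\lfloor n/2\rfloor)}_d$, since $|\alpha|$ ranges over $n,n-2,\dots,n-2\lfloor n/2\rfloor$. For the converse, suppose $p_T$ is a conic combination of terms $x^\alpha\psi$. Substituting $x\mapsto x\odot x$ turns each term into $(x^{\alpha})^2\,\psi(x\odot x)$. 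If $\psi=\sum_k f_k^2$, then $\psi(x\odot x)=\sum_k f_k(x\odot x)^2$ is SOS, so each term is a square times an SOS and hence SOS. Thus $p_T(x\odot x)\in\Sigma^{(2n)}_d$.

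For the second step, we apply \cref{thm:Dec-n-k-SOS} with $l=\lfloor n/2\rfloor$. Part 1 gives the forward implication directly. For the reverse, write $p_{W[O]}=\sum_m c_m x^{\alpha_m}\psi_m$ with $c_m\ge0$ and $|\alpha_m|=n-2k_m$. Expand each $\psi_m$ as a sum of squares $\sum_r g_{m,r}^2$ with $g_{m,r}\in\homo{d}{k_m}$. Each summand $x^{\alpha_m}g_{m,r}^2$ is $p_{W[O_{m,r}]}$ for a unique Hermitian DS matrix $O_{m,r}$, via the linear bijections $T\leftrightarrow p_T$ and $O\leftrightarrow W[O]$. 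By part 2, $O_{m,r}\in\mathsf{Dec}^{(n,l)}_d$ for every $l\in[k_m,n-k_m]$. Since $k_m\le\lfloor n/2\rfloor\le n-k_m$, we may take $l=\lfloor n/2\rfloor$ for all $m$. Because $\mathsf{Dec}^{(n,\lfloor n/2\rfloor)}_d$ is a convex cone, $O=\sum c_m O_{m,r}$ lies in it.

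For the final equality, the inclusion $\subseteq$ is trivial. For $\supseteq$, we use the chain of inclusions in \cref{thm:Dec-n-k-SOS}, which gives $\mathsf{Dec}^{(n,l)}_d\cap\operatorname{Herm}[\mathsf{DS}^{(n)}_d]\subseteq\mathsf{Dec}^{(n,\lfloor n/2\rfloor)}_d$ for every $l$.

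The main obstacle is that a sum of arbitrary decomposable witnesses need not be DS termwise. To handle this, we would twirl each summand: apply $\mathcal{T}_{\diagU}$, which fixes $O$. Each $\mathsf{Dec}^{(n,l)}_d$ is invariant under $U^{\otimes n}(\cdot)U^{*\otimes n}$, because $\mathsf{PPT}^{(n,l)}_d$ is invariant (diagonal $U$ satisfies $\overline{U}^{\Gamma}$-compatibility, as $U^T=U$). By convexity and closedness, $\mathcal{T}_{\diagU}$ therefore maps $\mathsf{Dec}^{(n,l)}_d$ into $\mathsf{Dec}^{(n,l)}_d\cap\operatorname{Herm}[\mathsf{DS}^{(n)}_d]$. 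The twirled summands are DS, so the chain applies to each, and the sum lies in $\mathsf{Dec}^{(n,\lfloor n/2\rfloor)}_d$.
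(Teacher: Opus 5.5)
Your proposal is correct and follows the paper's route: the paper gives no separate proof, and reads the corollary off from \cref{thm:Dec-n-k-SOS} at level $l=\lfloor n/2\rfloor$ together with the identification $\mathsf{SOS}^{(n)}_d=\mathsf{SOS}^{(n,\lfloor n/2\rfloor)}_d$ coming from \cref{prop:SOS}. Your diagonal-unitary twirling step is a useful addition, because it makes explicit what the paper leaves implicit: the identity $\mathsf{Dec}^{(n,\lfloor n/2\rfloor)}_d=\sum_l\mathsf{Dec}^{(n,l)}_d$ is only meant, and only justified, after intersecting with $\operatorname{Herm}[\mathsf{DS}^{(n)}_d]$; the only slip is notational, since $O=\sum c_m O_{m,r}$ should be the double sum $\sum_{m,r}c_m O_{m,r}$.
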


Let us summarise what we have shown in the previous result. For a diagonally symmetric (DS) operator $O$, the $\lfloor n/2 \rfloor$- decomposability property is equivalent to the symmetric tensor $W[O]$ being SOS. Surprisingly, we also find that $\lfloor n/2 \rfloor$- decomposability implies decomposability of the DS operator for all $l \leq \lfloor n/2 \rfloor.$ Recall that for $X,Y \in \mathsf{DS}^{(n)}_d$, we have, $$\braket{X,Y} = \braket{Q[X],W[Y]}.$$ We use this fact to characterize the property of PPT in the diagonally symmetric subspace, culminating with \cref{thm:ppt-mom}, in which we show that the PPT condition with respect to the most balanced bipartion implies the PPT conditions with respect to all other bipartitions. To accomplish this, we characterise the dual of SOS tensors. 

\bigskip
\noindent \textbf{Moment Tensors}.  
Let $f : \homo{d}{n} \rightarrow \mathbb{R}$. For all $\alpha \in \mathbb{N}^d$ such that $|\alpha| = n-2j$, we define the following matrices, with the columns and rows labelled by $\beta, \beta'$ such that $|\beta| = |\beta'| = j$. 

$$(\mathcal{M}^{(\alpha,j)}[f])_{\beta, \beta'} := f(x^{\alpha + \beta + \beta'}).$$ This defines a finite family of \emph{moment matrices} labeled by the monomial vector $\alpha$ and $j$. 

Define the linear functional over the homogeneous polynomials.
\[
\mathcal{L}_T : \homo{d}{n} \to \mathbb{R},\qquad \mathcal{L}_T(x^\mu):= T_{\idx(\mu)}
\quad (|\mu|=n),
\]
and extend it by linearity. Then \(\mathcal{L}_T(p_Q)=\langle T,Q\rangle\) for every symmetric tensor \(Q\), and the moment-matrix entries are
\[
\big(\mathcal{M}^{(\alpha,j)}[\mathcal{L}_T]\big)_{\beta,\beta'}
:= \mathcal{L}_T\big(x^{\alpha+\beta+\beta'}\big),
\qquad |\alpha|=n-2j,\; |\beta|=|\beta'|=j.
\]

We can show the following result. 
\begin{theorem}
\label{thm:dual-SOS}
The following statements are equivalent and characterise the SOS tensors.
\begin{enumerate}
    \item[(1)] $T \in (\mathsf{SOS}^{(n,k)}_d)^\circ$
    \item[(2)] $\forall j = 0, \ldots, k , |\alpha| = n-2j,  \quad \mathcal{M}^{(\alpha,j)}[\mathcal{L}_T] \succeq 0.$
\end{enumerate}
\end{theorem}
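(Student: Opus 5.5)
The plan is to reduce the tensor duality to the duality between SOS polynomials and moment matrices, as in \cref{thm:dual-sos-polynomials}. The cone $\mathsf{SOS}^{(n,k)}_d$ is, by definition, the conic hull of the tensors $T$ with $p_T = x^\alpha \psi$, where $|\alpha| = n-2j$, $j \le k$ and $\psi \in \Sigma^{(2j)}_d$. A tensor $T$ therefore lies in the dual cone if and only if $\langle T, Q\rangle \ge 0$ for every generator $Q$ of this form. Since $\mathcal{L}_T(p_Q) = \langle T, Q\rangle$ and $Q \mapsto p_Q$ is a linear bijection, the condition is equivalent to
\[
\mathcal{L}_T(x^\alpha \psi) \ge 0 \quad \text{for all } j \le k,\ |\alpha| = n-2j,\ \psi \in \Sigma^{(2j)}_d .
\]
Any closure issue is harmless here: dual cones are closed, and the dual of a set equals the dual of its closure.

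Next, for fixed $j$ and $\alpha$, I would use the Gram-matrix form of the remark after \cref{thm:gram-matrix}. It says that $\psi \in \Sigma^{(2j)}_d$ if and only if
\[
\psi = ([x]^j_d)^\top P\, [x]^j_d = \sum_{|\beta|=|\beta'|=j} P_{\beta\beta'}\, x^{\beta+\beta'}
\]
for some PSD matrix $P$ indexed by monomials of degree $j$. Multiplying by $x^\alpha$ and applying $\mathcal{L}_T$ gives
\[
\mathcal{L}_T(x^\alpha\psi) = \sum_{\beta,\beta'} P_{\beta\beta'}\, \mathcal{L}_T(x^{\alpha+\beta+\beta'}) = \operatorname{Tr}\bigl(P\, \mathcal{M}^{(\alpha,j)}[\mathcal{L}_T]\bigr).
\]
This expression is nonnegative for all PSD $P$ exactly when $\mathcal{M}^{(\alpha,j)}[\mathcal{L}_T] \succeq 0$, because the PSD cone is self-dual and the moment matrix is real symmetric. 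That gives (1)$\Rightarrow$(2) by choosing $P = vv^\top$. It also gives (2)$\Rightarrow$(1): a general element of $\mathsf{SOS}^{(n,k)}_d$ is a finite nonnegative combination of generators, and $\mathcal{L}_T$ is linear.

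The only step needing care is the bookkeeping. One must check that the multi-index $\alpha+\beta+\beta'$ always has total degree $n$, so that $\mathcal{L}_T$ is defined on it through $T_{\idx(\alpha+\beta+\beta')}$. One must also confirm that the definition of $\mathsf{SOS}^{(n,k)}_d$ uses $\psi$ of degree $n-|\alpha| = 2j$ in $\Sigma$, matching \cref{prop:SOS}. Once this is fixed, the equivalence is a direct transcription of \cref{thm:dual-sos-polynomials}, applied separately to each pair $(\alpha,j)$ and intersected over all pairs.
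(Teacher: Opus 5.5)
Your proposal is correct and follows essentially the paper's argument. Both reduce membership in the dual cone to $\mathcal{L}_T(x^\alpha\psi)\ge 0$ on the generators, and then identify this with positive semidefiniteness of $\mathcal{M}^{(\alpha,j)}[\mathcal{L}_T]$. The differences are cosmetic: you pair with a Gram matrix via $\operatorname{Tr}(P\,\mathcal{M}^{(\alpha,j)}[\mathcal{L}_T])$ where the paper expands a single square $\phi^2$ with coefficient vector $c$, and you work directly from the definition of $\mathsf{SOS}^{(n,k)}_d$ for general $k$ rather than through \cref{prop:SOS}.
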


\begin{proof}
    We denote by $\Delta_d^{(n)} := \{p \in \homo{d}{n} \mid p(x \odot x) \in \Sigma^{(2n)}_d\}.$
    By \cref{prop:SOS}, we have the general characterization of polynomials $p(x)$ such that $p(x \odot x)$ is SOS. Note that such polynomials belong to the following convex cone. 
    $$p (x) \in\operatorname{cone}\{x^\alpha \phi^2 (x) | \,  |\alpha| + 2\deg (\phi) = n\}.$$
    Therefore $T \in (\mathsf{SOS}^{(n,k)}_d)^\circ \iff \mathcal{L}_T \in (\Delta_d^{(n)})^\circ \iff \forall |\alpha| + 2\deg (\phi) = n$ we have $\mathcal{L}_T(x^\alpha \phi^2 (x)) \geq 0$. Let $\phi(x) = \sum_{|\beta| = \frac{\deg (\phi)}{2}} c_{\beta}x^\beta$. Then, by plugging this into the functional,

    $$\mathcal{L}_T(\sum_{\beta \beta'} c_{\beta} c_{\beta'} x^{\alpha + \beta + \beta'}) = \sum_{\beta \beta'} c_{\beta} c_{\beta'} \mathcal{M}^{\alpha,j}[\mathcal{L}_T]_{\beta, \beta'} \geq 0.$$  Since this is true for all $c_\beta \in \mathbb{R}^{|\beta|}$, $\mathcal{M}^{\alpha,j}[\mathcal{L}_T] \succeq 0$.
\end{proof} 

In the remainder of this section, we want to interpret this family of moment matrices in the language of tensors rather than linear functionals. We do this by showing that the conditions in \cref{thm:dual-SOS} can be exactly obtained by the combination of the following two operations on tensors. 

\begin{itemize}
    \item \underline{Tensor flattenings}: Let $T \in \vee^{2n} \mathbb{R}^d$ be a symmetric tensor of even order $2n$ and dimension $d$. Then, such a tensor can be associated to a matrix of size $d^n \times d^n$ by dividing the indices equally to arrange it as a matrix, 

    $$M_{\alpha \beta} := T_{\alpha_{1} \alpha_{2} \ldots \alpha_n  \beta_{1} \beta_2 \ldots \beta_n}.$$

    This is called a tensor flattening (represented diagrammatically in \cref{fig:tensor-flattening}). Note that we consider only the flattenings that are \emph{balanced}, i.e, we consider an equal number of indices to create a matrix. 

    \begin{figure}[htb]
        \centering
        \includegraphics[width=0.8\linewidth]{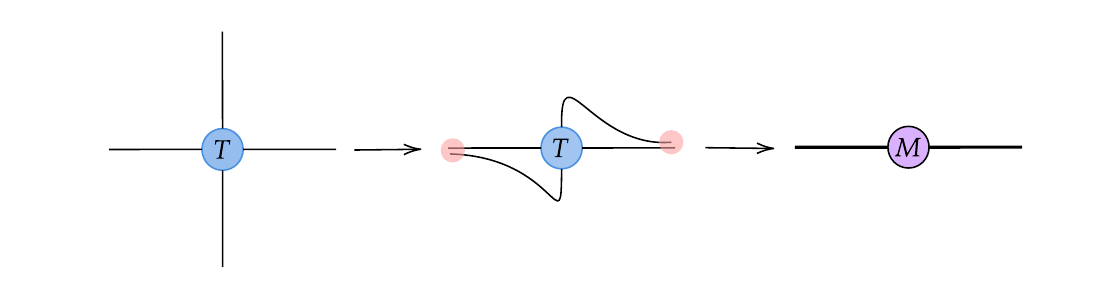}
        \caption{The figure represents a balanced flattening of a tensor $T \in \vee^4\mathbb{R}^d.$}
        \label{fig:tensor-flattening}
    \end{figure}

    \item \underline{Tensor slices}: Let $T \in \vee^{n} \mathbb{R}^d$ be a symmetric tensor of even order $n$ and dimension $d$. For a subset of indices $I \subset [n]$, we define the tensor $S \in \vee^{(n - |I|)} \mathbb{R}^d$ by fixing the indices of the tensor $T$ belonging to $I$, i.e $\forall k \in I$, $i_k = a_k \in \mathbb{N}$: 

    $$S_{j_1 \ldots j_{n-|I|}} := T_{a_1 \ldots a_{|I|} j_1 \ldots j_{n-|I|}}.$$
    
    \begin{figure}[htb]
        \centering
        \includegraphics[width=0.8\linewidth]{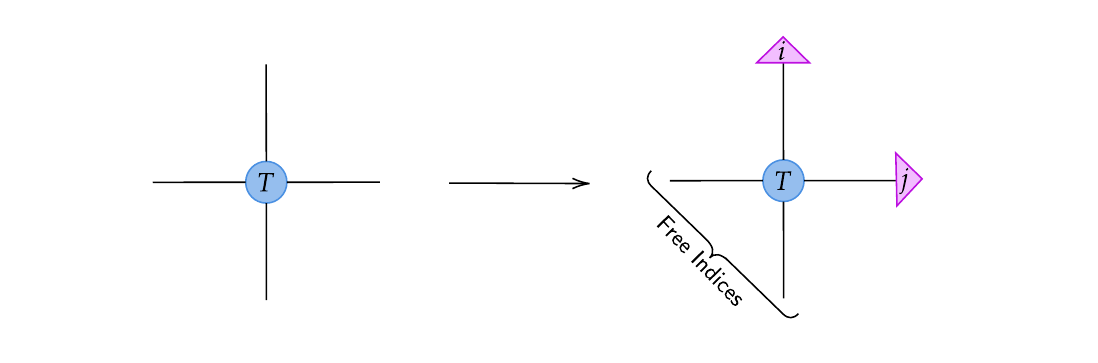}
        \caption{The figure represents a slice of a tensor $T \in \vee^4\mathbb{R}^d.$ by fixing two indices to be $i,j$.}
        \label{fig:tensor-slices}
    \end{figure}

    This is represented diagrammatically in \cref{fig:tensor-slices}.
\end{itemize}

In what follows, we shall consider all the even-order slices of a given tensor. For tensors of even order $2n$, we have even slices of order $2n, 2n-2, 2n-4, \ldots, 2, 0$, and for tensors of odd order $2n-1$, we have even slices of order $2n-2, 2n-4, \ldots,2, 0$. We consider the family of \emph{flattenings} of all the \emph{even} slices of the tensor, and we call this family of square matrices, the \emph{slice flattenings} associated with a tensor $T$, denoted $\operatorname{SF}[T]$. We also use $\operatorname{SF}^{(k)}[T]$ to denote the flattenings of even slices of size $\leq 2k$ (even order tensors) and  $\leq 2k+1$ (odd order tensors).

\begin{proposition}
\label{prop:cp-slice}
    Let $Q$ be a completely positive tensor. 
    \begin{itemize}
        \item[(1)]  Any slice of the tensor is completely positive tensor. 
        \item[(2)]  Any balanced flattening of the tensor is a completely positive matrix. 
        \item[(3)] All the slice flattenings are completely positive matrices. 
    \end{itemize}
   
\end{proposition}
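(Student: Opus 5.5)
The plan is to work directly from \cref{def:cp-tensors}: fix a decomposition $Q = \sum_{q=1}^K v_q^{\otimes n}$ with $v_q \in \mathbb{R}^d_+$, and check that slicing and flattening act on each rank-one term $v_q^{\otimes n}$ so as to preserve the completely positive structure. Both operations are linear in $Q$, so it suffices to understand what they do to a single term, and then sum over $q$.

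For (1), fix a subset $I \subset [n]$ and values $a_k \in [d]$ for $k \in I$. The slice of $v_q^{\otimes n}$ is
$$\Big(\prod_{k \in I} (v_q)_{a_k}\Big)\, v_q^{\otimes (n-|I|)}.$$
The scalar $c_q := \prod_{k\in I}(v_q)_{a_k}$ is non-negative because $v_q \ge 0$ entrywise. Setting $w_q := c_q^{1/(n-|I|)} v_q \in \mathbb{R}^d_+$ gives $c_q v_q^{\otimes(n-|I|)} = w_q^{\otimes (n-|I|)}$, so the slice equals $\sum_q w_q^{\otimes(n-|I|)}$. This is completely positive. In the degenerate case $|I| = n$, the slice is a non-negative scalar, which we regard as trivially completely positive. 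Symmetry of the slice is inherited from the symmetry of $Q$, since the slice is independent of which positions are fixed.

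For (2), take $n$ even, $n = 2m$, and flatten $v_q^{\otimes 2m}$ into a $d^m \times d^m$ matrix $M$ with entries
$$M_{\alpha\beta} = \prod_{l} (v_q)_{\alpha_l} \prod_{l} (v_q)_{\beta_l}.$$
This is exactly $u_q u_q^\top$ with $u_q := v_q^{\otimes m} \in \mathbb{R}^{d^m}_+$, a rank-one completely positive matrix. Summing over $q$ shows the flattening of $Q$ is $\sum_q u_q u_q^\top$, which is completely positive.

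For (3), an element of $\operatorname{SF}[T]$ is a balanced flattening of an even slice. By (1) that slice is a completely positive tensor of even order, and applying (2) to it gives a completely positive matrix.

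No step is a real obstacle. The only care needed is bookkeeping: the root $c_q^{1/(n-|I|)}$ must be taken of a non-negative number (guaranteed by $v_q \ge 0$), and the zero-order slice must be handled separately.
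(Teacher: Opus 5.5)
Your proof is correct and follows the paper's argument. You fix a decomposition $Q=\sum_q v_q^{\otimes n}$, note that slicing multiplies each term by the non-negative scalar $\prod_{k\in I}(v_q)_{a_k}$, note that the balanced flattening of $v_q^{\otimes 2m}$ is $v_q^{\otimes m}(v_q^{\otimes m})^\top$, and combine the two for (3); your explicit rescaling $w_q=c_q^{1/(n-|I|)}v_q$ and separate treatment of the order-zero slice just make explicit what the paper leaves implicit, namely that $\mathsf{CP}$ is a cone.
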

\begin{proof}
Let $Q$ have the completely positive decomposition, $Q = \sum^N_{q=1} v^{\otimes n}_q.$ 
\begin{itemize}
    \item[(1)]  Let $S$ be the slice of the tensor over the indices $I$, by fixing indices $\forall k \in I,  i_k = a_k.$ Then, we have 
    $$S = \sum^N_{q=1} \prod_{k \in I} (v_q)_{a_k}v^{\otimes n-|I|}_q \in \mathsf{CP}^{(n-|I|)}_d$$ as $\prod_{k \in I} (v_q)_{a_k} \geq 0.$
    \item[(2)] For $n$ odd there is no balanced flattening. Hence, we assume $n = 2m$ is even. Then, we have that the balanced flattening of the tensor $$F = \sum^N_{q=1} \ketbra{v_q^{\otimes m}}{v_q^{\otimes m}}.$$ 
    \item[(3)] The proof follows from combining the result in Item (1) and Item (2).
\end{itemize}
\end{proof} 

We define the following convex cones of moment tensors:
$$\mathsf{Mom}^{(n,k)}_d := \{T \in \vee^n \mathbb{R}^d \mid \forall X \in \operatorname{SF}^{(k)}[T], \, \, X \succeq 0\},$$
and for $k =\lfloor n/2 \rfloor$ we denote the set of \emph{all slice flattenings} positive semidefinite by

\begin{equation}
\label{eq:mom-definition}
    \mathsf{Mom}^{(n)}_d := \mathsf{Mom}^{(n,\lfloor n/2 \rfloor)}_d.
\end{equation}

By \cref{prop:cp-slice}, this is outer relaxation of completely positive tensors. Moreover, we have the following equivalence that establishes that moment tensors are dual cones to SOS tensors.
\begin{theorem}
\label{thm:duality-moment-sos}
We have the following conic-duality:
    $$\forall k \in [\lfloor n/2 \rfloor], \quad (\mathsf{Mom}^{(n,k)}_d)^\circ  = \mathsf{SOS}^{(n,k)}_d.$$
\end{theorem}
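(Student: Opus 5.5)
The plan is to identify $\mathsf{Mom}^{(n,k)}_d$ with the cone described in item (2) of \cref{thm:dual-SOS}. That theorem then gives $\mathsf{Mom}^{(n,k)}_d = (\mathsf{SOS}^{(n,k)}_d)^\circ$, and the bipolar theorem gives $(\mathsf{Mom}^{(n,k)}_d)^\circ = \operatorname{cl}(\mathsf{SOS}^{(n,k)}_d)$. It then remains to show that $\mathsf{SOS}^{(n,k)}_d$ is a closed convex cone.

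\emph{Step 1: slice flattenings versus moment matrices.} Fix $j \le k$ and a multi-index $a \in [d]^{n-2j}$ with $\gamma(a)=\alpha$. The flattening of the even slice obtained by fixing the legs to $a$ is the $d^j\times d^j$ matrix
\[ F^{(a)}_{b,b'} = T_{a\sqcup b\sqcup b'}, \qquad b,b'\in[d]^j. \]
By symmetry of $T$ and the definition of $\mathcal{L}_T$, this entry equals $\mathcal{L}_T(x^{\alpha+\gamma(b)+\gamma(b')}) = \mathcal{M}^{(\alpha,j)}[\mathcal{L}_T]_{\gamma(b),\gamma(b')}$. Let $V$ be the $d^j\times|\knd{j}{d}|$ matrix with $V_{b,\beta}=\mathbb 1_{\gamma(b)=\beta}$. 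Then
\[ F^{(a)} = V\,\mathcal{M}^{(\alpha,j)}[\mathcal{L}_T]\,V^\top. \]
The columns of $V$ are nonzero with disjoint supports, so $V$ has full column rank and $V^\top$ is surjective. Hence $F^{(a)}\succeq 0$ if and only if $\mathcal{M}^{(\alpha,j)}[\mathcal{L}_T]\succeq0$. Different $a$ with the same $\gamma(a)$ give identical matrices, and every $\alpha$ with $|\alpha|=n-2j$ arises from some $a$. Therefore $\mathcal{M}^{(\alpha,j)}[\mathcal{L}_T]\succeq0$ for all $j\le k$ and all such $\alpha$ if and only if every matrix in $\operatorname{SF}^{(k)}[T]$ is PSD. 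This works for both parities of $n$, with slice orders $2j\le 2k$. Combined with \cref{thm:dual-SOS}, it gives $\mathsf{Mom}^{(n,k)}_d=(\mathsf{SOS}^{(n,k)}_d)^\circ$.

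\emph{Step 2: closedness (expected main obstacle).} Using the Gram-matrix description (\cref{thm:gram-matrix}), write $\mathsf{SOS}^{(n,k)}_d$ as the image of a finite product of PSD cones $\prod_{\alpha} \mathsf{PSD}$ under the linear map
\[ \mathcal{A}:(P_\alpha)_\alpha\mapsto \sum_\alpha x^\alpha\,([x]^{j}_d)^\top P_\alpha [x]^{j}_d, \]
where $j=(n-|\alpha|)/2\le k$, composed with the linear bijection from polynomials to tensors. The image of a closed cone $C$ under a linear map is closed whenever $\ker\mathcal{A}\cap C=\{0\}$. So suppose $\sum_\alpha x^\alpha\psi_\alpha=0$ with each $\psi_\alpha$ SOS. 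Evaluating at points $x\in\mathbb{R}^d_{>0}$, every term is nonnegative, so each $x^\alpha\psi_\alpha$ vanishes on an open set. Hence $\psi_\alpha\equiv0$, and so $([x]^j_d)^\top P_\alpha[x]^j_d\equiv 0$ with $P_\alpha\succeq0$. Writing $P_\alpha=\sum_r u_ru_r^\top$, each $\langle u_r,[x]^j_d\rangle^2$ is nonnegative and they sum to zero, so each vanishes identically. The monomials are linearly independent, so $u_r=0$ and $P_\alpha=0$. Thus the kernel meets the cone trivially and $\mathsf{SOS}^{(n,k)}_d$ is closed. It is a convex cone by definition.

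\emph{Conclusion.} By the bipolar theorem, $(\mathsf{Mom}^{(n,k)}_d)^\circ=((\mathsf{SOS}^{(n,k)}_d)^\circ)^\circ=\mathsf{SOS}^{(n,k)}_d$, as claimed. The only delicate points are the closedness argument and checking that the index bookkeeping in Step 1 matches the slices of order at most $2k$ (resp.\ $2k+1$) in the definition of $\operatorname{SF}^{(k)}$.
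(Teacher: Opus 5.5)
Your proof is correct and follows the paper's route. Like the paper, you combine \cref{thm:dual-SOS} with the factorization $F^{(a)} = V\,\mathcal{M}^{(\alpha,j)}[\mathcal{L}_T]\,V^\top$ through a full-column-rank duplication matrix, which identifies $(\mathsf{SOS}^{(n,k)}_d)^\circ$ with $\mathsf{Mom}^{(n,k)}_d$. The paper stops at that point and leaves the bipolar step implicit, so your Step~2 fills in something its proof omits: closedness of $\mathsf{SOS}^{(n,k)}_d$, obtained from $\ker\mathcal{A}\cap\prod_\alpha\mathsf{PSD}=\{0\}$.
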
 

\begin{proof}
Recall that we have the following equivalence 
\[
    T \in (\mathsf{SOS}^{(n,k)}_d)^\circ 
    \quad\Longleftrightarrow\quad
    \forall\, j = 0,\ldots, k \; |\alpha| = n-2j,\;
    \mathcal{M}^{(\alpha,j)}[\mathcal{L}_T] \succeq 0 .
\]
We now explain how the moment matrices \(\mathcal{M}^{(\alpha,j)}[\mathcal{L}_T]\) are obtained from slice--flattenings of \(T\). Recall the entries of this matrix
\[
\big(\mathcal{M}^{(\alpha,j)}[\mathcal{L}_T]\big)_{\beta,\beta'}
= \mathcal{L}_T\big(x^{\alpha+\beta+\beta'}\big),
\qquad |\alpha|=n-2j,\; |\beta|=|\beta'|=j.
\]

Fix \(\alpha\in\mathbb{N}_d^{(n-2j)}\). Consider the even slice \(S^{(\alpha)}\in\vee^{2j}\mathbb{R}^d\) of \(T\) obtained by fixing the first \(n-2j\) indices to the canonical multi-index \(\idx(\alpha)\):
\[
S^{(\alpha)}_{u\cup v} := T_{\idx(\alpha)\cup u\cup v},
\qquad u,v\in[d]^j.
\]
Its balanced flattening is the \(d^j\times d^j\) matrix
\[
F^{(\alpha)}_{u,v}:=S^{(\alpha)}_{u\cup v}=T_{\idx(\alpha)\cup u\cup v}.
\]
Because \(T\) is symmetric, the entries \(F^{(\alpha)}_{u,v}\) only depend on the occupation vectors \(\beta:=\gamma(u)\) and \(\beta':=\gamma(v)\). Consequently, \(F^{(\alpha)}\) has repeated rows/columns whenever two multi-indices have the same occupation pattern.

Let \(D\) be the \(\{0,1\}\)-matrix that duplicates rows/columns according to occupation vectors (one column per \(\beta\in\mathbb{N}_d^{(j)}\), one row per \(u\in[d]^j\), with \(D_{u,\beta}=1\) iff \(\gamma(u)=\beta\)). Then one checks that
\[
F^{(\alpha)} = D\,\mathcal{M}^{(\alpha,j)}[\mathcal{L}_T] \,D^{\top},
\]
so \(\mathcal{M}^{(\alpha,j)}[\mathcal{L}_T]\) is exactly the \emph{effective} flattening obtained from \(F^{(\alpha)}\) by deleting repeated rows and columns.
In particular, positive semidefiniteness is preserved (and reflected) by this deletion/duplication: since \(D\) has full column rank, we have
\(
F^{(\alpha)}\succeq 0 \iff \mathcal{M}^{(\alpha,j)}[\mathcal{L}_T]\succeq 0.
\)

\end{proof}

\begin{remark}
    The size of the tensor flattenings of the general tensor of $2n$ legs is $d^n \times d^n$. But due to symmetry in the tensor, this leads to repeated multiple rows and columns in the matrix, leading to the size of the largest \emph{effective} flattening (after deletion of the repeated rows and columns) being $\binom{n+d-1}{d-1} \times \binom{n+d-1}{d-1}$. For $d = 2$, we get the largest effective flattening of size $(n+1) \times (n+1)$.
\end{remark}

We end this section by showing the exact PPT conditions for the quantum states in the diagonally symmetric subspace. We present these conditions in terms of the slice flattening of the tensor $Q[X].$ Note that the same conditions for PPT have been obtained in \cite{romero2025multipartite}, in which they also pose the conjecture whether in the diagonally symmetric subspace, the PPT just across the \emph{largest} bipartition of the $n$-partite state implies PPT across all bipartitions (and also positivity of the state). The next result discusses the PPT conditions and the positive resolution of this conjecture. 

\begin{theorem}
\label{thm:ppt-mom}
    Let $X \in \operatorname{Herm}[\mathsf{DS}^{(n)}_d]$. Then, we have the following equivalence: 
    $$X \in \mathsf{PPT}_d^{(n,k)} \iff Q[X] \in \mathsf{Mom}^{(n,k)}_d.$$
    In particular, we have for the PPT sets:
    $$\forall l \in [\lfloor n/2 \rfloor-1], \quad \mathsf{PPT}^{(n,l+1)}_d \cap \operatorname{Herm}[\mathsf{DS}^{(n)}_d]  \subseteq \mathsf{PPT}^{(n,l)}_d \cap \operatorname{Herm}[\mathsf{DS}^{(n)}_d].$$
\end{theorem}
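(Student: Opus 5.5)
The plan is to obtain the equivalence by duality from \cref{thm:Dec-n-k-SOS} and \cref{thm:duality-moment-sos}, working inside the real space $\operatorname{Herm}[\mathsf{DS}^{(n)}_d]$ and transporting everything through the $Q$ and $W$ parametrizations. The key identity is \cref{eq:HS-Q-W}, $\operatorname{Tr}(XY)=\langle Q[X],W[Y]\rangle$, together with the fact that $X\mapsto Q[X]$ and $Y\mapsto W[Y]$ are linear bijections onto $\vee^n\mathbb{R}^d$ (by \cref{thm:tensor-param} and \cref{def:parametrize-witness}).

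\textbf{Step 1: reduce to a DS-restricted dual.} For $X\in \operatorname{Herm}[\mathsf{DS}^{(n)}_d]$, I first show that $X\in\mathsf{PPT}^{(n,k)}_d$ if and only if $\operatorname{Tr}(XO)\ge 0$ for all $O\in\mathsf{Dec}^{(n,k)}_d\cap\operatorname{Herm}[\mathsf{DS}^{(n)}_d]$. The forward direction is immediate, since $\mathsf{PPT}^{(n,k)}_d$ is closed and equals its bidual. For the converse, take any $O\in\mathsf{Dec}^{(n,k)}_d$ and apply the diagonal-unitary twirl $\mathcal{T}_{\diagU}$. The cone $\mathsf{Dec}^{(n,k)}_d$ is invariant under $U^{\otimes n}(\cdot)U^{*\otimes n}$: writing $O=\Pi P^{\Gamma_{[k]}}\Pi$ as in \cref{thm:dec-characterization}, conjugation by $U^{\otimes n}$ commutes with $\Pi$ and sends $P^{\Gamma_{[k]}}$ to $(\bar U^{\otimes k}\otimes U^{\otimes n-k}P\cdots)^{\Gamma_{[k]}}$, which is again a partial transpose of a PSD matrix. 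So $\mathcal{T}_{\diagU}(O)$ lies in $\mathsf{Dec}^{(n,k)}_d\cap \mathsf{DS}$, using closedness of the cone for the integral. Since $X$ is $\diagU$-invariant, $\operatorname{Tr}(XO)=\operatorname{Tr}(X\mathcal{T}_{\diagU}(O))\ge 0$.

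\textbf{Step 2: translate the witnesses and the pairing.} By \cref{thm:Dec-n-k-SOS}, the DS part of $\mathsf{Dec}^{(n,k)}_d$ is exactly $\{O : W[O]\in\mathsf{SOS}^{(n,k)}_d\}$. Combining this with Step 1 and \cref{eq:HS-Q-W}, $X\in\mathsf{PPT}^{(n,k)}_d$ if and only if $\langle Q[X],T\rangle\ge0$ for all $T\in\mathsf{SOS}^{(n,k)}_d$, that is, $Q[X]\in(\mathsf{SOS}^{(n,k)}_d)^\circ$.

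\textbf{Step 3: identify the dual with the moment cone.} \cref{thm:dual-SOS} together with the flattening argument in the proof of \cref{thm:duality-moment-sos} shows that $(\mathsf{SOS}^{(n,k)}_d)^\circ=\mathsf{Mom}^{(n,k)}_d$: membership in the dual is equivalent to positivity of the moment matrices $\mathcal{M}^{(\alpha,j)}[\mathcal{L}_T]$, and these are the effective slice flattenings. Here I need to check a normalization point: $\mathcal{L}_T(p_Q)$ should equal the Euclidean pairing $\langle T,Q\rangle$ that appears in Step 2. This holds by \cref{eq:Ti-vs-Talpha}, because the multinomial coefficient in $p_Q$ exactly matches the number of terms in the sum. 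This gives the claimed equivalence.

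\textbf{Step 4: the inclusions.} The inclusion $\mathsf{PPT}^{(n,l+1)}\cap\mathsf{DS}\subseteq\mathsf{PPT}^{(n,l)}\cap\mathsf{DS}$ follows by dualizing $\mathsf{Dec}^{(n,l)}\cap\mathsf{DS}\subseteq \mathsf{Dec}^{(n,l+1)}\cap\mathsf{DS}$ from \cref{thm:Dec-n-k-SOS}, via Step 1. Alternatively, it follows directly because $\mathsf{Mom}^{(n,l+1)}_d\subseteq\mathsf{Mom}^{(n,l)}_d$, since $\operatorname{SF}^{(l)}\subseteq \operatorname{SF}^{(l+1)}$.

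The main obstacle is Step 1, namely justifying that the twirl preserves $\mathsf{Dec}^{(n,k)}_d$ and that restricting the dual to the DS subspace loses nothing. Closedness of the relevant cones is needed there, and in Step 2 it is needed to pass from the bidual back to $\mathsf{PPT}^{(n,k)}_d$.
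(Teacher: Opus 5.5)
Your proposal is correct and follows essentially the same route as the paper's proof: restrict the dual of $\mathsf{PPT}^{(n,k)}_d$ to DS witnesses via the diagonal-unitary twirl, translate through \cref{eq:HS-Q-W} and \cref{thm:Dec-n-k-SOS} to the pairing with $\mathsf{SOS}^{(n,k)}_d$, and identify the dual of that cone with $\mathsf{Mom}^{(n,k)}_d$. Your explicit check that the twirl preserves $\mathsf{Dec}^{(n,k)}_d$, and your direct argument from $\operatorname{SF}^{(l)}[T]\subseteq\operatorname{SF}^{(l+1)}[T]$ for the inclusions, supply details the paper leaves implicit; one small correction is that the bidual property of $\mathsf{PPT}^{(n,k)}_d$ is needed for the converse direction of Step 1, not the forward one.
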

\begin{proof}
    Firstly, note the following equality between the sets.
    $$\mathsf{Dec}^{(n,k)}_d \cap \operatorname{Herm}[\mathsf{DS}^{(n)}_d] = \{\mathcal{T}_{\diagU}(O) \mid O \in \mathsf{Dec}^{(n,k)}_d\}.$$
    $$\mathsf{PPT}^{(n,k)}_d \cap \operatorname{Herm}[\mathsf{DS}^{(n)}_d] = \{\mathcal{T}_{\diagU}(X) \mid X \in \mathsf{PPT}^{(n,k)}_d\}.$$ Moreover, we have for all $W \in \operatorname{Herm}[\vee^n \mathbb{C}^d]$ and $X \in \operatorname{Herm}[\mathsf{DS}^{(n)}_d]$, the following trace relation.
    $$\operatorname{Tr}(XO) = \operatorname{Tr}(\mathcal{T}_{\diagU}(X) O) = \operatorname{Tr}(X \mathcal{T}_{\diagU}(O)).$$ Then, we have the following equivalences for $X \in \operatorname{Herm}[\mathsf{DS}^{(n)}_d]$:
    \[
\begin{aligned}
X \in \mathsf{PPT}^{(n,k)}_d
&\iff 
\forall\, O \in \mathsf{Dec}^{(n,k)}_d,\;
\operatorname{Tr}(X O) \ge 0 \\[0.4em]
&\iff 
\forall\, O \in \mathsf{Dec}^{(n,k)}_d,\;
\operatorname{Tr}\!\bigl(X\,\mathcal{T}_{\diagU}(O)\bigr) \ge 0 \\[0.6em]
&\iff 
\forall\, O \in \mathsf{Dec}^{(n,k)}_d 
\cap \operatorname{Herm}\!\bigl[\mathsf{DS}^{(n)}_d\bigr],\;
\operatorname{Tr}(X O) \ge 0 \\[0.6em]
&\iff 
\forall\, O \in \mathsf{Dec}^{(n,k)}_d 
\cap \operatorname{Herm}\!\bigl[\mathsf{DS}^{(n)}_d\bigr],\;
\braket{Q[X], W[O]} \ge 0 \\[0.6em]
&\iff 
\forall\, W[O] \in \mathsf{SOS}^{(n,k)}_d,\;
\braket{Q[X], W[O]} \ge 0 \\[0.6em]
&\iff 
Q[X] \in \mathsf{Mom}^{(n,k)}_d.
\end{aligned}
\]
\end{proof}

The following direct corollary of the theorem above settles in the affirmative the Conjecture III.1 in \cite{romero2025multipartite}.

\begin{corollary}\label{cor:PPT-balanced-strongest}
    Let $X \in \operatorname{Herm}[\mathsf{DS}^{(n)}_d]$ be a mixture of Dicke states on $n$ particles with local dimension $d$. If $X$ is PPT with respect to the most balanced bipartition $n = \lfloor n/2 \rfloor + \lceil n/2 \rceil$, then $X$ is PPT with respect to all other bipartitions:
    $$\forall l \in [\lfloor n/2\rfloor], \quad \mathsf{PPT}^{(n,\lfloor n/2 \rfloor)}_d \cap \operatorname{Herm}[\mathsf{DS}^{(n)}_d]  \subseteq \mathsf{PPT}^{(n,l)}_d \cap \operatorname{Herm}[\mathsf{DS}^{(n)}_d].$$
    In other words, in the case of mixture of Dicke states, the PPT criterion with respect to the most balanced bipartion is stronger than the PPT criterion with respect to all other bipartitions. 
\end{corollary}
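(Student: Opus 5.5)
The plan is to derive the corollary directly from \cref{thm:ppt-mom}, using the monotonicity of the moment cones $\mathsf{Mom}^{(n,k)}_d$ in $k$. Fix $X \in \operatorname{Herm}[\mathsf{DS}^{(n)}_d]$ with $X \in \mathsf{PPT}^{(n,\lfloor n/2 \rfloor)}_d$. By \cref{thm:ppt-mom}, this is equivalent to $Q[X] \in \mathsf{Mom}^{(n,\lfloor n/2\rfloor)}_d$. By definition, this means every matrix in $\operatorname{SF}^{(\lfloor n/2 \rfloor)}[Q[X]]$ is positive semidefinite. For each $l \le \lfloor n/2 \rfloor$, the family $\operatorname{SF}^{(l)}[Q[X]]$ consists of the flattenings of the even slices of order at most $2l$. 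It is therefore a subfamily of $\operatorname{SF}^{(\lfloor n/2 \rfloor)}[Q[X]]$, so
\[
\mathsf{Mom}^{(n,\lfloor n/2\rfloor)}_d \subseteq \mathsf{Mom}^{(n,l)}_d .
\]
Hence $Q[X] \in \mathsf{Mom}^{(n,l)}_d$. Applying \cref{thm:ppt-mom} once more, in the reverse direction, gives $X \in \mathsf{PPT}^{(n,l)}_d$.

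As a cross-check, one can argue dually instead. The cones $\mathsf{SOS}^{(n,l)}_d$ increase in $l$, because the generating families $\{x^\alpha \psi\}$ only gain more admissible degrees $|\alpha|$. By \cref{thm:duality-moment-sos}, the duals $\mathsf{Mom}^{(n,l)}_d$ therefore decrease, which matches the inclusion already stated in \cref{thm:ppt-mom}. Chaining the one-step inclusions $\mathsf{PPT}^{(n,l+1)}_d \cap \operatorname{Herm}[\mathsf{DS}^{(n)}_d] \subseteq \mathsf{PPT}^{(n,l)}_d \cap \operatorname{Herm}[\mathsf{DS}^{(n)}_d]$ from $l = \lfloor n/2\rfloor - 1$ down to $l$ gives the claim as well.

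The case $l = 0$ is included, and it recovers positivity of $X$. Here $\mathsf{Mom}^{(n,0)}_d$ only requires the order-$0$ slices, i.e.\ the entries, to be nonnegative, for $n$ of either parity. By \cref{thm:tensor-param}(1), nonnegativity of $Q[X]$ is exactly $X \succeq 0$.

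The only point needing care, rather than a genuine obstacle, is index bookkeeping. One must check that "slices of order $\le 2k$" for even $n$ and "$\le 2k+1$" for odd $n$ give nested families as $k$ grows. One must also check that the balanced bipartition $\lfloor n/2\rfloor : \lceil n/2\rceil$ corresponds to $k = \lfloor n/2 \rfloor$, using \cref{prop:ppt-symmetric} to reduce an arbitrary bipartition to one of the form $[l] : [n]\setminus[l]$ with $l \le \lfloor n/2\rfloor$. This reduction is legitimate because transposing $I$ or its complement gives matrices that are full transposes of each other, which have the same spectrum.
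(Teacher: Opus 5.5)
Your proposal is correct and is essentially the paper's own argument: the corollary is read off from \cref{thm:ppt-mom} by chaining its nested inclusions $\mathsf{PPT}^{(n,l+1)}_d\cap\operatorname{Herm}[\mathsf{DS}^{(n)}_d]\subseteq\mathsf{PPT}^{(n,l)}_d\cap\operatorname{Herm}[\mathsf{DS}^{(n)}_d]$, and these come from the monotonicity $\mathsf{Mom}^{(n,l+1)}_d\subseteq\mathsf{Mom}^{(n,l)}_d$ (dually $\mathsf{SOS}^{(n,l)}_d\subseteq\mathsf{SOS}^{(n,l+1)}_d$) that you identify. Your extra remarks on the case $l=0$, and on reducing an arbitrary bipartition to $[l]:[n]\setminus[l]$ via \cref{prop:ppt-symmetric} and full transposition, are correct and make explicit some bookkeeping the paper leaves implicit.
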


The previous theorem implies that if we desire PPT of a partition of size $k$, this means that flattenings of all even slices of size $\geq n-2k$ are positive-semidefinite. For $k = 0$, the PPT condition reduces to the positivity of the matrix $X \in \mathsf{DS}^{(n)}_d$, which is equivalent to the entrywise positivity of the tensor $Q[X].$ We present the PPT conditions for the case of $n = 2$ and $n=3$ in the following tables. 

\definecolor{headergray}{RGB}{240,240,240}
\definecolor{rowblue}{RGB}{245,248,252}

\begin{table}[H]
\centering

\label{tab:ppt-n2}
\renewcommand{\arraystretch}{1.5}
\begin{tabular}{|c|c|}
\hline
\rowcolor{headergray}
\textbf{Property} & \textbf{Condition on $Q[\cdot]$} \\ \hhline{|=|=|}
\rowcolor{rowblue}
\(X \in \mathsf{PPT}^{(2,0)}_d\)
& \(\forall i,j \in [d],\; Q[X]_{ij} \geq 0\) \\ \hline
\(X \in \mathsf{PPT}^{(2,1)}_d \)
& \(\forall i,j \in [d],\; Q[X]_{ij} \geq 0 \text{ and }
\{Q[X]_{ij}\}_{i,j=1}^d \succeq 0\) \\ \hhline{|=|=|}
\rowcolor{rowblue}
\(X \in \mathsf{Sep}^{(2)}_d\)
& \(Q[X] \in \mathsf{CP}^{(2)}_d\) \\ \hline
\end{tabular}
\caption{PPT and separability conditions for \(n=2\).}
\end{table}

\begin{table}[H]
\centering

\label{tab:ppt-n3}
\renewcommand{\arraystretch}{1.5}
\begin{tabular}{|c|c|}
\hline
\rowcolor{headergray}
\textbf{Property} & \textbf{Condition on $Q[\cdot]$} \\ \hhline{|=|=|}
\rowcolor{rowblue}
\(X \in \mathsf{PPT}^{(3,0)}\)
& \(\forall i,j,k \in [d],\; Q[X]_{ijk} \geq 0\) \\ \hline
\(X \in \mathsf{PPT}_d^{(3,1)}\)
& \(\forall i,j,k \in [d],\; Q[X]_{ijk} \geq 0 \text{ and }
\forall k,\; \{Q[X]_{ijk}\}_{i,j=1}^d \succeq 0\) \\ \hhline{|=|=|}
\rowcolor{rowblue}
\(X \in \mathsf{Sep}^{(3)}_d\)
& \(Q[X] \in \mathsf{CP}^{(3)}_d\) \\ \hline
\end{tabular}
\caption{PPT and separability conditions for \(n=3\).}
\end{table}

We summarize the main connections between the entanglement properties in the diagonally symmetric subspace and the corresponding tensor properties in \cref{fig:entanglement-DS-tensors}. 

\begin{figure}[htb]
    \centering

    \usetikzlibrary{matrix, arrows.meta}
    \begin{center}
    \begin{tikzpicture}[>=Stealth]
      \matrix (m) [matrix of math nodes, row sep=3em, column sep=4em, text height=1.5ex, text depth=0.25ex]
      {
        \mathsf{CP}_{d}^{(n)}  & \mathsf{Mom}_{d}^{(n)} & \mathsf{NN}_{d}^{(n)} & \mathsf{SOS}_{d}^{(n)} & \mathsf{Cop}_{d}^{(n)} \\
        \mathsf{Sep}_{d}^{(n)} & \mathsf{PPT}_{d}^{(n)} & \mathsf{PSD}_{d}^{(n)} & \mathsf{Dec}_{d}^{(n)} & \mathsf{EW}_{d}^{(n)}  \\
      };
    
      \foreach \i/\j in {1/2, 2/3, 3/4, 4/5} {
        \path (m-1-\i) -- node {$\subseteq$} (m-1-\j);
        \path (m-2-\i) -- node {$\subseteq$} (m-2-\j);
      }
    
      \draw[<->] (m-1-1) -- node[left]  {\scriptsize \cref{thm:sep-and-cp} [Q]}  (m-2-1);
      \draw[<->] (m-1-2) -- node[left] {\scriptsize \cref{thm:ppt-mom} [Q]}  (m-2-2);
      \draw[<->] (m-1-3) -- node[left] {\scriptsize \cref{thm:tensor-param} [Q/W]}(m-2-3);
      \draw[<->] (m-1-4) -- node[left] {\scriptsize \cref{thm:dec-iff-sos} [W]} (m-2-4);
      \draw[<->] (m-1-5) -- node[left] {\scriptsize \cref{prop:ew-cop} [W]} (m-2-5);
    
    \end{tikzpicture}
    \end{center}
    
    \caption{The different notions of positivity for Hermitian matrices in the DS subspace (bottom row) and the corresponding class of real symmetric tensors (top row). References for the proof of the correspondence are given, along with the parametrization ($Q$ and/or $W$) realizing the correspondance.}
    \label{fig:entanglement-DS-tensors}
\end{figure}
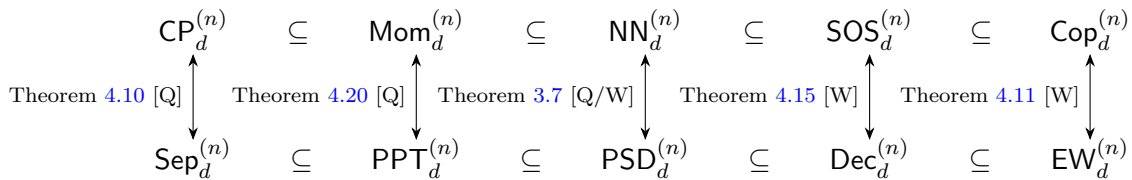

\subsection{Existence of PPT entangled states}
\label{sec:ppt-ent-ds-subspace}
In the last section, we established a complete dictionary of entanglement properties in the diagonally symmetric subspace. We  now proceed to do a systematic study of PPT entanglement by using the correspondences. Recall that a mixture of Dicke states, $X \in \mathsf{DS}^{(n)}_d$, is PPT for bipartitions if and only if $Q[X] \in \mathsf{Mom}^{(n)}_d.$ On the dual side, we have that a witness $O \in \mathsf{DS}^{(n)}_d$ is decomposable if and only if $W[O] \in \mathsf{SOS}^{(n)}_d.$ The PPT entangled states correspond exactly to the moment tensors, $\mathsf{Mom}^{(n)}_d$ that are not in $\mathsf{CP}^{(n)}_d,$ and on the dual side, the witnesses correspond to $\mathsf{Cop}^{(n)}_d$ which are not $\mathsf{SOS}_d^{(n)}.$ 

In the remainder of this section, we analyze the three different regimes: multipartite qubit systems, bipartite systems, and the rest. The first two have already been scrutinized in \cite{yu2016separability, tura2018separability, quesada2017entanglement}. 

\bigskip

\noindent\textbf{Qubit systems}. We begin by recalling the following standard fact \cite{hilbert1888darstellung} about homogeneous polynomials of $2$ variables (see also \cref{tab:homo-sos}):  non-negative homogeneous polynomials in two variables of even degree are sums-of-squares; equivalently, every non-negative polynomial in one real variable is a sum of (two) squares. This directly implies the following result in \cite[Theorem 2]{yu2016separability}.

\begin{theorem}[\cite{yu2016separability}]
\label{thm:no-ppt-in-qubits}
     Let $X \in \mathsf{DS}^{(n)}_2.$ Then, $X$ is separable if and only if it is PPT across all bipartitions. 
\end{theorem}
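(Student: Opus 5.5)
The plan is to argue on the dual side, using the dictionary of \cref{fig:entanglement-DS-tensors}. One direction is immediate: by \cref{eq:sep-in-ppt}, every separable bosonic state is PPT across all bipartitions.

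For the converse, let $X \in \mathsf{DS}^{(n)}_2$ be PPT across all bipartitions. By \cref{thm:ppt-mom} (with $k=\lfloor n/2\rfloor$), this means $Q[X] \in \mathsf{Mom}^{(n)}_2$. By \cref{thm:sep-and-cp}, it suffices to show $Q[X] \in \mathsf{CP}^{(n)}_2$. Both cones are closed convex cones, and $\mathsf{CP}$ is closed by \cite[Proposition 1]{pena2015completely}. Hence it is enough to prove the dual inclusion $\mathsf{Cop}^{(n)}_2 \subseteq \mathsf{SOS}^{(n)}_2$. Together with \cref{thm:duality-moment-sos}, this gives
\[
\mathsf{Mom}^{(n)}_2 = (\mathsf{SOS}^{(n)}_2)^\circ \subseteq (\mathsf{Cop}^{(n)}_2)^\circ = \mathsf{CP}^{(n)}_2 .
\]
The biduality $(\mathsf{Cop})^\circ=\mathsf{CP}$ follows from closedness of $\mathsf{CP}$. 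The equality $\mathsf{Mom}=(\mathsf{SOS})^\circ$ follows from closedness of $\mathsf{SOS}$, noted in the remark after \cref{def:sos-tensors}.

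The key step is the inclusion $\mathsf{Cop}^{(n)}_2 \subseteq \mathsf{SOS}^{(n)}_2$. Take $T \in \mathsf{Cop}^{(n)}_2$. By \cref{def:copositive-tensors}, $p_T(x\odot x) \ge 0$ for all $x \in \mathbb{R}^2$. The polynomial $p_T(x_1^2,x_2^2)$ is a globally nonnegative binary form of degree $2n$. By Hilbert's theorem (the $d=2$ row of \cref{tab:homo-sos}), it is a sum of squares, so $T \in \mathsf{SOS}^{(n)}_2$. This is the statement already recorded as $\mathsf{RSOS}^{(m,0)}_2=\mathsf{Cop}^{(m)}_2$.

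The only delicate point is bookkeeping of the $k$ indices. We must make sure that "PPT across all bipartitions" corresponds to $\mathsf{Mom}^{(n)}_2=\mathsf{Mom}^{(n,\lfloor n/2\rfloor)}_2$, and that the dual cone is the full $\mathsf{SOS}^{(n)}_2=\mathsf{SOS}^{(n,\lfloor n/2\rfloor)}_2$. This identification uses \cref{prop:SOS}, which shows that all polynomials $p$ with $p(x\odot x)$ SOS lie in the cone with $|\alpha|$ ranging down to $n-2\lfloor n/2\rfloor$. By \cref{cor:PPT-balanced-strongest}, PPT for the balanced cut already suffices, so no further case analysis is needed.
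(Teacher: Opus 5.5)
Your proof is correct and is the paper's argument: Hilbert's theorem for binary forms gives $\mathsf{Cop}^{(n)}_2=\mathsf{SOS}^{(n)}_2$, and dualizing through \cref{thm:duality-moment-sos}, \cref{thm:ppt-mom} and \cref{thm:sep-and-cp} gives $\mathsf{Mom}^{(n)}_2=\mathsf{CP}^{(n)}_2$. You only make explicit the biduality step that the paper leaves implicit; in fact only closedness of $\mathsf{CP}^{(n)}_2$ is needed, because $\mathsf{Mom}^{(n)}_2 \subseteq (\mathsf{Mom}^{(n)}_2)^{\circ\circ} = (\mathsf{SOS}^{(n)}_2)^\circ$ holds for any cone, so invoking closedness of $\mathsf{SOS}$ for that step is unnecessary.
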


\begin{proof}
    By Hilbert's result \cite{hilbert1888darstellung} it follows that $\mathsf{Cop}^{(n)}_2 = \mathsf{SOS}^{(n)}_2.$ Therefore, on the dual side it follows that $\mathsf{CP}^{(n)}_2 = \mathsf{Mom}^{(n)}_2$ showing the result.
\end{proof}

We also explicitly show the conditions for PPT and  hence, the necessary and sufficient conditions for separability in the \cref{sec:appendix-proof-qubits}. Moreover, note that the PPT across all bipartitions is implied by just the PPT of the most balanced bipartition. 

\noindent\textbf{Bipartite qudit systems}. For the bipartite mixtures of Dicke states, this problem reduces to the matrix case. Recall the results about homogeneous polynomials of degree $4$ \cite{hilbert1888darstellung}, as presented in \cref{tab:homo-sos-states-bi}. The table implies that for $d = 3$, $\mathsf{SOS}^{(2)}_3 = \mathsf{Cop}_3^{(2)}$ and similarly, on the dual side, $\mathsf{CP}^{(2)}_3 = \mathsf{Mom}^{(2)}_3$. Surprisingly, this equivalence also holds for $d=4$, as stated in the next theorem.

\begin{table}[H] 
\centering 
\renewcommand{\arraystretch}{1.3}
\begin{tabular}{|c|c|}
\hline
\rowcolor{green!10}
$d$ & $2n = 4$ \\
\hline
\rowcolor{yellow!10}
2 & \cellcolor{green!20}Yes \\
\hline
\rowcolor{yellow!10}
3 & \cellcolor{green!20}Yes \\
\hline
\rowcolor{yellow!10}
$\ge 4$ & \cellcolor{red!20}No \\
\hline
\end{tabular}
\caption{Can any homogeneous polynomial of degree $2n$ in $d$ variables be represented as a sum of squares? \label{tab:homo-sos-states-bi}}
\end{table}

\begin{theorem}[\cite{shaked2021copositive,tura2018separability,yu2016separability}]\label{thm:cp-dnn-matrices}
     For $d \leq 4$, $\mathsf{CP}^{(2)}_d = \mathsf{Mom}^{(2)}_d.$ Therefore, for $d \leq 4$, a bipartite mixture of Dicke states is separable if and only if it is PPT.  
\end{theorem}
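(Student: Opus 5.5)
The plan is to identify the tensor cones for $n=2$ with classical matrix cones and then use the Maxfield--Minc theorem. For $n=2$ a symmetric tensor $Q\in\vee^2\mathbb{R}^d$ is just a real symmetric $d\times d$ matrix, and by \cref{def:cp-tensors} we have $\mathsf{CP}^{(2)}_d=\operatorname{cone}\{vv^\top \mid v\in\mathbb{R}^d_+\}$, the usual completely positive matrices.

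Next we unwind \cref{eq:mom-definition}. The even slices of a $2$-tensor are the tensor itself (order $2$) and its entries (order $0$). The slice flattenings are therefore the $d\times d$ matrix $Q$ and the $1\times1$ matrices $Q_{ij}$. Hence $\mathsf{Mom}^{(2)}_d=\mathsf{PSD}_d\cap\mathsf{NN}_d=\mathsf{DNN}_d$, which matches the $n=2$ table after \cref{thm:ppt-mom}.

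Equivalently, on the dual side we use \cref{thm:duality-moment-sos} and the $n=2$ computation after \cref{prop:SOS}, which gives $\mathsf{SOS}^{(2)}_d=\mathsf{NN}_d+\mathsf{PSD}_d=\mathsf{SPN}_d$. The statement $\mathsf{CP}^{(2)}_d=\mathsf{Mom}^{(2)}_d$ is then equivalent, by taking duals of closed cones, to $\mathsf{Cop}^{(2)}_d=\mathsf{SPN}_d$. This is Diananda's theorem for $d\le 4$.

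The core step is the classical fact $\mathsf{DNN}_d=\mathsf{CP}_d$ for $d\le4$, due to Maxfield--Minc; a reference is \cite{shaked2021copositive}. I would cite it rather than reprove it. If a sketch were wanted, I would argue on extreme rays of $\mathsf{DNN}_d$:
\begin{itemize}
\item First reduce to matrices with positive diagonal, by restricting to the principal submatrix on the support.
\item Then scale to unit diagonal, and show that every extreme ray has rank one with a nonnegative generating vector, or reduces to a lower dimension.
\item Rank-one extreme DNN matrices are $vv^\top$ with $v$ of constant sign, so they are CP.
\item Rank-two DNN matrices are CP in any dimension, by a planar cone argument in which nonnegative vectors lie in a $2$D angle spanned by two nonnegative extremes.
\item For rank $\ge 3$ with $d\le 4$, the problem comes down to $d=4$, rank $3$ or $4$. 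There one uses the $4\times4$ structure: either some off-diagonal zero splits the matrix, or one peels off a suitable nonnegative rank-one term while keeping DNN.
\end{itemize}
This last case is the main obstacle, and it is exactly where $d\ge5$ fails (Horn-type counterexamples). I would not redo it.

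Finally, the physics statement follows by combining the dictionary results. Let $X\in\operatorname{Herm}[\mathsf{DS}^{(2)}_d]$ be PPT, meaning $X\in\mathsf{PPT}^{(2,0)}_d\cap\mathsf{PPT}^{(2,1)}_d$. By \cref{thm:ppt-mom}, $Q[X]\in\mathsf{Mom}^{(2)}_d=\mathsf{CP}^{(2)}_d$. By \cref{thm:sep-and-cp}, $X\in\mathsf{Sep}^{(2)}_d$. The converse inclusion is \cref{eq:sep-in-ppt}.
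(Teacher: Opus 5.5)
Your proposal is correct and essentially matches the paper, which gives no independent argument: it identifies $\mathsf{Mom}^{(2)}_d$ with $\mathsf{DNN}_d=\mathsf{NN}_d\cap\mathsf{PSD}_d$, cites the classical Maxfield--Minc/Diananda result (via \cite{shaked2021copositive}), and the separability statement then follows from \cref{thm:sep-and-cp} and \cref{thm:ppt-mom}, exactly as you do. The paper adds only one remark: the case $d=3$ also follows directly from Hilbert's theorem that nonnegative ternary quartics are sums of squares, which gives $\mathsf{Cop}^{(2)}_3=\mathsf{SOS}^{(2)}_3$ and hence, by duality, $\mathsf{CP}^{(2)}_3=\mathsf{Mom}^{(2)}_3$.
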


Recall that $\mathsf{Mom}^{(2)}_d = \mathsf{NN}_d \, \cap \, \mathsf{PSD}_d := \mathsf{DNN}_d.$ For the $d=5$, the following matrix is an element of doubly non-negative ($\mathsf{DNN}_d$) but not $\mathsf{CP}_d$ as shown in \cite{gulati2025entanglement}.

\[
A =
\begin{pmatrix}
2\cos{\tfrac{\pi}{5}} & 1 & 0 & 0 & 1 \\
1 & 2\cos{\tfrac{\pi}{5}} & 1 & 0 & 0 \\
0 & 1 & 2\cos{\tfrac{\pi}{5}} & 1 & 0 \\
0 & 0 & 1 & 2\cos{\tfrac{\pi}{5}} & 1 \\
1 & 0 & 0 & 1 & 2\cos{\tfrac{\pi}{5}}
\end{pmatrix}.
\]

\begin{remark}
    The polynomial $f(x,y,z,w) = x^2y^2 + y^2z^2 + z^2x^2 + w^4 - 4 xyzw$ is a positive polynomial which is not SOS, in line with \cref{tab:homo-sos-states-bi}. But the previous theorem implies that there doesn't exist any "square root" of this polynomial, i.e $f(x,y,z,w) \neq p (x^2, y^2, z^2, w^2)$ for any polynomial $p$.
\end{remark}

\bigskip

\noindent\textbf{Three qutrits.} It was conjectured in \cite{romero2025multipartite} that for local dimensions $d=3,4$, there is no PPT entanglement in the class of mixtures of Dicke states for \emph{any} number of systems, analogous to the bipartite case. We will show that this conjecture is false by constructing explicit examples of PPT entangled states in three qutrits. This also implies (\cref{thm:qutrits-and-beyond}) that there exist PPT entangled states in all the remaining cases, $d \geq 3$ and $n \geq 3$.

Let us first focus on the construction of witnesses for PPT entanglement, i.e, finding explicit copositive tensors that are not SOS. This is also equivalent to finding homogeneous polynomials $p$ such that $p(x \odot x)$ is globally non-negative but not a sum of squares. Surprisingly, a lot of these examples already exist in the literature. Let us present a few of them. 

\begin{tcolorbox}[
  colback=gray!4,
  colframe=gray!75!black,
  enhanced,
  boxrule=0.8pt,
  arc=2mm,
  left=6pt,right=6pt,top=6pt,bottom=6pt
]

\textsc{Motzkin Polynomial} \cite{Motzkin1967}

\[
p_{\mathrm{mot}}(x,y,z)
  = x^2y + y^2x + z^3 - 3xyz
\]

\vspace{4mm}

\textsc{Robinson Polynomial} \cite{robinson1973some}

\[
\begin{aligned}
p_{\mathrm{rob}}(x,y,z)
  &= x^6 + y^6 + z^6 + 3x^2y^2z^2 - \bigl(
      x^2y^4 + x^4y^2
    + x^2z^4 + x^4z^2
    + y^2z^4 + y^4z^2
    \bigr)
\end{aligned}
\]

\end{tcolorbox}

The exact conditions for positivity and sum of squares for symmetric polynomials (symmetric under the exchange of any variables) have been considered in \cite{choi1987even}, which is a rich source of examples of positive polynomials that are not SOS. The exact conditions for positivity of such polynomials, along with the SOS conditions, are outlined in \cite{choi1987even}. 
\begin{example}[{{\cite[Theorems 3.7 and 4.25]{choi1987even}}}]
\label{ex:positive-non-sos}
    Consider $M_r = \sum^d_{i=1} x_i^r$ and the following polynomial.
    $$p_{a,b,c}(x) = a M_3 + b M_1 M_2 + c M^3_1.$$
    This polynomial is a homogeneous polynomial of degree $3$ in $d$ variables. Define the polynomial $p^*_{a,b,c}(t) = a + bt + ct^2$. 
    Then, we have the following equivalences: 
    \begin{itemize}
        \item $\forall x \geq 0, \quad p_{a,b,c}(x) \geq 0 \iff \forall k \in [d], \quad p^*_{a,b,c}(k) \geq 0.$ 
        \item $p_{a,b,c}(x \odot x) \in \Sigma^{(6)}_d \iff \forall t \in \{1\} \cup [2,d], \quad p^*_{a,b,c}(t) \geq 0.$
    \end{itemize}

In particular, Robinson's polynomial $R$ is given by $a = 3, b= -5/2, c= 1/2$. The corresponding $R^*$ polynomial can be factorized as, 
$R^*(t) = \frac{1}{2} (6 - 5t+ t^2) = \frac{1}{2} (2-t)(3-t).$ This implies that:
$$R^*(t) < 0 \iff 2 < t < 3$$
that violates the SOS condition. Surprisingly, in these constructions, the positivity and SOS conditions are independent of the number of variables $d$. Therefore, for the same parameter values, this provides examples of tensors in $\mathsf{Cop}^{(3)}_d$ that are not in $\mathsf{SOS}^{(3)}_d$ for all $d$. 

\end{example}

Let us discuss the case of Motzkin's polynomial in detail, which, in fact, was the first explicit positive polynomial that wasn't a sum of squares. Firstly, it is easy to check that this polynomial is positive for all $x \geq 0$  by the AM-GM inequality. 

$$p_{\mathrm{mot}}(x,y,z) = x^2 y + y^2 x + z^3 - 3 xyz = 3 (\frac{x^2 y + y^2 x + z^3}{3}- xyz) \geq 0.$$

We now show that the Motzkin polynomial is not SOS.  This proof strategy is well-known in the literature. 

\begin{theorem}
    The Motzkin polynomial "entrywise squared" is not a sum of squares.
    $$p_{\mathrm{mot}} (x \odot x) \notin \Sigma^{(2n)}_d.$$
\end{theorem}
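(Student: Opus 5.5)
We work with the degree-six form
$$m(x,y,z) := p_{\mathrm{mot}}(x\odot x) = x^4y^2 + x^2y^4 + z^6 - 3x^2y^2z^2 ,$$
which is the Motzkin form in its usual shape. The plan is the classical Newton-polytope (term-by-term) argument. Suppose, for contradiction, that $m = \sum_{k=1}^K f_k^2$ with each $f_k$ a homogeneous cubic in $(x,y,z)$. The argument has three steps: first restrict the monomials that can appear in the $f_k$, then compare the coefficient of $x^2y^2z^2$ on both sides, and finally observe that the two sides have opposite signs.

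\textbf{Restricting the support of the $f_k$.} We use the standard fact that the Newton polytope of each $f_k$ lies in $\tfrac12 \operatorname{New}(m)$. This holds because leading terms cannot cancel in a sum of squares: for any linear functional, the extreme monomials of $\sum_k f_k^2$ come from squares of extreme monomials of the $f_k$, and these carry positive coefficients. We could also argue directly by setting variables to zero.
\begin{itemize}
\item Setting $y=0$ gives $m = z^6$. So $f_k(x,0,z)^2$ sums to $z^6$, which forces $f_k(x,0,z) = c_k z^3$. Hence no monomial $x^3$, $x^2z$ or $xz^2$ appears in $f_k$.
\item By symmetry, setting $x=0$ removes $y^3$, $y^2z$ and $yz^2$.
\end{itemize}
The monomials left are $z^3$, $x^2y$, $xy^2$ and $xyz$. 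These are exactly the lattice points of $\tfrac12\operatorname{New}(m)$, namely the triangle with vertices $(2,1,0)$, $(1,2,0)$, $(0,0,3)$, whose only interior lattice point is $(1,1,1)$. Thus
$$f_k = a_k x^2y + b_k xy^2 + c_k z^3 + e_k xyz .$$

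\textbf{Comparing the $x^2y^2z^2$ coefficient.} We list every way to obtain $x^2y^2z^2$ as a product of two of the allowed monomials:
\begin{itemize}
\item $(xyz)^2$, contributing $e_k^2$;
\item $x^2y\cdot xy^2 = x^3y^3$, which does not match;
\item $z^3 \cdot x^2y$ and $z^3\cdot xy^2$, which do not match.
\end{itemize}
So the coefficient of $x^2y^2z^2$ in $\sum_k f_k^2$ equals $\sum_k e_k^2 \ge 0$.

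\textbf{Conclusion.} In $m$ this coefficient is $-3$, which is a contradiction. Hence $p_{\mathrm{mot}}(x\odot x)\notin\Sigma^{(6)}_3$, that is, $W \notin \mathsf{SOS}^{(3)}_3$ for the tensor $W$ with $p_W = p_{\mathrm{mot}}$.

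The main obstacle is justifying the support restriction rigorously. The substitution argument above handles the pure-$x$/$z$ and pure-$y$/$z$ monomials. The remaining candidates involving only $x,y$ (that is, $x^3$ and $y^3$) are already excluded by these substitutions. So the substitution route seems sufficient and avoids invoking Newton polytopes abstractly. Since this argument is self-contained, we would not need to cite \cref{prop:SOS}.
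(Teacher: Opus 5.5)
Your proof is correct and follows the paper's own argument. Like the paper, you confine each cubic $f_k$ to the monomials $x^2y$, $xy^2$, $xyz$, $z^3$ (the lattice points of $\tfrac12\mathcal{N}(p_{\mathrm{mot}}(x\odot x))$), and then observe that the $x^2y^2z^2$ coefficient of $\sum_k f_k^2$ is $\sum_k e_k^2\ge 0$, which contradicts $-3$. The one difference is that you justify the support restriction by the substitutions $y=0$ and $x=0$ instead of the Newton-polytope identity. That route is slightly more elementary, and the step $\sum_k f_k(x,0,z)^2=z^6\Rightarrow f_k(x,0,z)=c_kz^3$, which you should state explicitly, follows by comparing the $x^6$, $x^4z^2$ and $x^2z^4$ coefficients in turn.
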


\begin{proof}
    We recall that the Newton polytope of a general polynomial expanded in the polynomial basis  is given by 
    $$\mathcal{N}(p) = \operatorname{conv}\{\alpha \in \mathbb{N}^d \mid p(x) = \sum_{\alpha \in \mathbb{N}^d} \lambda_\alpha x^\alpha, \, \lambda_\alpha  \neq 0\}$$
    Then, we have the following property, 
    $\mathcal{N}(pq) = \mathcal{N}(p)+\mathcal{N}(q)$ and hence $\mathcal{N}(p^2) = 2\mathcal{N}(p).$
    Moreover, we have $\mathcal{N}(p+q) = \operatorname{conv}\{\mathcal{N}(p) \cup \mathcal{N}(q)\}$. Now, for any polynomial that is a sum of squares, i.e, $p_{\mathrm{mot}}(x \odot x) = \sum_k f^2_k(x)$, we have, 

    $$\mathcal{N}(p_{\mathrm{mot}}(x \odot x)) = 2 \operatorname{conv}\{\bigcup_k \mathcal{N}(f_k )\}$$    

    Let us assume to the contrary that the Motzkin polynomial is a sum of squares. The Newton polytope of the Motzkin polynomial is given by
    $$\mathcal{N}(p_{\mathrm{mot}}(x \odot x)) = \operatorname{conv}\{(4,2,0), (2,4,0), (0,0,6), (2,2,2)\}$$ and thus
    $$1/2 \cdot \mathcal{N}(p_{\mathrm{mot}}(x \odot x)) = \operatorname{conv}\{(2,1,0), (1,2,0), (0,0,3), (1,1,1)\}.$$
    Note that $\mathcal{N}(f_k) \subseteq 1/2 \cdot \mathcal{N}(p_{\mathrm{mot}}(x \odot x))$. Therefore, we can assume that $f_k = (a_k x_1x^2_2 + b_k x^2_1x_2 + c_k x_1 x_2 x_3 + d_k x^3_3)$ and hence the coefficient of $x^2_1 x^2_2 x^2_3$ is $\sum_k c_k^2 \geq 0$, which is a contradiction with the value $-3$ in $p_{\mathrm{mot}}(x \odot x)$. Therefore, the Motzkin polynomial cannot be a sum of squares.
\end{proof} The proof of the previous theorem actually works for all polynomials of the form,

$$p(x_1,x_2,x_3) = x^4_1 x^2_2 + x^4_2 x^2_1 + x^6_3 - 
    \alpha x^2_1x^2_2x^2_3$$ which is a positive, not SOS polynomial for $\alpha \in (0, 3]$. For a historical perspective on the problem of positive polynomials and sum of squares, we recommend the book \cite[Chapter 2]{powers2021certificates}.

Therefore, since we showed that there exist copositive tensors in $\vee^3 \mathbb{R}^3$ that are not SOS tensors, by duality, there are PPT entangled diagonally symmetric states for $3$ qutrits. 
For any polynomial $p$, we can define the inequality on completely positive tensors by using the tensor-polynomial correspondence. The strategy is as follows:  Select a polynomial $f(x)$ such that $\forall x \geq 0, f(x) \geq 0.$ Then, we use the correspondence to construct a copositive tensor $T$ such that $p_T = f.$ Then, for all $Q \in \mathsf{CP}^{(n)}_d$ we have, 
$$\braket{Q,T} \geq 0.$$ Moreover, if $f(x \odot x)$ is not a sum of squares, the tensor $T \notin \mathsf{SOS}^{(n)}_d$ acts as an indecomposable witness. Let us now present an explicit example of a PPT entangled state in $3$ qutrits. 

\begin{example}
Let $Q \in \vee^3 \mathbb{R}^3$ have entries
\[
Q_{ijk} =
\begin{cases}
p, & \text{if } i=j=k, \\[4pt]
r, & \text{if } i,j,k \text{ are all distinct}, \\[4pt]
q, & \text{otherwise}.
\end{cases}
\]
For each $i$, define the slice matrix $S^{(i)} \in \mathbb{R}^{3\times 3}$ by
\[
S^{(i)}_{jk} := Q_{ijk}.
\]
All slices are proportional to
\[
S^{(i)} \sim
\begin{bmatrix}
p & q & q \\[6pt]
q & q & r \\[6pt]
q & r & q
\end{bmatrix}.
\] The PPT conditions (see \cref{tab:ppt-n3}) imply that for all $i \in \{1,2,3\}$ we must have
$S^{(i)} \succeq 0$. This matrix is PSD if and only if \[
p \ge q \ge r \ge 0,
\qquad
p(q+r) \ge 2 q^2.
\] Using Robinson's polynomial defined previously, we obtain the inequality
\[
\forall Q \in \mathsf{CP}^{(3)}_d \quad
Q_{111} + Q_{222} + Q_{333} + 3 Q_{123}
- Q_{112} - Q_{122} - Q_{113} - Q_{133}
- Q_{223} - Q_{233} \ge 0.
\]
Under the present parametrization, this becomes
\[
3p + 3r - 6q \;\ge\; 0.
\] Hence, we obtain the optimization problem.
\[
\begin{aligned}
\eta^{\star} =\;& \text{minimize} && 3p + 3r - 6q \\[4pt]
& \text{subject to} &&
p \ge q \ge r \ge 0, \\
&&& p(q+r) \ge 2 q^2, \\
&&& 3p + 18q + 6r = 1.
\end{aligned}
\]

The last constraint comes from the normalization of the state. A numerical computation yields an optimal value $\eta^{\star} \approx -0.02 < 0$. In particular, there exists a feasible triple $(p,q,r)$ with $3p + 3r - 6q < 0$, showing that the tensor is not CP even though all slices are PSD. This tensor corresponds to a PPT entangled $3$ qutrit mixture of Dicke states.
\end{example}

\begin{remark}
    Note that the PPT entangled state that we construct has no bipartite reduction that is entangled. This is because the marginals of diagonally symmetric bosonic states are in the same class (ref. \cref{thm:marginals-dicke-states}), and in $3 \vee 3$ systems, the states in the DS subspace have no PPT entanglement (ref. \cref{thm:cp-dnn-matrices}). 
    In terms of the permutation symmetric mixtures of Dicke states from \cref{sec:sd-symmetry}, the tensor $Q$ in the example above corresponds to the matrix 
    $$3\alpha \rho_{(3,0,\ldots,0)} + 6 \beta \rho_{(2,1,0\ldots,0)} +  \gamma \rho_{(1,1,1,0\ldots,0)}.$$
    The values discussed in the example above allow us to construct a matrix with $S_d$ symmetry which is PPT entangled in the case of 3 qutrits ($n=d=3$). 

    This is because the witness we consider based on the Robinson's polynomials is symmetric under variable permutation; such polynomials correspond dually to the sub-family of mixtures of Dicke states that have local basis permutation symmetry (\cref{sec:sd-symmetry}). 
\end{remark}

We conclude the section by showing that such positive non-SOS polynomials can be constructed for all $d > 3$ and $n> 3$. This conclusively settles the question of the existence of PPT entangled states in the diagonally symmetric subspace.  
\begin{theorem}[PPT entanglement in $d \geq 3, n\geq 3$]
\label{thm:qutrits-and-beyond}
    There exist entangled states that are PPT across all bipartitions for all $d \geq 3, n \geq 3$ in the DS subspace. 
\end{theorem}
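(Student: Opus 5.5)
By duality, the goal is to exhibit, for every $d\ge 3$ and $n\ge 3$, a tensor $T\in\mathsf{Cop}^{(n)}_d\setminus\mathsf{SOS}^{(n)}_d$. Since $\mathsf{SOS}^{(n)}_d$ is a closed convex cone, \cref{thm:duality-moment-sos} (with $k=\lfloor n/2\rfloor$) then yields $Q\in\mathsf{Mom}^{(n)}_d$ with $\braket{Q,T}<0$. As $T$ is copositive, $Q\notin\mathsf{CP}^{(n)}_d$. By \cref{thm:ppt-mom} and \cref{thm:sep-and-cp}, the DS matrix with tensor $Q$ is PPT across all bipartitions but entangled. It is PSD because $\mathsf{Mom}^{(n)}_d\subseteq\mathsf{NN}^{(n)}_d$ (the order-$0$ or order-$1$ slices give entrywise nonnegativity). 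Normalising gives a state.

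The base case $n=d=3$ is already settled by the Motzkin polynomial: $p_{\mathrm{mot}}$ is nonnegative on $\mathbb{R}^3_+$ and $p_{\mathrm{mot}}(x\odot x)$ is not SOS. To reach larger $d$ and $n$, I set
\[
f(x_1,\dots,x_d):=x_1^{\,n-3}\,p_{\mathrm{mot}}(x_1,x_2,x_3),
\]
a homogeneous form of degree $n$ in $d$ variables that is nonnegative on $\mathbb{R}^d_+$. Its tensor $T$ with $p_T=f$ is therefore copositive. An alternative for large $n$ is to multiply by $(x_1+x_2+x_3)^{n-3}$, but the monomial multiplier keeps the Newton-polytope argument simplest.

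The main step is to show that $g(x):=f(x\odot x)=x_1^{2(n-3)}\,p_{\mathrm{mot}}(x\odot x)$ is not a sum of squares. Suppose $g=\sum_k h_k^2$. Newton polytopes give $\mathcal{N}(h_k)\subseteq \tfrac12\mathcal{N}(g)=(n-3)e_1+\operatorname{conv}\{(2,1,0),(1,2,0),(0,0,3),(1,1,1)\}$, using the same additivity properties as in the Motzkin proof. Thus every monomial in each $h_k$ has the form $x_1^{n-3}\cdot m$, with $m$ among the four cubic monomials $x_1^2x_2,\ x_1x_2^2,\ x_3^3,\ x_1x_2x_3$. The monomials from the variables $x_4,\dots,x_d$ are excluded automatically, since the polytope lies in the first three coordinates.

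Write $h_k=x_1^{n-3}u_k$. Then $g=x_1^{2(n-3)}\sum_k u_k^2$, and $\sum_k u_k^2=p_{\mathrm{mot}}(x\odot x)$ as polynomials, because dividing by the nonzero monomial is legitimate in the polynomial ring. This contradicts the earlier theorem. More directly, the coefficient of $x_1^{2(n-3)}x_1^2x_2^2x_3^2$ in $\sum_k h_k^2$ is $\sum_k c_k^2\ge 0$, whereas in $g$ it is $-3$. The point I expect to need care is the containment $\mathcal{N}(h_k)\subseteq\frac12\mathcal{N}(g)$. This is the standard fact that leading terms of squares on faces cannot cancel, as already used for the Motzkin case. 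I would cite that fact, or recheck it via the extreme points of the polytope.

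Combining these steps, $T\in\mathsf{Cop}^{(n)}_d\setminus\mathsf{SOS}^{(n)}_d$ for all $d\ge 3$ and $n\ge 3$, and the duality argument of the first paragraph produces the required PPT entangled DS state. For $n\ge 4$ one must confirm that the relevant dual is indeed $\mathsf{Mom}^{(n)}_d=\mathsf{Mom}^{(n,\lfloor n/2\rfloor)}_d$, which is the cone matching PPT across all bipartitions (\cref{cor:PPT-balanced-strongest}).
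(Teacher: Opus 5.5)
Your proof is correct, and it uses essentially the same family of examples as the paper, but proves non-SOS by a different argument.

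The paper takes a non-SOS base case at $n=3$ for each $d\ge 3$: Motzkin or Robinson for $d=3$, and the symmetric cubics of \cref{ex:positive-non-sos} for general $d$. It then inducts on $n$ via $p\mapsto x_1p$. Non-SOS is preserved by a divisibility argument: if $x_1^2p(x\odot x)=\sum_i h_i^2$, every $h_i$ vanishes on $\{x_1=0\}$, so $h_i=x_1z_i$, and cancelling $x_1^2$ would make $p(x\odot x)$ SOS.

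You instead fix the single explicit form $x_1^{n-3}p_{\mathrm{mot}}(x_1,x_2,x_3)$, read in $d$ variables, and use the Newton polytope. The set $\tfrac12\mathcal{N}(g)$ contains only the four lattice points $(n-3)e_1+\{(2,1,0),(1,2,0),(1,1,1),(0,0,3)\}$. This forces the factor $x_1^{n-3}$ in every $h_k$ and makes the coefficient of $x_1^{2n-4}x_2^2x_3^2$ a sum of squares, contradicting $-3$.

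What each buys: the paper's divisibility step does not depend on the base polynomial, so it carries any non-SOS example (for instance the $S_d$-symmetric Robinson-type witnesses) to all $n$. Your argument depends on the shape of Motzkin's polytope, but it is self-contained. It covers every $d\ge 3$ at once by padding with dummy variables, since the polytope lies in the $x_1,x_2,x_3$ face, so no appeal to the symmetric-cubic classification is needed.

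You also write out the duality step the paper leaves implicit: separation from $\mathsf{Mom}^{(n)}_d$, exclusion from $\mathsf{CP}^{(n)}_d$, positivity, and normalisation. Two minor slips there. First, closedness of $\mathsf{SOS}^{(n)}_d$ is not needed: $T\notin(\mathsf{Mom}^{(n)}_d)^\circ$ already gives the separating $Q$ by definition of the dual cone. Second, entrywise nonnegativity comes from the order-$0$ slices (the $j=0$ moment matrices) for either parity, not from order-$1$ slices, which are not among the even slices.
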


\begin{proof}
We will exhibit homogeneous polynomials $p \in \homo{d}{n}$ in $d$ variables and degree $n$ such that $p(x \odot x) \geq 0$ but $p(x \odot x)$ is not a sum of squares. Note that for $d=3, n=3$, the Motzkin/Robinson polynomials are of this type. Moreover, \cref{ex:positive-non-sos} includes such polynomials for all $d \geq 3, n=3$. We show that for each $d$, examples for each $n$ can be obtained by induction. 

Let $p$ be a homogeneous polynomial of degree $k$ such that $p(x \odot x)$ is not a sum of squares. Then, $\tilde p = x_1 \cdot p$ is a homogeneous polynomial of degree $k + 1$. It is easy to verify that the polynomial $\tilde p (x \odot x)$ is positive as $\tilde p (x \odot x) = x_1^2 p(x \odot x) \geq 0.$ Assume that it has an SOS decomposition,
$$\tilde p (x \odot x) = \sum_{i} h^2_i (x).$$ Moreover, if $x_1 = 0$, it implies that $\tilde p (x \odot x) = 0$ and hence $h_i(x) = 0.$ This implies that $h_i = x_1 z_i$ for some homogenous polynomials $z_i \in \homo{d}{k}.$ Therefore, $x_1^2 p(x \odot x) = \tilde p (x \odot x)= \sum_{i} h^2_i (x) = x^2_1 z^2_i.$ This implies $p(x \odot x)$ is a sum of squares. Therefore, by contradiction, $\tilde p (x \odot x)$ is not a SOS polynomial. By starting this induction from the case $n=3$, we prove the theorem.
\end{proof}

Let us summarise our results in the table \ref{tab:ppt-entanglement-mods}.

\begin{table}[H]

\centering
\renewcommand{\arraystretch}{1.3}
\setlength{\arrayrulewidth}{0.5pt}
\arrayrulecolor{black}
\definecolor{lightblue}{RGB}{173, 216, 230}
\definecolor{lightorange}{RGB}{255, 204, 153}
\definecolor{lightgreen}{RGB}{204, 255, 204}

    \begin{minipage}[c]{0.55\linewidth}
        \centering
        \begin{tabular}{|c|c|c|c|}
        \hline
        \rowcolor{gray!20}
         & $n=2$ & $n=3$ & $n \geq 4$\\
        \hline
        $d=2$       & \cellcolor{lightblue}No & \cellcolor{lightblue}No & \cellcolor{lightblue}No \\
        \hline
        $d=3$       & \cellcolor{lightblue}No & \cellcolor{lightgreen}Yes & \cellcolor{lightgreen}Yes\\
        \hline
        $d=4$ & \cellcolor{lightblue}No & \cellcolor{lightgreen}Yes  & \cellcolor{lightgreen}Yes \\
        \hline
        $d \ge 5$ & \cellcolor{lightblue}Yes & \cellcolor{lightorange}Yes  & \cellcolor{lightorange}Yes \\
        \hline
        \end{tabular}
    \end{minipage} \hfill
    \begin{minipage}[c]{0.4\linewidth}
        \raggedright 
        \begin{tabular}{@{}ll@{}}
        \cellcolor{lightblue}\hspace{1.5em} & \cite{tura2018separability, yu2016separability} \\[4pt]
        \cellcolor{lightgreen}\hspace{1.5em} & This article \\[4pt]
        \cellcolor{lightorange}\hspace{1.5em} & \cite{romero2025multipartite, gulatiposext} 
        \end{tabular}
    \end{minipage}
    \caption{Is there PPT entanglement in mixtures of Dicke states of $n$ parties and local dimension $d$?}
    \label{tab:ppt-entanglement-mods}
\end{table}

\subsection{Marginals of entangled Dicke states are NPT}

In \cite{szalay2025dicke}, it was shown that for any marginal of a pure Dicke state $\ket{D_\alpha}$, such that $\operatorname{supp}(\alpha) > 1,$ is NPT (Negative Partial Transpose) across all bipartitions. If $\operatorname{supp}(\alpha) = 1$, then the state $\ket{D_\alpha}$ is separable. We recover the results slightly differently; we show that all the $2$-body marginals are entangled as they have negative partial transpose. Our proof is relatively straightforward, thanks to the characterization of marginals of Dicke mixtures from \cref{thm:marginals-dicke-states}.

\begin{proposition}
\label{prop:2-body-marginals}
    Consider an entangled Dicke state $\ket{D_\alpha}$ with $\alpha \in \knd{n}{d}$ such that $\operatorname{supp}(\alpha) >1$. Then, all the $2$-body marginals of $\ket{D_\alpha}$ have negative partial transpose (i.e.~they are NPT).
\end{proposition}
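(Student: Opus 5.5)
By \cref{thm:marginals-dicke-states}, every $2$-body marginal $\rho := \operatorname{Tr}_{A}(\ketbra{D_\alpha}{D_\alpha})$ with $|A| = n-2$ is again a diagonally symmetric matrix. Its tensor is the marginal of the tensor of $\ket{D_\alpha}$. By \cref{prop:ppt-symmetric} all two-site marginals coincide, so it suffices to treat one of them. By \cref{thm:ppt-mom} with $n=2$ (see the table for $n=2$), $\rho$ is PPT if and only if the $d\times d$ matrix $M := Q[\rho]$ is doubly non-negative. It is entrywise non-negative automatically, so I will show that $M$ is \emph{not} positive semidefinite. That makes $\rho$ NPT and hence entangled.

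First I compute $M$ explicitly from \cref{prop:marginal-dicke-pure} with $|A|=n-2$. For $j=(a,b)\in[d]^2$, set $\gamma(j)=e_a+e_b$. Then
\[
M_{ab} = \frac{\binom{n-2}{\alpha - e_a - e_b}}{\binom{n}{\alpha}}\,\mathbb{1}_{e_a+e_b\le\alpha}.
\]
Simplifying the multinomial ratio gives $M_{ab} = \frac{\alpha_a\alpha_b}{n(n-1)}$ for $a\neq b$ and $M_{aa} = \frac{\alpha_a(\alpha_a-1)}{n(n-1)}$. These formulas automatically vanish when the indicator fails. Hence
\[
n(n-1)\,M = \alpha\alpha^\top - \operatorname{diag}(\alpha).
\]

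Next I show this matrix is not PSD when $\operatorname{supp}(\alpha)>1$. Pick two indices $a\ne b$ with $\alpha_a,\alpha_b\ge 1$ and test with a vector $v$ supported on $\{a,b\}$ satisfying $\alpha^\top v = 0$, for instance $v = \alpha_b e_a - \alpha_a e_b$. Then
\[
v^\top(\alpha\alpha^\top - \operatorname{diag}(\alpha))v = -(\alpha_a\alpha_b^2 + \alpha_b\alpha_a^2) = -\alpha_a\alpha_b(\alpha_a+\alpha_b) < 0.
\]
So $M\not\succeq 0$. Equivalently, the $2\times2$ principal minor on $\{a,b\}$ equals $\alpha_a(\alpha_a-1)\alpha_b(\alpha_b-1)-\alpha_a^2\alpha_b^2<0$, which gives the same conclusion.

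I expect the only delicate step to be the multinomial simplification in the first paragraph: checking the ratio $\binom{n-2}{\alpha-e_a-e_b}/\binom{n}{\alpha}$, and handling the boundary cases $\alpha_a=1$ on the diagonal and $\alpha_a=0$ uniformly. The rest is a one-line quadratic form evaluation. I would also note that $n\ge 2$ is needed for $2$-body marginals to exist, and that the case $n=2$ is covered, with $A=\emptyset$.
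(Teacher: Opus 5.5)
Your proof is correct and follows the paper's route. The paper likewise computes the $2$-body marginal tensor from \cref{prop:marginal-dicke-pure} and shows that the $2\times 2$ principal minor on two indices $a\neq b$ in the support of $\alpha$ is negative, so $Q$ is not PSD and the marginal is NPT. Your closed form $n(n-1)Q=\alpha\alpha^\top-\operatorname{diag}(\alpha)$ is a tidy improvement: it treats the case $\alpha_a=1$ uniformly, whereas the paper's factorial expression formally involves $(\alpha_a-2)!$.

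One small correction: the equality of all two-site marginals comes from the permutation invariance of $\ket{D_\alpha}$, not from \cref{prop:ppt-symmetric}. This does not affect the argument, since your computation already covers every $A$ with $|A|=n-2$.
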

\begin{proof}

    Consider $l \neq m \in \operatorname{supp}(\alpha)$, and let $Q$ be the $Q$ parametrization of the marginal of the Dicke state corresponding to indices $l,m \in [d]$. By \cref{prop:marginal-dicke-pure} we have
    $$Q_{lm}  = \frac{\binom{n-2}{\alpha-\gamma(lm)}}{\binom{n}{\alpha}}\,\mathbb{1}_{\gamma(lm)\le \alpha},$$
    where $\gamma(lm)$ is the vector having 1's in positions $l$ and $m$ and 0's elsewhere. 
    We now claim that $Q_{ll} Q_{mm} < Q^2_{lm}$. To see this, we expand the two expressions:
\begin{align*}
Q_{ll} Q_{mm}
&= 
\frac{\binom{n-2}{\alpha-\gamma(ll)}}{\binom{n}{\alpha}}\,
\mathbb{1}_{\{\gamma(ll)\le \alpha\}}\;
\frac{\binom{n-2}{\alpha-\gamma(mm)}}{\binom{n}{\alpha}}\,
\mathbb{1}_{\{\gamma(mm)\le \alpha\}} \\[0.8em]
&=
\frac{1}{\binom{n}{\alpha}^2}\,
\frac{(n-2)!^2}{
\displaystyle 
\prod_{i\neq l,m} (\alpha_i!)^2\,
(\alpha_l-2)!\,\alpha_l!\,(\alpha_m-2)!\,\alpha_m!
}\,
\mathbb{1}_{\{\gamma(ll),\,\gamma(mm)\le \alpha\}} \\[0.8em]
& < 
\frac{1}{\binom{n}{\alpha}^2}\,
\frac{(n-2)!^2}{
\displaystyle 
\prod_{i\neq l,m} (\alpha_i!)^2\,
(\alpha_l-1)!^2\,(\alpha_m-1)!^2
}\,
\mathbb{1}_{\{\gamma(lm)\le \alpha\}} = Q^2_{lm}.
\end{align*}
where the strict inequality follows from noting that 
$\mathbb{1}_{\{\gamma(lm)\le \alpha\}} = 1$ and $$(\alpha_l-2)!\,\alpha_l!\,(\alpha_m-2)!\,\alpha_m! > (\alpha_l-1)!^2 (\alpha_m-1)!^2.$$ This shows that the matrix $Q$ is not PSD, and hence the $2$-body marginal state $X$ is NPT. 
\end{proof}

\begin{corollary}[\cite{szalay2025dicke}]
  Let $\alpha \in \knd{n}{d}$ such that $\operatorname{supp}(\alpha) >1$. Then, all the marginals of the entangled pure Dicke state $\ket{D_\alpha}$ are NPT across all bipartitions.
\end{corollary}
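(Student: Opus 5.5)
The plan is to reduce everything to \cref{prop:2-body-marginals} by tracing out further subsystems, using the fact that partial traces preserve the PPT property across the induced bipartition.

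First, fix a subset $A \subseteq [n]$ and set $\rho := \operatorname{Tr}_A(\ketbra{D_\alpha}{D_\alpha})$, a state on $m := n-|A|$ parties. Only marginals with $m \geq 2$ admit a nontrivial bipartition, so assume $m \ge 2$. By \cref{thm:marginals-dicke-states}, $\rho \in \mathsf{DS}^{(m)}_d$ is bosonic. Hence, by \cref{prop:ppt-symmetric}, it suffices to treat a bipartition $B : C$ of the $m$ remaining parties with $|B| = k$ and $|C| = m-k$, where $1 \le k \le m-1$. We must show that $\rho^{\Gamma_B}$ is not PSD.

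Second, argue by contradiction and suppose $\rho^{\Gamma_B} \succeq 0$. Choose one party $b \in B$ and one party $c \in C$, and let $B' := B \setminus \{b\}$ and $C' := C \setminus \{c\}$. Tracing out $C'$ commutes with $\Gamma_B$, since these act on disjoint factors. Tracing out $B'$ also works, because the trace is invariant under transposition of the traced factor: $\operatorname{Tr}_{B'}(Y^{\Gamma_{B'}}) = \operatorname{Tr}_{B'}(Y)$. Consequently,
$$\operatorname{Tr}_{B' \cup C'}\big(\rho^{\Gamma_B}\big) = \big(\operatorname{Tr}_{B'\cup C'}\rho\big)^{\Gamma_{\{b\}}} .$$
The left-hand side is PSD, since partial traces of PSD operators are PSD. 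Thus the two-party state $\sigma := \operatorname{Tr}_{B'\cup C'}\rho$ is PPT.

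Third, observe that $\sigma = \operatorname{Tr}_{A \cup B' \cup C'}(\ketbra{D_\alpha}{D_\alpha})$ is a $2$-body marginal of $\ket{D_\alpha}$, because iterated partial traces compose. Since $\operatorname{supp}(\alpha) > 1$, \cref{prop:2-body-marginals} says $\sigma$ is NPT, which is a contradiction. Hence $\rho$ is NPT across every bipartition.

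The only point needing care is the second step, namely the commutation of the partial transpose with tracing out transposed factors. This is a standard identity, so I expect no real obstacle. The substantive content already lies in \cref{prop:2-body-marginals}.
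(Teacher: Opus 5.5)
Your proof is correct and follows the same route as the paper. Both trace out all but one party on each side of the cut, commute the partial trace past the partial transpose, and conclude from \cref{prop:2-body-marginals} that the resulting two-body marginal has a non-PSD partial transpose; your version is cleaner than the paper's index-loose argument, since it states explicitly why tracing out transposed factors is harmless, while the appeal to \cref{prop:ppt-symmetric} is unnecessary because you already handle an arbitrary bipartition.
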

\begin{proof}
    Let $X = \operatorname{Tr}_{[r]}(\ketbra{D_\alpha}{D_\alpha})$ for $1 \leq r < n$. Then, we claim that $X^{\Gamma_{[k]}} \nsucceq 0$ for all $1 \leq k < r$. This can be shown by noting that
    $$\operatorname{Tr}_{[r] \backslash \{1,k+1\}}(X^{\Gamma_{[k]}}) = \big(\operatorname{Tr}_{[r] \backslash \{1,k+1\}}(X)\big)^\Gamma = \big(\operatorname{Tr}_{[n] \backslash \{1,k+1\}}(\ketbra{D_\alpha}{D_\alpha})\big) ^\Gamma.$$ By \cref{prop:2-body-marginals}, $\big(\operatorname{Tr}_{[n] \backslash \{1,k+1\}}(\ketbra{D_\alpha}{D_\alpha})\big) ^\Gamma\nsucceq 0$ implying $X^{\Gamma_{[k]}} \nsucceq 0$.
\end{proof}

\section{Bosonic extendibility}
\label{sec:bosonic-extendibility}
In the final section of this article, we shift our focus to another important entanglement-theoretic property of quantum states known as \emph{extendibility} \cite{doherty2004complete, terhal2004entanglement,christandl2007one}. This property is concerned with the existence of an extension, i.e, a state supported on a larger number of systems, such that some chosen reductions of such a state are equal to the original quantum state. There exist various notions of the extendibility of quantum states, see \cite{doherty2004complete,navascues2009power, allerstorfer2026monogamy,solymos2024extendibility,gulatiposext}. In this article, we study the one that is well-suited to bosonic states, which we call the \emph{hypergraph extendibility}, generalizing the complete graph extendibility from \cite{allerstorfer2026monogamy}. Let us begin by defining this notion.

\begin{definition}
Let $r,n \in \mathbb{N}$ be non-negative integers. A matrix $\tilde X \in \operatorname{Herm}[\vee^{n+r} \mathbb{C}^d]$ is called a \emph{$(n,r)$-hypergraph bosonic extension} of a matrix $X \in  \operatorname{Herm}[\vee^n \mathbb{C}^d]$ if both of the following conditions hold: 
\begin{itemize}
    \item[(1)] $\tilde X \succeq 0$
    \item[(2)] $\forall \big(A \subseteq [n+r], \,  |A| = r\big) \quad  \operatorname{Tr}_{A}(\tilde X) = X$.
\end{itemize}
We call such an extension a \emph{$(n,r)$-hypergraph PPT bosonic extension} if the extension $\tilde X$ is also PPT across all bipartitions. Note that, thanks to the symmetry property, one could have replace the ``$\forall$'' in the second point above by an ``$\exists$''.
\end{definition}

If $X, Y$ have a $(n,r)$-hypergraph (PPT) bosonic extension $\tilde X, \tilde Y$ (resp.), then $\tilde X + \tilde Y$ is the $(n,r)$-hypergraph (PPT) bosonic extension of $X+Y$. Similarly, $\tilde X$ is a $(n,r)$-hypergraph (PPT) bosonic extension of $X$ , then, $\lambda \tilde X$ is the  $(n,r)$-hypergraph (PPT) bosonic extension of $\lambda X.$ We denote these cones of $(n,r)$-hypergraph bosonic extendable states as $\mathsf{BExt}_d^{(n,r)}$ and $(n,r)$-hypergraph PPT bosonic extendable states as $\mathsf{PPTBExt}_d^{(n,r)}$, where $r \geq 0$:
\begin{align}
\label{eq:bext-sets}
    \mathsf{BExt}_d^{(n,r)} &:= \{ X \in \operatorname{Herm}[\vee^n \mathbb{C}^d] \mid \exists \tilde X \in \mathsf{PSD}^{(n+r)}_d \text{ s.t. } \Tr_{[r]}\tilde X  = X\}\\
    \mathsf{PPTBExt}_d^{(n,r)} &:= \{ X \in \operatorname{Herm}[\vee^n \mathbb{C}^d] \mid \exists \tilde X \in \mathsf{PPT}^{(n+r)}_d \text{ s.t. } \Tr_{[r]}\tilde X  = X \}.
\end{align}

\begin{remark}
Note that for $r = 0$, $\mathsf{BExt}_d^{(n,0)}$ reduces to the set of all $n$-partite bosonic states, while $\mathsf{PPTBExt}_d^{(n,0)}$ reduces to the set of $n$-partite bosonic states with all bipartitions PPT:
\begin{align*}
    \mathsf{BExt}_d^{(n,0)} &= \mathsf{PSD}_d^{(n)}\\
    \mathsf{PPTBExt}_d^{(n,0)} &= \mathsf{PPT}_d^{(n)}.
\end{align*}
\end{remark} 

Recall that any separable state in the bosonic subspace, $X \in \operatorname{Herm}[\vee^n \mathbb{C}^d]$ is of the form
$$X = \sum^K_{k=1} \ketbra{v_k}{v_k}^{\otimes n}\text{ for }\ket{v_k} \in \mathbb{C}^d.$$ 
Consider, for any $r \geq 0$, the following operator $\tilde X \in \operatorname{Herm}[\mathcal{M}^{\otimes (n+r)}_d]$:  
$$\tilde X := \sum^K_{k=1} \frac{1}{||v_k||^{2r}}\ketbra{v_k}{v_k}^{\otimes(n+r)}.$$
It is easy to verify that $\tilde X$ is an $(n,r)$-hypergraph bosonic extension of $X$. Moreover, this extension is also PPT across all bipartitions.  Hence, we have the following inclusions, showing that the newly introduced sets  form a decreasing hierarchy of outer approximations of the separable set:
{
\setlength{\arraycolsep}{3pt} 
\[
\begin{array}{cccccccccccccc} 
\mathsf{Sep}^{(n)}_{d}& \subseteq & \cdots
& \subseteq & \mathsf{BExt}_d^{(n,r)}
& \subseteq & \cdots
& \subseteq & \mathsf{BExt}_d^{(n,1)}
& \subseteq & \mathsf{BExt}_d^{(n,0)}
& = & \mathsf{PSD}_d^{(n)}\\[8pt]

\upwardeq & & \cdots & & \upwardsubseteq & & \cdots & & \upwardsubseteq & & \upwardsubseteq & & \upwardsubseteq \\[8pt]

\mathsf{Sep}^{(n)}_{d}& \subseteq & \cdots
& \subseteq & \mathsf{PPTBExt}_d^{(n,r)}
& \subseteq & \cdots
& \subseteq & \mathsf{PPTBExt}_d^{(n, 1)}
& \subseteq & \mathsf{PPTBExt}_d^{(n, 0)}
& = & \mathsf{PPT}_d^{(n)}
\end{array}
\]
}

We now provide a formal characterization of the dual sets, i.e, the set of \emph{extendibility witnesses}. Such a characterization has already been obtained in \cite[Theorem 8.9]{gulatiposext} and \cite[Theorem 4.1]{britz2025semidefinite} for the bipartite case $n=2$. The essential argument can be easily extended to the case of $n > 2$; we provide the proof for the sake of completeness. 

\begin{theorem}[PPT Extendibility Witness]
\label{thm:ppt-ext-witness}
The following statements are equivalent,
\begin{itemize}
    \item[(1)] $X \in (\mathsf{PPTBExt}^{(n,r)}_d)^\circ =:  (\mathsf{DecExtW}^{(n,r)}_d)$
    \item[(2)] $\Pi^{(n+r)}_{d} (X \otimes I_d^{\otimes r}) \Pi^{(n+r)}_{d} \in \sum^{\lfloor (n+r)/2 \rfloor}_{k=0}\mathsf{Dec}^{(n+r,k)}_d$, i.e. there exist PSD matrices $W_k \in \operatorname{Herm}[\mathcal{M}^{\otimes (n+r)}_d]$ for $k=0, \ldots, \lfloor (n+r)/2 \rfloor$ such that 
    $$\Pi^{(n+r)}_{d} \big( X \otimes I_d^{\otimes r} \big) \Pi^{(n+r)}_{d} = \Pi^{(n+r)}_d\Biggl( W_0 + \sum^{\lfloor (n+r)/2 \rfloor}_{k=1} W_k^{\Gamma_{[k]}} \Biggr)  \Pi^{(n+r)}_d.$$ 
\end{itemize}
\end{theorem}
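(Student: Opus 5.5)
The plan is to realize $\mathsf{PPTBExt}^{(n,r)}_d$ as the image of the closed cone $\mathsf{PPT}^{(n+r)}_d$ under the partial trace map $\Phi := \operatorname{Tr}_{[r]} : \operatorname{Herm}[\vee^{n+r}\mathbb{C}^d] \to \operatorname{Herm}[\vee^n\mathbb{C}^d]$, and then compute its dual through the adjoint of $\Phi$. Two facts are needed first. The map $\Phi$ is well defined: the partial trace of a bosonic matrix is bosonic, by the same argument as the first part of the proof of \cref{thm:marginals-dicke-states}, without using the diagonal unitary part. Its adjoint with respect to the Hilbert--Schmidt products on the two bosonic spaces is
$$\Phi^*(X) = \Pi^{(n+r)}_d (X \otimes I_d^{\otimes r}) \Pi^{(n+r)}_d.$$
This holds because $\operatorname{Tr}(X \operatorname{Tr}_{[r]}\tilde X) = \operatorname{Tr}((X\otimes I^{\otimes r})\tilde X)$, up to reordering the factors so that the traced systems come last, which is harmless by bosonic symmetry. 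Since $\tilde X = \Pi \tilde X \Pi$, this equals $\operatorname{Tr}(\Pi (X\otimes I^{\otimes r})\Pi\, \tilde X)$.

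With the adjoint in hand, the duality is formal. For $X\in\operatorname{Herm}[\vee^n\mathbb{C}^d]$,
$$X \in (\Phi(\mathsf{PPT}^{(n+r)}_d))^\circ \iff \forall \tilde X\in \mathsf{PPT}^{(n+r)}_d,\ \operatorname{Tr}(\Phi^*(X)\tilde X)\ge 0 \iff \Phi^*(X) \in (\mathsf{PPT}^{(n+r)}_d)^\circ.$$
This step needs no closedness assumption, since it concerns the dual of the image and not the image of the dual. Finally, I would invoke \cref{cor:duality-ppt-dec} with $n$ replaced by $n+r$. It identifies $(\mathsf{PPT}^{(n+r)}_d)^\circ = (\bigcap_k \mathsf{PPT}^{(n+r,k)}_d)^\circ$ with $\mathsf{Dec}^{(n+r)}_d$, and gives the explicit form $\Pi(\sum_k W_k^{\Gamma_{[k]}})\Pi$ with $W_k\succeq 0$, where the $k=0$ term is $W_0$. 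This is exactly statement (2).

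The main obstacle is the closure in \cref{cor:duality-ppt-dec}: the dual of an intersection is the closure of the sum of the duals. To remove the closure, and so obtain exact PSD matrices $W_k$, I would argue as in the proof of \cref{thm:dec-characterization}. First, $\mathsf{PPT}^{(n+r)}_d$ has nonempty interior in $\operatorname{Herm}[\vee^{n+r}\mathbb{C}^d]$; for instance $\Pi^{(n+r)}_d$ is a strictly feasible point, or one uses a small ball of separable states around it, which lie in every $\mathsf{PPT}^{(n+r,k)}_d$. A Slater-type condition then makes the sum of the dual cones closed. Alternatively, one can give a compactness argument: normalize $\operatorname{Tr}(\Pi W_k^{\Gamma}\Pi)$ by pairing with the interior point, bound the $W_k$ along an approximating sequence, and extract convergent subsequences in the closed PSD cone. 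If \cref{cor:duality-ppt-dec} is taken as already stating the unclosed explicit description, this step is immediate and the proof reduces to the adjoint computation.
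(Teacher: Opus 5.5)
Your proposal is correct and follows essentially the same route as the paper. Both directions rest on the adjoint identity $\operatorname{Tr}(X\operatorname{Tr}_{[r]}\tilde X)=\operatorname{Tr}\big(\Pi^{(n+r)}_d(X\otimes I_d^{\otimes r})\Pi^{(n+r)}_d\,\tilde X\big)$, followed by an appeal to \cref{cor:duality-ppt-dec} at level $n+r$. Your Slater step is a welcome addition: $\Pi^{(n+r)}_d$ is strictly feasible, since each $(\Pi^{(n+r)}_d)^{\Gamma_{[k]}}$ is positive definite on $\vee^k\mathbb{C}^d\otimes\vee^{n+r-k}\mathbb{C}^d$, and this justifies dropping the closure when the paper writes the explicit, closure-free form of $\mathsf{Dec}^{(n+r)}_d$, a point the paper takes for granted.
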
 

\begin{proof} For the implication $(1)\implies (2)$, consider an arbitrary $X \in (\mathsf{PPTBExt}_d^{(n,r)})^\circ$. By definition, we have
    $\Tr(XW)\ge 0$ for all $W \in \mathsf{PPTBExt}_d^{(n,r)}$. This also implies that for all $\tilde W \in \bigcap^{\lfloor (n+r)/2 \rfloor}_{k=0}\mathsf{PPT}_d^{(n+r,k)}$, we have 
    $$\operatorname{Tr}((X \otimes I_d^{\otimes r}) \tilde W) = \operatorname{Tr}(X \operatorname{Tr}_{[r]} (\tilde W)) \geq 0.$$
    Therefore, $\Pi^{(n+r)}_{d} (X \otimes I_d^{\otimes r}) \Pi^{(n+r)}_{d} \in \sum^{\lfloor (n+r)/2 \rfloor}_{k=0} \mathsf{Dec}^{(n+r,k)}_d =\mathsf{Dec}^{(n+r)}_d$, see  \cref{cor:duality-ppt-dec}. This shows the first claim.
    
    For the reverse direction, we will show that $\Tr(XW)\ge 0$ for all $W \in \mathsf{PPTBExt}_d^{(n,r)}$.
    If $W \in \mathsf{PPTBExt}_d^{(n,r)}$, then there exists a $\tilde W \in \bigcap^{\lfloor (n+r)/2 \rfloor}_{k=0}\mathsf{PPT}_d^{(n+r,k)}$ such that $\operatorname{Tr}_{[r]} (\tilde W) = W.$ Then, we can show that
    $$\operatorname{Tr}(XW) = \operatorname{Tr}((X \otimes I^{\otimes r}_d) \tilde W) \geq 0.$$
\end{proof}

If we restrict to $k=0$ in the theorem above, we eliminate the non-trivial PPT conditions to obtain the following result. 

\begin{corollary}
\label{cor:ext-witness}
    Let $X \in \operatorname{Herm}[\mathcal{M}^{\otimes n}_d]$ be a self-adjoint matrix such that
    $$\Pi^{(n)}_{d} X \Pi^{(n)}_{d} =  X.$$ 
    Then, $X \in \big(\mathsf{BExt}_d^{(n,r)}\big)^\circ := (\mathsf{ExtW}^{(n,r)}_d)$ if and only if $$\Pi^{(n+r)}_{d} \big( X \otimes I_d^{\otimes r}\big) \Pi^{(n+r)}_{d} \succeq 0.$$ 
\end{corollary}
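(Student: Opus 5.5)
The plan is to derive this directly from \cref{thm:ppt-ext-witness} by repeating its proof with the PPT cones replaced by the PSD cone, so that only the $k=0$ term survives. Concretely, $\mathsf{BExt}_d^{(n,r)}$ is the image of the closed cone $\mathsf{PSD}^{(n+r)}_d$ under the linear map $\tilde W \mapsto \operatorname{Tr}_{[r]}\tilde W$. By the definition of the dual cone and the adjoint relation $\operatorname{Tr}(X\operatorname{Tr}_{[r]}\tilde W) = \operatorname{Tr}((X\otimes I_d^{\otimes r})\tilde W)$, membership of $X$ in $(\mathsf{BExt}_d^{(n,r)})^\circ$ becomes a positivity condition on $X\otimes I_d^{\otimes r}$ tested only against bosonic PSD matrices.

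\textbf{Forward direction.} Assume $X \in (\mathsf{BExt}_d^{(n,r)})^\circ$. Take an arbitrary $\ket{\psi}\in(\mathbb{C}^d)^{\otimes(n+r)}$ and set $\tilde W := \Pi^{(n+r)}_d\ketbra{\psi}{\psi}\Pi^{(n+r)}_d$, which lies in $\mathsf{PSD}^{(n+r)}_d$. Then $W:=\operatorname{Tr}_{[r]}\tilde W\in\mathsf{BExt}_d^{(n,r)}$, and we need to check that $W$ is bosonic; this follows by the same permutation argument as in \cref{thm:marginals-dicke-states}. Hence
$$0\le \operatorname{Tr}(XW) = \braket{\psi|\Pi^{(n+r)}_d (X\otimes I_d^{\otimes r})\Pi^{(n+r)}_d|\psi}.$$
Since $\ket{\psi}$ was arbitrary, the projected operator is PSD. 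Equivalently, this is case (2) of \cref{thm:ppt-ext-witness} with only $W_0$ allowed, because $\Pi W_0\Pi\succeq 0$ whenever $W_0\succeq0$.

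\textbf{Reverse direction.} Let $W\in\mathsf{BExt}_d^{(n,r)}$, with $\tilde W\in\mathsf{PSD}^{(n+r)}_d$ and $\operatorname{Tr}_{[r]}\tilde W = W$. Using $\tilde W=\Pi^{(n+r)}_d\tilde W\Pi^{(n+r)}_d$ we get
$$\operatorname{Tr}(XW)=\operatorname{Tr}\big((X\otimes I_d^{\otimes r})\tilde W\big)=\operatorname{Tr}\big(\Pi^{(n+r)}_d(X\otimes I_d^{\otimes r})\Pi^{(n+r)}_d\,\tilde W\big)\ge 0,$$
since both factors are PSD.

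\textbf{Main obstacle.} The only delicate point is bookkeeping about the ambient space of the dual. The hypothesis $\Pi^{(n)}_dX\Pi^{(n)}_d=X$ guarantees that pairing with bosonic $W$ is the natural duality in $\operatorname{Herm}[\vee^n\mathbb{C}^d]$, and it guarantees that $X\otimes I$ is tested only on the symmetric subspace. No closure issues arise, because the argument works directly with the defining inequalities and does not use a sum-of-cones description.
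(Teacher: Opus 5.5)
Your proposal is correct and follows essentially the paper's route. The paper obtains this corollary by specializing \cref{thm:ppt-ext-witness} to the PSD case ($k=0$), and that theorem's proof rests on the same adjoint relation $\operatorname{Tr}(X\operatorname{Tr}_{[r]}\tilde W)=\operatorname{Tr}((X\otimes I_d^{\otimes r})\tilde W)$ that you use. Your forward direction tests directly against $\Pi^{(n+r)}_d\ketbra{\psi}{\psi}\Pi^{(n+r)}_d$, which bypasses the $\mathsf{Dec}$ and closure characterization; as in the paper, placing $X$ on the first $n$ factors while tracing out the first $r$ is harmless because $\tilde W$ is bosonic.
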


\subsection{Moment hierarchy for CP Tensors}
In general, checking for the copositivity of a matrix is computationally
co-NP-hard \cite{murty1987some}, while on the dual side, checking for
complete positivity of a matrix is NP-hard \cite{dickinson2014computational}. Despite this, there is a systematic way to approximate these sets through inner and outer approximations. We will briefly review the two widely used inner approximations for the set of copositive matrices: the SOS hierarchy and the non-negative hierarchy. The membership at any level of the hierarchy can be decided using a semi-definite/linear program. On the dual side, the duals of these hierarchies outer approximate the completely positive matrices. We also discuss the natural extension of these hierarchies to tensors, and in the next section, provide a surprising connection to the extendibility properties in the diagonally symmetric (DS) subspace. Recall the following tensor duality from \cref{thm:duality-moment-sos}, 

$$(\mathsf{Mom}^{(n)}_d)^{\circ} = \mathsf{SOS}^{(n)}_d.$$ 
Then, we define the following convex cones:
\begin{equation}
\label{eq:mom-ext}
\mathsf{MomExt}^{(n,r)}_d := \{\operatorname{tr}_{[r]}(X) \mid  X \in \mathsf{Mom}^{(n+r)}_d\}.
\end{equation}
For a particular $r \geq 0$, this is the convex cone of tensors in $\vee^n \mathbb{R}^d$ that have a moment tensor extension in the larger space $\vee^{(n+r)} \mathbb{R}^d.$ Note that any completely positive tensor $Q = \sum_{k} v^{\otimes n}_k$ with $v_k \in \mathbb R_+^d$ has a moment tensor extension for all $r$, given by
$$\tilde Q = \sum_{k} \frac{1}{|v_k|^r}v^{\otimes (n+r)}_k.$$

Therefore, we have the following infinite hierarchy of sets that approximate completely positive matrices from the outside:
$$\mathsf{CP}^{(n)}_d \subseteq \cdots \subseteq \mathsf{MomExt}^{(n,r)}_d \subseteq  \cdots \subseteq \mathsf{MomExt}^{(n,1)}_d \subseteq \mathsf{MomExt}^{(n,0)}_d = \mathsf{Mom}^{(n)}_d.$$ 

Unsurprisingly, this is dual to the sum of squares hierarchy, which approximates copositive matrices from the inside. This is the content of our next theorem.

\begin{theorem}[Dual of SOS hierarchy]
\label{thm:sos-mom-duality}
We have the following dual coneity:
$$\forall n,r,d \in \mathbb{N}, \quad \big(\mathsf{MomExt}^{(n,r)}_d \big)^\circ = \mathsf{RSOS}^{(n,r)}_d.$$
\end{theorem}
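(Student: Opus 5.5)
The plan is to reduce the statement to the already established duality $(\mathsf{Mom}^{(n+r)}_d)^\circ = \mathsf{SOS}^{(n+r)}_d$ of \cref{thm:duality-moment-sos}. Two facts do the work: the partial trace is adjoint to tensoring with the all-ones vector, and the symmetrizer is self-adjoint. Let $e=\sum_i e_i$ and write $\operatorname{tr}_{[r]}:\vee^{n+r}\mathbb{R}^d\to\vee^n\mathbb{R}^d$ for the marginal map. For every $X\in\vee^{n+r}\mathbb{R}^d$ and $Q\in\vee^n\mathbb{R}^d$ I will first verify
$$\braket{\operatorname{tr}_{[r]}(X),Q} \;=\; \braket{X, Q\otimes e^{\otimes r}} \;=\; \braket{X, \Pi^{(n+r,\mathbb{R})}_d[Q\otimes e^{\otimes r}]}.$$
The first equality is immediate from the definition $[\operatorname{tr}_A T]_j=\sum_i T_{i\sqcup j}$. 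The second holds because $X$ is symmetric and $\Pi^{(n+r,\mathbb{R})}_d$ is an orthogonal projection.

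Set $L(Q):=\Pi^{(n+r,\mathbb{R})}_d[Q\otimes e^{\otimes r}]$. Then $\mathsf{MomExt}^{(n,r)}_d = L^*(\mathsf{Mom}^{(n+r)}_d)$ with $L^*=\operatorname{tr}_{[r]}$. The standard cone-duality fact $(L^*(K))^\circ = L^{-1}(K^\circ)$ then gives
$$Q\in(\mathsf{MomExt}^{(n,r)}_d)^\circ \iff \forall X\in\mathsf{Mom}^{(n+r)}_d,\ \braket{X,L(Q)}\ge 0 \iff L(Q)\in(\mathsf{Mom}^{(n+r)}_d)^\circ.$$
This step needs no closedness hypothesis, since the dual of an image is always the preimage of the dual. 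By \cref{thm:duality-moment-sos} with $k=\lfloor (n+r)/2\rfloor$, the last condition is $L(Q)\in\mathsf{SOS}^{(n+r)}_d$. The paper has already rewritten $\mathsf{RSOS}^{(n,r)}_d$ as exactly $\{Q: L(Q)\in\mathsf{SOS}^{(n+r)}_d\}$, via the identity $p_Q(x\odot x)\|x\|^{2r}=\braket{L(Q),(x\odot x)^{\otimes(n+r)}}$. That finishes the argument.

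The main obstacle is making the appeal to \cref{thm:duality-moment-sos} legitimate. As stated and proved, that theorem shows $(\mathsf{SOS}^{(n,k)}_d)^\circ=\mathsf{Mom}^{(n,k)}_d$, whereas I need the reverse direction $(\mathsf{Mom})^\circ=\mathsf{SOS}$. Passing from one to the other by the bipolar theorem requires $\mathsf{SOS}^{(n+r)}_d$ to be a closed convex cone. The paper remarks that this holds, because the cone of SOS forms is closed and $T\mapsto p_T(x\odot x)$ is an injective linear map. I will spell out that closedness explicitly.

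I will also check that I use the top level $k=\lfloor (n+r)/2\rfloor$, so that $\mathsf{SOS}^{(n+r,k)}_d$ coincides with the full SOS-tensor cone of \cref{def:sos-tensors} via \cref{prop:SOS}. This matters because $\mathsf{Mom}^{(n+r)}_d$ in \cref{eq:mom-ext} is defined using all slice flattenings.
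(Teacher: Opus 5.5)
Your proposal is correct and follows the paper's proof essentially step for step: the partial trace $\operatorname{tr}_{[r]}$ is adjoint to tensoring with $e^{\otimes r}$, the symmetrizer $\Pi^{(n+r,\mathbb{R})}_d$ is self-adjoint, and the duality $(\mathsf{Mom}^{(n+r)}_d)^\circ=\mathsf{SOS}^{(n+r)}_d$ of \cref{thm:duality-moment-sos} then combines with the tensor rewriting of $\mathsf{RSOS}^{(n,r)}_d$. Your extra care is justified: the proof of \cref{thm:duality-moment-sos} actually establishes $(\mathsf{SOS}^{(n,k)}_d)^\circ=\mathsf{Mom}^{(n,k)}_d$, so the direction you need comes from the bipolar theorem and the closedness of $\mathsf{SOS}^{(n+r)}_d$ (the preimage of the closed cone of SOS forms under a linear map), a step the paper leaves implicit.
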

\begin{proof}

Let us first note that $\mathsf{RSOS}^{(n,r)}_d = \{Q \in \vee^n \mathbb{R}^d \mid \Pi^{(n+r, \mathbb{R})}_d[Q \otimes \ket e^{\otimes n}] \in \mathsf{SOS}^{(n+r)}_d\}$. Note also that the trace operation is dual to the tensoring operation with the normalized all-ones tensor:
$$\braket{\operatorname{tr}_{[r]}(T),Q} = \braket{T,Q \otimes \ket{e}^{\otimes r}}.$$ Also if the tensor $T$ is symmetric, we have
\[
\braket{T, Q \otimes \ket{e}^{\otimes r}}
= \braket{\Pi^{n+r,\mathbb{R}}_d[T], Q \otimes \ket{e}^{\otimes r}} = \braket{T,\Pi^{n+r,\mathbb{R}}_d[Q \otimes \ket{e}^{\otimes r}]}.
\]
Finally, we have the following equivalences:
\begin{align*}
X \in \big(\mathsf{MomExt}^{(n,r)}_d \big)^\circ
&\iff \forall\, Y \in \mathsf{MomExt}^{(n,r)}_d,\quad \langle X, Y \rangle \ge 0 \\
&\iff \forall\, \tilde Y \in \mathsf{Mom}^{(n+r)}_d,\quad 
    \big\langle \operatorname{tr}_{[r]}(\tilde Y), X \big\rangle \ge 0 \\
&\iff \forall\, \tilde Y \in \mathsf{Mom}^{(n+r)}_d, \quad
    \big\langle \tilde Y,
    \Pi^{n+r,\mathbb{R}}_d\big[ X \otimes \ket{e}^{\otimes r} \big]
    \big\rangle \ge 0. \\
&\iff \Pi^{n+r,\mathbb{R}}_d\big[ X \otimes \ket{e}^{\otimes r} \big] \in (\mathsf{Mom}^{(n+r)}_d)^\circ = \mathsf{SOS}^{(n+r)}_d.
\end{align*}
This completes the proof. 
\end{proof}

Analogous to the case of the sum of squares hierarchy, we will characterize the dual of Polya's non-negative hierarchy. To this end, we define the following convex cones of tensors:
\begin{equation}
\label{eq:nn-ext}
\mathsf{NNExt}^{(n,r)}_d := \{\operatorname{tr}_{[r]}(X) \mid  X \in \mathsf{NN}^{(n+r)}_d\}.
\end{equation}
These are exactly the convex cones of tensors that have an entrywise positive extension of order $n+r$. Therefore, we have the following new infinite hierarchy of sets that approximate completely positive tensors from the outside:
$$\mathsf{CP}^{(n)}_d \subseteq \cdots \subseteq \mathsf{NNExt}^{(n,r)}_d \subseteq \cdots \subseteq \mathsf{NNExt}^{(n,1)}_d  \subseteq \mathsf{NNExt}^{(n,0)}_d = \mathsf{NN}^{(n)}_d.$$ 

In the next theorem, we present the conic-duality of the non-negative hierarchy. We skip the proof, as it is essentially the same argument as in \cref{thm:sos-mom-duality}, by replacing "$\mathsf{Mom}$" with "$\mathsf{NN}$".

\begin{theorem}[Dual of NN hierarchy]
We have the following duality:
$$\forall n,r,d \in \mathbb{N}, \quad \big(\mathsf{NNExt}^{(n,r)}_d \big)^\circ = \mathsf{PNN}^{(n,r)}_d.$$
\end{theorem}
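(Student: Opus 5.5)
We mirror the proof of \cref{thm:sos-mom-duality}, with $\mathsf{Mom}^{(n+r)}_d$ replaced by $\mathsf{NN}^{(n+r)}_d$ and $\mathsf{SOS}^{(n+r)}_d$ replaced by its dual counterpart $\mathsf{NN}^{(n+r)}_d$, which is self-dual (\cref{eq:definition-ewp}). The first ingredient is the tensor reformulation of Polya's hierarchy given earlier:
\[
\mathsf{PNN}^{(n,r)}_d=\{Q\in\vee^n\mathbb{R}^d \mid \Pi^{(n+r,\mathbb{R})}_d[Q\otimes e^{\otimes r}]\in\mathsf{NN}^{(n+r)}_d\}.
\]
If needed, we recheck this reformulation via the identity $p_Q(x)(\sum_l x_l)^r=\langle \Pi^{(n+r,\mathbb{R})}_d[Q\otimes e^{\otimes r}],x^{\otimes(n+r)}\rangle$ together with \cref{eq:Ti-vs-Talpha}. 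That identity shows the coefficient of $x^\mu$ is $\binom{n+r}{\mu}$ times the tensor entry at $\idx(\mu)$, so nonnegativity of the coefficients is equivalent to entrywise nonnegativity of the symmetrized tensor.

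The second ingredient is the adjoint relation. For symmetric $T\in\vee^{n+r}\mathbb{R}^d$ and $Q\in\vee^n\mathbb{R}^d$,
\[
\langle \operatorname{tr}_{[r]}T,Q\rangle=\langle T,Q\otimes e^{\otimes r}\rangle=\langle T,\Pi^{(n+r,\mathbb{R})}_d[Q\otimes e^{\otimes r}]\rangle,
\]
where the last equality uses that $\Pi^{(n+r,\mathbb{R})}_d$ is an orthogonal projection and $T$ is fixed by it.

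With these in hand, the proof is a chain of equivalences. We have $X\in(\mathsf{NNExt}^{(n,r)}_d)^\circ$ if and only if $\langle \operatorname{tr}_{[r]}\tilde Y,X\rangle\ge0$ for all $\tilde Y\in\mathsf{NN}^{(n+r)}_d$. By the adjoint relation, this holds if and only if $\Pi^{(n+r,\mathbb{R})}_d[X\otimes e^{\otimes r}]\in(\mathsf{NN}^{(n+r)}_d)^\circ$. By self-duality of the nonnegative cone, that is membership in $\mathsf{NN}^{(n+r)}_d$, which by the reformulation above is exactly $X\in\mathsf{PNN}^{(n,r)}_d$.

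The only point needing care is self-duality of $\mathsf{NN}^{(n+r)}_d$ inside the symmetric space $\vee^{n+r}\mathbb{R}^d$, rather than in the full tensor space. Nonnegativity against the cone is clear in one direction. For the converse, we test against the symmetrized basis tensors $\Pi^{(n+r,\mathbb{R})}_d[e_i]$, which are nonnegative and symmetric, and whose pairing with a symmetric tensor $S$ equals $S_i$. No closure issues arise, since all cones involved are closed and polyhedral.
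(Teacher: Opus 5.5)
Your proposal is correct and takes the same route the paper indicates: you run the argument of \cref{thm:sos-mom-duality} with $\mathsf{Mom}/\mathsf{SOS}$ replaced by the self-dual cone $\mathsf{NN}$, using that $\operatorname{tr}_{[r]}$ is adjoint to $Q\mapsto \Pi^{(n+r,\mathbb{R})}_d[Q\otimes e^{\otimes r}]$ on symmetric tensors. Your extra checks, namely the tensor reformulation of $\mathsf{PNN}^{(n,r)}_d$ and the self-duality of $\mathsf{NN}^{(n+r)}_d$ inside $\vee^{n+r}\mathbb{R}^d$ via the test tensors $\Pi^{(n+r,\mathbb{R})}_d[e_i]$, supply exactly the details the paper leaves implicit when it skips this proof.
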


\begin{remark}
    The entrywise positive hierarchy that we discuss here has been abstractly studied in the probability literature as ``finitely extendable and exchangeable probability measures''. We refer to the recent article \cite{konstantopoulos2019extendibility}, and all references therein, for the finite-dimensional case of this problem. 
\end{remark}

\subsection{Bosonic extendibility: extending the dictionary}
In the previous section, we discussed two important hierarchies  (and their duals) that approximate the set of copositive tensors (and completely positive tensors). In this section, we connect these hierarchies to the bosonic extendibility problem in the diagonally symmetric subspace. To this end, we have the following lemma, which states that a diagonally symmetric matrix has an $(n,r)$-hypergraph extension if and only if there is an extension in the DS subspace. This result follows from the local symmetry properties of the DS subspace, and has already been noted in \cite{gulatiposext, britz2025semidefinite}. We provide the proof for completeness.

\begin{lemma}
\label{lem:extension-also-symmetric}
    Let $X \in \mathsf{DS}^{(n)}_d$. Then, the following statements are equivalent:
    \begin{itemize}
        \item[(1)] $X$ has a (PPT) $r$-bosonic extension.
        \item[(2)] $X$ has a (PPT) $r$-bosonic extension that is in $\mathsf{DS}^{(n+r)}_d$.
    \end{itemize}
\end{lemma}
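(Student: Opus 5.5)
The implication (2) $\Rightarrow$ (1) is trivial, since an extension lying in $\mathsf{DS}^{(n+r)}_d$ is in particular a (PPT) $r$-bosonic extension. For (1) $\Rightarrow$ (2), the plan is to twirl a given extension over the diagonal unitary group. Let $\tilde X \in \operatorname{Herm}[\vee^{n+r}\mathbb{C}^d]$ be a (PPT) $(n,r)$-hypergraph bosonic extension of $X$, and set
$$\hat X := \mathcal{T}_{\diagU}(\tilde X) = \int_{\diagU_d} U^{\otimes (n+r)} \tilde X (U^{\otimes (n+r)})^* \, \mathrm{d}U,$$
with the Haar measure on the torus $\diagU_d$.

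First, $\hat X \in \mathsf{DS}^{(n+r)}_d$. Each $U^{\otimes(n+r)}$ commutes with every $P_\pi$, and hence with $\Pi^{(n+r)}_d$, so bosonic symmetry is preserved; left-invariance of the Haar measure gives $[V^{\otimes(n+r)},\hat X]=0$ for all $V \in \diagU_d$. Positivity is preserved because $\hat X$ is an average of unitary conjugates of a PSD matrix.

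Next, for the marginal condition I use the computation already made in the proof of \cref{thm:marginals-dicke-states}: partial trace commutes with local unitaries, so
$$\operatorname{Tr}_{[r]}\big(U^{\otimes(n+r)}\tilde X (U^{\otimes(n+r)})^*\big) = U^{\otimes n}\operatorname{Tr}_{[r]}(\tilde X)(U^{\otimes n})^* = U^{\otimes n} X (U^{\otimes n})^* = X,$$
where the last equality holds because $X \in \mathsf{DS}^{(n)}_d$. Integrating over $U$ then gives $\operatorname{Tr}_{[r]}(\hat X) = X$.

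The only step needing care is the PPT version, namely that $\hat X^{\Gamma_{[k]}} \succeq 0$ for every $k$. I would use $(U^{\otimes(n+r)} Y (U^{\otimes(n+r)})^*)^{\Gamma_{[k]}} = (\overline{U}^{\otimes k}\otimes U^{\otimes (n+r-k)})\, Y^{\Gamma_{[k]}}\, (\overline{U}^{\otimes k}\otimes U^{\otimes (n+r-k)})^*$, which follows from $(AYB)^{T}$ acting locally on product operators and from linearity. This shows that $\hat X^{\Gamma_{[k]}}$ is an average of unitary conjugates of the PSD matrix $\tilde X^{\Gamma_{[k]}}$, hence is PSD. This is the main point to check, but it is a standard local-unitary covariance of the partial transpose, so no real obstacle is expected.
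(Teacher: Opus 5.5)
Your proposal is correct and follows the paper's proof. Both twirl the given extension over $\diagU_d$, use that $U^{\otimes(n+r)}$ commutes with $\Pi^{(n+r)}_d$ and passes through $\operatorname{Tr}_{[r]}$, and recover $X$ from its diagonal-unitary invariance. You also supply details the paper leaves implicit: membership of the twirl in $\mathsf{DS}^{(n+r)}_d$, preservation of positivity, and the PPT case (which the paper calls ``analogous''), handled via the covariance of $\Gamma_{[k]}$ under the local unitaries $\overline{U}^{\otimes k}\otimes U^{\otimes(n+r-k)}$.
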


\begin{proof}
    We only do the case without the PPT property, as the PPT case is analogous. The reverse implication is trivial. For the forward implication, let $X \in \mathsf{BExt}^{(n,r)}_d,$ i.e there exists an extension $\tilde X \in \mathcal{M}^{\otimes (n+r)}_d$ with $\Pi^{(n+r)}_d \tilde X \Pi^{(n+r)}_d = \tilde X$. We claim that the twirl with respect to the diagonal unitary group $\diagU_d$ 
    $$\mathcal{T}_{\diagU} (\tilde X) := \int_{\diagU_d} U^{\otimes (n+r)} \tilde X \overbar{U}^{\otimes (n+r)}\, \mathrm{d}U$$
    is also an bosonic extension of $X$. Indeed, use $[\Pi^{(n+r)}_d, U^{\otimes (n+r)}] = 0$ and compute 
\begin{align*}
\operatorname{Tr}_{[r]} \big(\mathcal{T}_{\diagU} (\tilde X)\big)
&= \int \operatorname{Tr}_{[r]}\!\big(U^{\otimes (n+r)} \tilde X\, \overbar{U}^{\otimes (n+r)} \big) \, \mathrm{d}U\\
&= \int U^{\otimes n}\, \operatorname{Tr}_{[r]} (\tilde X)\, \overbar{U}^{\otimes n} \, \mathrm{d}U \\
&= \int U^{\otimes n} X \overbar{U}^{\otimes n}\, \mathrm{d}U \\
&= X,
\end{align*}
where the last equality follows from the fact that $X \in \mathsf{DS}^{(n)}_d$.
\end{proof}

Now, we use the previous lemma to establish the main result of this section, the connection between the bosonic-extendibility of mixtures of Dicke states and hierarchies for completely positive tensors. This has already been proven in the recent papers \cite{gulatiposext, britz2025semidefinite} for the bipartite case $n=2$.

\begin{theorem}\label{thm:extendibility-dicke-states}
   Let $X \in \mathsf{DS}^{(n)}_d.$ Then, for all level of the hierarchies $r \geq 0$:
    \begin{itemize}
        \item[(1)] $X \in \mathsf{BExt}^{(n,r)}_d \iff Q[X] \in \mathsf{NNExt}_d^{(n,r)}$
        \item[(2)] $X \in \mathsf{PPTBExt}^{(n,r)}_d \iff Q[X] \in \mathsf{MomExt}_d^{(n,r)}$.
    \end{itemize}
\end{theorem}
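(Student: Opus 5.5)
We prove both items by one argument, which chains three earlier results: \cref{lem:extension-also-symmetric}, \cref{thm:marginals-dicke-states}, and the pointwise dictionary of \cref{thm:tensor-param} and \cref{thm:ppt-mom}. Fix $X \in \mathsf{DS}^{(n)}_d$ and $r \geq 0$. The guiding principle is to reduce the extension problem to the DS subspace, where both the state and its extension are determined by their tensors $Q[\cdot]$, and where partial traces correspond to tensor marginals.

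For the forward direction of (1), let $X \in \mathsf{BExt}^{(n,r)}_d$.
\begin{itemize}
\item By \cref{lem:extension-also-symmetric}, we may choose the extension $\tilde X \in \mathsf{DS}^{(n+r)}_d$ with $\tilde X \succeq 0$ and $\operatorname{Tr}_{[r]}\tilde X = X$.
\item By \cref{thm:tensor-param}(1), positivity gives $Q[\tilde X] \in \mathsf{NN}^{(n+r)}_d$.
\item By \cref{thm:marginals-dicke-states}, $Q[X] = Q[\operatorname{Tr}_{[r]}\tilde X] = \operatorname{tr}_{[r]}(Q[\tilde X])$.
\item Hence $Q[X] \in \mathsf{NNExt}^{(n,r)}_d$.
\end{itemize}
The PPT case (2) runs identically. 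The lemma again gives a DS extension $\tilde X \in \mathsf{PPT}^{(n+r)}_d \cap \operatorname{Herm}[\mathsf{DS}^{(n+r)}_d]$. Applying \cref{thm:ppt-mom} for every $k \le \lfloor (n+r)/2\rfloor$ gives $Q[\tilde X] \in \mathsf{Mom}^{(n+r)}_d$, and the marginal identity finishes the argument.

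For the converse, suppose $Q[X] = \operatorname{tr}_{[r]}(T)$ with $T \in \mathsf{NN}^{(n+r)}_d$ (respectively $T \in \mathsf{Mom}^{(n+r)}_d$).
\begin{itemize}
\item Since $T$ is a real symmetric tensor, we define the DS matrix $\tilde X := \Pi^{(n+r)}_d \operatorname{diag}[W]\Pi^{(n+r)}_d$, where $W_i := \binom{n+r}{\gamma(i)} T_i$. By \cref{prop:X-expansion-W} and \cref{thm:tensor-param}(2), this is the unique element of $\operatorname{Herm}[\mathsf{DS}^{(n+r)}_d]$ with $Q[\tilde X] = T$. Equivalently, $\tilde X = \sum_\alpha \binom{n+r}{\alpha} T_{\idx(\alpha)} \ketbra{D_\alpha}{D_\alpha}$.
\item \cref{thm:tensor-param}(1) then gives $\tilde X \succeq 0$, respectively \cref{thm:ppt-mom} gives $\tilde X \in \mathsf{PPT}^{(n+r)}_d$.
\item By \cref{thm:marginals-dicke-states}, $\operatorname{Tr}_{[r]}\tilde X$ lies in $\mathsf{DS}^{(n)}_d$ and has tensor $\operatorname{tr}_{[r]} T = Q[X]$.
\item Since $Q[\cdot]$ determines a DS matrix uniquely (\cref{thm:tensor-param}(2)), we conclude $\operatorname{Tr}_{[r]}\tilde X = X$.
\item By bosonic symmetry, $\operatorname{Tr}_A \tilde X = X$ for every $A$ with $|A| = r$, so $\tilde X$ is a hypergraph extension.
\end{itemize}

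The main point requiring care is the injectivity and realizability of the parametrization at level $n+r$. We must check that every real symmetric tensor arises as $Q[\cdot]$ of some Hermitian DS matrix, and that two DS matrices with equal $Q$ coincide; both follow from the bijection $\lambda_\alpha = \binom{n}{\alpha} Q_{\idx(\alpha)}$ of \cref{thm:tensor-param}(2). A second check is that the PPT extension must be PPT across all bipartitions, not only the balanced one. This is handled by applying \cref{thm:ppt-mom} for each $k$, or simply by noting that $\mathsf{Mom}^{(n+r)}_d = \mathsf{Mom}^{(n+r,\lfloor (n+r)/2\rfloor)}_d$ contains all slice-flattening conditions.
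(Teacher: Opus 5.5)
Your proposal is correct and follows the paper's proof essentially step for step. You reduce to a DS extension via \cref{lem:extension-also-symmetric}, transfer positivity (resp.\ PPT) through $Q[\cdot]$ using \cref{thm:tensor-param} (resp.\ \cref{thm:ppt-mom}) together with the marginal identity of \cref{thm:marginals-dicke-states}, and for the converse realize the extending tensor as the DS matrix $\sum_\alpha \binom{n+r}{\alpha} T_{\idx(\alpha)} \ketbra{D_\alpha}{D_\alpha}$. The extra checks you spell out — realizability and injectivity of $Q[\cdot]$ at level $n+r$, and PPT across all bipartitions via $\mathsf{Mom}^{(n+r)}_d = \mathsf{Mom}^{(n+r,\lfloor (n+r)/2\rfloor)}_d$ — are precisely the points the paper leaves implicit.
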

\begin{proof}
    For the forward implication of item (1), assume that $X \in \mathsf{BExt}^{(n,r)}_d.$ Then, by \cref{lem:extension-also-symmetric}, there exists  $\tilde X \in \mathsf{DS}_d^{(n,r)}$ which is PSD, and thus $Q[\tilde X ] \in \mathsf{NN}^{(m+r)}_d.$ We also have $Q[X] = Q[\operatorname{Tr}_{[r]}(\tilde X)] = \operatorname{tr}_{[r]}(Q[\tilde X])$ where $Q[\tilde X] \in \mathsf{NN}^{(n+r)}_d$. This implies that $Q[X] \in \mathsf{NNExt}_d^{(n,r)}.$  For the reverse implication, let  $Q[X] \in \mathsf{NNExt}_d^{(n,r)}$. There exists $\tilde Q \in \mathsf{NN}^{(n+r)}_d$ such that $\operatorname{tr}_{[r]}(\tilde Q) = Q$. Then, the state $\tilde X \in \mathsf{DS}^{(n+r)}_d$ such that $Q[\tilde X] = \tilde Q$ is the bosonic extension of $X$.   
    
    The proof of the second item follows from replacing $\mathsf{NN}$ by $\mathsf{Mom}$, PSD by PPT, and bosonic extension by PPT bosonic extension in part (1).
\end{proof}

\begin{remark}
    Our proof of the previous theorem is considerably shorter than the proofs in \cite{gulatiposext, britz2025semidefinite}, as the hard part of the proof was relegated to establishing a more general dictionary in the previous sections of this work.
\end{remark}

Finally, we can also establish the dual result for the extendibility witnesses.
\begin{corollary}
\label{cor:extendibility-witnesses}
    The following hold for operators $O \in \mathsf{DS}_d^{(n)}$:
   \begin{itemize}
        \item[(1)] $O \in \mathsf{ExtW}^{(n,r)}_d\iff W[O] \in \mathsf{PNN}_d^{(n,r)}$
        \item[(2)] $O \in \mathsf{DecExtW}^{(n,r)} \iff W[O] \in \mathsf{RSOS}_d^{(n,r)}$
    \end{itemize}
\end{corollary}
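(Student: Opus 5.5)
The plan is to deduce both items from \cref{thm:extendibility-dicke-states} by conic duality. The pairing is the one in \cref{eq:HS-Q-W}: $\operatorname{Tr}(XO)=\braket{Q[X],W[O]}$ for $X,O\in\operatorname{Herm}[\mathsf{DS}^{(n)}_d]$. The tensor-side duals come from \cref{thm:sos-mom-duality} and the dual of the NN hierarchy. The work lies in showing that, for a DS operator $O$, membership in $(\mathsf{BExt}^{(n,r)}_d)^\circ$ can be tested using only DS states $X$. This is the same twirling reduction used in the proof of \cref{thm:ppt-mom}.

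\textbf{Step 1 (reduction to DS states).} Let $O\in\operatorname{Herm}[\mathsf{DS}^{(n)}_d]$ and $Y\in\mathsf{BExt}^{(n,r)}_d$. Since $O$ commutes with every $U^{\otimes n}$, $U\in\diagU_d$, we have
$$\operatorname{Tr}(OY)=\operatorname{Tr}(O\,\mathcal{T}_{\diagU}(Y)).$$
Next I claim $\mathcal{T}_{\diagU}(Y)\in\mathsf{BExt}^{(n,r)}_d\cap\mathsf{DS}^{(n)}_d$. If $\tilde Y$ extends $Y$, then $\mathcal{T}_{\diagU}(\tilde Y)$ is PSD and bosonic, and by the computation in \cref{lem:extension-also-symmetric} its partial trace is $\mathcal{T}_{\diagU}(Y)$. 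In the PPT case one also needs the partial transposes of the twirled extension to remain PSD. This holds because $(U^{\otimes m}ZU^{*\otimes m})^{\Gamma_{[k]}}=V Z^{\Gamma_{[k]}}V^*$ with $V=\overbar{U}^{\otimes k}\otimes U^{\otimes m-k}$ unitary, and integration preserves PSD. Hence
$$O\in\mathsf{ExtW}^{(n,r)}_d \iff \operatorname{Tr}(OX)\ge 0\ \ \forall X\in\mathsf{BExt}^{(n,r)}_d\cap\operatorname{Herm}[\mathsf{DS}^{(n)}_d],$$
and likewise for $\mathsf{DecExtW}$ with $\mathsf{PPTBExt}$.

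\textbf{Step 2 (translate to tensors).} By \cref{thm:tensor-param}, $X\mapsto Q[X]$ is a linear bijection from $\operatorname{Herm}[\mathsf{DS}^{(n)}_d]$ onto $\vee^n\mathbb R^d$. By \cref{thm:extendibility-dicke-states}, this bijection maps $\mathsf{BExt}^{(n,r)}_d\cap\operatorname{Herm}[\mathsf{DS}^{(n)}_d]$ onto $\mathsf{NNExt}^{(n,r)}_d$. Surjectivity is needed here: every $T\in\mathsf{NNExt}^{(n,r)}_d$ equals $Q[X]$ for some DS $X$, and that $X$ is then extendible. Similarly, it maps $\mathsf{PPTBExt}^{(n,r)}_d\cap\operatorname{Herm}[\mathsf{DS}^{(n)}_d]$ onto $\mathsf{MomExt}^{(n,r)}_d$. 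Combining this with \cref{eq:HS-Q-W}, the condition in Step 1 reads $\braket{T,W[O]}\ge0$ for all $T\in\mathsf{NNExt}^{(n,r)}_d$, that is, $W[O]\in(\mathsf{NNExt}^{(n,r)}_d)^\circ$. The dual of the NN hierarchy identifies this set with $\mathsf{PNN}^{(n,r)}_d$. In the PPT case, \cref{thm:sos-mom-duality} gives $(\mathsf{MomExt}^{(n,r)}_d)^\circ=\mathsf{RSOS}^{(n,r)}_d$.

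\textbf{Main obstacle.} The delicate point is Step 1 in the PPT case, namely that twirling preserves the PPT property of the extension. The conjugated-unitary identity above handles this. A secondary point is that $W[O]$ is real, which requires $O$ to be Hermitian. This is implicit in the statement, since the witness cones consist of Hermitian operators.
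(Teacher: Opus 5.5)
Your proposal is correct and is essentially the argument the paper has in mind. The paper states the corollary, without a written proof, as the dual of \cref{thm:extendibility-dicke-states}. That dual follows from the pairing $\operatorname{Tr}(XO)=\braket{Q[X],W[O]}$, the diagonal-unitary twirl reduction used in the proof of \cref{thm:ppt-mom}, and the dualities $(\mathsf{NNExt}^{(n,r)}_d)^\circ=\mathsf{PNN}^{(n,r)}_d$ and $(\mathsf{MomExt}^{(n,r)}_d)^\circ=\mathsf{RSOS}^{(n,r)}_d$. Your explicit check that twirling preserves the PPT property of the extension, via conjugation by $\overbar{U}^{\otimes k}\otimes U^{\otimes (m-k)}$, supplies a step the paper only calls ``analogous''.
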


\begin{remark}
    Note that the theorem provides a much more powerful SDP hierarchy to check separability in diagonally symmetric states. The largest slice flattening of a symmetric tensor of $2n$ legs is of size $\binom{n+d-1}{d-1}.$ Therefore, to check $Q[X] \in \mathsf{MomExt}_d^{(n,r)},$ we need to check the PSD of the largest matrix of size $\binom{\lfloor \frac{n+r}{2} \rfloor+d-1}{d-1}.$
\end{remark}

This result completes the picture of all the hierarchies for bosonic extendibility and their duals, see \cref{fig:full-diagram} for the full picture of correspondences, inclusions, and dualities. 


\section{Conclusion}
In this article, we present a detailed study of multipartite entanglement in the diagonally symmetric subspace, with a particular focus on PPT entanglement. This forms a crucial subclass of matrices with bosonic symmetry. By defining a tensor-based parametrization of these states, we translate the entanglement properties of these states as convex cones of symmetric tensors. In particular, the important entanglement-theoretic cones of separable and PPT states can be described equivalently in terms of the complete positivity and moment positivity of the parametrizing tensor. This extends the results in \cite{tura2018separability, yu2016separability, romero2025multipartite, marconi2021entangled} to provide a comprehensive theory for this class of states. Our results provide a bridge between the study of multipartite entanglement, semi-algebraic geometry, and optimization theory.

Our results also lead to several interesting corollaries in the DS subspace. Firstly, we show the existence of PPT entangled states in $3$ or more qutrits and ququarts, disproving the recent conjecture from \cite{romero2025multipartite}. Quite surprisingly, the existence of entanglement witnesses turns out to be closely related to the theory of positive polynomials that are not sums of squares. Using this perspective, we develop a dictionary that links extendibility witnesses on the diagonally symmetric subspace with the sum-of-squares (SOS) hierarchy for copositive tensors. This viewpoint not only clarifies the underlying structure of these witnesses, but also extends earlier extendibility results known in the bipartite setting~\cite{gulatiposext, britz2025semidefinite} to a broader symmetric framework.

As a further application, we show that the two-body marginals of pure Dicke states are NPT and therefore necessarily entangled. Our approach yields a shorter and more transparent proof of the corresponding results in~\cite{szalay2025dicke}.

More broadly, we expect that the connections developed here between entanglement theory, polynomial positivity, and SOS methods will open several new research directions. We conclude by outlining two concrete avenues for future work.

Firstly, it will be interesting to understand the multipartite states that are invariant under the diagonal unitary group, generalizing the LDUI (Local Diagonal Unitary Invariant) states from \cite{singh2021diagonal, Johnston_2019}. The bipartite versions of LDUI states can be parameterized using matrix pairs, and entanglement-theoretic properties can be understood as convex cones of matrix pairs. The bipartite mixtures of Dicke states are exactly the LDUI states where these matrices are equal. Therefore, it seems possible that multipartite LDUI states allow a similar description, replacing matrices with \emph{tensors}.

The second question centers on extendibility hierarchies. In this paper, we consider a hierarchy that was well-suited for bosonic states, which we call the $(n,r)$-hypergraph extendibility. There are other extendibility hierarchies introduced for quantum states in the literature. Since bosonic states are permutation symmetric (i.e. they satisfy $P_{\pi} \rho = \rho P_{\pi}$ for all $\pi \in S_n$), can one imagine another extendibility hierarchy in which the extensions are required to be permutation symmetric instead of bosonic? Is there a gap between this hierarchy and the one we consider for bosonic states?

\section{Author Contribution}
All authors contributed equally to this work. LLM Models were used to produce the Tikz code for Figure 1, which was independently verified by the authors.

\section{Acknowledgments}
A.G. thanks Jordi Romero-Pallejà and Anna Sanpera for the invitation to discuss a preliminary version of this work, and the connection to \cite{romero2025multipartite}. The authors thank Jonas Britz and Monique Laurent for discussions surrounding their recent work \cite{britz2025semidefinite}. The authors thank Sang-Jun Park for insightful discussions on this topic and for useful comments on the preliminary draft. 
The three authors were supported by the ANR project \href{https://esquisses.math.cnrs.fr/}{ESQuisses} grant number ANR-20-CE47-0014-01. A.G received support from the University Research School EUR-MINT (State support managed by the National Research Agency for Future Investments program bearing the reference ANR-18-EURE-0023). I.N.~was supported by the ANR project \href{https://www.ceremade.dauphine.fr/dokuwiki/anr-tagada:start}{TAGADA} grant number ANR-25-CE40-5672.

\bibliography{references}
\bibliographystyle{quantum}

\appendix
\section{Proof of the decomposability result}\label{sec:appendix-proof-Dec}
\begin{proof}[Proof of \cref{thm:Dec-n-k-SOS}]

We shall prove separately the two items in the statement. 

\medskip

\noindent\underline{Proof of the Item 1.}

Consider $O \in \mathsf{Dec}_d^{(n,k)}$ for some integer $k \leq n/2$, and let $p:=p_{W[O]}$ the associated polynomial. Our goal is to show that the polynomial $p$ belongs to the cone 
$$\operatorname{cone}\{x^\alpha q \mid |\alpha| \in \{n, n-2, n-4 \ldots n-2k\}, q \in \Sigma^{(n-|\alpha|)}_d\}.$$
From $O \in \mathsf{Dec}_d^{n,k}$, we have 
$$p(|z|^2) = \langle z^{\otimes n} | P^{\Gamma_{[k]}} | z^{\otimes n} \rangle$$
for some positive senmidefinite matrix $P$. Using the spectral decomposition of $P$, we can write 
$$p(|z|^2) = \sum_{i=1}^r \langle z^{\otimes n} | \ketbra{t_i}{t_i}^{\Gamma_{[k]}} | z^{\otimes n} \rangle$$
for tensors $t_i \in (\mathbb R^d)^{\otimes n}$. By linearity, it is enough to prove the claim for the polynomial corresponding to the general term above, so, without loss of generality, we can assume that $O = \ketbra{t}{t}^{\Gamma[k]}$ for some real tensor $t$. On the one hand, we have 
\begin{align}
    \nonumber p(|z|^2) &= \langle W[O], (|z|^2)^{\otimes n} \rangle \\
    \nonumber &= \sum_{i \in [d]^n} W[O]_i |z_{i_1}|^2 \cdots |z_{i_n}|^2\\
    &= \sum_{\alpha \in \mathbb N_d^{(n)}} \binom{n}{\alpha} W[O]_{\idx(\alpha)} \bar z^\alpha z^\alpha,\label{eq:p-z2-alpha-alpha}
\end{align}
where 
$$\mathbb N_d^{(n)} := \{ \alpha \in \mathbb Z^d \mid \alpha_i \geq 0 \, \forall i \in [d] \text{ and } \sum_i \alpha_i = n\}$$
is the set of $d$-tuples of non-negative integers that sum up to $n$ and corresponds to the types of multi-indices $i \in [d]^n$. 

On the other hand, using the form of $O$ and moving the partial transposition on the ``$z$'' factors, we also have:
$$p(|z|^2) = \langle z^{\otimes n} | \ketbra{t}{t}^{\Gamma_{[k]}} | z^{\otimes n} \rangle =  \langle \bar z^{\otimes k} \otimes z^{\otimes(n-k)} | \ketbra{t}{t} | \bar z^{\otimes k} \otimes z^{\otimes(n-k)}  \rangle  = |\langle \bar z^{\otimes k} \otimes z^{\otimes(n-k)} | t \rangle|^2.$$
In the last scalar product above, we have 
$$\bar z^{\otimes k} \otimes z^{\otimes(n-k)} \in \vee^k \mathbb C^d \otimes \vee^{n-k} \mathbb C^d,$$
hence we can assume, without loss of generality, $t \in \vee^k \mathbb R^d \otimes \vee^{n-k} \mathbb R^d$, by projecting it on the corresponding tensor product of symmetric subspaces of $\mathbb R^d$. We can now write 

\begin{align*}
    p(|z|^2) &= \sum_{i,j \in [d]^n} \bar z_{i_1} \cdots \bar z_{i_k} z_{i_{k+1}} \cdots z_{i_n} t_{i_1 \cdots i_n} z_{j_1} \cdots z_{j_k} \bar z_{j_{k+1}} \cdots \bar z_{j_n} t_{j_1 \cdots j_n} \\
    &= \sum_{\substack{\alpha', \beta' \in \mathbb N_d^{(k)}\\\alpha'', \beta'' \in \mathbb N_d^{(n-k)}}} \binom{k}{\alpha'}\binom{n-k}{\alpha''}\binom{k}{\beta'}\binom{n-k}{\beta''}\bar z^{\alpha'+\beta''} z^{\beta' + \alpha''} t_{\idx(\alpha'), \idx(\alpha'')} t_{\idx(\beta'), \idx(\beta'')}\\
    &= \sum_{\substack{\alpha', \beta' \in \mathbb N_d^{(k)}\\\alpha'', \beta'' \in \mathbb N_d^{(n-k)} \\ \alpha'+\beta'' = \beta'+\alpha''}} \binom{k}{\alpha'}\binom{n-k}{\alpha''}\binom{k}{\beta'}\binom{n-k}{\beta''}\bar z^{\alpha'+\beta''} z^{\beta' + \alpha''} t_{\idx(\alpha'), \idx(\alpha'')} t_{\idx(\beta'), \idx(\beta'')}
\end{align*}
where the last equality follows from \cref{eq:p-z2-alpha-alpha} which allows us to discard terms with $\alpha'+\beta'' \neq \beta'+\alpha''$.

From the last formula for $p(|z|^2)$ we would like to extract a decomposition as in the statement, as a product of a monomial in $|z|^2$ and a sum-of-squares polynomial $q$ evaluated at $|z|^2$. To this end, define the following $d$-tuples of non-negative integers:
\begin{align}
    \label{eq:def-mu'} \mu' &:= \min(\alpha',\beta')\\
    \nonumber \mu'' &:= \min(\alpha'',\beta'')\\
    \nonumber \mu &:= \min(\alpha',\beta',\alpha'',\beta'') = \min(\mu',\mu'')\\
    \nonumber \delta' &:= \mu' - \mu\\
    \nonumber \delta'' &:= \mu'' - \mu\\
    \nonumber \hat \alpha' &:= \alpha' - \delta'\\
    \nonumber \hat \alpha'' &:= \alpha'' - \delta''\\
    \nonumber \hat \beta' &:= \beta' - \delta'\\
    \label{eq:def-hat-beta''} \hat \beta'' &:= \beta'' - \delta'',
\end{align}
where the $\min$ operation is taken entry-wise. We have, for example, 
$$\hat \alpha'_i = \alpha'_i - (\min(\alpha'_i, \beta'_i) - \min(\alpha'_i, \beta'_i,\alpha''_i,\beta''_i)) \geq \alpha'_i - \min(\alpha'_i, \beta'_i) \geq 0.$$
Importantly, note that for every $i \in [d]$, we either have $\mu'_i = \mu_i$ or $\mu''_i = \mu_i$; in other words, the supports of the vectors $\delta'$ and $\delta''$ are disjoint. We shall write this as $\delta' \delta''=0$ using the entry-wise product of vectors. 

We shall now prove the following very important fact: 
\begin{equation}\label{eq:hats-equal}
    \hat \alpha ' = \hat \alpha'' \quad \text{ and } \quad \hat \beta' = \hat \beta''.
\end{equation}
For the $\alpha$'s, fix some coordinate $i \in [d]$ and denote by the Latin letter the $i$-th coordinate of the respective Greek letter (e.g.~$\hat b'' = \hat \beta''_i$). From the constraint $\alpha' + \beta'' = \beta' + \alpha''$, we obtain $a'+b'' = b'+a''$. This means that there exist rational numbers $c,x,y$ such that 
$$a'=c+x \quad b'' = c-x \quad b' = c+y \quad a'' = c-y.$$
We have then $m' = c + \min(x,y)$ and $m'' = c - \max(x,y)$; recall that $m = \min(m',m'')$. We now have 
$$\hat a' = c+x - (c+\min(x,y)) + m \quad \text{ and } \quad \hat a'' = c - y - (c-\max(x,y)) + m$$
and thus 
$$\hat a' - \hat a'' = x+y - \min(x,y) - \max(x,y) = 0.$$
A similar argument shows $\hat b' = \hat b''$, proving the claim. From now on, we shall write 
\begin{align*}
    \hat \alpha &:= \hat \alpha' = \hat \alpha''\\
    \hat \beta &:= \hat \beta' = \hat \beta''.
\end{align*}

From the claim in \cref{eq:hats-equal}, we obtain the following constraints on the non-negative vectors $\delta', \delta''$:
\begin{align*}
    |\delta'| &\leq k\\
    |\delta''| &\leq n-k\\
    |\delta''| - |\delta'| &= n-2 k.
\end{align*}

\medskip

Going back to the formula for $p(|z|^2)$, we can now write 
$$p(|z|^2) = \sum_{\substack{\alpha', \beta' \in \mathbb N_d^{(k)}\\\alpha'', \beta'' \in \mathbb N_d^{(n-k)} \\ \alpha'+\beta'' = \beta'+\alpha''}} \binom{k}{\hat \alpha + \delta'}\binom{n-k}{\hat \alpha + \delta''}\binom{k}{\hat \beta + \delta'}\binom{n-k}{\hat \beta + \delta''} (|z|^2)^{\hat \alpha + \hat \beta + \delta' + \delta''} t_{\idx(\hat \alpha + \delta'), \idx(\hat \alpha + \delta'')} t_{\idx(\hat \beta + \delta'), \idx(\hat \beta + \delta'')},$$
where the 4-tuple $(\delta', \delta'', \hat \alpha, \hat \beta)$ is a function of the 4-tuple $(\alpha', \beta', \alpha'', \beta'')$. 

We prove now a second crucial fact: the following two sets $S$ and $T$ are in bijection: 
$$S:=\{ (\alpha', \beta', \alpha'', \beta'') \in \mathbb N^d \mid |\alpha'| = |\beta'| = k, \, |\alpha''| = |\beta''| = n-k, \, \alpha'+\beta'' = \beta'+\alpha''\}$$
\begin{align*}
    T:=\{ (\delta', \delta'', \hat \alpha, \hat \beta) \in \mathbb N^d \mid &|\delta'| \leq k, \, |\delta''| \leq n-k, \, |\delta''| - |\delta'| = n-2k,\,  \delta' \delta'' = 0,\\ 
    &|\hat \alpha| = |\hat \beta| = k-|\delta'| = n-k-|\delta''|\}.
\end{align*}
Note that some conditions in the definition of $T$ are superfluous, but we prefer to write them for clarity. To prove the claim, we shall construct two maps $\phi:T \to S$, $\psi:S \to T$ and prove that $\psi \circ \phi = \mathrm{id}_T$ and $\phi \circ \psi = \mathrm{id}_S$. We introduce 
\begin{align*}
    \phi : T &\to S\\
    (\delta', \delta'', \hat \alpha, \hat \beta) &\mapsto (\underbrace{\hat \alpha + \delta'}_{\alpha'}, \underbrace{\hat \beta + \delta'}_{\beta'}, \underbrace{\hat \alpha + \delta''}_{\alpha''}, \underbrace{\hat \beta + \delta''}_{\beta''})
\end{align*}
and $\psi:S \to T$ which defines $(\delta', \delta'', \hat \alpha, \hat \beta)$ as functions of $(\alpha', \beta', \alpha'', \beta'')$ following Eqs.~\eqref{eq:def-mu'}---\eqref{eq:def-hat-beta''} and \cref{eq:hats-equal}. 

First, we need to show that the maps $\phi$, $\psi$ are well-defined. In the case of $\phi$, the only non-trivial property that one needs to check is
$$\alpha'+\beta'' = \hat \alpha + \hat \beta + \delta' + \delta'' = \beta'+\alpha''.$$
In the case of $\psi$, all the properties (including $\delta'\delta''=0$) have been checked previously, in the main proof. 

We now show $\psi \circ \phi = \mathrm{id}_T$. Let $(\delta', \delta'', \hat \alpha, \hat \beta) \in T$ and denote $(\alpha', \beta', \alpha'', \beta'') := \phi(\delta', \delta'', \hat \alpha, \hat \beta)$, $(\tilde \delta', \tilde \delta'', \hat{\tilde \alpha}, \hat{\tilde \beta}) := \psi(\alpha', \beta', \alpha'', \beta'')$. We use  Eqs.~\eqref{eq:def-mu'}---\eqref{eq:def-hat-beta''} to compute
$$\mu':=\min(\alpha',\beta') = \min(\hat \alpha + \delta', \hat \beta + \delta') = \min(\hat \alpha, \hat \beta) + \delta'.$$
Similarly, $\mu''=\min(\hat \alpha, \hat \beta) + \delta''$ and thus, using now $\delta'\delta''=0$, $\mu = \min(\hat \alpha, \hat \beta)$. We conclude that
$$\tilde \delta' := \mu'-\mu = \delta' \quad \text{ and } \quad \tilde \delta'' := \mu''-\mu = \delta''.$$
Similarly, 
$$\hat{\tilde \alpha} = \alpha' - \tilde \delta' = \hat \alpha \quad \text{ and } \quad \hat{\tilde \beta} = \beta' - \tilde \delta' = \hat \beta,$$
proving $\psi \circ \phi = \mathrm{id}_T$. 

Let us now show $\phi \circ \psi = \mathrm{id}_S$. Consider $(\alpha', \beta', \alpha'', \beta'') \in S$ and set $(\delta', \delta'', \hat \alpha, \hat \beta):=\psi(\alpha', \beta', \alpha'', \beta'')$ and then $(\tilde \alpha', \tilde \beta', \tilde \alpha'', \tilde \beta'') := \phi(\delta', \delta'', \hat \alpha, \hat \beta)$. We have 
$$\tilde \alpha' = \hat \alpha + \delta' = \alpha'$$
and similarly for the other 3 elements; here, we implicitly use the result in \cref{eq:hats-equal}. This proves the claim that the sets $S$ and $T$ are in bijection.

\medskip

Having proved the second claim, we can now index the sum formula for $p(|z|^2)$ by the 4-tuple $(\delta', \delta'', \hat \alpha, \hat \beta)$: 
\begin{align*}
    p(|z|^2) &= \sum_{(\delta', \delta'', \hat \alpha, \hat \beta) \in T} \binom{k}{\hat \alpha + \delta'}\binom{n-k}{\hat \alpha + \delta''}\binom{k}{\hat \beta + \delta'}\binom{n-k}{\hat \beta + \delta''} (|z|^2)^{\hat \alpha + \hat \beta + \delta' + \delta''} t_{\idx(\hat \alpha + \delta'), \idx(\hat \alpha + \delta'')} t_{\idx(\hat \beta + \delta'), \idx(\hat \beta + \delta'')}\\
    &=\!\!\!\!\!\!\!\!\!\!\sum_{\substack{
        \delta',\delta'' \in \mathbb N^d\\
        |\delta'| \leq k, \, |\delta''| \leq n-k\\
        |\delta''| - |\delta'| = n-2k\\
        \delta' \delta'' = 0
    }} \!\!\!\!\!\!\!\!\!\! (|z|^2)^{\delta'+\delta''} \!\!\!\!\!\!\!\!\! \sum_{\substack{\hat \alpha, \hat \beta \in \mathbb N^d\\ |\hat \alpha| = |\hat \beta| = k-|\delta'| = n-k-|\delta''|}}\!\!\!\!\!\!\!\!\!\!\!\!\!\!\!\!\!\!\!\! \binom{k}{\hat \alpha + \delta'}\binom{n-k}{\hat \alpha + \delta''}\binom{k}{\hat \beta + \delta'}\binom{n-k}{\hat \beta + \delta''} (|z|^2)^{\hat \alpha + \hat \beta } t_{\idx(\hat \alpha + \delta'), \idx(\hat \alpha + \delta'')} t_{\idx(\hat \beta + \delta'), \idx(\hat \beta + \delta'')}\\
    &=\!\!\!\!\!\sum_{\substack{
        \delta',\delta'' \in \mathbb N^d\\
        |\delta'| \leq k, \, |\delta''| \leq n-k\\
        |\delta''| - |\delta'| = n-2k\\
        \delta' \delta'' = 0
    }} \!\!\!\!\! (|z|^2)^{\delta'+\delta''}  \left[ \sum_{\substack{\hat \alpha \in \mathbb N^d\\ |\hat \alpha| = k-|\delta'| = n-k-|\delta''|}}\!\!\!\!\!\! \binom{k}{\hat \alpha + \delta'}\binom{n-k}{\hat \alpha + \delta''} (|z|^2)^{\hat \alpha} t_{\idx(\hat \alpha + \delta'), \idx(\hat \alpha + \delta'')} \right]^2\\
    &=\sum_{l=0}^k \sum_{\substack{
        \delta',\delta'' \in \mathbb N^d\\
        |\delta'| = k-l\\|\delta''| = n-k-l\\
        \delta' \delta'' = 0
    }}  (|z|^2)^{\delta'+\delta''}  \left[ \sum_{\substack{\hat \alpha \in \mathbb N^d\\ |\hat \alpha| = l}} \binom{k}{\hat \alpha + \delta'}\binom{n-k}{\hat \alpha + \delta''} (|z|^2)^{\hat \alpha} t_{\idx(\hat \alpha + \delta'), \idx(\hat \alpha + \delta'')} \right]^2.
\end{align*}
Since two polynomials that are identical on non-negative integers are the same, we have proven the result: 
$$p(x) = \sum_{l=0}^k \sum_{\substack{
        \delta',\delta'' \in \mathbb N^d\\
        |\delta'| = k-l\\|\delta''| = n-k-l\\
        \delta' \delta'' = 0
    }}  x^{\overbrace{\delta'+\delta''}^{\text{total degree $n-2l$}}} q_{\delta',\delta''}(x)^2$$
where 
$$q_{\delta',\delta''}(x) = \sum_{\substack{\hat \alpha \in \mathbb N^d\\ |\hat \alpha| = k-|\delta'| = n-k-|\delta''|}} \binom{k}{\hat \alpha + \delta'}\binom{n-k}{\hat \alpha + \delta''} t_{\idx(\hat \alpha + \delta'), \idx(\hat \alpha + \delta'')} x^{\hat \alpha} $$

\bigskip

\noindent\underline     {Proof of the Item 2.}

Assume that $p_{W[O]}(x) = x^{\alpha} \psi^2$ where $\psi \in \homo{d}{k}.$ We expand the polynomial $\psi$ in the basis of monomials $x^{\beta}, \beta \in \mathbb{N}^{(k)}_d.$

$$\psi(x) = \sum_{\beta \in \mathbb{N}^{(k)}_d} c_{\beta} x^{\beta} \implies \psi(x)^2 = \sum_{\beta, \beta' \in \mathbb{N}^{(k)}_d} c_{\beta} c_{\beta'} x^{\beta + \beta'}.$$ We now evaluate the polynomial for $z \odot \overbar{z}$ where $z \in \mathbb{C}^d.$ 

$$p_{W[O]}(z \odot \overbar{z}) = |z|^{2 \alpha} \sum_{\beta, \beta' \in \mathbb{N}^{(k)}_d} c_{\beta} c_{\beta'} |z|^{2\beta + 2\beta'}.$$ Let $\delta, \delta' \in \mathbb{N}^d$ such that $\delta + \delta' = \alpha.$ We then can rewrite the previous expression as 

$$p_{W[O]}(z \odot \overbar{z}) = |\sum_{\beta \in \mathbb{N}^{(k)}_d} c_{\beta} \overbar{z}^{\delta + \beta} z^{\delta' + \beta}|^2.$$ Consider the following vector.

\[
\ket{\phi}
:=
\sum_{\beta \in \mathbb{N}^{(k)}_d}
c_{\beta}
\left(
\ket{1}^{\otimes {\beta_1 + \delta_1}}
\cdots
\ket{d}^{\otimes{\beta_d + \delta_d}}
\otimes
\ket{1}^{\otimes (\beta_1 + \delta'_d)}
\cdots
\ket{d}^{\otimes (\beta_d + \delta'_d)}
\right)
\in (\mathbb{C}^d)^{\otimes n},
\] This allows us to rewrite the term inside the modulus as an inner product with the product vector in $z$ and $\bar{z}$. $$\sum_{\beta \in \mathbb{N}^{(k)}_d} c_{\beta} \overbar{z}^{\delta + \beta} z^{\delta' + \beta} = \braket{z^{\otimes k + |\delta|} \otimes \overbar{z}^{\otimes n-k-|\delta|} \vert \phi}.$$ Therefore, finally, we have the following series of equalities. 
$$p_{W[O]}(z \odot \overbar{z}) = |\sum_{\beta \in \mathbb{N}^{(k)}_d} c_{\beta} \overbar{z}^{\delta + \beta} z^{\delta' + \beta}| = |\braket{z^{\otimes k + |\delta|} \otimes \overbar{z}^{\otimes n-k-|\delta|} \vert \phi}|^2 = \braket{z^{\otimes n} \vert \ketbra{\phi}{\phi}^{\Gamma_{[k + |\delta|]}} \vert z^{\otimes n}}.$$ Since $|\alpha| = n-2k$, for any $l$ satisfying $0 \leq l \leq n-2k,$ there exists a $\delta \leq \alpha$ such that $|\delta| = l$. Defining $l':=k+l$, this also implies that for all $k\leq l' \leq n-k$, there exists a $\ket{\phi} \in (\mathbb{C}^d)^{\otimes n}$ such that 
$$\braket{z^{\otimes n} \vert \operatorname{diag}[W[O]] \vert z^{\otimes n}} = p_{W[O]}(z \odot \overbar{z}) = \braket{z^{\otimes n} \vert  \ketbra{\phi}{\phi}^{\Gamma_{[l']}} \vert z^{\otimes n}}.$$
This proves the claim.

\end{proof}

\section{Proof of qubit PPT criterion}

\label{sec:appendix-proof-qubits} 
We will study the following cone of SOS tensors for $d=2.$

$$\mathsf{SOS}^{(n)}_2:= \{T \mid p_{T}(x) \in \operatorname{cone}\{x^\alpha \psi \mid |\alpha| \in \{n, n-2, n-4 \ldots n-2 \lfloor n/2 \rfloor \}, \psi \in \Sigma^{(n-|\alpha|)}_2\}\}.$$ We will show that some of the terms are superfluous. In particular, we claim that $\alpha$ can be taken to be $\operatorname{max}(\alpha) \leq 1.$ So, for the case of $n$ even, we have, 
$$\mathsf{SOS}^{(n)}_2= \{T \mid p_{T}(x) \in \operatorname{cone}\{x^\alpha \psi \mid \alpha \in \{(1,1), (0,0)\}, \psi \in \Sigma^{(n-|\alpha|)}_2\}\}, $$ and in the case of $n$ odd, we have, 
$$\mathsf{SOS}^{(n)}_2= \{T \mid p_{T}(x) \in \operatorname{cone}\{x^\alpha \psi \mid \alpha \in \{(0,1), (1,0)\}, \psi \in \Sigma^{(n-|\alpha|)}_2\}\}.$$ 
\bigskip

We show the proof for $n$ even. Let $x^{\alpha} \psi$ be an arbitrary term such that $\Sigma^{(n-|\alpha|)}_2.$ Also $|\alpha|$ is even. Let $p : \mathbb{N} \rightarrow \{0,1\}$ denote the parity function. Then, $$x^{\alpha} \psi = x_1^{\alpha_1}  x_2^{\alpha_2} \psi = x_1^{p(\alpha_1)}x_2^{p(\alpha_2)} \underbrace{x_1^{\alpha_1- p(\alpha_1)}  x_2^{\alpha_2 - p(\alpha_2)} \psi}_{\in \Sigma^{n-p(\alpha_1)-p(\alpha_2)}_2}$$

Since $p(\alpha_1)+p(\alpha_2) = p(\alpha_1 +\alpha_2) = 0 \mod 2$, either both are $0$ or both are $1$. Hence, each term is equivalent to $x^{\alpha} \psi$ where $\Sigma^{n-|\alpha|}_2$ where $\alpha = (0,0)$ or $\alpha = (1,1).$ The proof for the odd case can be done similarly. 

By following the essential argument as \cref{thm:dual-SOS}, the argument above implies that the dual of $\mathsf{SOS}^{(n)}_2$ is equivalent to checking positivity of only \emph{two} moment matrices, one for each $\alpha$. 

\bigskip
For the even case, $n=2m$, the explicit moment matrices in terms of the tensors can be obtained as 

\[
M^{(0,0)}(T) = \begin{pmatrix}
T_{\idx(2m,0)} & T_{\idx(2m-1,1)} & T_{\idx(2m-2,2)} & \cdots & T_{\idx(m,m)} \\
T_{\idx(2m-1,1)} & T_{\idx(2m-2,2)} & T_{\idx(2m-3,3)} & \cdots & T_{\idx(m-1,m+1)} \\
T_{\idx(2m-2,2)} & T_{\idx(2m-3,3)} & T_{\idx(2m-4,4)} & \cdots & T_{\idx(m-2,m+2)} \\
\vdots & \vdots & \vdots & \ddots & \vdots \\
T_{\idx(m,m)} & T_{\idx(m-1,m+1)} & T_{\idx(m-2,m+2)} & \cdots & T_{\idx(0,2m)}
\end{pmatrix}.
\]

\bigskip

\[
M^{(1,1)}(T) = \begin{pmatrix}
T_{\idx(2m-1,1)} & T_{\idx(2m-2,2)} & T_{\idx(2m-3,3)} & \cdots & T_{\idx(m,m)} \\
T_{\idx(2m-2,2)} & T_{\idx(2m-3,3)} & T_{\idx(2m-4,4)} & \cdots & T_{\idx(m-1,m+1)} \\
T_{\idx(2m-3,3)} & T_{\idx(2m-4,4)} & T_{\idx(2m-5,5)} & \cdots & T_{\idx(m-2,m+2)} \\
\vdots & \vdots & \vdots & \ddots & \vdots \\
T_{\idx(m,m)} & T_{\idx(m-1,m+1)} & T_{\idx(m-2,m+2)} & \cdots & T_{\idx(1,2m-1)}
\end{pmatrix}.
\]

\bigskip
These matrices of size of order $n$ were also obtained in \cite[Theorem 1]{yu2016separability}. These matrices give necessary and sufficient conditions for the mixture of Dicke state $X$ to be PPT by \cref{thm:ppt-mom}, and hence, in effect, also separable by \cref{thm:no-ppt-in-qubits}. 

\begin{theorem}
    Let $X \in \operatorname{Herm}[\mathsf{DS}^{(2m)}_2]$. Then, $$X \in \mathsf{Sep}^{(2m)}_2 \iff M^{(1,1)}(Q[X]) \succeq 0 \text{ and } M^{(0,0)}(Q[X]) \succeq 0.$$
\end{theorem}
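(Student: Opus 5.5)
The plan is to chain three results already established. By \cref{thm:no-ppt-in-qubits}, for $d=2$ separability of $X$ is equivalent to $X\in\mathsf{PPT}^{(n)}_2$. By \cref{thm:ppt-mom} and \cref{cor:PPT-balanced-strongest}, this in turn is equivalent to $Q[X]\in\mathsf{Mom}^{(2m)}_2=(\mathsf{SOS}^{(2m)}_2)^\circ$, using \cref{thm:duality-moment-sos}. What remains is to show that membership in the dual of $\mathsf{SOS}^{(2m)}_2$ reduces to positivity of exactly the two Hankel matrices $M^{(0,0)}(T)$ and $M^{(1,1)}(T)$.

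\textbf{Step 1: reduce the generators.} Following the parity argument just given, every generator $x^\alpha\psi$ of $\mathsf{SOS}^{(2m)}_2$ (with $|\alpha|$ even and $\psi\in\Sigma_2^{(2m-|\alpha|)}$) can be rewritten as $x^{\alpha_0}\tilde\psi$. Here $\alpha_0\in\{(0,0),(1,1)\}$ is the parity vector of $\alpha$, and $\tilde\psi=x^{\alpha-\alpha_0}\psi$ is still SOS, since $x^{\alpha-\alpha_0}$ is a square. Hence
$$\mathsf{SOS}^{(2m)}_2=\{T\mid p_T\in \Sigma_2^{(2m)}+x_1x_2\,\Sigma_2^{(2m-2)}\}.$$

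\textbf{Step 2: dualize as in \cref{thm:dual-SOS}.} A tensor $T$ lies in the dual cone if and only if
$$\mathcal L_T(\phi^2)\ge0 \quad\text{and}\quad \mathcal L_T(x_1x_2\phi'^2)\ge 0$$
for all forms $\phi$ of degree $m$ and $\phi'$ of degree $m-1$. Expanding $\phi=\sum_\beta c_\beta x^\beta$ turns these conditions into the PSD conditions on the moment matrices $\mathcal M^{((0,0),m)}[\mathcal L_T]$ and $\mathcal M^{((1,1),m-1)}[\mathcal L_T]$. Since $\mathcal L_T(x^\mu)=T_{\idx(\mu)}$, we then index the monomials of degree $m$ in two variables by $\beta=(m-a,a)$ for $a=0,\dots,m$. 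The $(a,b)$ entries of the two matrices are then $T_{\idx(2m-a-b,\,a+b)}$ and $T_{\idx(2m-1-a-b,\,1+a+b)}$ respectively, which are exactly the Hankel matrices $M^{(0,0)}$ and $M^{(1,1)}$ displayed above.

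\textbf{Step 3: assemble.} Combining Steps 1 and 2 gives $Q[X]\in\mathsf{Mom}^{(2m)}_2$ if and only if both matrices are PSD, and hence, by \cref{thm:no-ppt-in-qubits}, the equivalence with separability.

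The main obstacle is a subtlety in Step 1. The reduced cone is smaller only in its list of generators; it is the same set. So the dual cone $\mathsf{Mom}$, which is defined through all slice flattenings, must coincide with the dual of the reduced generating set. This holds because dual cones depend only on the cone itself, not on how it is generated.

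A second check is that the displayed matrices have matching indexing conventions. In particular, $M^{(1,1)}$ as printed is $(m+1)\times(m+1)$ with corner $T_{\idx(1,2m-1)}$, whereas the natural moment matrix for $\alpha=(1,1)$ is $m\times m$. I would verify that the printed matrix equals the latter up to the intended range of indices, or simply use the $m\times m$ version.
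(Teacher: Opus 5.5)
Your proposal is correct and follows the paper's own route: reduce the generators of $\mathsf{SOS}^{(2m)}_2$ by parity to $\alpha\in\{(0,0),(1,1)\}$, dualize as in \cref{thm:dual-SOS} to obtain the two Hankel moment matrices, and conclude via \cref{thm:ppt-mom} and \cref{thm:no-ppt-in-qubits}. Your concern about the printed $M^{(1,1)}$ is a misreading, not a gap: its first row runs from $T_{\idx(2m-1,1)}$ to $T_{\idx(m,m)}$, so the printed matrix is already the $m\times m$ matrix with entries $T_{\idx(2m-1-a-b,\,1+a+b)}$ that you derive.
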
 Conditions for $n$ odd can be derived analogously to the case above.

\bigskip

\end{document}